\documentclass[11pt,letterpaper]{article}

\usepackage{authblk} % for authors

\usepackage[margin=1in]{geometry} % for layout

\usepackage{graphicx} % Required for inserting images

\usepackage{amsthm}
\usepackage{amsfonts}
\usepackage{amsmath}
\usepackage{amssymb}

\usepackage{mathtools} % tools for math
\usepackage{physics}
\usepackage{dsfont} % fonts
\usepackage{bbold} %  what is this?

\usepackage[dvipsnames]{xcolor} % for colors

\usepackage{comment} % for comments

\allowdisplaybreaks

\usepackage{xspace}
\usepackage{url}

\usepackage{hyperref} % To add hyperlinks and choose their colors
\hypersetup{
  hidelinks,
  colorlinks = true,
  linkcolor = Red,
  citecolor = Red
}
\usepackage[nameinlink,capitalize]{cleveref} % To improve cross references

\usepackage{enumitem}

\usepackage{microtype} %to improve change line

\theoremstyle{plain}
\newtheorem{theorem}{Theorem}
\newtheorem{lemma}[theorem]{Lemma}
\newtheorem{claim}[theorem]{Claim}
\newtheorem{proposition}[theorem]{Proposition}
\newtheorem{corollary}[theorem]{Corollary}

\newtheorem*{lemma*}{Lemma}
\newtheorem*{theorem*}{Theorem}

\theoremstyle{definition}
\newtheorem{definition}[theorem]{Definition}

\AddToHook{env/lemma/begin}{\crefalias{theorem}{lemma}}
\AddToHook{env/claim/begin}{\crefalias{theorem}{claim}}
\AddToHook{env/proposition/begin}{\crefalias{theorem}{proposition}}
\AddToHook{env/corollary/begin}{\crefalias{theorem}{corollary}}
\AddToHook{env/definition/begin}{\crefalias{theorem}{definition}}

\providecommand{\citation}[1]{}

\newcommand{\MM}
{\mathcal{M}}

\newcommand{\conf}{\mathbf{c}}

\newcommand{\dejavu}{\textnormal{\textsc{DéjàVu}}\xspace}
\newcommand{\dejavuh}{\textnormal{\textsc{DéjàVu}}(\(h\))\xspace}

\newcommand{\hmaj}{\(h\)-\textnormal{\textsc{Majority}}\xspace}
\newcommand{\threemaj}{\(3\)-\textnormal{\textsc{Majority}}\xspace}
\newcommand{\twochoices}{\(2\)-\textnormal{\textsc{Choices}}\xspace}

\global\long\def\pull{\mathcal{PULL}}
\newcommand{\pullh}{\(\pull\)(\(h\))\xspace}
\newcommand{\pullstar}{\(\mathcal{PULL}^*\)\xspace}

\DeclareMathOperator{\bbN}{\mathbb{N}}

\DeclareMathOperator{\bbR}{\mathbb{R}}
\DeclareMathOperator{\bbE}{\mathbb{E}}

\newcommand{\cpa}[1]{\left\{ #1 \right\}} % curly parenthesis
\newcommand{\pa}[1]{\left( #1 \right)} % round brackets

\newcommand{\pr}[1]{\Pr\left(#1\right)}
\newcommand{\Var}{\mathrm{Var}} % Fra: I would add the argument here

\newcommand{\argmin}{\textrm{argmin}}

\definecolor{myOrange}{rgb}{0.8,0.4,0}

\author[1]{Francesco d'Amore}
\author[2]{Niccolò D'Archivio}
\author[3]{George Giakkoupis}
\author[2]{Frédéric Giroire}
\author[2]{Emanuele Natale} 

\affil[1]{Gran Sasso Science Institute, Italy}
\affil[2]{INRIA, COATI, Université Côte d'Azur, France}
\affil[3]{INRIA Rennes}

\title{DéjàVu: A Minimalistic Mechanism for Distributed Plurality Consensus}
\date{}

\begin{document}

\maketitle
\begin{abstract}
  We study the plurality consensus problem in distributed systems where a population of extremely simple agents, each initially holding one of \(k\) opinions, aims to agree on the initially most frequent one.
  In this setting, \hmaj is arguably the simplest and most studied protocol, in which each agent samples the opinion of \(h\) neighbors uniformly at random and updates its opinion to the most frequent value in the sample.

  We propose a new, extremely simple mechanism called \dejavu{}: an agent queries neighbors until it encounters an opinion for the second time, at which point it updates its own opinion to the duplicate value. This rule does not require agents to maintain counters or estimate frequencies, nor to choose any parameter (such as a sample size \(h\)); it relies solely on the primitive ability to detect repetition.
  We provide a rigorous analysis of \dejavu that relies on several technical ideas of independent interest and demonstrates that it is competitive with \hmaj and, in some regimes, substantially more communication-efficient, thus yielding a powerful primitive for plurality consensus. 
\end{abstract}
%\tableofcontents
\section{Introduction}

Plurality consensus is a fundamental problem in distributed computing and multi-agent systems, where a collection of agents, each initially holding one of \(k\) possible opinions, seeks to agree on the initially most frequent opinion~\cite{becchettiConsensusDynamicsOverview2020}. This problem serves as a building block for various distributed tasks, with fundamental applications that range from coordination in swarm robotics to modeling collective behavior in biology~\cite{FraigniaudN19}. The core challenge in these scenarios lies in achieving consensus rapidly and reliably using agents with limited memory, computational power, and local information, often in the absence of a central coordinator or global identifiers~\cite{shahGossipAlgorithms2007}.

A standard approach to plurality consensus is the \hmaj dynamics, in which an agent queries \(h\) random neighbors and updates its opinion to the majority among them (if one exists)~\cite{becchettiSimpleDynamicsPlurality2017}.
The underlying communication model is the \pullh model where, at each discrete-time round, each agent can observe the opinion of a sample of agents of size \(h\), sampled independently and uniformly at random~\cite{demersEpidemicAlgorithmsReplicated1987}.
While \hmaj is effective and has been extensively analyzed~\cite{damoreHMajorityDynamicsMany2025,CooperMRSS25}, it requires agents to query a fixed number of neighbors and perform a count-based comparison. This implies a need for explicit counting capabilities and knowledge of the parameter \(h\). Motivated by the quest for minimal computational assumptions, relevant for molecular computing, nanorobotics, or biological modeling~\cite{feinermanBreatheSpeakingEfficient2017,FuggerNR24}, we ask: is it possible to achieve efficient plurality consensus without explicit counting?

We answer this question affirmatively by proposing and analyzing \dejavu, a minimal protocol based solely on repetition detection. In the \dejavu protocol, an agent does not count or collect a fixed number of samples. Instead, it sequentially queries neighbors and stops as soon as it sees the same opinion for the second time. It then adopts this ``duplicate'' opinion as its new state. This rule relies exclusively on the ability to recognize a previously seen value within a short sampling window, a primitive operation that is simpler than arithmetic counting.

Our main contribution is a rigorous analysis of \dejavu that demonstrates its efficiency and robustness. We prove that, starting from a configuration with sufficient bias toward a plurality opinion, the system converges to consensus on that opinion with high probability. Our analysis shows that \dejavu acts as a powerful amplifier of plurality bias. Moreover, \dejavu is not only competitive with the \(h\)-majority rule, but in some regimes can be more communication-efficient. Specifically, we upper bound with high probability the total number of samples required by \dejavu until consensus, showing that this quantity adapts naturally to the distribution of opinions and can be smaller than that of a fixed-\(h\) rule when the plurality opinion has large enough support.
We present a more detailed overview of our results in \Cref{ssec:contribution}.

Our analysis relies on several technical contributions of independent interest. First, we establish an exact equivalence between the \dejavu dynamics and Poisson clock races, providing a robust framework for analyzing sampling-based stopping times. Second, we prove the bias-amplification property of the protocol by leveraging Newton's inequalities on elementary symmetric sums of the opinion frequencies. This allows us to show that the probability ratio of adopting the plurality opinion is monotone in the sample size. Finally, we characterize the protocol's communication complexity as a function of the \(\ell_2\)-norm of the opinion distribution through a generalized birthday paradox analysis, demonstrating its inherent adaptivity to the system's state.
A more detailed overview of our technical contributions is given in \cref{ssec:overview}.

The protocol's connection to Poisson races, a framework often used to model decision-making in neural systems~\cite{townsendStochasticModelingElementary1983}, suggests that \dejavu may also serve as a plausible model for biological consensus, where exact counting is cognitively expensive~\cite{dasguptaNeuralTheoryCounting2022}. However, our primary focus is its effectiveness as a distributed algorithm. By replacing fixed sample sizes with a dynamic stopping condition, \dejavu offers a novel design principle for distributed consensus that prioritizes agent simplicity and communication efficiency.

\subsection{Our Contribution}
\label{ssec:contribution}

Our main result is a high-probability bound on the convergence time of \dejavu in the \pullh model. Here and throughout the paper, w.h.p.\ means with probability at least \(1-1/n^c\) for some constant \(c>0\), where \(n\) is the number of nodes.
In the following, given any round \(t \ge 0 \), we denote by \(C^{(t)} = (C_1^{(t)}, \ldots, C_k^{(t)})\) the \textit{configuration of the system} at round \(t\), that is, \(C_i^{(t)}\) denotes the number of nodes supporting opinion \(i\) at time \(t\).
We omit the dependence on \(t\) when it is clear from the context.
%(\dejavuh, for short).

\begin{theorem}
  \label{thm:convergence_time_dejavu}
  Let \(h\geq 2\) and \(C=(C_1,\ldots,C_k)\) be an initial system configuration where each agent supports an opinion in \(\{1,\ldots,k\}\), with \(C_1\geq C_2 \geq \cdots \geq C_k\).
  Assume that \(C_1 = \omega\pa{\log n}\) and that, for a large enough constant \(\lambda > 0\),
  \[
    C_1 - C_2 \ge \lambda \sqrt{ \max\cpa{\frac{n}{h^2}, C_1} \log n } .
  \]
  \dejavu converges to consensus on the first opinion w.h.p.\ in \(O\pa{ \qty(\frac{n}{(h^2 C_1)}+1) \log n}\) rounds.
\end{theorem}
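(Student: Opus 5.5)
The plan is to analyse the one-round drift of the configuration and then iterate it with concentration. I read \dejavuh in \pullh as follows: an agent draws samples one at a time, up to \(h\) of them, and adopts the first opinion seen twice; if no opinion repeats within the \(h\) samples, it keeps its current opinion. If the protocol is defined differently earlier in the paper, the calculation of \(q_i\) below must be redone, but the structure of the argument should stay the same.

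\textbf{Step 1: one-round update.} Write \(x_i = C_i/n\), and let \(q_i = q_i(x,h)\) be the probability that an agent adopts \(i\) in one round. Set \(Q=\sum_j q_j\), the probability of seeing some duplicate. Conditioned on the current configuration, agents act independently, and
\[
  \bbE\bigl[x_i'\bigr] = x_i(1-Q) + q_i .
\]
Using the exact equivalence between \dejavu and Poisson clock races, I would write \(q_i\) as an explicit sum. The term for a race stopping after \(s\) samples involves \(x_i^2\) times elementary symmetric sums \(e_{s-2}\) of the remaining frequencies.

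\textbf{Step 2: bias amplification.} This is the core, and I expect it to be the main obstacle. I want to show that \(q_i/x_i\) is increasing in \(x_i\), with a quantitative gap:
\[
  \frac{q_1}{x_1}-\frac{q_2}{x_2} \;\ge\; c\,(x_1-x_2)\,\min\cpa{h^2,\ 1/x_1}
\]
for some constant \(c>0\). For small \(h^2 x_1\), the birthday expansion gives \(q_i\approx \binom{h}{2}x_i^2\), so the gap is of order \(h^2(x_1-x_2)\). For large \(h\), the duplicate appears after about \(1/\|x\|_2\) samples, which yields the \(1/x_1\) cap. To handle the cross terms \(e_{s-2}\) uniformly, I would first try Newton's inequalities, which show that the relevant ratio is monotone in the sample size \(s\). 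The difficulty is that \(Q\) is shared across opinions, so the gap must be proved for the normalized quantity.

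From the gap, the update gives
\[
  \bbE\bigl[x_1'-x_2'\bigr] \ge (x_1-x_2)\bigl(1+c\,x_1\min\cpa{h^2,1/x_1}\bigr)
\]
up to lower-order terms. Similarly, \(\bbE[x_1']\ge x_1\) and \(\bbE[x_i'/x_1']\le x_i/x_1\).

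\textbf{Step 3: concentration and iteration.} By Chernoff/Bernstein bounds, each \(C_i'\) deviates from its mean by \(O(\sqrt{(C_i + nQ)\log n})\). The number of agents changing opinion is about \(nQ\), which is of order \(n\min\{h^2\|x\|_2^2,1\}\). The hypothesis \(C_1-C_2\ge\lambda\sqrt{\max\{n/h^2,C_1\}\log n}\) is exactly what makes this noise a small fraction of the expected bias increment \((C_1-C_2)\,\min\{h^2 x_1,1\}\). A union bound over rounds and opinions then keeps the ordering and the amplification w.h.p.

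Iterating, the bias, and hence the whole of \(C_1\), grows by a factor \(1+\Omega(\min\{h^2C_1/n,1\})\) per round. After \(O((n/(h^2C_1)+1)\log n)\) rounds, \(x_1\) is a constant bounded away from the other opinions; doubling phases make the per-phase times a geometric sum dominated by the first. In the final phase, every \(x_i/x_1\) shrinks by a constant factor per round, so \(O(\log n)\) further rounds give consensus.
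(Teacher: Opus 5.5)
Your Step~3 is where the argument fails. Write $\alpha=\min\{h^2C_1/n,1\}$. The hypothesis gives only $\Delta\ge\lambda\sqrt{C_1\log n/\alpha}$, so at the threshold the increment $c\alpha\Delta$ that Step~2 gives you is of order $\lambda\sqrt{\alpha C_1\log n}$. Your fluctuation bound $\sqrt{(C_i+nQ)\log n}$ is larger than this, for two reasons:
\begin{itemize}
\item The $C_i$ term alone exceeds it by a factor of order $\alpha^{-1/2}/\lambda$. An agent already holding $i$ leaves only if it sees a duplicate, so the correct contribution is $QC_i$.
\item In a near-balanced configuration with $k$ opinions and $h^2\ll k$, one has $nQ\approx nh^2/k$ against $\alpha C_1\approx nh^2/k^2$. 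The noise then exceeds the drift by a factor of order $\sqrt{k}/\lambda$.
\end{itemize}
The right estimate is $\Var(C_i'\mid C)\le QC_i+nq_i$ with $q_i\le 3Q\,x_i^2/\|x\|_2^2$, which gives $\Var(C_1'\mid C)=O(\alpha\,nC_1^2/\|C\|_2^2)$. This can genuinely exceed $\alpha C_1$ when $nC_1\gg\|C\|_2^2$, and there your drift is still too weak.

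For a concrete case, take $C_1=n^{3/4}$, $C_2=C_1-\lambda n^{3/8}\sqrt{\log n}$, every other agent in its own singleton opinion, and $h=n^{1/4}/2$. The hypotheses hold, $\alpha=1$, and $h^2\|x\|_2^2\approx 1/2$. So $Q,q_1,q_2$ are constants and $C_1'-C_2'$ fluctuates by order $\sqrt n$. Your guaranteed increment, however, is only $c\Delta=\Theta(n^{3/8}\sqrt{\log n})$. The true drift is $\Theta(n^{1/4}\Delta)$; your gap $c(x_1-x_2)\min\{h^2,1/x_1\}$ loses exactly the factor $x_1/\|x\|_2^2$.

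The paper handles this by splitting at $nC_1=24\|C\|_2^2$. In the unbalanced case it proves $\bbE[\Delta_i'\mid C]-\Delta_i\ge\Delta_i\,\Omega\bigl((x_1+x_i)\min\{h^2\|x\|_2^2,1\}/\|x\|_2^2\bigr)$, plus a matching growth bound for $C_1$. That growth bound is also needed to show the moving threshold $\lambda\sqrt{\max\{n/h^2,C_1'\}\log n}$ is met again in the next round, which your proposal does not address.

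Step~2 is also only heuristic, and Newton's inequalities cannot close it on their own. They show that $r_s=\Pr(\MM_i\mid H=s,C)/\Pr(\MM_j\mid H=s,C)$ is non-increasing in $s$. But $r_{k+1}=x_i/x_j$, so for $h$ close to $k$ the bound $q_i/q_j\ge r_h$ they provide carries essentially no amplification. Your small-$h$ and large-$h$ heuristics do not cover the range $1/\|x\|_2\lesssim h\le k$.

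The missing anchor is a lower bound on the untruncated ratio, which the paper gets from the Poisson-race formulation. Conditioning on the number of $i$- and $j$-samples drawn before some third opinion is seen twice gives $\Pr(\MM_i\mid C)/\Pr(\MM_j\mid C)\ge (x_i/x_j)^2\frac{x_i+3x_j}{3x_i+x_j}$. Newton's inequalities then show that truncating at $h$ can only increase this ratio. Combine this with a two-sided birthday bound $Q=\Theta(\min\{h^2\|x\|_2^2,1\})$, whose lower bound comes from Chen--Stein. You obtain $q_i/Q=\Theta(x_i^2/\|x\|_2^2)$ and $q_1/q_j\ge 2x_1^2/(x_j(x_1+x_j))$. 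From these follows the stronger gap $q_1/x_1-q_j/x_j\ge c\,(x_1-x_j)\min\{h^2,1/\|x\|_2^2\}$, which is the strength the unbalanced regime requires.
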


We emphasize that the hypotheses of the previous theorem are very general compared with the state of the art for \hmaj. The condition \(C_1 = \Omega\pa{\log n}\) is necessary for any high-probability guarantee. Moreover, when \(C_1 \geq n/h^2\), the required bias is essentially optimal, since it matches the scale of the standard deviation. This is the case, for instance, as soon as \(h \geq \sqrt{k}\). Furthermore, the convergence time is essentially optimal, as we discuss below.

We remark that a general upper bound on the convergence time of \hmaj matching the \(\Omega(k/h^2)\) lower bound shown in~\cite{becchettiSimpleDynamicsPlurality2017} %(originally appeared at SPAA'14)
is still an open problem, with ongoing recent progress~\cite{damoreHMajorityDynamicsMany2025}.
Our next result is a generalization of the previous lower bound, which allows for a more general comparison of \dejavu with \hmaj.

  \begin{theorem}[Generalization of Theorem 4.12 in~\cite{becchettiSimpleDynamicsPlurality2017}]
  \label{thm:app:lb:h-majority}
  Let \(\varepsilon > 0\) be any arbitrarily small constant and \(C=(C_1,\ldots,C_k)\) be the starting system configuration, with \(C_1 \ge \ldots \ge C_k\) and \(C_1 \le n/100\).
  For \(h =\Omega( {n^{3/4+\varepsilon}}/{C_1})\), w.h.p., \hmaj requires at least \(\Omega\pa{ {n}/{(h^2 C_1)}+1}\) rounds to reach consensus.
\end{theorem}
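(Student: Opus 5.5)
The plan is to track only the support \(C_1^{(t)}\) of the largest opinion and show that it cannot grow faster than a logistic-type recursion. If consensus has been reached at round \(T\), then \(C_1^{(T)}\) is either \(n\) or \(0\). In both cases the plurality count \(M^{(t)} := \max_i C_i^{(t)}\) must have climbed from at most \(C_1\) to \(n\). So it suffices to show that, w.h.p., \(M^{(t)}\) needs \(\Omega(n/(h^2C_1))\) rounds to go from \(C_1\) to \(n/2\). The \(+1\) term is trivial: at least one round is always needed, since \(C_1\le n/100\) means the initial configuration is not a consensus. Hence we may assume \(h^2C_1\le n\) when proving the main term.

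\textbf{Step 1 (one-round drift).} Fix an opinion \(i\) with current fraction \(c_i=C_i/n\le 1/2\), and a node whose \(h\) samples are i.i.d. Let \(X\) be the number of copies of \(i\) in the sample. The node adopts \(i\) only if \(X\ge 2\), or if \(X=1\) and the tie among all-distinct samples is broken in favour of \(i\).

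First, \(\Pr(X\ge 2)\le \binom h2 c_i^2\le h^2c_i^2/2\).

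Second, for the event \(X=1\), I will use the tie-breaking rule of \cite{becchettiSimpleDynamicsPlurality2017}: a uniformly random sampled opinion, or keeping one's own opinion. Conditioned on the multiset of samples, the probability of choosing \(i\) is then at most \(1/h\) times the number of copies, and summing over positions gives at most \(c_i\). The "keep own opinion" variant contributes at most the current supporters, again at most \(C_i\) in expectation.

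Together this yields
\[
\bbE\bigl[C_i^{(t+1)}\mid C^{(t)}\bigr]\le C_i^{(t)}+\frac{h^2 (C_i^{(t)})^2}{2n}.
\]
Note that this bound uses only \(c_i\), never the other counts. I expect this to be the main subtle point: a small opinion could win only through ties, and the bound above handles that case uniformly.

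\textbf{Step 2 (concentration).} Given \(C^{(t)}\), the nodes' choices are independent Bernoulli variables. By Chernoff, w.h.p.
\[
C_i^{(t+1)}\le C_i^{(t)}\bigl(1+x_t\bigr)+O\Bigl(\sqrt{C_i^{(t)}(1+x_t)\log n}\Bigr),\qquad x_t:=\frac{h^2C_i^{(t)}}{n}.
\]
The hypothesis \(h=\Omega(n^{3/4+\varepsilon}/C_1)\) is what I will use to absorb this error term. It gives \(h^2C_1^2/n\ge n^{1/2+2\varepsilon}\), so the drift \(C_1x_0=h^2C_1^2/n\) dominates \(\sqrt{C_1\log n}\) by a polynomial factor. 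Hence the error is at most a constant factor times the drift, and also dominates nothing in the small-count regime. Some care is needed for opinions with \(C_i\ll C_1\); I will handle them by comparing with the bound for a fictitious count equal to \(C_1\), using monotonicity in \(C_i\). A union bound over \(i\le k\le n\) and over polynomially many rounds is then allowed.

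\textbf{Step 3 (iterate).} From Steps 1--2, w.h.p. \(M^{(t+1)}\le M^{(t)}(1+O(h^2M^{(t)}/n))\) for every round.

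Split the ascent into phases in which \(M\) doubles: \(M\in[2^jC_1,2^{j+1}C_1)\). During phase \(j\) the multiplicative growth per round is at most \(1+O(2^{j}x_0)\) with \(x_0=h^2C_1/n\le 1\). Doubling therefore takes \(\Omega(1/(2^jx_0))\) rounds, as long as \(2^jx_0\le 1\).

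Phase \(0\) alone already costs \(\Omega(1/x_0)=\Omega(n/(h^2C_1))\) rounds. That gives the bound without even summing the geometric series, and it only needs \(M\) to double once. Doubling is required since \(2C_1\le n/50<n/2\).

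Finally, a union bound over the \(O(n/(h^2C_1))\le O(n)\) rounds in that phase keeps the failure probability polynomially small, giving the claimed w.h.p.\ lower bound \(\Omega(n/(h^2C_1)+1)\).
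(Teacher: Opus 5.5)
Your proposal is correct and follows essentially the same route as the paper: the one-round bound \(\bbE[C_i'\mid C]\le C_i+h^2C_i^2/(2n)\) (a single sampled copy winning only via \(1/h\) tie-breaking, plus the pairwise union bound for two or more copies), then concentration, then iterating a per-round growth factor \(1+O(h^2C_1/n)\) on the maximum count for \(\Omega(n/(h^2C_1))\) rounds. The differences are cosmetic. The paper uses Hoeffding's bound, whose \(\sqrt{n}\)-scale deviation is exactly what forces \(hC_1\ge n^{3/4+\varepsilon}\), and it stops at the threshold \(100C_1\). You instead use a variance-sensitive Chernoff bound with a fictitious count \(C_1\) for small or shrinking counts, and stop at \(2C_1\).
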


Thus, the convergence times of \dejavu and \hmaj match over a wide range of configurations.

Our next theorem compares the number of samples required by \dejavu and \hmaj until consensus, and shows that \dejavu is more sample-efficient over a large range of configurations. In fact, we conjecture that \dejavu is always more sample-efficient than \hmaj.

\begin{theorem}
  \label{thm:sample-efficiency}
  Let \(C=(C_1,\ldots,C_k)\) be a system configuration such that \(C_1\geq \cdots \geq C_k\), \(C_1 = \omega\pa{\log^2 n}\), and that, for a large enough constant \(\lambda > 0\),
  \[
    C_1 - C_2 \ge \lambda \sqrt{ \max\cpa{\frac{n}{h^2}, C_1} \log n } .
  \]
  Let \(S_{d}\) and \(S_{m}\) be the numbers of samples until consensus of, respectively, \dejavu and \hmaj.
  Fix any arbitrarily small constant \(\varepsilon > 0\).
  For \(h = \Omega({\min\{n^{3/4+\varepsilon}/{C_1}, \sqrt{n/C_1}\}})\), w.h.p. we have
  \begin{align*}
      \begin{cases}
          & S_{d} \cdot \frac{O\left(\max\{1,h\frac{\norm{C}_2}{n}\right)\}}{\log^3 n}\leq S_{m} \text{ if } \norm{C}_2 = O(\sqrt{n} \log n) \text{ and } h \norm{C}_2 \ge \frac{n}{\log n}, \\
          & S_{d} \cdot \frac{O\left(\max\{1,h\frac{\norm{C}_2}{n}\right)\}}{\log n}\leq S_{m} \text{ otherwise}.
      \end{cases}
  \end{align*}
\end{theorem}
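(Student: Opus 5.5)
We compare the two quantities by bounding each separately: an upper bound on \(S_d\) and a lower bound on \(S_m\). Both are expressed through the round counts of \cref{thm:convergence_time_dejavu,thm:app:lb:h-majority} and the per-round sample costs. For \hmaj the per-round cost is deterministic: every agent draws exactly \(h\) samples, so \(S_m = n h T_m\), where \(T_m\) is its consensus time. \cref{thm:app:lb:h-majority} applies because \(h=\Omega(n^{3/4+\varepsilon}/C_1)\) (we either treat \(C_1\le n/100\) directly, or note that the bound \(\Omega(1)\) is trivial otherwise). It gives w.h.p.
\[
S_m = \Omega\pa{n h \pa{\frac{n}{h^2 C_1}+1}}.
\]
For \dejavu, \cref{thm:convergence_time_dejavu} bounds the number of rounds \(T_d\) by \(O((n/(h^2C_1)+1)\log n)\) w.h.p. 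Here \(h\) is the cap on the sampling window. It remains to bound the number of samples each agent uses in each round.

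The per-round cost of \dejavu comes from a generalized birthday paradox. With opinion frequencies \(p_i=C_i/n\), the stopping time \(\tau\) (first repeated opinion) satisfies \(\Pr(\tau>m)\le \exp(-\Omega(m^2\sum_i p_i^2))\) for \(m\) up to the cap. This follows from the Poisson-clock representation: \(\Pr(\tau>m)=m!\,e_m(p)\), and Newton/Maclaurin-type bounds on \(e_m\) control it. Hence \(\bbE[\min(\tau,h)] = O(\min\{h, n/\norm{C}_2\})\), since \(\sum p_i^2 = \norm{C}_2^2/n^2\). A single round therefore uses \(O(n\min\{h,n/\norm{C}_2\})\) samples in expectation. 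These costs are sums of \(n\) independent bounded variables (bounded by \(h\)), so Chernoff/Bernstein concentration keeps the round cost within a constant factor w.h.p. when the mean is \(\Omega(h\log n)\). Otherwise we lose an additive \(O(h\log n)\) term, and the \(\log^3 n\) slack absorbs it. There is a subtlety: \(\norm{C^{(t)}}_2\) changes over time. I would argue it never decreases by more than a constant factor along the trajectory w.h.p., because the plurality only grows. Concretely, \(C_1^{(t)}\) is nondecreasing up to concentration errors, and \(\norm{C^{(t)}}_2\ge C_1^{(t)}\). Then I would sum over rounds, using the initial \(\norm{C}_2\) as a proxy up to logarithmic factors. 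I expect this to be the main obstacle. In the regime \(\norm{C}_2=O(\sqrt n\log n)\), the \(\ell_2\)-norm can grow by up to polylog factors during the early amplification phase. That is plausibly where the extra \(\log^2 n\) in the first case comes from, together with the use of \(C_1=\omega(\log^2 n)\) to make the concentration per phase work.

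Combining the bounds,
\[
S_d = O\pa{n \min\cpa{h,\frac{n}{\norm{C}_2}}\pa{\frac{n}{h^2C_1}+1}\log n}\cdot \mathrm{polylog}
= O\pa{\frac{n h}{\max\{1,h\norm{C}_2/n\}}\pa{\frac{n}{h^2C_1}+1}}\cdot \mathrm{polylog},
\]
because \(\min\{h,n/\norm{C}_2\}=h/\max\{1,h\norm{C}_2/n\}\). Dividing by the lower bound on \(S_m\) gives the claim, with \(\log n\) overhead in general and \(\log^3 n\) in the delicate case \(\norm{C}_2=O(\sqrt n\log n)\), \(h\norm{C}_2\ge n/\log n\). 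In that case the per-round mean cost is near its minimum and concentration plus norm-tracking cost extra logs.

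The condition \(h=\Omega(\sqrt{n/C_1})\) in the \(\min\) covers the case where the lower bound of \cref{thm:app:lb:h-majority} is unavailable. There \(n/(h^2C_1)=O(1)\), so both round counts are \(\Theta(1)\) up to the \(\log n\) of \cref{thm:convergence_time_dejavu}. The trivial bound \(T_m\ge 1\), giving \(S_m\ge nh\), then suffices.
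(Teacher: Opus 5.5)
Your treatment of $S_m$ follows the paper: \cref{thm:app:lb:h-majority} where it applies, and $T_m\ge 1$ when $h=\Omega(\sqrt{n/C_1})$. So does your round count for \dejavu. The gap is in the step you yourself flag as the main obstacle. Monotonicity of the plurality together with $\|C^{(t)}\|_2\ge C_1^{(t)}$ only gives $\|C^{(t)}\|_2\ge C_1^{(0)}$. But $C_1^{(0)}$ can be smaller than $\|C^{(0)}\|_2$ by a factor up to $\sqrt{n/C_1}$, which is polynomial and cannot be absorbed in logarithmic slack.

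For example, take $C_1=\sqrt n$, about $2\sqrt n$ further opinions of size about $\sqrt n/2$, and $h=\sqrt n$. All hypotheses hold and $\|C\|_2=\Theta(n^{3/4})$. Since the only usable lower bound is $S_m\ge h$ per agent, you need a per-agent cost per round of $O(n/\|C\|_2)=O(n^{1/4})$. The bound $\|C^{(t)}\|_2\ge\sqrt n$ only gives $O(\min\{h,n/C_1\})=O(\sqrt n)$. In such near-balanced configurations the norm is carried by the many mid-sized opinions, and these do shrink, so the growth of $C_1$ tells you nothing about it. Conversely, growth of the norm, which you propose as the source of the extra $\log^2 n$, can only help an upper bound on $S_d$.

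The missing idea is that $\|C^{(t)}\|_2^2$ is a submartingale for every majority-boosting dynamics (\Cref{lemma:expected_value_norm2_inequality}):
\begin{itemize}
\item The condition $\bbE[C_i'\mid C]/\bbE[C_j'\mid C]\ge C_i/C_j$ for $i<j$ forces the partial sums $\sum_{j\le i}(\bbE[C_j'\mid C]-C_j)$ to be nonnegative, and zero at $i=k$.
\item Summation by parts then gives $\sum_j C_j\,\bbE[C_j'\mid C]\ge\|C\|_2^2$.
\item Cauchy--Schwarz gives $\|\bbE[C'\mid C]\|_2\ge\|C\|_2$, and adding the variances gives $\bbE[\|C'\|_2^2\mid C]\ge\|C\|_2^2$.
\end{itemize}
Combine this with a one-sided Bernstein condition for $\|p^{(t-1)}\|_2^2-\|p^{(t)}\|_2^2$ (from \Cref{lemma:bound_variance} and negative association) and the Freedman-type stopping-time bound of \cite{Shimizu2025}. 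This keeps the norm at $\Omega(\|C^{(0)}\|_2)$ up to consensus w.h.p., and it is where $C_1=\omega(\log^2 n)$ is used. When $\|C\|_2^2=O(n\log^2 n)$ this concentration is too weak, and one falls back on the trivial $\|C^{(t)}\|_2^2\ge n^2/k\ge n$; that fallback is what costs the extra logarithms in the first case.

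Two smaller repairs are also needed.
\begin{itemize}
\item Your tail bound $\Pr(\tau>m)\le\exp(-\Omega(m^2\|p\|_2^2))$ is false for $m\gg 1/\|p\|_2$. One opinion of density $1/2$ plus many tiny ones gives $\Pr(\tau>m)$ of order $m2^{-m}$. Also, Maclaurin's inequality controls $m!\,e_m(p)$ via the support size, not $\|p\|_2$. The expectation bound you actually need does follow from \Cref{lemma:tight_bound_probability_collision}: split the samples into blocks of length $\lceil 1/\|p\|_2\rceil$, each of which has a constant probability of an internal repeat.
\item The additive $O(h\log n)$ from Bernstein with summands bounded by $h$ is not covered in the second case of the statement, since its single $\log n$ is already spent on the round count. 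Instead, dominate each agent's per-round cost by a block length times a geometric variable and use a tail bound for sums of geometrics, as the paper does.
\end{itemize}
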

As soon as \(h\norm{C}_2 \gg n\log^3 n\) and \(C_1 - C_2 = \omega(\sqrt{C_1 \log n})\), it is guaranteed that \(S_{d} < S_{m}\).
In particular, if the lower bound on \hmaj does not apply, a node running \hmaj still samples at least \(h\) opinions in the first round. In this regime, \cref{thm:sample-efficiency} shows that the average number of samples per node before convergence in \dejavu is either competitive with or strictly smaller than that of \hmaj.

\paragraph{Roadmap.}
The rest of the paper is organized as follows. In \cref{ssec:overview}, we provide an overview of the main technical ideas behind the proofs of our results, and in \cref{sec:related} we discuss related work. In \cref{sec:preliminaries}, we introduce the model and the notation used throughout the paper.

\Cref{sec:amplification_bias_PULL*,sec:star_to_h,sec:amplification} contain the analysis of the bias amplification mechanism of \dejavu, leading to the proof of \cref{thm:convergence_time_dejavu}. The lower bound for \hmaj (\cref{thm:app:lb:h-majority}) is proved in \cref{sec:lower_bound_h_majority}, and the sample-efficiency result (\cref{thm:sample-efficiency}) is proved in \cref{sec:samplesdejavu}. We conclude with open questions in \cref{sec:conclusions}.

\subsection{Main Technical Ideas}
\label{ssec:overview}

In this section we highlight the original technical ideas used in the proof of our main theorems.

In the following, let \(C = (C_1, \dots, C_k)\) be the configuration of the system at a given time, where \(C_i\) is the number of nodes supporting opinion \(i\), and let \(p_i = C_i/n\) be the corresponding density.
Let \(p = (p_1, \ldots, p_k)\) denote the vector of opinion densities.
We assume without loss of generality that \(C_1 \ge \ldots \ge C_k\).
For every \(i \in [k]\), let \(C_i'\) be the random variable counting the number of nodes supporting opinion \(i\) at the next round.
Let \(\MM_i\) be the event that an agent updates to opinion \(i\) when the number of samples is unbounded.

A key idea in the proof of \Cref{thm:convergence_time_dejavu} is to study the ratio
\[
  \frac{\pr{\MM_1 \mid C}}{\pr{\MM_2\mid C}},
\]
namely, the ratio between the probability of updating to the plurality opinion and the probability of updating to the second most frequent opinion.

\paragraph{Poisson race in the \pullstar model for bias amplification.}
To bound the aforementioned ratio, we couple the \dejavu process to a continuous-time process, inspired by the technique of Poisson approximation for Balls-into-Bins processes~\cite{mitzenmacher2005}.
This equivalence is given in \Cref{sec:amplification_bias_PULL*}, and the resulting continuous-time process turns out to be an instance of a so-called Poisson race problem~\cite{ruan2007poisson}, where we are required to estimate the probability that a certain Poisson clock is the first to ring for the second time.
Our result is also new in that context and of independent interest.
Such estimation, combined with a way to decompose the probability ratio given in \Cref{sec:amplification_bias_PULL*}, allows us to prove that
\[
  \pa{\frac{p_i}{p_j}}^2 \ge \frac{\pr{\MM_i\mid C}}{\pr{\MM_j\mid C}} \geq \pa{\frac{p_i}{p_j}}^2 \frac{p_i+3p_j}{3p_i+p_j},
\]
for each \(i \le j\) (see \cref{lemma:lower_upper_ratio_probability}).

The previous inequality is derived in a model in which an agent can collect arbitrarily many samples, which we denote by \pullstar. In the \pullh model, where \dejavuh can collect at most \(h\) samples, many agents do not see a repeated opinion and therefore keep their current opinion.

\paragraph{From \pullstar to \pullh.}
In order to relate the two models, we thus need to estimate the probability ratio when we condition on the event that an agent sees an opinion twice within its \(h\) samples, and to estimate how many agents will actually update.
The first part is given in \cref{sec:star_to_h}, where we leverage Newton's inequalities for symmetric polynomials (\Cref{lem:newton_ineq}) to prove the following key result (formally stated in \cref{lemma:monotonicity_samples_dejavu}).
Let \(H\) be the number of samples until an agent samples an opinion a second time. Then, for all opinions \(i,j\in[k]\) such that \(p_i\ge p_j\), and for all \(h = 2,\ldots, k+1\), we have
  \[
    \pa{\frac{p_i}{p_j}}^2 \ge \frac{\pr{\MM_i, H\leq h\mid C}}{\pr{\MM_j, H\leq h\mid C}} \geq \frac{\pr{\MM_i\mid C}}{\pr{\MM_j\mid C}}~.
  \]
In other words, truncating the sample size at \(h\) can only improve the ratio of the winning probabilities.

The second part is to estimate the number of agents that will actually update by sampling twice an opinion. We do so by proving upper and lower bounds on the probability that a repeated opinion appears within the first \(h\) samples, namely a generalized birthday paradox for a non-uniform distribution. The lower bound relies on a Chen-Stein estimate due to Arratia et al.~\cite{arratia1989}.
We remark that the aforementioned question can be viewed as a generalized birthday paradox over a non-uniform distribution, a fundamental problem that is of independent interest~\cite{Galbraith2012ANB}.
We obtain the following lemma, which we prove in \cref{sec:sample_size_dejavu} (\cref{lemma:tight_bound_probability_collision}).
Let \(D\) be the number of agents that see an opinion twice within the first \(h\) samples, for any given \(h \ge 2\).
Then,
\[
    \bbE[D \mid C] =
    \begin{cases}
      n \cdot \Theta\pa{ \min\{h^2 \| p \|_2^2, 1\} } & \text{if }  h \| p \|_2 = o(1),
      \\
      \Theta(n)                                       & \text{otherwise.}
    \end{cases}
\]

\paragraph{Expected amplification of the bias in \pullh.}

In \cref{sec:expected-amplification}, we combine previous results and algebraic manipulations to get a lower bound on the amplification of the multiplicative bias in expectation:
\[
  \frac{\bbE[C'_1 \mid C] }{\bbE[C'_i \mid C] } \geq \frac{C_1}{C_i} + \Omega\pa{ \min\cpa{\frac{C_1}{n} h^2, 1}}\pa{ \frac{C_1}{C_i} - 1 }~.
\]
We turn the previous inequality into an expected additive amplification for the bias in \cref{sec:expected-additive-amplification}, as follows.
Let the current and next bias be \(\Delta_i = C_1 - C_i\) and \(\Delta_i' = C_1' - C_i'\), respectively.
Then, it holds that
\[
    \bbE[\Delta_j' \mid C] \ge \Delta_j\left(1 + \Omega\pa{ \min\cpa{\frac{C_1}{n} h^2, 1}}\right).
\]

\paragraph{Amplification of the bias in concentration.}
We use Bernstein's inequality to obtain concentration around the preceding estimate and to show that the bias-growth condition is preserved from one round to the next.
In \cref{lemma:bias_concentration_bernstein} we prove that
\[
    \Pr\left(
      \Delta_i' \ge \Delta_i \left(1 + \Omega\left(\min\left\{\frac{C_1}{n}h^2, 1\right\}\right)\right)
      \mid C
    \right) \ge 1 - n^{-\Theta(1)},
\]
whenever \(C_1 = \Omega(\log n)\), \(C_1 \le 3n/4\), and
\[
    \Delta_i \ge \lambda \sqrt{\max\left\{\frac{n}{h^2}, C_1 \right\} \log n},
\]
for a sufficiently large constant \(\lambda > 0\).
The proof splits into two regimes, depending on whether \(nC_1\) is at most or at least a constant multiple of \(\|C\|_2^2\). In the genuinely unbalanced regime, we prove stronger expectation bounds both for the bias and for the plurality opinion, which compensate for the larger concentration error.
Combining these ingredients, we show that the bias condition can be iterated round by round until the plurality opinion exceeds \(3n/4\). More precisely, we prove that \(C_1\) exceeds \(3n/4\) within
\[
  O\pa{\left(\frac{n}{h^2 C_1} + 1\right)\log n}
\]
rounds w.h.p.
Once this threshold is reached, we merge all non-plurality opinions into a single competing opinion and prove, via a coupling argument, that this binary process stochastically dominates the remaining consensus time of the original process. In the binary setting, \dejavu coincides with \twochoices when \(h=2\), and with \threemaj when \(h>2\), so the remaining time to consensus is \(O(\log n)\) w.h.p.

\paragraph{Lower bound for \hmaj.}
There is a known lower bound of \(\Omega(k/h^2 + 1)\) rounds for the \hmaj dynamics when \(h = \Omega(k/ n^{1/4 - \varepsilon})\) by~\cite{becchettiSimpleDynamicsPlurality2017}, which holds w.h.p.
In \cref{sec:lower_bound_h_majority} we take inspiration from the proof technique of~\cite{becchettiConsensusDynamicsOverview2020} and generalize the lower bound to \(\Omega(n/(C_1h^2) + 1)\) rounds for the \hmaj dynamics when \(h = \Omega(n^{3/4 + \varepsilon}/C_1)\).
The argument essentially consists in showing that any opinion cannot grow faster than a multiplicative factor \(1 + h^2 C_1/n\) every round w.h.p.\ with the right initial conditions.

\paragraph{Number of samples.}
In \cref{sec:samplesdejavu} we study the number of samples required by \dejavu to converge in \pullh. In this part, we assume \(C_1 = \omega(\log^2 n)\), as required by our sample-complexity theorem.
First, in \cref{sec:norm-submartingale} we prove that, in all majority-boosting opinion dynamics, the 2-norm of the configuration is a submartingale.
We say that a dynamics is majority boosting whenever \(C_i\geq C_j>0\) implies \(\frac{\bbE\pa{C'_i \mid C}}{\bbE\pa{C'_j \mid C}} \geq \frac{C_i}{C_j}\).
Inspired by~\cite{Shimizu2025}, which studies the evolution of the configuration norm for both the \threemaj and the \twochoices dynamics, we then use a one-sided Bernstein-type inequality (Freedman's inequality) to show that, with high probability, the 2-norm of the configuration does not decrease significantly throughout the execution of \dejavu (see \cref{sec:structure_from_shimizu}).
More specifically, if \(C^{(t)}\) denotes the configuration of the system at time \(t \ge 0\),
we show that \(\norm{C^{(t)}}_2^2 = \Omega({ \norm{C^{(0)}}_2^2 })\) for all rounds \(t\) up to consensus time, from which our stated sample-complexity bound follows.
Finally, in \cref{sec:comparison-samples} we leverage the lower bound on the consensus time for the \hmaj dynamics to compare the number of samples needed for consensus by \dejavu with that needed by the \hmaj dynamics.

\section{Related Work}
\label{sec:related}

The study of \dejavu falls within the broader area of opinion dynamics, and more generally simple computational dynamics~\cite{mosselOpinionExchangeDynamics2017,becchettiConsensusDynamicsOverview2020}.
Informally, these are synchronous consensus protocols based on simple update rules that do not change over time.
Well-studied examples include the \hmaj dynamics, the {\normalfont\scshape Undecided-State} dynamics, and the \twochoices dynamics.
In this section, we summarize the results most closely related to our contribution.

As throughout the paper, whenever we refer to a configuration \(C=(C_1,\ldots,C_k)\), we assume without loss of generality that \(C_1 \ge \cdots \ge C_k\). In particular, \(C_1\) denotes the plurality opinion, and the additive bias of the configuration is at least \(C_1-C_2\). Unless otherwise specified, all statements in this section hold w.h.p.
We stress that our focus is on \emph{plurality consensus}, i.e., convergence to the initially most supported opinion, rather than consensus to an arbitrary opinion. Accordingly, when summarizing prior work, we distinguish between results that guarantee plurality consensus and those that only guarantee consensus to some opinion.

% \subsection{\texorpdfstring{The \hmaj and the \twochoices dynamics}{The h-majority and the 2-choices dynamics}}

The closest works to our contribution are those analyzing the \hmaj dynamics, which has been widely investigated in the distributed computing community~\cite{becchettiSimpleDynamicsPlurality2017,BerenbrinkCEKMN17,BerenbrinkCGHKR22,GhaffariL18,CooperMRSS25,Shimizu2025,damoreHMajorityDynamicsMany2025}.
Most previous works analyzed the \threemaj dynamics, that is, the \hmaj when \(h = 3\).
The first work providing bounds on the \threemaj dynamics was~\cite{becchettiSimpleDynamicsPlurality2017} (presented at SPAA'14), which established an \(O(\lambda \log n)\) upper bound on the convergence time, w.h.p., provided that \(C_1\geq n/\lambda\) and that the bias is \(\Omega(\sqrt{\lambda n \log n})\). They further showed that \hmaj cannot converge in less than \(\Omega(n/h^2)\) rounds from certain configurations.
Subsequently,~\cite{BecchettiCNPT16} established an upper bound of \(O\pa{(k^2 \sqrt{\log n} + k \log n)(k + \log n)}\) rounds to reach consensus that holds w.h.p.,
with the hypothesis that the number of opinions \(k\) initially present satisfies \(k \le n^{\alpha}\) for a suitable positive constant \(\alpha < 1\).
Later, the bound was improved in~\cite{becchettiSimpleDynamicsPlurality2017}, which showed an upper bound of \(O(\min\{k, (n/\log n)^{1/3}\} \log n)\) rounds that holds w.h.p., provided that the bias of the initial configuration is at least \(c \sqrt{\min\{2k, (n/\log n)^{1/3}\} n \log n}\) for some constant \(c > 0\).
In the same work, the authors also proved a lower bound of \(\Omega(k \log n)\) rounds to reach consensus w.h.p.\ when the initial configuration is almost balanced, namely when \(C_1 \le n/k + (n/k)^{1-\varepsilon}\) for some \(\varepsilon > 0\) and \(k \le (n/\log n)^{1/4}\).

The \threemaj dynamics is closely related to another popular process, the \twochoices dynamics, defined as follows: each agent samples two neighbors u.a.r.\ with repetition, observes their opinions, and adopts that opinion if the two samples agree; otherwise, it keeps its current opinion.
It can be viewed as a variant of \threemaj in which one of the three opinions is the agent's current opinion, so that ties are broken in favor of the current state.
Despite being very similar, it has been shown that the two dynamics exhibit different behaviors.
\cite{BerenbrinkCEKMN17} proved a generic lower bound of \(\Omega(\min\{k, n/\log n\})\) rounds to reach consensus starting from the initial perfectly balanced configuration that holds w.h.p.\ for the \twochoices dynamics.
Furthermore, they proved that the \threemaj  dynamics works better in symmetric configurations (i.e., with no initial bias) when, e.g., \(\max_{i \in [k]}\{\conf_0(i)\} = O(\log n)\).
In particular, the \threemaj reaches consensus in time at most \(O(n^{3/4} \log^{7/8} n)\) w.h.p., regardless of further assumptions on the initial configuration, whereas the \twochoices requires time \(\Omega(n / \log n)\) whenever \(C_1 = O(\log n)\).
The authors of \cite{BerenbrinkCEKMN17} were the first to notice that, when the number of opinions \(k\) is large, the \threemaj dynamics is polynomially (in \(k\)) faster than the \twochoices dynamics.

The work~\cite{GhaffariL18} improved upon~\cite{becchettiSimpleDynamicsPlurality2017} and showed that the convergence time to consensus is \(O(k \log n)\), with high probability, for both the \twochoices dynamics with \(k = O(\sqrt{n / \log n})\) and the \threemaj dynamics with \(k = O(n^{1/3}/\sqrt{\log n})\) opinions.
This upper bound is tight because it matches the lower bound by \cite{becchettiSimpleDynamicsPlurality2017}, at least as long as \(k \le (n/\log n)^{1/4}\).
Furthermore, the authors showed that the unconditional convergence time of the \threemaj dynamics is \(O(n^{2/3} \log^{3/2} n)\) w.h.p., without any further hypothesis.

A more recent work~\cite{Shimizu2025} provided the tightest analysis of both the \threemaj and the \twochoices dynamics.
The authors proved that, w.h.p., the \threemaj dynamics reaches consensus in \(O(k\log n)\) rounds if \(k = o(\sqrt{n}/\log n)\), while it takes time \(O(\sqrt{n} \log^2 n )\) for other values of \(k\).
Furthermore, they showed that plurality consensus is ensured w.h.p.\ as long as the initial bias is \(\omega(\sqrt{n \log n})\).
As for the \twochoices dynamics, they proved that, w.h.p., it reaches consensus in \(O(k\log n)\) rounds if \(k = o(n/\log^2 n)\), while it takes time \(O(n \log^3 n)\) otherwise.
In this case, plurality consensus is ensured w.h.p.\ as long as the initial bias is \( \omega(\sqrt{C_1 \log n})\),
which matched the lower bound given by \cite{BerenbrinkCEKMN17} up to logarithmic factors.

As for the asynchronous setting, \cite{BerenbrinkCGHKR22} showed that the dynamics converges in \(O(n \log n)\) rounds w.h.p., when \(k = 2\).
A more general result was given in \cite{CooperMRSS25}, which proved that the convergence time is \(O(\min\{kn\log^2 n, n^{3/2} \log^{3/2} n\})\), w.h.p., for any number of initial opinions.
\cite{CooperMRSS25} also provided a generic lower bound of \(\Omega(\min\{kn, n^{3/2}/\sqrt{\log n}\})\) rounds to reach consensus that holds w.h.p. when the initial configuration is almost-balanced.

Other works analyzed the \threemaj dynamics in settings in which communication can be corrupted by some form of noise, which tries to capture the instability of real-world environments \cite{DAmoreZ22,dAmoreZ25}, while others analyzed the process when one opinion is preferred, in the sense that there is some probability that an agent spontaneously adopts the preferred opinion \cite{lesfariBiasedMajorityOpinion2022,CrucianiMQR23}.

% \paragraph{General case with \texorpdfstring{\(h \gg 1\).}{General case with h >> 1.}}
The regime \(h \gg 1\) is much less understood.
The authors of \cite{BerenbrinkCGHKR22} showed that, when \(k = 2\), the \hmaj exhibits a probabilistic hierarchy:
for any given \(t\), the probability that the \hmaj converges to consensus within time \(t\) is smaller than that of the \((h+1)\)-majority dynamics.
Whether the hierarchy holds for the general case with \(k > 2\) is still open.
For large $h$, the work \cite{becchettiSimpleDynamicsPlurality2017} provided a lower bound of \(\Omega(k / h^2 + 1)\) rounds to reach consensus that holds w.h.p.
The only matching upper bound in the literature was recently provided by \cite{damoreHMajorityDynamicsMany2025}, which showed that \hmaj converges in \(O(\log n)\) rounds whenever \(h = \omega(n \log n / C_1)\), \(C_1 = \omega(\log n)\), and the initial bias is \(\omega(\sqrt{C_1})\).
This result showed that the lower bound of \(\Omega(k / h^2 + 1)\) rounds to reach consensus that holds w.h.p.\ by \cite{becchettiSimpleDynamicsPlurality2017} cannot be pushed further than \(\Omega(k \log^2 n  /h + 1)\) in the worst case.
The general case with arbitrary \(h\) and no initial bias is a major open question in the area.

Before providing a direct comparison of \dejavu with the \hmaj and the \twochoices dynamics, we briefly summarize the results on the {\normalfont\scshape Undecided-State} dynamics.
In the {\normalfont\scshape Undecided-State} dynamics, at each round, each agent pulls a single neighboring opinion \(x\) uniformly at random.
If the agent’s former opinion \(y\) differs from \(x\), the agent becomes undecided.
Once undecided, the agent adopts the next opinion it encounters.
It was first introduced by \cite{AngluinAE08}, and then multiple papers analyzed its behavior \cite{perron09,ClementiGGNPS18,AmirABBHKL23,BankhamerBBEHKK22,hayek2025,BecchettiCNPS15}, even in the presence of noisy communication \cite{DAmoreCN20,DAmoreCN22} or stubborn agents \cite{berenbrink2024}.
We do not provide a full overview of the literature on the {\normalfont\scshape Undecided-State} dynamics, but we mention that in the synchronous setting~\cite{BecchettiCNPS15} proved convergence in time \(O(k \log n)\) whenever \(k = O((n \log n)^{1/3})\), w.h.p.
In \cite{AmirABBHKL23}, the authors investigate the asynchronous setting and prove that the protocol converges to consensus in \(O(k n \log n)\) rounds, w.h.p., whenever \(k \le \sqrt{n} \log^2 n\).
These results are tight as \cite{hayek2025} proved a lower bound for the asynchronous setting: they showed that the protocol takes at least \(\Omega(k n \log n)\) rounds, w.h.p., even with large bias allowed, when \(k = o(\sqrt{n}/\log n)\).
Several regimes remain open, especially in the synchronous setting. At the current state of the art, the {\normalfont\scshape Undecided-State} dynamics performs similarly to \twochoices, except that the bias required for convergence is always at least \(\Omega(\sqrt{n \log n})\). For this reason, a separate comparison with \dejavu would add little here.

\subsection{\texorpdfstring{Comparison of \dejavu with \hmaj}{Comparison of DéjàVu with h-majority}}

% \paragraph{Convergence time and bias.}
The \hmaj dynamics is known to converge to plurality consensus in time \(O(\log n)\) whenever the initial additive bias is at least \(\omega(\sqrt{C_1})\) and \(h = \omega(n \log n / C_1)\), which becomes \(h = \omega( k \log n)\) in almost-balanced configurations with \(k\) opinions \cite{damoreHMajorityDynamicsMany2025}.
For arbitrary \(k\) and \(4 \le h = O(n \log n / C_1)\), we do not have upper bounds yet.
However, \cite{becchettiSimpleDynamicsPlurality2017} provided a lower bound of \(\Omega(k/h^2 + 1)\) rounds, provided that \(h = \Omega( k/ n^{1/4 - \varepsilon})\) for any arbitrarily small constant \(\varepsilon > 0\).
In this work, we generalize this lower bound (\cref{thm:app:lb:h-majority}).
For \(h = 3\), the convergence time of \hmaj is \(O(k \log n)\) when \(k = o(\sqrt{n} / \log n)\), and \(O(\sqrt{n} \log^2 n)\) otherwise.
Moreover, plurality consensus is ensured when the initial additive bias is at least \(\omega(\sqrt{n\log n})\) \cite{Shimizu2025}.

Our upper bound on the convergence time of \dejavu (\cref{thm:convergence_time_dejavu}) matches that of the \hmaj at least in the studied regime \(h \gg 1\), and almost matches our lower bound of \(\Omega(n / (h^2 C_1) + 1)\) rounds needed by the \hmaj to converge when \(h = \Omega(n^{3/4 + \varepsilon}/C_1)\).
When \(h = 3\), the convergence time becomes \(O((n / C_1) \log n)\), matching that of the \threemaj when \(C_1 = \Omega(\sqrt{n})\).
For smaller values of \(C_1\), the \threemaj converges faster than our upper bound, but does not guarantee plurality consensus.
We emphasize that the scope of this work is plurality consensus; general consensus is left for future work (see also \cref{sec:conclusions}).
Note that the bias we require for plurality consensus is always competitive with the state of the art required by the \hmaj dynamics: we lose at most a \(\sqrt{\log n}\) multiplicative factor.

\subsubsection{\texorpdfstring{Comparison of \dejavu with \twochoices}{Comparison of DéjàVu with 2-Choices}}
\label{ssec:comparison_dejavu_twochoices}

% \paragraph{Convergence time and bias.}
As for the \twochoices dynamics, note that \dejavu for \(h = 2\) is exactly equivalent to it, so all results on the \twochoices dynamics apply to \dejavu.
When restricted to $h=2$, our analysis is worse than the state of the art analysis for \twochoices in terms of the minimum bias required to reach plurality consensus, which is \(\omega(\sqrt{C_1 \log n})\) \cite{Shimizu2025}.
Our requirement on the bias for constant values of \(h\) is comparable to that required by the \threemaj, namely \(\omega(\sqrt{n \log n})\), and remains comparable to the state of the art for \hmaj as \(h\) grows, reaching \(\omega(\sqrt{C_1 \log n})\) when \(h \gg \sqrt{n / C_1}\), which is the same bias required by \twochoices.

% \newpage
% \subsection*{\centering\huge Supplementary Information}
% This file contains the full details of the mathematical and empirical analysis discussed in the main paper.

\section{Preliminaries}
\label{sec:preliminaries}

Consider a complete graph of \(n\) nodes/agents with self-loops.
At time \(t = 0\), each node supports one out of \(k\) opinions.
Time is synchronous and dictated by some global clock.
In the \pullstar model, the protocol \dejavu works as follows:
At each round, agents start sampling opinion u.a.r.\ with repetition from the network.
The moment an agent samples for the second time some opinion \(x\), it adopts \(x\).
Trivially, since there are \(k\) opinions, update takes place in at most \(k+1\) samples.
After all nodes have updated, time increases by 1 and the nodes repeat the same protocol.
In \pullh, the number of samples is capped at \(h\), for any given \(h \ge 2\).
If a node does not sample twice any opinion after \(h\) samples, it will simply update by keeping its own opinion.
After each update, time increases by 1 and the nodes repeat the same protocol.
Note that if \(h = 2\), \dejavu is exactly the \twochoices protocol: hence, in some sense, \dejavu is a generalization of \twochoices.

Given any round \(t \ge 0 \), we denote by \(C^{(t)} = (C_1^{(t)}, \ldots, C_k^{(t)})\) the \textit{configuration of the system} at round \(t\), that is, by \(C_i^{(t)}\) we denote the number of nodes supporting opinion \(i\) at time \(t\).
We omit the dependence on \(t\) when it is clear from the context.
Conditioned on a configuration \(C\) at some round \(t \ge 0\), we denote by \(C' = (C_1', \ldots, C_k')\) the random configuration at time \(t+1\).
Given any configuration \(C\) at time \(t \ge 0\), let \(p_i = C_i/n\) be the density of opinion \(i\) at time \(t\), and \(p = (p_1, \ldots, p_k)\) the \textit{configuration of densities} at time \(t\).
Similarly to \(C'\), we also denote by \(p' = (p_1', \ldots, p_k')\) the random densities at time \(t+1\).

Given a configuration \(C\) at time \(t \ge 0\), we also define \(q_i\) as the expected amount of agents that update to opinion \(i\) in the next round by sampling opinion \(i\) twice in \pullh.
Let \(Q = \sum_{i \in [k]} q_i\).

Furthermore, we define \(\MM_i\) as the event that an agent updates to opinion \(i\) when running \dejavu in \pullstar.

We are interested in analyzing the growth of the \textit{bias} of the configuration, that is, given a configuration \(C = (C_1, \ldots, C_k)\) at time \(t \ge 0\), the \textit{bias} of \(C\) is 
\(\Delta = \max_{i \in [k]}\{\min\{j \neq i\} \{\Delta_{i,j}\}\}\), where \(\Delta_{i,j} = C_i - C_j\).
Usually, given a configuration \(C\), we will assume that it is \textit{ordered}, that is, \(C_1 \ge \ldots \ge C_k\) and, hence, \(\Delta = C_1 - C_2\).
In such a case, we also set \(\Delta_i = C_1 - C_i\).
Similarly to before, given an ordered configuration \(C\), we define \(\Delta_i' = C_1' - C_i'\). 
Also, when we start from an ordered configuration \(C = (C_1, \ldots, C_k\)) at time \(0\), we define \(\Delta^{(t)}_j = C_1^{(t)} - C_j^{(t)}\) for every \(j \in [k]\).

In the following, given a vector \(x = (x_1, \ldots, x_m)\), we write \(\norm{x} = \norm{x}_2 = \sqrt{\sum_{i \in [m]} x_i^2}\).

\section{\texorpdfstring{Amplification of the multiplicative bias of \dejavu in $\pull^*$}{Amplification bias Dejavu in PULL*}}
\label{sec:amplification_bias_PULL*}
In this section, we focus on the \pullstar model. 
This alternative perspective allows us to derive upper and lower bounds on the expected amplification of the multiplicative bias at the next round, presented in the next lemma.

We denote by $\MM_{j}$ the event that an agent adopts opinion $j$ after one round of \dejavu in $\pull^*$.

\begin{lemma}
    \label{lemma:lower_upper_ratio_probability}
    Consider an ordered configuration \(C\) at any given time. 
    For any $i\leq j\in[k]$, we have
    \[
        \frac{p_i^2}{p_j^2}\cdot\frac{p_i+3p_j}{3p_i+p_j} \leq \frac{\pr{\MM_{i} \mid C}}{\pr{\MM_{j} \mid C}} \leq  \frac{p_i^2}{p_j^2}
    \]
\end{lemma}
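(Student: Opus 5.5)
We would embed the sampling sequence of a single agent into a Poisson race. Sample times are the arrival times of a rate-one Poisson process, and each sample independently receives opinion \(\ell\) with probability \(p_\ell\). By Poisson thinning, the arrivals of opinion \(\ell\) then form independent Poisson processes \(N_\ell\) of rate \(p_\ell\). The order of arrivals has exactly the law of the i.i.d. sample sequence. Hence \(\MM_i\) is the event that clock \(i\) is the first to ring for the second time. Writing \(T_\ell\) for the second arrival time of \(N_\ell\), which is Gamma\((2,p_\ell)\) with density \(p_\ell^2 t e^{-p_\ell t}\), independence gives
\[
  \pr{\MM_i\mid C}=\int_0^\infty p_i^2 t e^{-p_i t}\prod_{\ell\neq i}\pr{T_\ell>t}\,dt ,
  \qquad \pr{T_\ell>t}=(1+p_\ell t)e^{-p_\ell t}.
\]

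To compare \(i\) and \(j\), we factor out the common part \(G(t)=\prod_{\ell\neq i,j}(1+p_\ell t)e^{-p_\ell t}\) and set \(s=p_i+p_j\). This gives
\[
  \frac{\pr{\MM_i\mid C}}{\pr{\MM_j\mid C}}=\frac{p_i^2}{p_j^2}\cdot\frac{\int_0^\infty t(1+p_j t)\,e^{-st}G(t)\,dt}{\int_0^\infty t(1+p_i t)\,e^{-st}G(t)\,dt}.
\]
The upper bound is then immediate: \(p_i\ge p_j\) gives \(1+p_jt\le 1+p_it\) pointwise, so the second factor is at most \(1\).

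For the lower bound, let \(A=\int t\,e^{-st}G\) and \(B=\int t^2e^{-st}G\), so the second factor is \((A+p_jB)/(A+p_iB)\). This expression is decreasing in \(B/A\) because \(p_i\ge p_j\). It therefore suffices to show \(B/A\le 2/s\), since then
\[
  \frac{A+p_jB}{A+p_iB}\ge\frac{1+2p_j/s}{1+2p_i/s}=\frac{p_i+3p_j}{3p_i+p_j},
\]
which is exactly the claimed bound.

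The main obstacle is proving \(B/A\le 2/s\). When \(G\equiv1\) the two quantities are equal: \(A=1/s^2\) and \(B=2/s^3\). The task is to show that \(G\) can only decrease the ratio. The key fact is that \(G\) is nonincreasing in \(t\), since each factor \((1+x)e^{-x}\) is nonincreasing for \(x\ge0\). We would then apply a Chebyshev-type correlation inequality. Under the probability measure \(\mu\) with density proportional to \(t e^{-st}\) (a Gamma\((2,s)\) law with mean \(2/s\)), the function \(t\) is increasing and \(G\) is nonincreasing. Chebyshev's (or Harris') inequality then gives \(\bbE_\mu[tG]\le\bbE_\mu[t]\,\bbE_\mu[G]\), which is precisely \(B/A\le 2/s\). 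If \(G\) were not monotone, we would fall back on log-concavity arguments, but monotonicity appears to suffice. The boundary case \(i=j\) is trivial, and the degenerate case \(p_j=0\) makes both sides infinite or is excluded.
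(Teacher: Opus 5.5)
Your proof is correct. It differs from the paper's in how the comparison of $i$ and $j$ is carried out.

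Both arguments start from the same Poisson-race embedding, and your thinning construction justifies it more directly than the paper's conditioning argument. The paper, however, never writes down Gamma integrals. It sets $T=\min_{\ell\notin\{i,j\}}T_\ell$ and conditions on the number $X_{i,j}$ of $i$- and $j$-arrivals before $T$. It shows that the ratio is exactly $p_i^2/p_j^2$ on $\{X_{i,j}=2\}$, and exactly $\frac{p_i^2}{p_j^2}\cdot\frac{p_i+3p_j}{3p_i+p_j}$ on $\{X_{i,j}\ge 3\}$, by inspecting the labels of the first three merged arrivals. It then finishes with the mediant inequality.

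Your quantities have a direct meaning in that picture. Since $G(t)=\Pr(T>t)$, and the second and third arrivals of the merged rate-$s$ process are Gamma$(2,s)$ and Gamma$(3,s)$, one has $s^2A=\Pr(X_{i,j}\ge 2)$ and $\frac{s^3}{2}B=\Pr(X_{i,j}\ge 3)$. So your factor $(A+p_jB)/(A+p_iB)$ is exactly the paper's mediant of the two per-case ratios. The step you identified as the main obstacle, $B/A\le 2/s$, is simply $\Pr(X_{i,j}\ge 3)\le \Pr(X_{i,j}\ge 2)$. Your Chebyshev correlation inequality (equivalently, Gamma$(3,s)$ stochastically dominating Gamma$(2,s)$ against the nonincreasing $G$) is valid but can be bypassed.

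What your version buys:
\begin{itemize}
\item an explicit closed form for the ratio, which shows it interpolates between the two bounds as $B/A$ ranges over $[0,2/s]$;
\item the lower bound attained exactly when $G\equiv 1$, i.e.\ when no other opinion is present;
\item a one-line pointwise proof of the upper bound.
\end{itemize}
The paper's conditioning instead buys a purely combinatorial argument with no integrals or correlation inequalities.
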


Its proof relies on an equivalent continuous-time interpretation of the sampling process based on a Poisson race argument. This viewpoint allows us to directly compare
\[
    \frac{\pr{\MM_i \mid C}}{\pr{\MM_j \mid C}}
\]
without explicitly accounting for the remaining opinions.

For completeness, we first establish the equivalence between the $\pull^*$ dynamics and the corresponding Poisson race formulation. The proof of \Cref{lemma:lower_upper_ratio_probability} will follow.

In our model $\mathcal{PULL}^*$, an agent following \dejavu repeatedly samples opinions u.a.r.\ from a system with configuration \(\pa{C_1,\ldots,C_k}\) until some opinion is observed for the second time.
Formally, let \( \mathbf{e}_i \in \{0,1\}^k \) denote the standard basis vector with a 1 in the \( i \)-th position and 0 elsewhere, i.e.,
\[
\left( \mathbf{e}_i \right)_j =
\begin{cases}
    1 & \text{if } j = i, \\
    0 & \text{otherwise},
\end{cases}
\quad \text{for all } j \in [k].
\]
Let $\cpa{S_j}_{j\in\bbN}$ be i.i.d. random variables such that $\pr{ S_j = \mathbf{e}_i }=C_i/n$. These random variables describe the outcomes of the samples of opinions.
Let
\[
    \pa{X_1^{(\ell)},\ldots,X_k^{(\ell)}} = \sum_{j=1}^\ell S_j \sim \text{Multinomial}\pa{ \ell, \pa{C_1/n,\ldots, C_k/n}}~,
\] describing the outcome of the first $\ell$ samples. Let $L_i\coloneq  \min\cpa{ \ell\in\bbN : X_i^{(\ell)} = 2 }$,
the first time opinion $i$ gets $2$ samples, and $L\coloneq  \min_{i\in[k]}\cpa{ L_i}$, so that only after the $L$-th sample, the whole sample contains $2$ occurrences of an opinion. According to \dejavu's update rule, the agent adopts opinion $i$ if and only if
\[
    i= \argmin_{j\in[k]} \cpa{ L_j }.
\]
An equivalent way to model this process is the so-called Poisson race.
Let $\cpa{Y_i(t),t\geq 0}_{i\in[k]}$ be independent homogeneous Poisson processes with support on $\cpa{0,1,2,\ldots}$, meaning that
\begin{enumerate}
    \item For all $t\geq 0$, $Y_i(t) \sim \text{Poisson}(C_i\cdot t)$.
    \item The process \( \cpa{Y_i(t),t\geq 0} \) has independent increments, i.e. for any times \( 0 \leq t_0 < t_1 < \cdots < t_n \), the random variables $\left\{ Y_j(t_{i+1}) - Y_j(t_i) \right\}_{i=0}^{n-1}$ are mutually independent.
\end{enumerate}
where $Y_i(t)$ counts the number of samples of opinion $i$ up to time $t$. Note that this continuous time is distinct from the discrete time of the samples. Let $T_i= \inf\cpa{t>0 : Y_i(t) = 2} $ be the time opinion $i$ is sampled for the second time. We say that the index $i$ wins the Poisson race if
\[
    i= \argmin_{j\in[k]} \cpa{ T_j }.
\]
\begin{theorem}
    \label{thm:equivalence_poisson_multinomial}
    The protocol \dejavu in $\pull^*$ can be modeled equivalently by a Poisson race; in other words, we have
    \[
        \pr{\MM_i \mid C} = \pr{ \text{``$i$ wins the Poisson race"} }
    \]
\end{theorem}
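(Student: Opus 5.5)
The plan is to use the standard superposition/thinning (colouring) property of Poisson processes. Let \(N(t)=\sum_{i\in[k]} Y_i(t)\) be the superposition of the \(k\) independent clocks. Since the \(Y_i\) are independent homogeneous Poisson processes with rates \(C_i\), the process \(N\) is a homogeneous Poisson process of rate \(\sum_i C_i = n\). Let \(\tau_1<\tau_2<\cdots\) be its arrival times, and let \(Z_m\in\{\mathbf e_1,\ldots,\mathbf e_k\}\) record which clock rang at \(\tau_m\); almost surely no two clocks ring simultaneously, so \(Z_m\) is well defined.

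The key step is the classical fact that the marks \(\{Z_m\}_{m\ge1}\) are i.i.d. with \(\pr{Z_m=\mathbf e_i}=C_i/n\), and are independent of the arrival times \(\{\tau_m\}\). I would either cite it or prove it quickly. For the proof, compute the joint density of the first \(\ell\) arrivals together with their labels: the event that clock \(i_1\) rings first in \(dt_1\), then \(i_2\) in \(dt_2\), and so on, has density \(\prod_{m=1}^{\ell} C_{i_m}\cdot e^{-n t_\ell}\) on \(0<t_1<\cdots<t_\ell\). This factorizes as \(\prod_m (C_{i_m}/n)\) times \(n^\ell e^{-n t_\ell}\), which is exactly the Gamma/Poisson arrival density. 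Hence \((Z_1,Z_2,\ldots)\stackrel{d}{=}(S_1,S_2,\ldots)\).

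Next, observe that \(Y_i(\tau_\ell)=\sum_{m\le \ell}(Z_m)_i\), so \(T_i=\tau_{L_i'}\), where \(L_i'=\min\{\ell: \sum_{m\le\ell}(Z_m)_i=2\}\) is defined from the \(Z\)'s exactly as \(L_i\) is defined from the \(S\)'s. Because \(\ell\mapsto\tau_\ell\) is strictly increasing, \(\argmin_j T_j=\argmin_j L_j'\). The minimum is unique: all \(L'_j\) are distinct, since only one opinion is sampled at each step. Moreover, \(L_j'<\infty\) for some \(j\) almost surely, since it happens within \(k+1\) steps. Therefore the event that \(i\) wins the Poisson race is a measurable function of the mark sequence alone, and it is the same function that defines \(\MM_i\) in terms of \((S_j)\). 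Equality of laws then gives \(\pr{\MM_i\mid C}=\pr{\text{``$i$ wins''}}\).

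The only genuinely nontrivial point is the independence and i.i.d.\ structure of the marks of a superposed Poisson process; everything else is bookkeeping. I would handle it through the explicit density computation above, or via the memorylessness of exponential races. In that alternative argument, the minimum of independent \(\mathrm{Exp}(C_i)\) variables is attained at \(i\) with probability \(C_i/n\), independently of its value, and one iterates using the strong Markov property.
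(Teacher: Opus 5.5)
Your proof is correct and follows essentially the same route as the paper. You pass to the superposed process, identify the labels at its jump times with the i.i.d.\ samples \((S_j)\), and use that the jump times are increasing, so \(\argmin_j T_j\) coincides with \(\argmin_j L_j\); your explicit joint-density computation is a cleaner justification of the key step than the paper's conditioning on a unit increment over the random interval \([\tau^{(s)},\tau^{(s+1)}]\).
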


\begin{proof}
By the fact that, for all $t\geq 0$ and $i\in[k]$, $Y_i(t)\sim \text{Poisson}(C_i\cdot t)$ and $\pa{Y_i(t)}_{i\in[k]}$ are mutually independent, we can apply \cref{thm:poisson_approximation} and we obtain that
\begin{equation}
    \label{eq:poisson_multinomial_equivalence}
    \pa{ \pa{Y_i(t)}_{i\in[k]} \mid  \sum_{i\in[k]} Y_i(t) = 1  } \sim \text{Multinomial}\pa{ 1, \pa{C_1/n,\ldots, C_k/n} } \sim S_1.
\end{equation}
Let $Y(t)= \sum_{i\in[k]} Y_i(t)$. Since independence is preserved under summation and the sum of Poisson r.v. is a Poisson r.v. we have that $Y(t)$ is a homogeneous Poisson process with support on $\cpa{0,1,2,\ldots}$ s.t.
\begin{enumerate}
    \item For all $t\geq 0$, $Y(t) \sim \text{Poisson}(t)$.
    \item The process \( \cpa{Y(t),t\geq 0} \) has independent increments.
\end{enumerate}
Let the stopping times $\tau^{(s)}\coloneq  \inf\cpa{ t \geq 0: Y(t) \geq  s }$ for all $s\in\bbN$. Since $Y_i(t)$ is an independent homogeneous Poisson process, we have that the r.v.s ~$Y_i(t + \tau^{(s)}) - Y_i( \tau^{(s)})$ and  $Y_i(t)$ are independent and identically distributed, for all $t\geq0, s\in\bbN, i\in[k]$.
This fact, together with \cref{eq:poisson_multinomial_equivalence}, implies that if the number of total samples in the interval $[\tau^{(s)}, t+\tau^{(s)}]$ is equal to one, then the distribution of any opinion $i$ in that interval is $\text{Multinomial}\pa{ 1, \pa{C_1/n,\ldots, C_k/n} }$;
in formulas, we have that for all $t> 0$ and all $s\in\bbN$,
\begin{equation}
    \label{eq:equivalence_poisson_mult_with_conditioning}
    \pa{ \pa{Y_i(t+ \tau^{(s)}) - Y_i(\tau^{(s)})  }_{i\in[k]} \mid  Y(t+ \tau^{(s)}) - Y(\tau^{(s)}) = 1  } \sim S_1.
\end{equation}
Since for all ~$t,\varepsilon > 0$,
\[
    \pr{Y(t+\varepsilon)-Y(t)\geq 1} = \pr{Y(\varepsilon)\geq 1} = 1 - \exp{-\varepsilon}\xrightarrow{\varepsilon\to 0}0,
\]
we have that $\tau^{(s)}= \inf\cpa{ t \geq 0: Y(t) =  s }$.
Therefore, we obtain that $$\pr{Y(\tau^{(s+1)})=s+1, Y(\tau^{(s)})=s}=1,$$ and, by setting $t= \tau^{(s+1)} - \tau^{(s)}$, from \cref{eq:equivalence_poisson_mult_with_conditioning} we have that
\begin{align*}
    &\pr{\pa{Y_i(\tau^{(s+1)}) - Y_i(\tau^{(s)})  }_{i\in[k]} = x}
    \\
    = \ & \pr{Y(\tau^{(s+1)}) - Y(\tau^{(s)}) = 1} \cdot \pr{ \pa{Y_i(\tau^{(s+1)}) - Y_i(\tau^{(s)})  }_{i\in[k]} = x \mid  Y(\tau^{(s+1)}) - Y(\tau^{(s)}) = 1  }
    \\
    = \ & \pr {S_j = x},
\end{align*}
for all $x$ in the probability space.
Moreover, by the fact the increments of the process $Y(t)$ are independent, as the samples $\pa{S_j}_{j\in[\bbN]}$, we obtain, for all $\ell\in\bbN$, that
\[
    \pa{X_1^{(\ell)},\ldots,X_k^{(\ell)}} = \sum_{j\in[\ell]} S_j \sim \sum_{j\in[\ell]} \pa{Y_i(\tau^{(j)}) - Y_i(\tau^{(j-1)})  }_{i\in[k]} = \pa{Y_i(\tau^{(\ell)})  }_{i\in[k]}.
\]
In particular, this implies that
\begin{align*}
    T_i &= \inf\cpa{t>0 : Y_i(t)=2} 
    \\
    &= \tau\pa{\min\cpa{s\in\bbN : Y_i(\tau^{(s)})=2}} 
    \\
    &\sim \tau\pa{\min\cpa{s\in\bbN : X_i^{(s)}=2}} 
    \\
    &= \tau\pa{L_i}.
\end{align*}
Since, by definition, $\tau^{(s)}$ is non-decreasing in $s$, the index that minimizes \(\tau\!\pa{L_i}\) also minimizes \(L_i\), concluding the proof of \cref{thm:equivalence_poisson_multinomial}.
\end{proof}

We have established that an equivalent way to model this protocol is the following.
Let $\cpa{X_i(t),t\geq 0}_{i\in[k]}$ independent Poisson processes with support on $\cpa{0,1,2,\ldots}$ s.t.
\[
    X_i(t) \sim \text{Poisson}(c_i\cdot t),
\]
where $c_i$ counts the number of agents with opinion $i$ at the current configuration. In this way, $X_i(t)$ counts the number of samples of opinion $i$ up to time $t$.
Let $T_i= \inf\cpa{t\geq 0 : X_i(t) = 2} $ be the time opinion $i$ is sampled twice. An agent adopts opinion $i$ if
\[
    i= \argmin_{j\in[k]} \cpa{ T_j }.
\]

\begin{proof}[Proof of \Cref{lemma:lower_upper_ratio_probability}]
    Suppose the sequence of samples is \textbf{infinite} (it extends beyond the point when some opinion is sampled twice).
    Let $T$ be the time an opinion $\ell\notin\{i,j\}$ is sampled twice. Formally,
    \[
        T\coloneq \min_{\ell\in[k]\setminus\cpa{i,j}} \cpa{ T_\ell},
    \]
    where \(T_\ell\) is the first time opinion \(\ell\) is sampled twice. Let
    \[
        X_{i,j}\coloneq  X_i(T) + X_j(T).
    \]
We have
\begin{align*}
    \frac{\pr{\MM_{i} \mid C}}{\pr{\MM_{j} \mid C}}
    &=
    \frac
    {\pr{X_i(T)=2 \mid X_{i,j}=2}\cdot\pr{X_{i,j}=2}+\pr{T_i<T_j \mid X_{i,j}\geq3}\cdot\pr{X_{i,j}\geq3}}
    {\pr{X_j(T)=2 \mid X_{i,j}=2}\cdot\pr{X_{i,j}=2}+\pr{T_j<T_i \mid X_{i,j}\geq3}\cdot\pr{X_{i,j}\geq3}}
\end{align*}
By the standard inequality $\min\left\{ \frac{x_1}{y_1}, \frac{x_2}{y_2} \right\} \leq \frac{x_1  + x_2}{y_1 + y_2} \leq \max\left\{ \frac{x_1}{y_1}, \frac{x_2}{y_2}\right\}$, we obtain
\begin{align*}
    \frac{\pr{\MM_{i} \mid C}}{\pr{\MM_{j} \mid C}}
    &\geq
    \min\left\{
        \frac{\pr{X_{i}(T)=2 \mid X_{i,j}=2}}
        {\pr{X_{j}(T) = 2 \mid X_{i,j}=2}}
        ,\
        \frac{\pr{T_i<T_j \mid X_{i,j}\geq3}}
        {\pr{T_j<T_i \mid X_{i,j}\geq3}}
    \right\}.
    \\
    \frac{\pr{\MM_{i} \mid C}}{\pr{\MM_{j} \mid C}}
    &\leq
    \max\left\{
        \frac{\pr{X_{i}(T)=2 \mid X_{i,j}=2}}
        {\pr{X_{j}(T) = 2 \mid X_{i,j}=2}}
        ,\
        \frac{\pr{T_i<T_j \mid X_{i,j}\geq3}}
        {\pr{T_j<T_i \mid X_{i,j}\geq3}}
    \right\}.
\end{align*}
Since \(T\) is a deterministic function of \(\cpa{X_\ell(t),t\geq 0}_{\ell\in[k]\setminus\cpa{i,j}}\), it is independent of \(\cpa{X_i(t),t\geq 0}\) and \(\cpa{X_j(t),t\geq 0}\). Therefore, conditioning on \(\cpa{T=t}\), we have \(X_i(T)\sim\text{Poisson}(p_i\cdot t)\) and \(X_j(T)\sim\text{Poisson}(p_j\cdot t)\) for all \(t\geq 0\). Let \(B_x\sim \text{Binomial}(x,\frac{p_i}{p_i+p_j})\). By applying \cref{thm:poisson_approximation}, we obtain \(\pa{X_i(T)\mid X_{i,j}=x}\sim B_x\), \(\pa{X_j(T)\mid X_{i,j}=x} = x - B_x\), and therefore
\[
    \frac{\pr{X_{i}(T)=2 \mid X_{i,j}=2}}
    {\pr{X_{j}(T)=2\mid X_{i,j}=2}}
    =
    \frac{\pa{\frac{p_i}{p_i+p_j}}^2}
    {\pa{\frac{p_j}{p_i+p_j}}^2}
    =
    \frac{p_i^2}{p_j^2}.
\]
Moreover,
\[
    \frac{\pr{T_i<T_j \mid X_{i,j}\geq3}}
    {\pr{T_j<T_i \mid X_{i,j}\geq3}}
    =
    \frac{(\frac{p_i}{p_i+p_j})^2 + 2(\frac{p_i}{p_i+p_j})^2\frac{p_j}{p_i+p_j}}
    {(\frac{p_j}{p_i+p_j})^2 + 2(\frac{p_j}{p_i+p_j})^2\frac{p_i}{p_i+p_j}}
    =
    \frac{p_i^2}{p_j^2}\cdot\frac{p_i+3p_j}{3p_i+p_j}
    ,
\]
where we used that $\pr{T_i<T_j \mid X_{i,j}\geq3}$ is the probability that the extraction sequence contains no occurrence of opinion \(j\) before the second occurrence of \(i\), plus the probability that the sequence contains both \(i\) and \(j\) and the last relevant extraction is \(i\).

We conclude that
\[
        \frac{p_i^2}{p_j^2}\cdot\frac{p_i+3p_j}{3p_i+p_j} \leq \frac{\pr{\MM_{i} \mid C}}{\pr{\MM_{j} \mid C}} \leq \pa{\frac{p_i}{p_j}}^2.
    \]
This concludes the proof of \cref{lemma:lower_upper_ratio_probability}.
\end{proof}

\section{From PULL* to PULL(h)}
\label{sec:star_to_h}
The Poisson clocks equivalence used in \Cref{lemma:lower_upper_ratio_probability} crucially relies on the fact that the sampling sequence is infinite. Consequently, this equivalence no longer holds when analyzing \dejavu under the $\pull(h)$ model, where the number of samples is capped at $h$. Interestingly, conditioning on the event that an agent observes the same opinion twice within the first $h$ samples further amplifies the multiplicative bias in favor of larger opinions. The goal of this section is to prove the following lemma, which allows us to transfer the bounds established for the $\pull^*$ model to the $\pull(h)$ model.

\begin{lemma}
    \label{lemma:monotonicity_samples_dejavu}
    Let $H$ be the number of samples until an agent samples an opinion a second time. Then, for all opinions $i,j \in [k]$ such that $p_i \geq p_j$, and for all $h = 2,\ldots, k+1$, we have
    \[
        \frac{\pr{\MM_i \mid C}}{\pr{\MM_j \mid C}} 
        \leq 
        \frac{\pr{\MM_i, H\leq h \mid C}}{\pr{\MM_j, H\leq h \mid C}} 
        \leq 
        \frac{p_i^2}{p_j^2}.
    \]
\end{lemma}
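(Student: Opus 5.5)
The plan is to write down the joint law of the adopted opinion and of \(H\) in closed form, and then reduce both inequalities to a monotonicity property of a sequence of per-\(m\) ratios. The monotonicity will come from Newton's inequalities (\Cref{lem:newton_ineq}).

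\textbf{Closed form.} Fix \(i\neq j\) with \(p_i\ge p_j\); we may assume \(p_j>0\), otherwise there is nothing to prove. The event \(\{\MM_i, H=m\}\) occurs exactly when the first \(m-1\) samples are pairwise distinct opinions, one of them is \(i\), and the \(m\)-th sample is \(i\). Summing over the set of the other \(m-2\) opinions and over the \((m-1)!\) orders gives
\[
  \pr{\MM_i, H=m \mid C} = (m-1)!\, p_i^2\, e_{m-2}(p_{-i}),
\]
where \(e_r\) is the elementary symmetric polynomial and \(p_{-i}\) is the vector \(p\) with coordinate \(i\) removed. Let \(E_r=e_r(p_{-i,-j})\), with \(E_{-1}=0\) and \(E_0=1\). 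Splitting on whether opinion \(j\) is among the chosen ones gives \(e_{r}(p_{-i})=E_r+p_jE_{r-1}\), and symmetrically \(e_r(p_{-j})=E_r+p_iE_{r-1}\). Writing \(a_m,b_m\) for the two probabilities at \(H=m\), we get
\[
  \frac{a_m}{b_m}=\frac{p_i^2}{p_j^2}\cdot\frac{1+p_j x_m}{1+p_i x_m},\qquad x_m=\frac{E_{m-3}}{E_{m-2}},
\]
for every \(m\) with \(E_{m-2}>0\). The remaining indices, where \(E_{m-2}=0\) but \(E_{m-3}>0\), give ratio exactly \(p_i/p_j\) and correspond to \(x_m=+\infty\). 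For all larger \(m\) both terms vanish, so those indices can be discarded.

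\textbf{Monotonicity of the per-\(m\) ratios.} Since \(p_i\ge p_j\), the map \(x\mapsto (1+p_jx)/(1+p_ix)\) is nonincreasing on \([0,\infty]\). It equals \(1\) at \(x=0\), which is the case \(m=2\) where the ratio is exactly \(p_i^2/p_j^2\). So it suffices to show that \(x_m\) is nondecreasing in \(m\), i.e., that \(E_{r}/E_{r-1}\) is nonincreasing in \(r\). This is exactly \(E_{r-1}E_{r+1}\le E_r^2\), which follows from Newton's inequalities (\Cref{lem:newton_ineq}). The normalized version \(\bar E_{r-1}\bar E_{r+1}\le\bar E_r^2\), with \(\bar E_r=E_r/\binom{k-2}{r}\), is stronger, because \(\binom{k-2}{r}^2\ge\binom{k-2}{r-1}\binom{k-2}{r+1}\). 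Hence \(a_m/b_m\) is nonincreasing in \(m\), and every such ratio is at most \(p_i^2/p_j^2\).

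\textbf{From per-\(m\) ratios to the cumulative statement.} Set \(A_h=\sum_{m=2}^h a_m=\pr{\MM_i,H\le h\mid C}\) and \(B_h=\sum_{m\le h}b_m\). I will use the elementary mediant fact: if \(a_m/b_m\) is nonincreasing, then \(A_h/B_h\) is nonincreasing in \(h\). The induction step is that \(A_{h}/B_{h}\ge a_{h+1}/b_{h+1}\), because \(A_h/B_h\) is a mediant of ratios each at least \(a_{h+1}/b_{h+1}\). Then \(A_{h+1}/B_{h+1}\) is a mediant of \(A_h/B_h\) and \(a_{h+1}/b_{h+1}\), so it lies between them and in particular is at most \(A_h/B_h\). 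This gives
\[
  A_{k+1}/B_{k+1}\le A_h/B_h\le A_2/B_2=p_i^2/p_j^2 .
\]
Since \(H\le k+1\) always, \(A_{k+1}/B_{k+1}=\pr{\MM_i\mid C}/\pr{\MM_j\mid C}\), which is the claim.

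\textbf{Main obstacle.} I expect the hard part to be the bookkeeping rather than a genuine difficulty. One must get the decomposition \(e_r(p_{-i})=E_r+p_jE_{r-1}\) right, and verify that the correct direction of Newton's inequality holds for the nonnegative (possibly partly zero) vector \(p_{-i,-j}\). One must also handle the boundary indices where \(E_{m-2}=0\), so that no division by zero breaks the monotone chain; interpreting \(x_m=+\infty\) there, or simply truncating the sums, should settle this.
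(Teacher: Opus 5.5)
Your proof is correct and follows essentially the same route as the paper. Both use the closed form $(m-1)!\,p_i^2\,e_{m-2}(p_{-i})$ for adopting $i$ at $H=m$, the split over $[k]\setminus\{i,j\}$ that reduces monotonicity of the per-$H$ ratios to Newton's inequality $E_r^2\ge E_{r-1}E_{r+1}$, and a mediant argument to pass to the cumulative ratios.

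Your parametrization by $x_m=E_{m-3}/E_{m-2}\in[0,\infty]$ has two advantages over the paper's presentation. First, it absorbs the boundary cases ($m=2$, $m=3$, $m=k+1$, and zero densities) that the paper treats separately or implicitly excludes. Second, your chain $A_{k+1}/B_{k+1}\le A_h/B_h\le A_2/B_2$ gives the upper bound directly for the truncated ratio, whereas the paper's closing display writes it only for the untruncated one.
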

For \(h \ge k+1\), the event \(\cpa{H \le h}\) coincides with \(\cpa{H \le k+1}\), since after \(k+1\) samples some opinion must repeat.
The proof contains several algebraic manipulations, but the main idea is to manipulate the multinomial distribution and
apply Newton's inequalities to elementary symmetric sums of the probabilities.

For any $h,i\le k$, let 
\[
C_i^{h}\subseteq [k]\setminus\{i\}
\]
denote the collection of all subsets $S$ of $[k]\setminus\{i\}$ having 
cardinality $h$. 
Each such set $S\in C_i^h$ indexes one monomial of the $h$-th elementary 
symmetric polynomial in the variables $\{p_j : j\neq i\}$, namely
\[
\prod_{j\in S} p_j .
\]
Consequently,
\[
    e_h([k]\setminus\cpa{i}) \coloneq  \sum_{S\in C_i^h} \prod_{j\in S}p_j
\]
is exactly the $h$-th elementary symmetric sum in the variables 
\(\{p_j\}_{j\neq i}\). It satisfies the following monotonicity property in \(h\).

\begin{claim}
    \label{claim:elementary_symmetric_sum}
    We have that for $2 \leq h \leq k-2$ and for all opinion $i,j \in [k]$ s.t. $p_i \geq p_j$,
    \[
    \frac{ \sum_{S\in C_i^h} \prod_{\ell\in S}p_\ell }{ \sum_{S\in C_j^h} \prod_{\ell\in S}p_\ell  } \leq 
    \frac{ \sum_{S\in C_i^{h-1}} \prod_{\ell\in S}p_\ell }{ \sum_{S\in C_j^{h-1}} \prod_{\ell\in S}p_\ell  }
    \]
\end{claim}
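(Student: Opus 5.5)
Let \(R = [k]\setminus\{i,j\}\) and write \(E_m := e_m(R)\) for the \(m\)-th elementary symmetric sum of \(\{p_\ell\}_{\ell\in R}\), with \(E_0=1\) and \(E_m=0\) for \(m<0\) or \(m>|R|=k-2\). Splitting each set in \(C_i^h\) according to whether it contains \(j\) gives
\[
e_h([k]\setminus\{i\}) = E_h + p_j E_{h-1},
\qquad
e_h([k]\setminus\{j\}) = E_h + p_i E_{h-1},
\]
and similarly at level \(h-1\). The claim then reads
\[
\frac{E_h + p_j E_{h-1}}{E_h + p_i E_{h-1}} \le \frac{E_{h-1} + p_j E_{h-2}}{E_{h-1} + p_i E_{h-2}}.
\]
All quantities are nonnegative, and \(E_{h-1},E_{h-2}>0\) whenever \(|R|\ge h-1\); degenerate zero cases are handled separately as trivial or equalities.

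Cross-multiplying, the \(E_hE_{h-1}\) and \(p_ip_jE_{h-1}E_{h-2}\) terms cancel. What remains is
\[
p_i E_h E_{h-2} + p_j E_{h-1}^2 \le p_j E_h E_{h-2} + p_i E_{h-1}^2,
\]
that is,
\[
(p_i-p_j)\bigl(E_{h-1}^2 - E_h E_{h-2}\bigr) \ge 0 .
\]
Since \(p_i\ge p_j\), it suffices to show \(E_{h-1}^2 \ge E_hE_{h-2}\).

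This is exactly the consequence of Newton's inequalities (\Cref{lem:newton_ineq}). With \(N=|R|=k-2\) and normalized means \(\bar E_m = E_m/\binom{N}{m}\), Newton gives \(\bar E_{h-1}^2 \ge \bar E_h \bar E_{h-2}\), valid for \(1\le h-1\le N-1\), i.e. \(2\le h\le k-2\). This matches the stated range. Converting back,
\[
E_{h-1}^2 \ge E_hE_{h-2}\cdot\frac{\binom{N}{h-1}^2}{\binom{N}{h}\binom{N}{h-2}} \ge E_hE_{h-2},
\]
because \(\binom{N}{m}^2 \ge \binom{N}{m+1}\binom{N}{m-1}\) (log-concavity of binomial coefficients, checked directly: the ratio is \(\frac{(m+1)(N-m+1)}{m(N-m)}>1\)). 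Nonnegativity of the \(p_\ell\) lets us apply Newton's inequalities, which in fact hold for arbitrary real variables.

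The only potential obstacle is bookkeeping of the boundary cases. These are \(h=k-2\), where \(E_h\) is a single product, and the case where some \(p_\ell=0\) makes a denominator vanish. There I would either note that the inequality is \(0\le\cdot\) or equality, or restrict to opinions with positive support, which is without loss of generality since the sums only involve present opinions.
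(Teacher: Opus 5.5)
Your proof is correct and follows essentially the same route as the paper. Both arguments split off $j$ (resp.\ $i$) to express the sums through $e_m([k]\setminus\{i,j\})$, cross-multiply to reduce the claim to $(p_i-p_j)\bigl(E_{h-1}^2-E_hE_{h-2}\bigr)\ge 0$, and conclude with Newton's inequalities in unnormalized form on the $k-2$ remaining variables, obtained via the binomial ratio $\frac{(m+1)(N-m+1)}{m(N-m)}>1$ as in the paper's appendix corollary, for the same range $2\le h\le k-2$. Your remark on vanishing denominators adds a small piece of care that the paper leaves implicit.
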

\begin{proof}
Let
\[
R_h\coloneq \frac{e_h([k]\setminus\{i\})}{e_h([k]\setminus\{j\})}.
\]

Put $C=[k]\setminus\{i,j\}$.  We have
\[
e_h([k]\setminus\{i\})=e_h(C)+p_j\,e_{h-1}(C),\qquad
e_h([k]\setminus\{j\})=e_h(C)+p_i\,e_{h-1}(C).
\]
Hence, setting
\[
E\coloneq e_h(C),\qquad F\coloneq e_{h-1}(C),\qquad G\coloneq e_{h-2}(C),
\]
we may write
\[
R_h=\frac{E+p_jF}{E+p_iF},\qquad
R_{h-1}=\frac{F+p_jG}{F+p_iG}.
\]

To compare $R_h$ and $R_{h-1}$, we compute
\[
R_h\le R_{h-1}
\ \iff\
(E+p_jF)(F+p_iG)\le (E+p_iF)(F+p_jG).
\]
Expanding both sides and canceling common terms, this inequality becomes
\[
(p_i-p_j)\,(EG-F^2)\le0.
\]
Since $p_i\ge p_j$, the last inequality is equivalent to
\[
F^2\ge EG.
\]

Newton's inequalities state that
\[
e_{m}(C)^2\ge e_{m-1}(C)\,e_{m+1}(C)
\qquad\text{for }1\le m\le |C|-1.
\]
Taking $m=h-1$, gives that, for $2\leq h \leq |C| = k-2$, we have 
\[
e_{h-1}(C)^2\ge e_{h-2}(C)\,e_h(C),
\]
i.e.\ $F^2\ge EG$.  Therefore $R_h\le R_{h-1}$.
\medskip
This proves the claim for every $h$ such that $2\le h\le k-2$.
\end{proof}
We now study the multinomial distribution conditioned on the number of samples until the agent observes an opinion twice, and obtain the following.
\begin{claim}
    \label{claim:monotonicity_samples_dejavu_equality}
    Let $H$ be the number of samples until an agent samples an opinion a second time. We have, for all opinions $i,j \in [k]$ such that $p_i \ge p_j$ and all $h = 3,\ldots, k+1$, that
    \[
        \frac{\pr{\MM_i\mid H=h-1, C}}{\pr{\MM_j\mid H=h-1, C}} \geq \frac{\pr{\MM_i\mid H=h, C}}{\pr{\MM_j\mid H=h, C}}
    \]
\end{claim}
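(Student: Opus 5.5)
We will compute the joint probabilities \(\pr{\MM_i, H=h \mid C}\) explicitly and reduce the statement to the monotonicity of ratios of elementary symmetric sums from \cref{claim:elementary_symmetric_sum}. We assume all \(p_\ell>0\); opinions with zero support can be dropped, since they never appear in a sample.

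\emph{Step 1: the joint probability.} The event \(\cpa{\MM_i, H=h}\) occurs exactly when two things happen. First, the first \(h-1\) samples are pairwise distinct opinions forming a set \(S\) of size \(h-1\) with \(i\in S\). Second, the \(h\)-th sample is \(i\). For a fixed set \(S\), the number of orderings is \((h-1)!\), and each ordering has probability \(\prod_{\ell\in S}p_\ell\). Writing \(S=\{i\}\cup S'\) with \(S'\in C_i^{h-2}\), we obtain
\[
  \pr{\MM_i, H=h \mid C} = (h-1)!\, p_i^2 \sum_{S'\in C_i^{h-2}}\prod_{\ell\in S'}p_\ell = (h-1)!\,p_i^2\, e_{h-2}([k]\setminus\{i\}).
\]
The factor \(\pr{H=h\mid C}\) is common to the two conditional probabilities and cancels. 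Hence
\[
  \frac{\pr{\MM_i\mid H=h, C}}{\pr{\MM_j\mid H=h, C}} = \frac{p_i^2}{p_j^2}\cdot R_{h-2}, \qquad R_m\coloneq \frac{e_m([k]\setminus\{i\})}{e_m([k]\setminus\{j\})}.
\]
The claim is therefore equivalent to \(R_{h-3}\ge R_{h-2}\) for \(h=3,\ldots,k+1\), that is, \(R_{m-1}\ge R_m\) for \(m=1,\ldots,k-1\).

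\emph{Step 2: apply \cref{claim:elementary_symmetric_sum}.} For \(2\le m\le k-2\), the inequality \(R_{m-1}\ge R_m\) is exactly \cref{claim:elementary_symmetric_sum}. Two boundary indices remain.

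\emph{Step 3: the boundary cases.} For \(m=1\) we have \(R_0=1\) and \(R_1=(1-p_i)/(1-p_j)\le 1\), because \(p_i\ge p_j\).

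For \(m=k-1\) (that is, \(h=k+1\)), we reuse the algebra in the proof of \cref{claim:elementary_symmetric_sum}, with \(E=e_{k-1}(C)\), \(F=e_{k-2}(C)\), \(G=e_{k-3}(C)\), where \(C=[k]\setminus\{i,j\}\) has size \(k-2\). That proof reduces \(R_m\le R_{m-1}\) to \(F^2\ge EG\). Here \(E=0\), since there are no subsets of size \(k-1\) inside a set of size \(k-2\), so \(F^2\ge EG\) holds trivially. Concretely, \(R_{k-1}=p_j/p_i\) and \(R_{k-2}=(e_{k-2}(C)+p_je_{k-3}(C))/(e_{k-2}(C)+p_ie_{k-3}(C))\), and one checks directly that \(R_{k-2}\ge p_j/p_i\).

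All denominators are positive because every \(p_\ell>0\), so the ratios are well defined. This gives the claim for every \(h\in\{3,\ldots,k+1\}\).

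The main obstacle is Step 1: one must check that conditioning on \(H=h\) leaves exactly a factor \(p_i^2\) times an elementary symmetric sum in the \emph{other} variables. This requires the repeated opinion to be \(i\) and the set of distinct opinions to contain \(i\). Once that is established, the rest is the already-proven Newton-type monotonicity plus the two trivial edge indices.
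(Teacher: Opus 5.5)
Your proposal is correct and follows the paper's proof essentially step for step. It uses the same joint probability $(h-1)!\,p_i^2\,e_{h-2}([k]\setminus\{i\})$, invokes \cref{claim:elementary_symmetric_sum} for the middle range $4\le h\le k$, and treats $h=3$ and $h=k+1$ separately. The one cosmetic difference is at $h=k+1$: the paper compares sums of reciprocals $\sum_{\ell\neq i}1/p_\ell$, whereas you note that $E=e_{k-1}([k]\setminus\{i,j\})=0$ in the decomposition from \cref{claim:elementary_symmetric_sum}, and both arguments are valid.
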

\begin{proof}
    For every \(i\in [k]\), the event \(\MM_i \cap \cpa{H=h}\) occurs exactly when opinion \(i\) appears once among the first \(h-1\) samples, all remaining opinions observed in those \(h-1\) samples are distinct, and the \(h\)-th sample is again \(i\). Therefore,
    \begin{align*}
        \pr{\MM_i, H=h \mid C}
        &= \pa{(h-1)! \ p_i \sum_{S\in C_i^{h-2}} \prod_{j\in S} p_j } p_i \\
        &= (h-1)! \ p_i^2 \sum_{S\in C_i^{h-2}} \prod_{j\in S}p_j.
    \end{align*}
    If $h=3$, then
    \[
        \frac{\pr{\MM_i\mid H=3, C}}{\pr{\MM_j\mid H=3, C}}
        =
        \frac{p_i^2 \sum_{\ell\neq i} p_\ell}{p_j^2 \sum_{\ell\neq j} p_\ell}
        =
        \frac{p_i^2(1-p_i)}{p_j^2(1-p_j)}
        \le
        \frac{p_i^2}{p_j^2}
        =
        \frac{\pr{\MM_i\mid H=2, C}}{\pr{\MM_j\mid H=2, C}},
    \]
    where the inequality follows from $p_i\ge p_j$.
    For $4\le h\le k$, by the previous identity and \cref{claim:elementary_symmetric_sum}, we obtain
    \[
        \frac{\pr{\MM_i\mid H=h, C}}{\pr{\MM_j\mid H=h, C}} 
        = \frac{p_i^2 \sum_{S\in C_i^{h-2}} \prod_{j\in S}p_j}{p_j^2 \sum_{S\in C_j^{h-2}} \prod_{j\in S}p_j} 
        \leq
        \frac{p_i^2 \sum_{S\in C_i^{h-3}} \prod_{j\in S}p_j}{p_j^2 \sum_{S\in C_j^{h-3}} \prod_{j\in S}p_j}
        =\frac{\pr{\MM_i\mid H=h-1, C}}{\pr{\MM_j\mid H=h-1, C}} .
    \]
    Finally, for $h=k+1$, since $C_i^{k-1}$ contains the single set $[k]\setminus\{i\}$, we have
    \[
        \frac{\pr{\MM_i\mid H=k+1, C}}{\pr{\MM_j\mid H=k+1, C}}
        =
        \frac{p_i^2 \prod_{\ell\neq i}p_\ell}{p_j^2 \prod_{\ell\neq j}p_\ell}
        =
        \frac{p_i}{p_j}.
    \]
    On the other hand,
    \[
        \frac{\pr{\MM_i\mid H=k, C}}{\pr{\MM_j\mid H=k, C}}
        =
        \frac{p_i^2 e_{k-2}([k]\setminus\{i\})}{p_j^2 e_{k-2}([k]\setminus\{j\})}
        =
        \frac{p_i}{p_j}
        \cdot
        \frac{\sum_{\ell\neq i} 1/p_\ell}{\sum_{\ell\neq j} 1/p_\ell}
        \ge
        \frac{p_i}{p_j},
    \]
    because \(p_i\ge p_j\) implies \(1/p_i\le 1/p_j\). Hence,
    \[
        \frac{\pr{\MM_i\mid H=k+1, C}}{\pr{\MM_j\mid H=k+1, C}}
        \le
        \frac{\pr{\MM_i\mid H=k, C}}{\pr{\MM_j\mid H=k, C}} .
    \]
\end{proof}
\begin{proof}[Proof of \cref{lemma:monotonicity_samples_dejavu}]
    The following inequalities are all equivalent:
    \begin{align*}
        & \frac{\pr{\MM_i, H\leq h \mid C}}{\pr{\MM_j, H\leq h \mid C}} \geq \frac{\pr{\MM_i \mid C}}{\pr{\MM_j \mid C}} 
        \\
        \iff & \ \pa{\sum_{\ell=2}^{k+1} \pr{\MM_j, H=\ell \mid C}} \pa{ \sum_{\ell=2}^h \pr{\MM_i, H = \ell \mid C}} \\
        \geq \ & \pa{\sum_{\ell=2}^{k+1} \pr{\MM_i, H=\ell \mid C}} \pa{ \sum_{\ell=2}^h \pr{\MM_j, H = \ell \mid C }}
        \\
        \iff & \ \pa{\sum_{\ell=h+1}^{k+1} \pr{\MM_j, H=\ell \mid C}} \pa{ \sum_{\ell=2}^h \pr{\MM_i, H = \ell \mid C}} \\
        \geq \ & \pa{\sum_{\ell=h+1}^{k+1} \pr{\MM_i, H=\ell \mid C}} \pa{ \sum_{\ell=2}^h \pr{\MM_j, H = \ell \mid C}}
        \\
        \iff & \ \frac{\pr{\MM_i, H\leq h \mid C}}{\pr{\MM_j, H\leq h \mid C}} \geq \frac{\pr{\MM_i, H > h\mid C}}{\pr{\MM_j, H > h\mid C} },
    \end{align*}
    where in the second equivalence we canceled in both sides the symmetrical part of the sum. We will prove the last equation holds.
    We have that
    \begin{align*}
        \frac{\pr{\MM_i, H\leq h \mid C}}{\pr{\MM_j, H\leq h \mid C}} 
        & = \frac{\sum_{\ell = 2}^{h }\pr{H=\ell\mid C} \pr{\MM_i\mid H=\ell, C}}{\sum_{\ell = 2}^{h} \pr{H=\ell\mid C} \pr{\MM_j\mid H=\ell,C}} \\
        & \geq \min_{\ell=2,\ldots,h} \frac{\pr{\MM_i \mid H =  \ell,C}}{\pr{\MM_j \mid H = \ell,C}} \\
        & = \frac{\pr{\MM_i \mid H =  h,C}}{\pr{\MM_j \mid H = h,C}},
    \end{align*}
    where the last equality holds by \cref{claim:monotonicity_samples_dejavu_equality}. Similarly, we have
    \begin{align*}
        \frac{\pr{\MM_i, H > h\mid C}}{\pr{\MM_j, H > h\mid C}} 
        & \leq \max_{\ell=h+1,\ldots,k+1} \frac{\pr{\MM_i \mid H =  \ell,C}}{\pr{\MM_j \mid H = \ell,C}} \\
        & = \frac{\pr{\MM_i \mid H =  h+1,C}}{\pr{\MM_j \mid H = h+1,C}} \\ 
        & \leq \frac{\pr{\MM_i \mid H = h,C}}{\pr{\MM_j \mid H = h,C}}.
    \end{align*}
    To conclude the proof of \cref{lemma:monotonicity_samples_dejavu}, we show that
    \begin{align*}
        \frac{\pr{\MM_i\mid C}}{\pr{\MM_j\mid C}} 
        & = \frac{\pr{\MM_i, H > 1\mid C}}{\pr{\MM_j, H > 1\mid C}} \\
        & \leq \max_{\ell=2,\ldots,k+1} \frac{\pr{\MM_i \mid H =  \ell,C}}{\pr{\MM_j \mid H = \ell,C}} \\
        & = \frac{\pr{\MM_i \mid H =  2,C}}{\pr{\MM_j \mid H = 2,C}} \\ 
        & = \frac{p_i^2}{p_j^2}.
    \end{align*}
\end{proof}

\subsection{Generalized birthday paradox}
\label{sec:sample_size_dejavu}
In this section we provide upper and lower bounds on the probability of seeing an opinion twice in \(h\) samples. Our goal is to prove the following lemma.
\begin{lemma}
    \label{lemma:tight_bound_probability_collision}
    Let \(C\) be an ordered configuration.
    Recall that \(\cpa{H \leq h}\) is the event that an agent samples an opinion twice within the first \(h\) samples. For every \(h \ge 2\), we have
    \[
        2^{-11}\min\cpa{ h^2 \norm{p}_2^2, 1}
        \le
        \pr{H \leq h \mid C}
        \le
        \min\cpa{ h^2 \norm{p}_2^2, 1}.
    \]
\end{lemma}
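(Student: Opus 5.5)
The upper bound follows from a union bound over pairs of samples, and the lower bound from a two-case argument that uses the Chen--Stein estimate of Arratia et al.~\cite{arratia1989} when the collision probability is small.

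\textbf{Upper bound.} The event \(\cpa{H\le h}\) says that among \(h\) i.i.d.\ samples with law \(p\), some two coincide. For a fixed pair \(a<b\), \(\pr{S_a=S_b}=\sum_i p_i^2=\norm{p}_2^2\). There are \(\binom{h}{2}\le h^2/2\) pairs, so a union bound gives \(\pr{H\le h\mid C}\le \frac{h^2}{2}\norm{p}_2^2\le\min\cpa{h^2\norm{p}_2^2,1}\).

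\textbf{Lower bound.} We first reduce to the case \(h^2\norm{p}_2^2\le 1\). The probability \(\pr{H\le h}\) is nondecreasing in \(h\). If \(h^2\norm{p}_2^2>1\), set \(h'=\max\{2,\lfloor 1/\norm{p}_2\rfloor\}\le h\).
\begin{itemize}
\item If \(h'\ge 2\) is the floor, then \(h'^2\norm{p}_2^2\ge 1/4\), so it suffices to prove \(\pr{H\le h'}\ge 2^{-9}h'^2\norm{p}_2^2\) whenever \(h'^2\norm{p}_2^2\le 1\).
\item If \(1/\norm{p}_2<2\), then \(\norm{p}_2^2>1/4\) and directly \(\pr{H\le h}\ge\pr{H\le 2}=\norm{p}_2^2>1/4\).
\end{itemize}
So assume \(x:=h^2\norm{p}_2^2\le 1\).

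Let \(W=\sum_{a<b}\mathbb{1}[S_a=S_b]\) be the number of colliding pairs. Then \(\lambda:=\bbE[W]=\binom h2\norm{p}_2^2\), which lies in \([x/4,x/2]\) for \(h\ge2\). We bound \(\pr{W=0}\) by Arratia--Goldstein--Gordon Chen--Stein, whose error has the form \(\min\{1,1/\lambda\}(b_1+b_2)\) (\(b_3=0\) here).
\begin{itemize}
\item The dependency neighbourhood of a pair is the set of pairs sharing an index.
\item \(b_1\le \binom h2\cdot 2h\cdot\norm{p}_2^4\le h^3\norm{p}_2^4\).
\item For pairs sharing exactly one index, \(\pr{S_a=S_b=S_c}=\norm{p}_3^3\). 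Hence \(b_2\le h^3\norm{p}_3^3\le h^3\norm{p}_2^2\max_i p_i\le h^3\norm{p}_2^3\).
\item Together, \(\pr{W=0}\le e^{-\lambda}+(b_1+b_2)/\lambda\le e^{-\lambda}+O(h\norm{p}_2)\).
\end{itemize}

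This error \(h\norm{p}_2=\sqrt{x}\) is not \(o(\lambda)\) when \(x\) is small, and this is the main obstacle. Moreover \(b_2\sim h^3\norm{p}_3^3\) can genuinely dominate when one \(p_i\) is large. I would therefore avoid pure Chen--Stein in that regime and use the elementary second-moment bound \(\pr{W>0}\ge \bbE[W]^2/\bbE[W^2]\). Expanding by the overlap type of two pairs:
\begin{itemize}
\item identical pairs contribute \(\lambda\);
\item disjoint pairs contribute at most \(\lambda^2\);
\item pairs sharing one index contribute at most \(h^3\norm{p}_3^3\le h^3\norm{p}_2^3=x^{3/2}\).
\end{itemize}
Since \(\lambda\ge x/4\), this gives \(\bbE[W^2]\le\lambda+\lambda^2+x^{3/2}\le \lambda(1+\lambda+4\sqrt x)\le 6\lambda\) for \(x\le1\). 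Hence \(\pr{H\le h}\ge\lambda/6\ge x/24\).

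Combining the two cases yields a lower bound \(\ge 2^{-11}\min\{x,1\}\). The constants have slack, and Chen--Stein could alternatively be used only in the sub-regime where \(\max_ip_i\) is small. The delicate part is controlling the three-way-coincidence term \(\norm{p}_3^3\le\norm{p}_2^3\) uniformly.
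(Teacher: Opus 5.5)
Your proof is correct, and its lower bound takes a genuinely different route from the paper's.

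The upper bound is the same union bound over pairs. State it as $\min\{h^2\|p\|_2^2/2,\,1\}$, because the chain $\tfrac{h^2}{2}\|p\|_2^2\le\min\{h^2\|p\|_2^2,1\}$ fails once $h^2\|p\|_2^2>2$. Your monotonicity reduction to a truncated sample size $h'\approx 1/\|p\|_2$ plays the same role as the paper's.

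The difference is the core estimate. The paper applies the Arratia--Goldstein--Gordon Chen--Stein bound to the same $W$, with the same dependency neighbourhoods and the same bounds on $b_1,b_2$ (via $\|p\|_3^3\le\|p\|_2^3$). This yields $\Pr(H\le h)\ge h^2\|p\|_2^2/8$ only when $h\|p\|_2\le 1/8$. The paper must therefore truncate at $h'=\lfloor (1/8)/\|p\|_2\rfloor$ and handle $\|p\|_2\ge 1/16$ separately via $\Pr(H\le 2)=\|p\|_2^2$.

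Your second-moment bound $\Pr(W>0)\ge\lambda^2/\mathbb{E}[W^2]$, with $\mathbb{E}[W^2]\le\lambda+\lambda^2+x^{3/2}\le 6\lambda$, uses the same pair and triple coincidence moments. It covers all of $x\le 1$ in one step, needs no external theorem, and gives the better overall constant $1/96$. Chen--Stein buys, in principle, a two-sided Poisson approximation $\Pr(H>h)\approx e^{-\lambda}$ rather than only a lower bound on $\Pr(W>0)$. The lemma needs only the order of magnitude, so nothing is lost by your route.

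One correction to your discussion: the reason you give for abandoning Chen--Stein is mistaken. For $\lambda\le 1$ its error term is $\frac{1-e^{-\lambda}}{\lambda}(b_1+b_2)\le b_1+b_2=O(x^{3/2})$, not $(b_1+b_2)/\lambda$. So the error \emph{is} $o(\lambda)$ as $x\to 0$. This is exactly why the paper's route works for small $h\|p\|_2$; it only breaks down when $x$ is of constant order. Likewise, the worry about a single large $p_i$ is moot, since $\|p\|_3^3\le\|p\|_2^3$ controls the three-way term uniformly, as in your own second-moment step.
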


As a simple corollary of \cref{lemma:tight_bound_probability_collision}, we obtain the following.

\begin{corollary}
  \label{lemma:expectation-D}
  Let \(C\) be an ordered configuration.
  Given a sample size \(h \ge 2\) and an opinion probability density \(p\), 
  let \(D\) be the number of agents that update by seeing an opinion twice within the \(h\) samples.
  Writing
  \[
    Q\coloneq \Pr(H\le h\mid C),
  \]
  we have \(\bbE[D\mid C]=nQ\), and therefore
  \[
    2^{-11} n \min\cpa{h^2\| p \|_2^2,1}
    \le
    \bbE[D \mid C]
    \le
    n \min\cpa{h^2\| p \|_2^2,1}.
  \]
\end{corollary}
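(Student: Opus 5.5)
The statement to prove is \cref{lemma:expectation-D}. It reduces almost immediately to \cref{lemma:tight_bound_probability_collision}, so the plan is a linearity-of-expectation argument followed by direct substitution.

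\textbf{Step 1: write \(D\) as a sum of indicators.} Fix the configuration \(C\) at the current round. For each agent \(u\in[n]\), let \(I_u\) be the indicator that \(u\) sees some opinion twice among its first \(h\) samples in \pullh. Then \(D=\sum_{u\in[n]} I_u\).

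\textbf{Step 2: compute \(\bbE[I_u\mid C]\).} Conditioned on \(C\), every agent draws its samples i.i.d.\ uniformly from the \(n\) nodes, with repetition. So each sample independently shows opinion \(\ell\) with probability \(p_\ell\), and this distribution does not depend on the agent's own opinion. The sample sequence of agent \(u\) is therefore distributed exactly as the sequence \(\{S_j\}\) used to define \(H\).

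In \pullh the agent stops at the first repetition, provided it happens within \(h\) samples. Hence the event \(\{I_u=1\}\) is exactly \(\{H\le h\}\) for this sequence. Note that capping the samples at \(h\) does not change whether a repetition occurs by time \(h\). It follows that \(\bbE[I_u\mid C]=\Pr(H\le h\mid C)=Q\).

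\textbf{Step 3: conclude.} By linearity, \(\bbE[D\mid C]=nQ\); independence between agents is not needed. Multiplying the two-sided bound of \cref{lemma:tight_bound_probability_collision} by \(n\) gives the stated inequalities.

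The only point requiring care is the identification in Step 2: the truncated protocol's update-by-duplicate event must coincide with \(\{H\le h\}\) from the untruncated \pullstar sampling sequence. This holds because both processes use the same first \(h\) samples. There is no real obstacle beyond this, since the substantive work lies in \cref{lemma:tight_bound_probability_collision}, which we assume.
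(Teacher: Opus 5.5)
Your proposal is correct and is essentially the paper's argument. The paper states this as an immediate corollary of \cref{lemma:tight_bound_probability_collision}: each agent's first \(h\) samples are i.i.d.\ with law \(p\) (self-loops included), so the agent updates via a duplicate with probability \(Q=\Pr(H\le h\mid C)\), which gives \(\bbE[D\mid C]=nQ\) and the bounds after multiplying by \(n\). The paper implicitly uses the stronger fact \(D\sim\mathrm{Bin}(n,Q)\), as in \cref{lemma:concentration_D}, but, as you note, linearity of expectation alone suffices here.
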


We start by proving an upper bound on the probability of seeing an opinion twice within the first $h$ samples.
\begin{lemma}
    \label{lemma:LB_poisson_approximation}
    Let $\pa{X_1,\ldots,X_k}$ be distributed as a multinomial random vector with parameters $h$ and $(p_1,\ldots,p_k)$. Then
    \[
        \pr{\cup_{i\in[k]} \cpa{X_i > 1 }} \leq \frac{h^2}{2} \sum_{j\in[k]} p_j^2
    \]
\end{lemma}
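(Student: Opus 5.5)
The plan is a union bound over pairs of sample positions. Write the \(h\) samples as i.i.d.\ random variables \(S_1,\ldots,S_h\), each taking value \(\mathbf{e}_i\) with probability \(p_i\), so that \(X_\ell=\sum_{r\in[h]}(S_r)_\ell\). The event \(\cup_{\ell\in[k]}\cpa{X_\ell>1}\) holds exactly when some opinion appears at least twice among the samples. This is the same as saying that there exist indices \(1\le r<s\le h\) with \(S_r=S_s\). So we may write
\[
  \bigcup_{\ell\in[k]}\cpa{X_\ell>1} \;=\; \bigcup_{1\le r<s\le h}\cpa{S_r=S_s}.
\]

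Next, for a fixed pair \(r<s\), independence of \(S_r\) and \(S_s\) gives
\[
  \pr{S_r=S_s}=\sum_{j\in[k]}\pr{S_r=\mathbf{e}_j}\pr{S_s=\mathbf{e}_j}=\sum_{j\in[k]}p_j^2 .
\]
Applying the union bound over the \(\binom{h}{2}\) pairs, we get
\[
  \pr{\cup_{\ell\in[k]}\cpa{X_\ell>1}}\le \binom{h}{2}\sum_{j\in[k]}p_j^2=\frac{h(h-1)}{2}\,\norm{p}_2^2\le \frac{h^2}{2}\sum_{j\in[k]}p_j^2 .
\]

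No step here is really an obstacle. The only point that needs care is the set identity in the first step. Having some \(X_\ell\ge 2\) means at least two sample positions carry the same opinion, so some pair of positions coincides; the converse direction is immediate.

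An equivalent route is a first-moment bound. One bounds \(\pr{X_\ell\ge2}\le\bbE\big[\binom{X_\ell}{2}\big]=\binom{h}{2}p_\ell^2\), using that \(X_\ell\sim\mathrm{Binomial}(h,p_\ell)\), and then sums over \(\ell\in[k]\). This gives the same bound.

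The upper bound \(\min\cpa{h^2\norm{p}_2^2,1}\) in \cref{lemma:tight_bound_probability_collision} then follows immediately, because any probability is also at most \(1\).
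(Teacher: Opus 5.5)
Your proposal is correct and follows the paper's proof essentially line for line. The paper also writes the event as the existence of a colliding pair among the $h$ i.i.d.\ samples, applies a union bound over the $\binom{h}{2}$ pairs (each colliding with probability $\sum_{j} p_j^2$), and finishes with $\binom{h}{2}\le h^2/2$.
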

\begin{proof}
Let $Z_1,\dots,Z_h$ be the $h$ i.i.d. samples with distribution $(p_1,\dots,p_k)$.
Observe that
\[
\bigcup_{i\in[k]} \{X_i>1\}
\]
occurs if and only if there exist two distinct samples having the same opinion.

For $a<b$, define
\[
I_{a,b} = \mathbf{1}\{Z_a = Z_b\}.
\]
Then
\[
\Pr\!\left(\bigcup_{i\in[k]} \{X_i>1\}\right)
=
\Pr\!\left(\sum_{a<b} I_{a,b} \ge 1\right)
\le
\sum_{a<b} \Pr(Z_a = Z_b),
\]
where the last inequality follows from the union bound.

Since
\[
\Pr(Z_a = Z_b) = \sum_{j\in[k]} p_j^2,
\]
and there are $\binom{h}{2} \le \frac{h^2}{2}$ pairs, we obtain
\[
\Pr\!\left(\bigcup_{i\in[k]} \{X_i>1\}\right)
\le
\frac{h^2}{2} \sum_{j\in[k]} p_j^2.
\]
\end{proof}

\begin{lemma}[Concentration D]
    \label{lemma:concentration_D}
    Let \(C\) be an ordered configuration, and let \(D\) be the number of agents that update by seeing an opinion twice within the first \(h\) samples. Write
    \[
        Q\coloneq \Pr(H\le h\mid C),
        \qquad
        \bbE[D\mid C]=nQ.
    \]
    Then, for every constant \(a>0\), there exists a constant \(K_a>0\) such that, if \(nQ\ge K_a\log n\), then
    \[
        \Pr\!\left(
            D\in \left[\frac{nQ}{2},\frac{3nQ}{2}\right]
            \,\middle|\,
            C
        \right)
        \ge 1-\frac{1}{n^a}.
    \]
\end{lemma}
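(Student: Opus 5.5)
Conditioned on \(C\), each agent performs its own sampling with fresh independent randomness. So \(D\) is a sum of \(n\) independent Bernoulli indicators: \(D=\sum_{v=1}^n \mathbf{1}\{H_v\le h\}\), where \(H_v\) is the number of samples agent \(v\) draws until it sees an opinion twice. All agents sample uniformly with repetition from the same configuration, so every indicator has the same success probability \(Q=\Pr(H\le h\mid C)\). Hence \(D\mid C\sim\text{Binomial}(n,Q)\), which recovers \(\bbE[D\mid C]=nQ\) as in \cref{lemma:expectation-D}. The only point needing care is that independence holds conditionally on \(C\): the update rule of an agent depends only on its own samples, and \(H_v\le h\) is an event about the first \(h\) samples of \(v\) alone.

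Next I apply the two-sided multiplicative Chernoff bound with \(\delta=1/2\). For a sum of independent Bernoulli variables with mean \(\mu=nQ\), it gives
\[
\Pr\bigl(|D-\mu|>\mu/2 \,\big|\, C\bigr)\le 2\exp\!\left(-\frac{(1/2)^2\mu}{3}\right)=2\exp\!\left(-\frac{\mu}{12}\right).
\]
Under the hypothesis \(nQ\ge K_a\log n\), the right-hand side is at most \(2n^{-K_a/12}\).

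Finally, choose \(K_a\) so that \(2n^{-K_a/12}\le n^{-a}\) for all \(n\ge 2\). For example, \(K_a=12(a+2)\) works, since \(2n^{-(a+2)}\le n^{-a}\) once \(n\ge 2\) (and tiny \(n\) can be absorbed into the constant). This yields \(D\in[nQ/2,3nQ/2]\) with probability at least \(1-n^{-a}\).

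There is no real obstacle here. The only step to state carefully is the conditional independence and identical distribution of the per-agent indicators, which justifies the Binomial model and hence the Chernoff bound.
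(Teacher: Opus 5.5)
Your proof is correct and follows essentially the same route as the paper. You both identify \(D\mid C\sim\mathrm{Bin}(n,Q)\) from the agents' conditionally independent sampling, apply the two-sided multiplicative Chernoff bound with \(\delta=1/2\) to get \(2\exp(-nQ/12)\), and pick \(K_a\) linear in \(a\): the paper uses \(K_a\ge 12(a+1)\), and your \(12(a+2)\) works equally well.
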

\begin{proof}
    Since each agent updates independently with probability \(Q\), conditional on \(C\) we have
    \[
        D\sim \mathrm{Bin}(n,Q).
    \]
    Applying the multiplicative Chernoff bound with \(\beta=1/2\), we obtain
    \[
        \Pr\!\left(
            D\notin \left[\frac{nQ}{2},\frac{3nQ}{2}\right]
            \,\middle|\,
            C
        \right)
        \le
        2\exp\!\left(-\frac{nQ}{12}\right).
    \]
    Therefore, if \(nQ\ge K_a\log n\) with \(K_a\ge 12(a+1)\), then
    \[
        2\exp\!\left(-\frac{nQ}{12}\right)\le \frac{1}{n^a},
    \]
    concluding the proof.
\end{proof}

In order to provide a lower bound we will need that \(h \norm{p}_2\) is smaller than some small constant. We will extend the result by monotonicity.

We will use a result by \cite{arratia1989}.
In order to state it, we need to introduce some notation.
Let \(I\) be a finite index set.
For \(\alpha \in I\), let \(X_\alpha\) be a Bernoulli r.v.\ with \(\Pr(X_\alpha = 1) = p_{\alpha} > 0\).
Let \(W = \sum_{\alpha \in I}X_{\alpha}\) and \(\lambda = \bbE[W]\).
For each \(\alpha \in I\), suppose we are given a \(B_\alpha \subseteq I\), with \(\alpha \in B_\alpha\): 
we will think of \(B_\alpha\) as the set such that \(X_\alpha\) is independent (or ``nearly'' independent) of \(X_\beta\) for each \(\beta \in I \setminus B_{\alpha}\).
For any two \(\alpha, \beta \in I\), let \(p_{\alpha \beta} = \bbE[X_{\alpha} X_{\beta}]\).
Furthermore, let 
\begin{align*}
    s_\alpha & = \bbE\Big[ \big\lvert \bbE\big[X_\alpha - p_{\alpha} \mid \sum_{\beta \in I \setminus B_{\alpha}} X_{\beta}\big] \big\rvert \Big].
\end{align*}
Now set
\begin{align*}
    b_1 & = \sum_{\alpha \in I} \sum_{\beta \in B_\alpha} p_\alpha p_\beta; \\
    b_2 & = \sum_{\alpha \in I} \sum_{\beta \in B_{\alpha} \setminus \{\alpha\}} p_{\alpha \beta}; \\
    b_3 & = \sum_{\alpha \in I} s_\alpha.
\end{align*}

The following holds by \cite[Theorem 1]{arratia1989}:

\begin{align}\label{eq:trick:arratia}
    \left \lvert \Pr(W = 0) - e^{-\lambda} \right \rvert \le \frac{1 - \frac{1}{e^{\lambda}}}{\lambda}\cdot (b_1 + b_2 + b_3).
\end{align}

We prove the following.

\begin{theorem}\label{thm:no-collision:small-norm}
    Let \((X_1, \ldots, X_k) \sim\mathrm{Multinomial}(h;p) \), where \(p = (p_1, \ldots, p_k)\) and \(h \ge 2\).
    Let \(C = h \norm{p}_2\).
    Then,
    \[
        \pr{\cap_{i \in [k]} \{X_i \le 1\}} \le \exp[- C^2 / 4] + (1 - \exp[-C^2 / 2])\left(\frac{2C^2}{h} + 2C \right). 
    \]
\end{theorem}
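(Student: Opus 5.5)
We apply the Arratia--Goldstein--Gordon bound \eqref{eq:trick:arratia} to the collision count among the \(h\) samples. Let \(Z_1,\dots,Z_h\) be the i.i.d.\ samples with law \(p\). Take the index set \(I=\{(a,b): 1\le a<b\le h\}\) and \(X_{ab}=\mathbf{1}\{Z_a=Z_b\}\). Each \(X_{ab}\) has success probability \(\pi\coloneq\sum_j p_j^2=\norm{p}_2^2\). Set \(W=\sum_{\alpha\in I}X_\alpha\). The event \(\cap_{i}\{X_i\le 1\}\) is exactly \(\{W=0\}\), and
\[
\lambda=\bbE[W]=\binom{h}{2}\norm{p}_2^2=\frac{h-1}{2h}\,C^2\ge \frac{C^2}{4},
\]
using \(h\ge 2\). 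Hence \(e^{-\lambda}\le e^{-C^2/4}\), which gives the first term. For the prefactor, \((1-e^{-\lambda})/\lambda\) is at most \(1\), and \(\lambda\le C^2/2\). I would write the final bound as \((1-e^{-\lambda})\cdot(b_1+b_2+b_3)/\lambda\). Then I would bound \(1-e^{-\lambda}\le 1-e^{-C^2/2}\) and show \((b_1+b_2+b_3)/\lambda\le 2C^2/h+2C\).

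For the neighbourhoods, let \(B_{(a,b)}\) be the set of pairs sharing at least one index with \((a,b)\). Then \(X_{ab}\) is a function of \((Z_a,Z_b)\) only, so it is genuinely independent of \(\{X_\beta:\beta\notin B_{(a,b)}\}\). This gives \(s_\alpha=0\) and therefore \(b_3=0\).

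Next, \(|B_\alpha|\le 2(h-2)+1<2h\), so
\[
b_1=|I|\,|B_\alpha|\,\pi^2\le \lambda\cdot 2h\,\pi,
\qquad
\frac{b_1}{\lambda}\le 2h\norm{p}_2^2=\frac{2C^2}{h}.
\]
For \(b_2\), a pair \(\beta\ne\alpha\) in \(B_\alpha\) shares exactly one index with \(\alpha\), for instance \(\alpha=(a,b)\) and \(\beta=(a,c)\). In that case \(p_{\alpha\beta}=\Pr(Z_a=Z_b=Z_c)=\sum_j p_j^3\). There are fewer than \(2h\) such \(\beta\), so
\[
\frac{b_2}{\lambda}\le \frac{2h\sum_j p_j^3}{\norm{p}_2^2}.
\]
The key estimate is \(\sum_j p_j^3\le \max_j p_j\cdot\norm{p}_2^2\le \norm{p}_2^3\). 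This gives \(b_2/\lambda\le 2h\norm{p}_2=2C\).

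Combining these bounds in \eqref{eq:trick:arratia} gives
\[
\Pr(W=0)\le e^{-\lambda}+(1-e^{-\lambda})\left(\frac{2C^2}{h}+2C\right),
\]
and monotonicity in \(\lambda\) yields the stated form. The only delicate point is the bookkeeping of monotonicity. The map \(\lambda\mapsto e^{-\lambda}\) is decreasing, so \(e^{-\lambda}\le e^{-C^2/4}\). The map \(\lambda\mapsto 1-e^{-\lambda}\) is increasing, so \(1-e^{-\lambda}\le 1-e^{-C^2/2}\). These two bounds use different endpoints of the interval \(\lambda\in[C^2/4,C^2/2]\), which is why both \(C^2/4\) and \(C^2/2\) appear in the statement. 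I also need the bound on \(b_2\) via \(\sum_j p_j^3\le\norm{p}_2^3\), and I expect that to be the main, though mild, obstacle.
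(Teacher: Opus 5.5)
Your proposal is correct and follows the paper's proof essentially step for step. It uses the same Arratia--Goldstein--Gordon setup with pair-collision indicators and overlapping-pair neighbourhoods (so $b_3=0$), and the same bounds $b_1\le 2\lambda C^2/h$ and $b_2\le 2C\lambda$ via $\sum_j p_j^3\le\norm{p}_2^3$, with $C^2/4\le\lambda\le C^2/2$ applied at the two different endpoints exactly as the paper does.
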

\begin{proof}
    Let $Y_1,\dots,Y_h$ be i.i.d.\ random variables taking values in $\{1,\dots,k\}$ with
    $\Pr(Y_t=i)=p_i$. 
    Define the multinomial counts
    \[
        X_i=\sum_{t=1}^h \mathbf 1\{Y_t=i\},\qquad i=1,\dots,k,
    \]
    so that $(X_1,\dots,X_k)\sim\mathrm{Multinomial}(h;p)$.
    For $1\le a<b\le h$ define the pair-collision events and indicators
    \[
        A_{ab}=\{Y_a=Y_b\},\qquad \xi_{ab}=\mathbf 1_{A_{ab}},
    \]
    and let
    \[
        W=\sum_{1\le a<b\le h} \xi_{ab}.
    \]
    Then $W$ counts the number of colliding pairs among the $h$ draws. Observe that
    \[
        \left\{\forall i,\ X_i\le 1\right\}\ \iff\ \{Y_1,\dots,Y_h \text{ are all distinct}\}
        \ \iff\ \{W=0\}.
    \]
    
    Let $I=\{(a,b):1\le a<b\le h\}$ index the indicators. For $\alpha=(a,b)\in I$ define
    \[
        B_{ab}=\{(c,d)\in I:\{c,d\}\cap\{a,b\}\neq\varnothing\}.
    \]
    If $(c,d)\notin B_{ab}$ then $\{c,d\}\cap\{a,b\}=\varnothing$, and hence \(\xi_{ab}\) is independent
    of \(\cpa{\xi_\beta}_{\beta \in I \setminus B_{ab}}\). Therefore the term \(b_3\) in \cref{eq:trick:arratia} equals \(0\).
    Define
    \[
        p_{ab}=\bbE[\xi_{ab}]=\Pr(Y_a=Y_b)=\sum_{i=1}^k p_i^2=\norm{p}_2^2,
    \]
    and note that $|I|=\binom{h}{2}$. Hence
    \[
        \lambda=\bbE[W]=\sum_{(a,b)\in I} p_{ab}=\binom{h}{2}\norm{p}_2^2.
    \]
    
    We bound the quantities $b_1$ and $b_2$ from \cref{eq:trick:arratia}, namely
    \[
        b_1=\sum_{\alpha\in I}\ \sum_{\beta\in B_\alpha} p_\alpha p_\beta,
        \qquad
        b_2=\sum_{\alpha\in I}\ \sum_{\substack{\beta\in B_\alpha\\ \beta\neq \alpha}} \bbE[\xi_\alpha\xi_\beta].
    \]
    Since $p_\alpha=\norm{p}_2^2$ for all $\alpha$ and $|B_{ab}|=2h-3$ for every $(a,b)$, we have
    \[
        b_1=\binom{h}{2}(2h-3)\norm{p}_2^4
        = \lambda(2h-3)\norm{p}_2^2.
    \]
    Next, if $\beta\in B_{ab}$ and $\beta\neq (a,b)$ then the two pairs share exactly one index, so for
    distinct $a,b,c$,
    \[
        \bbE[\xi_{ab}\xi_{ac}]=\Pr(Y_a=Y_b=Y_c)=\sum_{i=1}^k p_i^3.
    \]
    Each $(a,b)$ has exactly $2(h-2)$ such elements $\beta\neq (a,b)$ in \(B_{ab}\), hence
    \[
        b_2=\binom{h}{2}\,2(h-2)\sum_{i=1}^k p_i^3.
    \]
    Using $\sum_i p_i^3\le \norm{p}_\infty\sum_i p_i^2\le \norm{p}_2^3$, we get
    \[
        b_2\le \binom{h}{2}\,2(h-2)\norm{p}_2^3
        =2(h-2)\lambda\norm{p}_2.
    \]
    Then
    \[
        b_1=\lambda(2h-3)\norm{p}_2^2
        \le 2\lambda h\norm{p}_2^2
        =2\lambda \frac{(h\norm{p}_2)^2}{h}
        = 2C^2\,\frac{\lambda}{h},
    \]
    and
    \[
        b_2\le 2(h-2)\lambda\norm{p}_2 \le 2h\lambda\norm{p}_2 = 2C\,\lambda.
    \]
    Moreover,
    \[
        \lambda=\binom{h}{2}\norm{p}_2^2 \le \frac{h^2}{2}\norm{p}_2^2=\frac12 (h\norm{p}_2)^2= \frac{C^2}{2},
    \]
    and
    \[
        \lambda=\binom{h}{2}\norm{p}_2^2 \ge \frac{h^2}{4}\norm{p}_2^2=\frac14 (h\norm{p}_2)^2= \frac{C^2}{4}, 
    \]
    
    By \cref{eq:trick:arratia}, using $b_3=0$,
    \[
        \bigl|\Pr(W=0)-e^{-\lambda}\bigr|
        \le (b_1+b_2+b_3)\frac{1-e^{-\lambda}}{\lambda}
        = (b_1+b_2)\frac{1-e^{-\lambda}}{\lambda}.
    \]
    Hence,
    \[
        \Pr(W=0)\le \exp[- C^2 / 4] + (1 - \exp[-C^2 / 2])\left(\frac{2C^2}{h} + 2C \right).
    \]
\end{proof}

We are ready to put together upper and lower bounds and prove \Cref{lemma:tight_bound_probability_collision}.
\begin{proof}[Proof of \Cref{lemma:tight_bound_probability_collision}]
    Set
    \[
        c_1\coloneq \frac18,
        \qquad
        c_2\coloneq \frac12,
        \qquad
        c_3\coloneq \frac{1}{16}.
    \]
    Consider first the case $C = h \norm{p}_2 \leq c_1$. By \Cref{thm:no-collision:small-norm,lemma:LB_poisson_approximation}, we have that
    \[
        \pa{1-\exp\pa{- \frac{C^2}{2}}}\pa{1 - \frac{2C^2}{h} - 2C } \leq \pr{H \leq h \mid C} \leq 2\pa{ 1 - \exp\pa{-\frac{C^2}{2} }}.
    \]
    Since \(C\le c_1\), we have \(C^2/2\le 1/128\), and therefore
    \[
        e^{-C^2/2}\le 1-\frac{C^2}{4}.
    \]
    Thus
    \[
        \frac{C^2}{4}\pa{1 - \frac{2C^2}{h} - 2C } \leq \pr{H \leq h \mid C} \leq C^2.
    \]
    As \(C^2/h = h \norm{p}_2^2 \leq C \leq c_1\), we have
    \[
        1 - \frac{2C^2}{h} - 2C \ge 1-4C \ge \frac12,
    \]
    and therefore
    \[
        \frac{C^2}{8} \leq \pr{H \leq h \mid C} \leq C^2.
    \]
    Assume now that \(C=h\norm{p}_2>c_1\).
    If \(\norm{p}_2>c_1\), then
    \[
        \pr{H\le h\mid C}\ge \pr{Y_1=Y_2\mid C}=\norm{p}_2^2>c_1^2= \frac{1}{64}.
    \]
    Otherwise \(\norm{p}_2\le c_1\). If \(\norm{p}_2\ge c_1/2\), then
    \[
        \pr{H\le h\mid C}\ge \pr{H\le 2\mid C}=\norm{p}_2^2\ge \frac{c_1^2}{4}=\frac{1}{256}.
    \]
    If instead \(\norm{p}_2<c_1/2\), let \(h'\coloneq \lfloor c_1/\norm{p}_2\rfloor\). Then \(h'\ge 1\), \(h'<h\), and
    \[
        h'\norm{p}_2
        \ge
        c_1-\norm{p}_2
        >
        \frac{c_1}{2}
        =
        c_3,
        \qquad
        h'\norm{p}_2\le c_1.
    \]
    In the latter subcase, there exists \(h'<h\) such that \(c_3\le h'\norm{p}_2\le c_1\). Since \(C=h\norm{p}_2>c_1\), this also implies \(h\ge 3\), and therefore \(h'\ge 2\). Applying the first case to \(h'\), we obtain
    \[
        \pr{H \leq h \mid C}
        \geq
        \pr{H \leq h' \mid C}
        \geq
        \frac{h'^2 \norm{p}_2^2}{8}
        \geq
        \frac{c_3^2}{8}
        =
        \frac{1}{2048}.
    \]
    Combining the two cases, we conclude that
    \[
        \pr{H\le h\mid C}
        \ge
        \frac{1}{2048}\min\cpa{C^2,1}.
    \]
\end{proof}
\subsection{Some basic inequalities}
In the next section, we derive from \Cref{lemma:monotonicity_samples_dejavu}, several basic inequalities that will be useful throughout the rest of the paper.
\begin{claim}
    \label{claim:q_i/q_j_lb_ub_square_p_i/_pj}
    Let \(C\) be an ordered configuration.
    For any opinions $i,j\in[k]$, it holds
    \[
        \frac{1}{3}\frac{p_i^2}{p_j^2} \leq \frac{q_i}{q_j} \leq 3\frac{p_i^2}{p_j^2}.
    \]
\end{claim}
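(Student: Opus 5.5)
The plan is to reduce the claim to a comparison of the truncated winning probabilities, and then chain \cref{lemma:monotonicity_samples_dejavu} with \cref{lemma:lower_upper_ratio_probability}. We tacitly assume \(p_i,p_j>0\), since otherwise the ratio is not defined.

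\emph{Step 1 (rewrite \(q_i\)).} Since each of the \(n\) agents independently updates to opinion \(i\) by sampling it twice exactly on the event \(\MM_i\cap\{H\le h\}\), we have \(q_i = n\,\pr{\MM_i, H\le h\mid C}\). Hence
\[
\frac{q_i}{q_j}=\frac{\pr{\MM_i, H\le h\mid C}}{\pr{\MM_j, H\le h\mid C}}.
\]
If \(h\ge k+1\), the event \(\{H\le h\}\) coincides with \(\{H\le k+1\}\), as remarked after \cref{lemma:monotonicity_samples_dejavu}. So we may assume \(2\le h\le k+1\), which is the range where that lemma applies.

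\emph{Step 2 (case \(p_i\ge p_j\)).} \Cref{lemma:monotonicity_samples_dejavu} gives directly \(q_i/q_j\le p_i^2/p_j^2\le 3\,p_i^2/p_j^2\). For the lower bound, the same lemma gives \(q_i/q_j\ge \pr{\MM_i\mid C}/\pr{\MM_j\mid C}\). Applying \cref{lemma:lower_upper_ratio_probability}, this is at least
\[
\frac{p_i^2}{p_j^2}\cdot\frac{p_i+3p_j}{3p_i+p_j}.
\]
That lemma is stated for \(i\le j\) in an ordered configuration, but its proof only uses \(p_i\ge p_j\). Finally, \(3(p_i+3p_j)\ge 3p_i+p_j\), so the fraction \(\frac{p_i+3p_j}{3p_i+p_j}\) is at least \(1/3\). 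This yields \(q_i/q_j\ge \frac13 p_i^2/p_j^2\).

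\emph{Step 3 (case \(p_i<p_j\)).} Apply Step 2 with the roles of \(i\) and \(j\) swapped to get
\[
\frac13\frac{p_j^2}{p_i^2}\le\frac{q_j}{q_i}\le\frac{p_j^2}{p_i^2}.
\]
Inverting gives \(p_i^2/p_j^2\le q_i/q_j\le 3p_i^2/p_j^2\), which lies inside the claimed interval.

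There is no real obstacle in this argument. The only points needing care are checking that the hypotheses of the two lemmas hold in the form used (the range of \(h\), and \(p_i\ge p_j\) rather than index ordering), and the elementary bound \(\frac{p_i+3p_j}{3p_i+p_j}\ge\frac13\).
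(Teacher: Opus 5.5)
Your proof is correct and follows the paper's argument. You take the upper bound directly from \cref{lemma:monotonicity_samples_dejavu}, get the lower bound by chaining that lemma with \cref{lemma:lower_upper_ratio_probability} and using $\frac{p_i+3p_j}{3p_i+p_j}\ge \frac13$, and handle the case $p_i<p_j$ by swapping and inverting. Your extra remarks make explicit points the paper leaves implicit: the identification $q_i=n\Pr(\MM_i,H\le h\mid C)$, the reduction of $h\ge k+1$ to $h=k+1$, and the fact that only $p_i\ge p_j$ (not index order) is used.
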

\begin{proof}
    By \Cref{lemma:monotonicity_samples_dejavu}, we know that if $p_i>p_j$, we have that 
    \[
        \frac{q_i}{q_j} \leq \frac{p_i^2}{p_j^2} \leq 3 \frac{p_i^2}{p_j^2}.
    \]
    Also, by the standard inequality $\min\left\{ \frac{x_1}{y_1}, \frac{x_2}{y_2} \right\} \leq \frac{x_1  + x_2}{y_1 + y_2} \leq \max\left\{ \frac{x_1}{y_1}, \frac{x_2}{y_2}\right\}$, we have
    \[
        \frac{q_i}{q_j} \geq \frac{p_i^2}{p_j^2} \frac{p_i + 3 p_j}{3 p_i + p_j} \geq \frac{p_i^2}{p_j^2} \min\cpa{\frac{1}{3}, 3} = \frac{1}{3}\frac{p_i^2}{p_j^2}.
    \]
    Taking the reciprocal in both equation we prove also the case $p_j<p_i$, concluding the proof of \Cref{claim:q_i/q_j_lb_ub_square_p_i/_pj}.
\end{proof}
\begin{claim}
    \label{claim:q_i_over_Q_ub_and_lb_norm2}
    Let \(C\) be an ordered configuration.
    For all opinions \(i\in[k]\), it holds
    \[
        \frac{p_i^2}{ 3 \norm{p}_2^2} \leq \frac{q_i}{Q} \leq \frac{3 p_i^2}{ \norm{p}_2^2}.
    \]
\end{claim}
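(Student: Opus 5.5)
The plan is to derive the claim directly from \cref{claim:q_i/q_j_lb_ub_square_p_i/_pj}, writing \(Q\) as a sum of the \(q_\ell\) and averaging. Since \(Q=\sum_{\ell\in[k]} q_\ell\), we have
\[
  \frac{Q}{q_i} = \sum_{\ell\in[k]} \frac{q_\ell}{q_i}.
\]
We then bound each ratio \(q_\ell/q_i\) using \cref{claim:q_i/q_j_lb_ub_square_p_i/_pj}, applied to the pair \((\ell,i)\). That claim holds for any two opinions, regardless of which has the larger density, because its proof handles the case \(p_j>p_i\) by taking reciprocals. It gives
\[
  \frac{1}{3}\frac{p_\ell^2}{p_i^2} \le \frac{q_\ell}{q_i} \le 3\frac{p_\ell^2}{p_i^2}
\]
for every \(\ell\in[k]\). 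For \(\ell=i\) this reads \(1/3\le 1\le 3\), which is trivially consistent.

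Summing over \(\ell\), and using \(\sum_\ell p_\ell^2=\norm{p}_2^2\), we obtain
\[
  \frac{\norm{p}_2^2}{3p_i^2}\le \frac{Q}{q_i}\le \frac{3\norm{p}_2^2}{p_i^2}.
\]
Taking reciprocals reverses the inequalities and yields exactly the statement,
\[
  \frac{p_i^2}{3\norm{p}_2^2}\le \frac{q_i}{Q}\le \frac{3p_i^2}{\norm{p}_2^2}.
\]

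There is a degenerate case to handle. If \(p_i=0\) (or \(q_i=0\)), then opinion \(i\) has no support and \(q_i=0\), so both sides of the claim vanish. Equivalently, one can restrict \(k\) to the opinions with \(C_\ell>0\), as is implicit in ordered configurations. For the same reason, every ratio \(q_\ell/q_i\) used above is well defined once \(p_i>0\), because \(q_i>0\) whenever \(p_i>0\) and \(h\ge2\).

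No step looks genuinely hard. The only point needing care is to check that \cref{claim:q_i/q_j_lb_ub_square_p_i/_pj} really does hold symmetrically in \(i\) and \(j\), and that it applies to the truncated quantities \(q_\ell\) in \pullh rather than only to the \pullstar probabilities. The claim's proof goes through \cref{lemma:monotonicity_samples_dejavu} and \cref{lemma:lower_upper_ratio_probability}, which together give \(\frac{p_i^2}{p_j^2}\frac{p_i+3p_j}{3p_i+p_j}\le q_i/q_j\le p_i^2/p_j^2\) for \(p_i\ge p_j\). I would take that claim as given.
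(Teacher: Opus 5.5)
Your proposal is correct and is essentially the paper's own proof. You write $Q/q_i=\sum_{\ell\in[k]} q_\ell/q_i$, bound each term by the two-sided estimate $\frac13\, p_\ell^2/p_i^2\le q_\ell/q_i\le 3\,p_\ell^2/p_i^2$ from the preceding claim, sum to get $\norm{p}_2^2/(3p_i^2)\le Q/q_i\le 3\norm{p}_2^2/p_i^2$, and take reciprocals; your remark about restricting to opinions with positive support, so that every ratio is well defined, is a harmless refinement that the paper leaves implicit.
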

\begin{proof}
    By \Cref{claim:q_i/q_j_lb_ub_square_p_i/_pj} for all $i,j\in[k]$, $\frac{q_i}{q_j} \geq \frac{1}{3} \frac{p_i^2}{p_j^2}$.
    Therefore, we obtain
    \[
        \frac{Q}{q_i} = \sum_{j\in[k]} \frac{q_j}{q_i} \geq \sum_{j\in[k]} \frac{1}{3} \frac{p_j^2}{p_i^2} = \frac{\norm{p}_2^2}{3 p_i^2}.
    \]
    Taking the reciprocal of both sides, we conclude the proof of the upper bound to $q_i/Q$.
    On the other hand, we know that for all $i,j\in[k]$, $\frac{q_i}{q_j} \leq 3 \frac{p_i^2}{p_j^2}$. Similarly, we have
    \[
        \frac{Q}{q_i} = \sum_{j\in[k]} \frac{q_j}{q_i} \leq \sum_{j\in[k]} 3 \frac{p_j^2}{p_i^2} = \frac{3 \norm{p}_2^2}{p_i^2}.
    \]
    Taking the reciprocal of both sides, we conclude the proof of \Cref{claim:q_i_over_Q_ub_and_lb_norm2}.
\end{proof}
\begin{claim}
    \label{claim:q_i_over_Q_lb_p_j}
    Let \(C\) be an ordered configuration.
    For any opinion $j$ s.t. $p_j > c_1 p_1$ for some constant $c_1>0$, it holds
    \[
        \frac{q_j}{Q} \geq \frac{c_1}{3} p_j.
    \]
\end{claim}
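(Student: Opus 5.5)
We derive the bound directly from \Cref{claim:q_i_over_Q_ub_and_lb_norm2}, which gives
\[
  \frac{q_j}{Q} \ge \frac{p_j^2}{3\norm{p}_2^2}.
\]
Then it suffices to show that \(\norm{p}_2^2 \le p_j / c_1\), since this yields \(q_j/Q \ge p_j^2 c_1/(3 p_j) = \frac{c_1}{3} p_j\).

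To bound the squared norm, we use that the configuration is ordered, so \(p_\ell \le p_1\) for every \(\ell \in [k]\). Hence
\[
  \norm{p}_2^2 = \sum_{\ell\in[k]} p_\ell^2 \le p_1 \sum_{\ell \in [k]} p_\ell = p_1,
\]
using \(\sum_\ell p_\ell = 1\). By the hypothesis \(p_j > c_1 p_1\), we get \(p_1 < p_j/c_1\), and therefore \(\norm{p}_2^2 < p_j/c_1\). Substituting into the first display gives the claim.

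There is no real obstacle here. The only step to check is that \Cref{claim:q_i_over_Q_ub_and_lb_norm2} indeed holds for every opinion \(j\), with no restriction on its rank, which is the case as it is stated. The ordering is used only to get \(\max_\ell p_\ell = p_1\), and the hypothesis \(p_j > c_1 p_1\) also ensures \(p_j > 0\), so no division issues arise.
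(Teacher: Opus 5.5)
Your proof is correct and is essentially the paper's argument: it combines \Cref{claim:q_i_over_Q_ub_and_lb_norm2} with the bound $\norm{p}_2^2 \le p_1$, which follows from the ordering, and then uses $p_j > c_1 p_1$ to get $q_j/Q \ge p_j^2/(3p_1) \ge \frac{c_1}{3} p_j$.
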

\begin{proof}
    By \Cref{claim:q_i_over_Q_ub_and_lb_norm2}, we have
    \[
        \frac{q_j}{Q} \geq \frac{p_j^2}{ 3 \sum_{\ell\in[k]} p_\ell^2} \geq  \frac{p_j^2}{ 3 p_1} \geq \frac{c_1}{3} p_j.
    \]
\end{proof}

\begin{claim}
    \label{eq:qdqlb2}
    For all opinions $i\in[k]$, it holds
    \[
        \frac{q_1}{q_i}
        \geq
        \frac{p_1^2}{p_i^2}\frac{p_1 + 3 p_i}{3 p_1 + p_i}
        \geq
        \frac{2p_1^2}{p_i(p_1+p_i)}
        .
    \]
\end{claim}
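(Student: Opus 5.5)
The claim has two inequalities, and I would prove them separately.

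\emph{First inequality.} Here \(q_i = n\,\pr{\MM_i, H\le h \mid C}\), by the definition of \(q_i\) as the expected number of agents updating to \(i\) by seeing it twice within \(h\) samples. Since the configuration is ordered, \(p_1 \ge p_i\), so I would apply \cref{lemma:monotonicity_samples_dejavu} with the pair \((1,i)\). This gives
\[
\frac{q_1}{q_i} = \frac{\pr{\MM_1, H\le h \mid C}}{\pr{\MM_i, H\le h \mid C}} \ge \frac{\pr{\MM_1 \mid C}}{\pr{\MM_i \mid C}}.
\]
I would then invoke the lower bound of \cref{lemma:lower_upper_ratio_probability} with \(i\leftarrow 1\), \(j\leftarrow i\), which yields exactly \(\frac{p_1^2}{p_i^2}\cdot\frac{p_1+3p_i}{3p_1+p_i}\). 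Two degenerate cases need a remark. If \(i=1\), both sides equal \(1\). If \(h>k+1\), the event \(\{H\le h\}\) coincides with \(\{H\le k+1\}\), as noted after \cref{lemma:monotonicity_samples_dejavu}, so the lemma still applies.

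\emph{Second inequality.} After cancelling the common factor \(p_1^2/p_i\) (all quantities are positive), it suffices to show
\[
\frac{p_1+3p_i}{p_i(3p_1+p_i)} \ge \frac{2}{p_1+p_i},
\]
which is equivalent to \((p_1+3p_i)(p_1+p_i) \ge 2p_i(3p_1+p_i)\). Expanding, the left side is \(p_1^2+4p_1p_i+3p_i^2\) and the right side is \(6p_1p_i+2p_i^2\). The difference is
\[
p_1^2 - 2p_1p_i + p_i^2 = (p_1-p_i)^2 \ge 0,
\]
which proves the inequality. Note that this step holds for any nonnegative \(p_1,p_i\) and does not use the ordering.

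There is no real obstacle. The only care needed is to check that the hypotheses of the two cited lemmas (the ordering \(p_1\ge p_i\) and the range of \(h\)) are met, and that \(q_i>0\), i.e.\ \(p_i>0\), so that the ratios are well defined.
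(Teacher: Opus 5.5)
Your proof is correct and follows the paper's argument exactly. The first inequality chains \cref{lemma:monotonicity_samples_dejavu} with the lower bound of \cref{lemma:lower_upper_ratio_probability}, and the second reduces, after clearing positive denominators, to $(p_1-p_i)^2\ge 0$; your remarks on the degenerate cases ($i=1$, $h>k+1$, $p_i>0$) only make explicit what the paper leaves implicit.
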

\begin{proof}
    By \Cref{lemma:monotonicity_samples_dejavu,lemma:lower_upper_ratio_probability}, we have that 
    \[
        \frac{q_1}{q_i} \geq  \frac{p_1^2}{p_i^2}\frac{p_1 + 3 p_i}{3 p_1 + p_i}.
    \]
    We will then prove that
    \[
        \frac{p_1^2}{p_i^2}\frac{p_1 + 3 p_i}{3 p_1 + p_i}
        \geq
        \frac{2p_1^2}{p_i(p_1+p_i)}.
    \]
    Since $p_1>0$, we divide both sides by $p_1^2$. The inequality is equivalent to
\[
\frac{1}{p_i^2} \frac{p_1 + 3 p_i}{3 p_1 + p_i}
\;\ge\;
\frac{2}{p_i(p_1+p_i)}.
\]

Multiplying both sides by the positive quantity 
$p_i^2(3p_1+p_i)(p_1+p_i)$, we obtain the equivalent inequality
\[
(p_1+3p_i)(p_1+p_i)
\;\ge\;
2p_i(3p_1+p_i).
\]

Expanding both sides,
\[
p_1^2 + 4p_1p_i + 3p_i^2
\;\ge\;
6p_1p_i + 2p_i^2.
\]

Rearranging terms gives
\[
p_1^2 - 2p_1p_i + p_i^2 \ge 0,
\]
which is
\[
(p_1 - p_i)^2 \ge 0.
\]
Since this holds for all $p_1\geq p_i$, the claim follows.
\end{proof}

Now we bound the variance of $ C^{(t)}_i$ conditioning on the previous round.
  \begin{lemma}
      \label{lemma:bound_variance}
      Let \(C\) be an ordered configuration.
      We have that 
      \[
        \Var\qty[C'_i \mid C]\leq C_i\min\cpa{c \,h^2 \norm{p}_2^2,1} \pa{ 1 + \frac{3 p_i}{\|p\|_2^2} }.
    \]
  \end{lemma}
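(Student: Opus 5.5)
Conditioned on \(C\), every agent acts independently, so I will write \(C'_i=\sum_{u\in[n]} Z_u\). Here \(Z_u\) is the indicator that agent \(u\) holds opinion \(i\) after the round. The \(Z_u\) are independent Bernoulli variables, which gives
\[
\Var\qty[C'_i \mid C]=\sum_{u}\Pr(Z_u=1)\bigl(1-\Pr(Z_u=1)\bigr).
\]
I will split the agents into two groups: the \(C_i\) agents that currently hold \(i\), and the \(n-C_i\) agents that do not. The bound then follows by estimating each group's contribution in terms of \(Q=\Pr(H\le h\mid C)\) and the per-agent adoption probability of \(i\). I will write \(r_i\) for the latter, so \(q_i=n r_i\) with \(q_i\) as defined in the preliminaries.

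\emph{Agents not holding \(i\).} Such an agent ends with opinion \(i\) exactly when it sees \(i\) twice within the \(h\) samples, which happens with probability \(r_i\). Their total contribution is therefore at most \((n-C_i)r_i\le n r_i = Q\cdot(q_i/Q)\) after normalizing. By \Cref{claim:q_i_over_Q_ub_and_lb_norm2}, \(r_i\le 3Q p_i^2/\norm{p}_2^2\). Multiplying by \(n\) gives at most \(3nQp_i^2/\norm{p}_2^2 = C_i\,Q\cdot 3p_i/\norm{p}_2^2\), using \(np_i=C_i\). This is the second term of the claimed bound.

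\emph{Agents holding \(i\).} Such an agent fails to end with \(i\) only if it sees some other opinion twice, and that happens with probability at most \(Q\). So each of these agents contributes \(\Pr(Z_u=1)(1-\Pr(Z_u=1))\le 1-\Pr(Z_u=1)\le Q\), and the group contributes at most \(C_iQ\) in total. This is the first term.

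Adding the two groups gives \(\Var[C_i'\mid C]\le C_iQ(1+3p_i/\norm{p}_2^2)\). Finally, \Cref{lemma:tight_bound_probability_collision} gives \(Q\le\min\{h^2\norm{p}_2^2,1\}\), which settles the claim with \(c=1\), for any \(c\ge1\).

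The main point to verify is whether \Cref{claim:q_i_over_Q_ub_and_lb_norm2} legitimately bounds the per-agent adoption probability. The quantities \(q_j\) are proportional to \(\Pr(\MM_j,H\le h\mid C)\), and these sum to \(Q\). So the ratio \(q_i/Q\) is exactly \(\Pr(\MM_i,H\le h\mid C)/Q\), and the claim applies to it directly.
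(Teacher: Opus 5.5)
Your proof is correct, and it takes a more direct route than the paper. The paper applies the law of total variance conditioned on \(D\), the number of agents that see a repeat. Given \(D=d\), it writes \(C_i'=B_i+(C_i-H_i)\) with \(B_i\sim\mathrm{Bin}(d,q_i/Q)\) and \(H_i\) hypergeometric. It then bounds each of \(\bbE[\Var(C_i'\mid D,C)\mid C]\) and \(\Var[\bbE(C_i'\mid D,C)\mid C]\) by \(n\Pr(H\le h\mid C)\,p_i(1+3p_i/\norm{p}_2^2)\), using the same bound on \(q_i/Q\) from \Cref{claim:q_i_over_Q_ub_and_lb_norm2} that you use.

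You instead use that, given \(C\), the agents' new opinions are independent, so the variance is a sum of Bernoulli variances, and you split the agents by current opinion:
a holder of \(i\) leaves \(i\) with probability \(\sum_{j\neq i}\Pr(\MM_j,H\le h\mid C)\le\Pr(H\le h\mid C)\), which produces the term \(1\);
a non-holder adopts \(i\) with probability \(\Pr(\MM_i,H\le h\mid C)\le 3\Pr(H\le h\mid C)\,p_i^2/\norm{p}_2^2\), which produces the term \(3p_i/\norm{p}_2^2\).

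Your route avoids the conditioning on \(D\) and the binomial/hypergeometric bookkeeping, and it gives the inequality exactly as stated with \(c=1\). The paper's decomposition ends with an extra factor \(2\) in front of \(\min\{c\,h^2\norm{p}_2^2,1\}\), which it absorbs by ``adjusting the constant.'' That is not literally possible when the minimum equals \(1\), though it is harmless because the lemma is only used inside \(O(\cdot)\) in \Cref{lemma:concentration_all_opinions}. In exchange, the paper's route mirrors the \(D=d\) conditioning used in its expectation lemmas.

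One notational remark: in \Cref{claim:q_i_over_Q_ub_and_lb_norm2} the denominator is \(\sum_j q_j=n\Pr(H\le h\mid C)\), not \(\Pr(H\le h\mid C)\). Since the claim concerns only the ratio \(q_i/\sum_j q_j=\Pr(\MM_i,H\le h\mid C)/\Pr(H\le h\mid C)\), your bound on \(r_i\) still follows exactly as you say.
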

  \begin{proof}
      Let \(D_i\) be the number of agents adopting opinion \(i\) after seeing it twice, and let \(D=\sum_{i\in[k]}D_i\). Then \(D\sim \mathrm{Bin}(n,\Pr(H\le h\mid C))\).
    Conditional on \(D=d\), we have
    \[
        C_i' = B_i + (C_i-H_i),
        \qquad
        B_i \sim \mathrm{Bin}\!\left(d,\frac{q_i}{Q}\right),
        \qquad
        H_i \sim \mathrm{Hypergeom}(n,C_i,d).
    \]
    We have that
    \begin{align*}
        \bbE[\Var(C_i'\mid D,C)\mid C] 
        &=
        \bbE\qty[D \frac{q_i}{Q} \pa{1 - \frac{q_i}{Q}} + D p_i (1-p_i)\frac{n-D}{n-1} \,\middle|\, C]
        \\
        &\leq 
        \bbE\qty[D \frac{q_i}{Q} + D p_i \,\middle|\, C ]
        \\
        &=
        n\Pr(H \leq h\mid C) \pa{\frac{q_i}{Q}  + p_i}
        \\
        &\leq
        n\Pr(H \leq h\mid C) p_i \pa{ \frac{3 p_i}{\norm{p}_2^2}  + 1}.
    \end{align*}
    Moreover, we have
    \begin{align*}
        \Var[\bbE(C_i'\mid D,C)\mid C]
        &=
        \Var[D \frac{q_i}{Q} + (n-D) p_i \mid C]
        \\
        &=
        \pa{\frac{q_i}{Q} - p_i}^2 \Var[D \mid C]
        \\
        &=\pa{\frac{q_i}{Q} - p_i}^2 n\Pr(H \leq h \mid C)\pa{1-\Pr(H \leq h \mid C)}
        \\
        &\leq
        n\Pr(H \leq h \mid C) \pa{\frac{q_i}{Q}  + p_i}
        \\
        &\leq
        n\Pr(H \leq h \mid C) p_i \pa{ \frac{3 p_i}{\norm{p}_2^2}  + 1}.
    \end{align*}
    Therefore, we obtain that
    \begin{align*}
        \Var[C_i' \mid C] 
        &=
        \bbE[\Var(C_i'\mid D, C)\mid C] + \Var[\bbE(C_i'\mid D,C) \mid C]
        \\
        &\leq
        2 n\Pr(H \leq h\mid C) p_i \pa{ \frac{3 p_i}{\norm{p}_2^2}  + 1}
        \\
        &\leq
        2 n \min\cpa{c h^2 \norm{p}_2^2,1} p_i \pa{ \frac{3 p_i}{\norm{p}_2^2}  + 1}
        &\pa{\text{by \Cref{lemma:tight_bound_probability_collision} }}
    \end{align*}
    Since \(np_i=C_i\), this proves the claim after adjusting the absolute constant.
  \end{proof}

\section{Amplification of the multiplicative and additive bias in \texorpdfstring{\pullh}{PULL(h)}}\label{sec:amplification}

\subsection{Expected amplification of the multiplicative bias}\label{sec:expected-amplification}
The following lemma characterizes the expected growth of the ratio $C_1/C_i$. Its proof relies on partitioning the agents into two groups: those who, after $h$ samples, observe two occurrences of the same opinion and update accordingly, and those who retain their current opinion. The first group is responsible for amplifying the bias in favor of $C_1$. The argument therefore builds on two ingredients: \Cref{lemma:expectation-D}, which provides the expected size of this group, and \Cref{lemma:lower_upper_ratio_probability}, which quantifies the bias of these agents toward adopting opinion~1.

\begin{lemma}
\label{lemma:growth_expected_ratio}
Let \(K\coloneq 2^{-11}\) and \(\eta\coloneq 2^{-16}\). Then, for any system configuration $C=\pa{C_1,C_2,\ldots,C_k}$ with $C_1 = \max\cpa{C_i}$ and for every \(i\in\cpa{2,\ldots,k}\) such that \(C_i>0\), after one round of \dejavu it holds
\[
    \frac{\bbE[C'_1 \mid C] }{\bbE[C'_i \mid C] }
    \geq
    \frac{C_1}{C_i} + \eta \min\cpa{\frac{C_1}{n} h^2, 1}\pa{ \frac{C_1}{C_i} - 1 }.
\]
Moreover, for every such \(i\) and for any integer $d\in [\frac{\bbE[D]}{2},\frac{3\bbE[D]}{2}]$, 
    \[
    \frac{\bbE[C'_1 \mid D = d, C] }{\bbE[C'_i \mid D = d , C] }
    \geq
    \frac{C_1}{C_i} + \eta \min\cpa{\frac{C_1}{n} h^2, 1}\pa{ \frac{C_1}{C_i} - 1 }.
\]
\end{lemma}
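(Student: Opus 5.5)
Writing $P\coloneq \Pr(H\le h\mid C)$, the plan is to compute both expectations exactly after conditioning on $D=d$ and then compare the two numerators. Conditional on $D=d$, the $d$ updating agents form a uniformly random subset, because each agent updates independently with the same probability $P$. Each of these agents adopts opinion $i$ with probability $r_i\coloneq q_i/Q$, and the remaining $n-d$ agents keep their opinion. This gives
\[
\bbE[C_i'\mid D=d,C]=d\,r_i+(n-d)p_i = C_i + d\,(r_i-p_i).
\]
Setting $x\coloneq d/n$, the target ratio becomes
\[
\frac{C_1+d(r_1-p_1)}{C_i+d(r_i-p_i)}.
\]
The unconditional statement is the case $x=P$, since $\bbE[D\mid C]=nP$ and the expectation is affine in $d$. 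It therefore suffices to prove the conditional version for every $x\in[P/2,3P/2]$.

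Multiplying out, the required inequality is equivalent to
\[
x\,(r_1p_i-r_ip_1)\;\ge\;\eta\,\min\{p_1h^2,1\}\,(p_1-p_i)\,\bigl(p_i+x(r_i-p_i)\bigr).
\]
The key lower bound comes from \cref{eq:qdqlb2}, namely $r_1/r_i\ge 2p_1^2/(p_i(p_1+p_i))$. Using it,
\[
r_1p_i-r_ip_1\ \ge\ r_i p_1\left(\frac{2p_1}{p_1+p_i}-1\right)=r_ip_1\frac{p_1-p_i}{p_1+p_i}\ \ge\ \frac{r_i}{2}(p_1-p_i).
\]
So the left side is at least $\tfrac{x r_i}{2}(p_1-p_i)$. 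On the right side, $p_i+x(r_i-p_i)\le p_i+x r_i$, since $x\le 3P/2$ and $P\le 1$ keep $1-x$ nonnegative; the edge case $x>1$ can be handled separately, or by capping $x$ using $d\le n$. After dividing by $(p_1-p_i)$, it remains to show
\[
\frac{x r_i}{2}\ \ge\ \eta\min\{p_1h^2,1\}\,(p_i+x r_i).
\]

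This last inequality is the main obstacle, because it requires $x r_i \gtrsim \min\{p_1h^2,1\}\,p_i$, i.e. $x\,r_i/p_i \gtrsim \min\{p_1h^2,1\}$. I would bound the two factors separately:
\begin{itemize}
\item \Cref{lemma:tight_bound_probability_collision} gives $x\ge P/2\ge 2^{-12}\min\{h^2\|p\|_2^2,1\}$.
\item The lower bound in \cref{claim:q_i_over_Q_ub_and_lb_norm2} gives $r_i/p_i\ge p_i/(3\|p\|_2^2)$.
\end{itemize}
Multiplying, $x\,r_i/p_i\gtrsim \min\{h^2p_i,\,p_i/\|p\|_2^2\}$. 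This is where I expect trouble: the bound scales with $p_i$, not $p_1$, and may fall below $\min\{p_1h^2,1\}$ when $p_i\ll p_1$.

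My fix is to keep the sharper term from the first step instead of discarding it. The exact bound is
\[
r_1p_i-r_ip_1\ge r_ip_1\frac{p_1-p_i}{p_1+p_i},
\]
and it is also at least $r_1p_i\frac{p_1-p_i}{2p_1}$ by symmetry of the same manipulation. I would split into two cases.
\begin{itemize}
\item If $p_i\ge p_1/2$, the first bound suffices: $p_i$ is comparable to $p_1$, and $\|p\|_2^2\le p_1$ gives $p_i/\|p\|_2^2\gtrsim 1$.
\item If $p_i<p_1/2$, use the $r_1$ form. Then $x\,r_1$ should be compared with $\min\{p_1h^2,1\}\,(p_i+xr_i)$. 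Using $x r_1\gtrsim \min\{h^2\|p\|_2^2,1\}\,p_1^2/\|p\|_2^2\ge \min\{h^2p_1^2,\,p_1^2/\|p\|_2^2\}$ together with $\|p\|_2^2\le p_1$, we get $x r_1\gtrsim p_1\min\{h^2p_1,1\}$. Moreover $p_i+xr_i\le p_i+x r_1\cdot O(p_i^2/p_1^2)=O(p_1)$.
\end{itemize}
Combining the two cases should close the inequality. Tracking the constants $2^{-12}$, $1/3$ and $1/2$ then yields $\eta=2^{-16}$.
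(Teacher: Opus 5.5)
Your reduction is sound. With \(x=d/n\), \(r_j=q_j/Q\) and \(\alpha=\min\{p_1h^2,1\}\), the claim is equivalent to
\[
x\,(r_1p_i-r_ip_1)\ \ge\ \eta\,\alpha\,(p_1-p_i)\bigl(p_i+x(r_i-p_i)\bigr).
\]
The unconditional statement is the case \(x=P\), and your case \(p_i\ge p_1/2\) does close with \(\eta=2^{-16}\).

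The gap is in the case \(p_i<p_1/2\). Divide the \(r_1\)-form bound \(r_1p_i-r_ip_1\ge r_1p_i\frac{p_1-p_i}{2p_1}\) by \(p_1-p_i\). What must then be shown is
\[
\frac{x\,r_1\,p_i}{2p_1}\ \ge\ \eta\,\alpha\,\bigl(p_i+x(r_i-p_i)\bigr).
\]
Comparing \(xr_1\) with \(\alpha(p_i+xr_i)\), as you do, loses the factor \(p_i/p_1\), which is exactly the small quantity in this case. From \(xr_1\gtrsim\alpha p_1\) the left side is only \(\gtrsim\alpha p_i\), and that does not dominate a right side you bound by \(O(\alpha p_1)\).

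Moreover, the bound \(p_i+xr_i=O(p_1)\) is itself false. Take every opinion other than \(1\) and \(i\) to have a single supporter, with \(p_i=p_1/3\), \(p_1=n^{-1/3}\) and \(h\ge k+1\). Then \(x\ge 1/2\). Since essentially only opinions \(1\) and \(i\) can repeat, \(r_i\) is bounded below by an absolute constant, while \(p_1\to 0\). So the argument does not close as written.

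The repair uses only what you already have: keep \(xr_i\) relative to \(xr_1\) instead of bounding it absolutely. By \cref{eq:qdqlb2},
\[
r_i\le r_1\frac{p_i(p_1+p_i)}{2p_1^2}\le r_1\frac{p_i}{p_1},
\]
hence \(p_i+x(r_i-p_i)\le\frac{p_i}{p_1}(p_1+xr_1)\). The target then reduces to \(xr_1\bigl(\tfrac12-\eta\alpha\bigr)\ge\eta\alpha p_1\). This follows from your estimate \(xr_1\ge\frac{2^{-12}}{3}\alpha p_1\), which comes from \(x\ge 2^{-12}\min\{h^2\|p\|_2^2,1\}\), \(r_1\ge p_1^2/(3\|p\|_2^2)\) and \(\|p\|_2^2\le p_1\). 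It holds with room to spare for \(\eta=2^{-16}\). Nothing here uses \(p_i<p_1/2\), so this one computation covers every \(i\) and your first case becomes unnecessary.

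Repaired this way, your route is genuinely different from the paper's and shorter. The paper uses essentially the same ingredients but manipulates the ratio directly. It discards the \((1-x)\) terms and uses that the ratio is monotone in the weight placed on \(r_1,r_i\). Its closing inequality \(\frac{y+3}{cy+3}\ge 1+\frac{1-c}{6}y\) holds only for bounded \(y\). That forces three regimes: \(h^2\|p\|_2^2<1\) with \(Kp_1h^2<6\) or \(\ge 6\), and \(h^2\|p\|_2^2\ge 1\) with a further split on whether \(q_i/Q\ge p_i/2\). Cross-multiplying to a polynomial inequality handles all of them at once.
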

Since the proof of this lemma requires carefully handling a ratio, we divide it into several sublemmas covering different cases.

\begin{lemma}
    Let \(C\) be an ordered configuration.
    Let \(K\coloneq 2^{-11}\).
    If $h^2\norm{p}_2^2 <1$ and $K p_1 h^2 <6$, then for any integer $d\in [\frac{\bbE[D]}{2},\frac{3\bbE[D]}{2}]$
    \[
        \frac{\bbE[C'_1 \mid C] }{\bbE[C'_i \mid C] }, \frac{\bbE[C'_1 \mid D = d, C] }{\bbE[C'_i \mid D = d , C] } \geq \frac{C_1}{C_i} + \frac{K p_1 h^2}{24}\pa{ \frac{C_1}{C_i} - 1 }.
    \]
\end{lemma}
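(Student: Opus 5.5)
Both ratios in the statement reduce to one algebraic comparison, so I treat them together. Conditioned on \(D=d\), agents split into the \(d\) that see a repetition and adopt opinion \(i\) with probability \(q_i/Q\), and the \(n-d\) that keep their opinion. A non-updating agent holds opinion \(i\) with probability \(p_i\), because the event \(H>h\) does not depend on the agent's own opinion. Hence
\[
  \bbE[C'_i\mid D=d,C] = d\,\frac{q_i}{Q} + (n-d)\,p_i ,
\]
and taking expectations gives \(\bbE[C'_i\mid C]=nQ\,\frac{q_i}{Q}+(n-nQ)p_i\), which is the case \(d=nQ=\bbE[D]\). Write \(x=d/n\) and \(r_i=q_i/Q\). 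The target then reads
\[
  \frac{x r_1+(1-x)p_1}{x r_i+(1-x)p_i} \ge \frac{p_1}{p_i}+\frac{Kp_1h^2}{24}\pa{\frac{p_1}{p_i}-1}.
\]
Cross-multiplying, this is equivalent to
\[
  x\,(r_1p_i-r_ip_1) \ge \frac{Kp_1h^2}{24}\,(p_1-p_i)\,\bigl(xr_i+(1-x)p_i\bigr).
\]

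Next I bound \(r_1p_i-r_ip_1\) from below. Note \(r_1p_i-r_ip_1=r_i p_i\pa{\frac{q_1}{q_i}-\frac{p_1}{p_i}}\). Claim~\ref{eq:qdqlb2} gives \(q_1/q_i\ge \frac{2p_1^2}{p_i(p_1+p_i)}\), so
\[
  \frac{q_1}{q_i}-\frac{p_1}{p_i} \ge \frac{p_1(p_1-p_i)}{p_i(p_1+p_i)} \ge \frac{p_1-p_i}{2p_i}.
\]
Therefore the left side is at least \(x\,r_i(p_1-p_i)/2\). After cancelling \(p_1-p_i\) (the case \(p_1=p_i\) is trivial), it suffices to show
\[
  \frac{x r_i}{2} \ge \frac{Kp_1h^2}{24}\bigl(xr_i+(1-x)p_i\bigr).
\]

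This splits into two terms. For the \(xr_i\) term I need \(\frac{Kp_1h^2}{24}\le\frac14\), which holds because \(Kp_1h^2<6\). The \((1-x)p_i\) term is where the amount of updating matters. I use Claim~\ref{claim:q_i_over_Q_ub_and_lb_norm2}, \(r_i\ge p_i^2/(3\norm{p}_2^2)\), together with the range of \(d\): \(d\ge \bbE[D]/2\), and by Corollary~\ref{lemma:expectation-D} with \(h^2\norm{p}_2^2<1\) this gives \(x\ge \frac K2 h^2\norm{p}_2^2\). Combining,
\[
  xr_i \ge \frac{K}{6}h^2 p_i^2 ,
\]
so it would suffice to have \(\frac{K}{12}h^2p_i^2\gtrsim \frac{Kp_1h^2}{24}p_i\). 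That requires \(p_i\gtrsim p_1/2\), which is not always true.

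This last step is the main obstacle: when \(p_i\ll p_1\), the bound \(r_i\ge p_i^2/(3\norm{p}_2^2)\) is too weak. To fix it I will not discard the \(\frac{p_1(p_1-p_i)}{p_i(p_1+p_i)}\) factor. Instead I keep
\[
  r_1p_i-r_ip_1 \ge r_i\,\frac{p_1(p_1-p_i)}{p_1+p_i},
\]
and use \(r_i\ge \frac{p_i^2}{3\norm{p}_2^2}\) with \(x\ge\frac K2h^2\norm{p}_2^2\). The \(\norm{p}_2^2\) factors then cancel, leaving a left side of at least
\[
  \frac{K}{6}\,h^2\,\frac{p_1p_i^2(p_1-p_i)}{p_1+p_i}.
\]
The needed right-side term is \(\frac{Kp_1h^2}{24}(p_1-p_i)p_i\). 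Comparing the two, they agree up to the factor \(p_i/(p_1+p_i)\), which again fails for small \(p_i\). In that regime I will instead compare the ratios directly. There \(q_1/q_i\ge \frac{p_1^2}{p_i^2}\cdot\frac13\) is far larger than \(p_1/p_i\), so the left side \(x(r_1p_i-r_ip_1)\) is dominated by \(xr_1p_i\). I would then use \(r_1\ge p_1^2/(3\norm{p}_2^2)\), which gives \(xr_1p_i\ge \frac K6h^2p_1^2p_i\ge\frac{Kp_1h^2}{24}(p_1-p_i)p_i\). So I split into the cases \(p_i\ge p_1/4\) and \(p_i<p_1/4\). In each case I check that the constants fit within the \(1/24\) margin and the hypothesis \(Kp_1h^2<6\), keeping track of the factor of \(2\) lost from \(d\ge\bbE[D]/2\).
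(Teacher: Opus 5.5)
The gap is in your case \(p_i\ge p_1/4\): the estimates you propose there cannot close, whatever constants you choose. Your setup is fine, namely the formula \(\bbE[C_i'\mid D=d,C]=d\,r_i+(n-d)p_i\), the cross-multiplied target \(x(r_1p_i-r_ip_1)\ge\epsilon(p_1-p_i)\bigl(xr_i+(1-x)p_i\bigr)\) with \(\epsilon=Kp_1h^2/24<1/4\), and \(x\ge\frac K2h^2\|p\|_2^2\).

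To see the failure, write \(\rho=p_1/p_i\). In that case you bound the left side by \(xr_i\frac{\rho}{1+\rho}(p_1-p_i)\) and then use only \(xr_i\ge\frac K6h^2p_i^2\). Since \(1-x\) may be close to \(1\), these bounds give the target exactly when \(\frac{\rho(3-\rho)}{4(1+\rho)}\ge\epsilon\).

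\begin{itemize}
\item At \(\rho=4\), your split point, this quantity is negative. Even the \((1-x)p_i\) term alone is not covered: the ratio of your lower bound to that term is \(4p_i/(p_1+p_i)\), not \(p_i/(p_1+p_i)\), and this equals \(4/5\) at \(p_i=p_1/4\).
\item At \(\rho=2\) the condition forces \(Kp_1h^2\le 4\), while the hypothesis allows values up to \(6\).
\end{itemize}

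The root cause is that you bound \(xr_i\) on its own rather than as \(xr_1/\theta\) with \(\theta=q_1/q_i\). You use the lower bound on \(\theta\) for the gap \(\theta-\rho\) and, in effect, \(\theta\approx\rho^2\) for \(xr_i\); this loses a factor \((1+\rho)/2\).

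Your small-\(p_i\) branch has a milder defect. You never charge the term \(\epsilon(p_1-p_i)xr_i\) there. Once you do, the bound \(\theta\ge\rho^2/3\) leaves too little slack near \(\rho=4\) when \(\epsilon\) is close to \(1/4\): you would need \(\theta\ge 16\cdot 4.75/13\approx 5.85\), but \(\rho^2/3\approx 5.33\).

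The idea of your second branch does work for every \(\rho\ge1\), with no case split. Divide by \(p_i\) and substitute \(r_i=r_1/\theta\). The target becomes \(\frac{xr_1}{\theta}\bigl(\theta-\rho-\epsilon(\rho-1)\bigr)\ge\epsilon(\rho-1)(1-x)p_i\). The right side is at most \(\frac{M}{4}\bigl(1-\frac{1}{\rho}\bigr)\), where \(M=\frac K6h^2p_1^2\le xr_1\). So it suffices that \(\theta\,\frac{3\rho+1}{4\rho}\ge\rho+\epsilon(\rho-1)\). Using \(\epsilon\le\frac14\) and \(\theta\ge\frac{2\rho^2}{1+\rho}\) from Claim~\ref{eq:qdqlb2}, this reduces to \((\rho-1)^2\ge0\).

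This is essentially the paper's proof. It bounds the ratio below by \(\frac{M+p_1}{M/\theta+p_i}\) through monotone substitutions: drop the factors \(1-x\), then decrease \(xr_1\) to \(M\) via \(r_1\ge p_1^2/(3\|p\|_2^2)\). It then inserts \(1/\theta\le\frac{p_i(p_1+p_i)}{2p_1^2}\) and finishes with \(\frac{y+3}{cy+3}\ge 1+\frac{1-c}{6}y\) for \(y\in[0,3]\), \(c\in[0,1]\).
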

\begin{proof}
    By \Cref{lemma:tight_bound_probability_collision} and by hypothesis, we know that $Q \geq K h^2 \norm{p}_2^2$,
for some constant $K>0$. Therefore, conditioning on the event $D = d$, we have that $\frac{d}{n} \geq \frac{K}{2} h^2 \norm{p}_2^2$.
We have
\begin{align*}
        &\frac{\bbE[C'_1 \mid C] }{\bbE[C'_i \mid C] }, \frac{\bbE[C'_1 \mid D = d, C] }{\bbE[C'_i \mid D = d , C] }
        \\
        =&
        \frac{\frac{q_1}{Q}\cdot Q  + p_1(1-Q)}{\frac{q_i}{Q}\cdot Q  + p_i(1-Q)},
        \frac{\frac{q_1}{Q}\cdot \frac{d}{n}  + p_1(1-\frac{d}{n})}{\frac{q_i}{Q}\cdot \frac{d}{n}  + p_i(1-\frac{d}{n})}
        \\
        \geq &
        \frac{\frac{q_1}{Q}\cdot Q  + p_1}{\frac{q_i}{Q}\cdot Q  + p_i},
        \frac{\frac{q_1}{Q}\cdot \frac{d}{n}  + p_1}{\frac{q_i}{Q}\cdot \frac{d}{n}  + p_i}
         , \text{ b/c } \frac{q_1}{q_i}\geq \frac{p_1}{p_i}
         \\
        \geq &
        \frac{\frac{q_1}{Q}\cdot \frac{K}{2}\cdot h^2 \sum_j p_j^2 + p_1 }{\frac{q_i}{Q}\cdot \frac{K}{2} \cdot h^2 \sum_j p_j^2 + p_i}
        , \text{ b/c } Q, \frac{d}{n} \geq \frac{K}{2} h^2 \norm{p}_2^2 \text{ and } \frac{q_1}{q_i}\geq \frac{p_1}{p_i}
        \\
        = &
        \frac{\frac{1}{\sum_j q_j/q_1}\cdot \frac{K}{2} h^2 \sum_j p_j^2 + p_1}
        {\frac{q_i/q_1}{\sum_j q_j/q_1}\cdot \frac{K}{2} h^2 \sum_j p_j^2 + p_i }
        \\
        \geq &
        \frac{\frac{1}{\sum_j 3p_j^2/p_1^2}\cdot \frac{K}{2} h^2 \sum_j p_j^2 + p_1}
        {\frac{q_i/q_1}{\sum_j 3p_j^2/p_1^2}\cdot \frac{K}{2} h^2 \sum_j p_j^2 + p_i}
        , \text{ by } \cref{eq:qdqlb2} \text{ b/c } \frac{q_1}{q_j}\geq \frac{p_1}{p_j}
        \\
        = &
        \frac{p_1^2\cdot \frac{K}{2} h^2 + 3p_1}
        {p_1^2\cdot \frac{q_i}{q_1}\cdot  \frac{K}{2} h^2 + 3p_i}
        \\
        \geq &
        \frac{p_1^2\cdot \frac{K}{2} h^2 + 3p_1}
        {p_i\cdot \frac{p_1+p_i}{2}\cdot  \frac{K}{2} h^2 + 3p_i}
        , \text{ by \Cref{eq:qdqlb2}} 
        \\
        = &
        \frac{p_1}{p_i}\cdot
        \frac{p_1\cdot \frac{K}{2} h^2 + 3}
        {\frac{p_1+p_i}{2}\cdot \frac{K}{2} h^2 + 3}
        \\
        = &
        \frac{p_1}{p_i}\cdot
        \frac{p_1\cdot \frac{K}{2} h^2 + 3}
        {\frac{1}{2}\pa{1+\frac{p_i}{p_1}}p_1\cdot \frac{K}{2} h^2 + 3}
        \\
        \geq &
        \frac{p_1}{p_i}\cdot
        \pa{1+ \frac{1-\frac{1}{2}\pa{1+\frac{p_i}{p_1}}}{6} \frac{K}{2} p_1 h^2}
        , \text{ b/c } \frac{x+3}{cx+3}\ge 1+\frac{1-c}{6}x \text{ for } x \in [0,3], c \in [0,1].
        \\
        = &
        \frac{p_1}{p_i}+
        \frac{ K p_1 h^2}{24} \pa{\frac{p_1}{p_i}-1}
    \end{align*}
\end{proof}
\begin{lemma}
Let \(C\) be an ordered configuration.
    Let \(K\coloneq 2^{-11}\).
    If $h^2\norm{p}_2^2 <1$ and $K p_1 h^2 >6$, then for any integer $d\in [\frac{\bbE[D]}{2},\frac{3\bbE[D]}{2}]$
    \[
        \frac{\bbE[C'_1 \mid C] }{\bbE[C'_i \mid C] }, \frac{\bbE[C'_1 \mid D = d, C] }{\bbE[C'_i \mid D = d , C] } \geq
        \begin{cases}
             \frac{C_1}{C_i} + \frac{1}{7}\pa{ \frac{C_1}{C_i} - 1 } \quad & \text{if }\frac{p_1}{p_i}<2
             \\
             \frac{8}{7}\frac{C_1}{C_i} \quad & \text{if }\frac{p_1}{p_i}>2
        \end{cases}
    \]
\end{lemma}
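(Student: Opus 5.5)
We follow the chain of inequalities from the previous sublemma. The only change is that we can no longer use the linearization \(\frac{x+3}{cx+3}\ge 1+\frac{1-c}{6}x\), because now \(x\coloneq \frac{K}{2}p_1h^2>3\). As before, we write both ratios in the common form
\[
\frac{\frac{q_1}{Q}\,s + p_1(1-s)}{\frac{q_i}{Q}\,s + p_i(1-s)},
\]
with \(s=Q\) or \(s=d/n\). In either case \(s\ge \frac{K}{2}h^2\norm{p}_2^2\), by \Cref{lemma:tight_bound_probability_collision} and since \(d\ge \bbE[D]/2\). By \Cref{eq:qdqlb2} we have \(q_1/q_i\ge p_1/p_i\), so we can drop the factors \((1-s)\) and replace \(s\) by its lower bound. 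Then, exactly as in the previous proof, we use \Cref{claim:q_i_over_Q_ub_and_lb_norm2} and the second inequality in \Cref{eq:qdqlb2}. This gives
\[
\frac{\bbE[C_1'\mid\cdot]}{\bbE[C_i'\mid\cdot]}\ \ge\ \frac{p_1}{p_i}\cdot\frac{x+3}{c x+3},\qquad c\coloneq \tfrac12\Big(1+\tfrac{p_i}{p_1}\Big)\in\big[\tfrac12,1\big].
\]

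Next we split according to the size of \(p_1/p_i\). The function \(x\mapsto\frac{x+3}{cx+3}\) is nondecreasing in \(x\) because \(c\le 1\). Since \(x>3\), we get \(\frac{x+3}{cx+3}\ge \frac{6}{3c+3}=\frac{2}{1+c}\). Writing \(r=p_i/p_1\in(0,1]\), we have \(1+c=\frac{3+r}{2}\), so the factor is at least \(\frac{4}{3+r}\).

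If \(p_1/p_i>2\), then \(r<1/2\) and \(\frac{4}{3+r}>\frac{8}{7}\). This gives the bound \(\frac{8}{7}\frac{C_1}{C_i}\).

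If \(p_1/p_i<2\), we need \(\frac{p_1}{p_i}\cdot\frac{4}{3+r}\ge \frac{p_1}{p_i}+\frac17\big(\frac{p_1}{p_i}-1\big)\). We have \(\frac{4}{3+r}-1=\frac{1-r}{3+r}\), and \(\frac{p_1}{p_i}-1=\frac{1-r}{r}\). So after multiplying through, the claim reduces to
\[
\frac{1}{r}\cdot\frac{1-r}{3+r}\ge \frac17\cdot\frac{1-r}{r},
\]
that is, \(3+r\le 7\). This always holds.

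The main thing to be careful about is the claimed use of the bound \(\frac{q_1}{Q}\ge \frac{p_1^2}{3\norm{p}_2^2}\) together with the bound on \(q_i/q_1\) from \Cref{eq:qdqlb2}. These steps need \(s\) to appear with the same weight in numerator and denominator, and they need \(s\le 1\). That second point is where the hypothesis \(h^2\norm{p}_2^2<1\) enters: without it the lower bound on \(s\) could exceed the true value of \(s\) in the factor \(1-s\). We would therefore recheck that step exactly as in the previous sublemma. The only genuinely new content is the elementary monotonicity and case analysis above.
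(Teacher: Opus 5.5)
Your proposal is correct and follows the paper's proof essentially step for step: the same chain reducing both ratios to $\frac{p_1}{p_i}\cdot\frac{x+3}{cx+3}$, the same use of $x=\frac{K}{2}p_1h^2\ge 3$ with monotonicity in $x$ to reach $\frac{p_1}{p_i}\cdot\frac{4}{3+p_i/p_1}$, and the same two-case split. In the case $p_1/p_i<2$ you check the target bound directly, reducing it to $3+r\le 7$; this is slightly cleaner than the paper's intermediate inequality and in fact holds for every ratio. One correction to your closing caveat: the hypothesis $h^2\norm{p}_2^2<1$ is needed only to extract $Q\ge K h^2\norm{p}_2^2$ from the minimum in \Cref{lemma:tight_bound_probability_collision}, whereas $s\le 1$ holds trivially, and once the $(1-s)$ factors are dropped the expression is increasing in $s$ on all of $[0,\infty)$.
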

\begin{proof}
    By \Cref{lemma:tight_bound_probability_collision} and by our hypothesis, we know that $Q \geq K h^2 \norm{p}_2^2$, for some constant $K>0$. Therefore, conditioning on the event $D = d$, we have that $\frac{d}{n} \geq \frac{K}{2} h^2 \norm{p}_2^2$.
We have
\begin{align*}
        \frac{\bbE[C'_1 \mid C] }{\bbE[C'_i \mid C] }, \frac{\bbE[C'_1 \mid D = d, C] }{\bbE[C'_i \mid D = d , C] }
        &=
        \frac{\frac{q_1}{Q}\cdot Q  + p_1(1-Q)}{\frac{q_i}{Q}\cdot Q  + p_i(1-Q)},
        \frac{\frac{q_1}{Q}\cdot \frac{d}{n}  + p_1(1-\frac{d}{n})}{\frac{q_i}{Q}\cdot \frac{d}{n}  + p_i(1-\frac{d}{n})}
        \\
        &\geq
        \frac{\frac{q_1}{Q}\cdot Q  + p_1}{\frac{q_i}{Q}\cdot Q  + p_i},
        \frac{\frac{q_1}{Q}\cdot \frac{d}{n}  + p_1}{\frac{q_i}{Q}\cdot \frac{d}{n}  + p_i}
         , \text{ b/c } \frac{q_1}{q_i}\geq \frac{p_1}{p_i}
         \\
        &\geq
        \frac{\frac{q_1}{Q}\cdot \frac{K}{2}\cdot h^2 \sum_j p_j^2 + p_1 }{\frac{q_i}{Q}\cdot \frac{K}{2} \cdot h^2 \sum_j p_j^2 + p_i}
        , \text{ b/c } Q, \frac{d}{n} \geq \frac{K}{2} h^2 \norm{p}_2^2 \text{ and } \frac{q_1}{q_i}\geq \frac{p_1}{p_i}
        \\
        &=
        \frac{\frac{1}{\sum_j q_j/q_1}\cdot \frac{K}{2} h^2 \sum_j p_j^2 + p_1}
        {\frac{q_i/q_1}{\sum_j q_j/q_1}\cdot \frac{K}{2} h^2 \sum_j p_j^2 + p_i }
        \\&
        \geq
        \frac{\frac{1}{\sum_j 3p_j^2/p_1^2}\cdot \frac{K}{2} h^2 \sum_j p_j^2 + p_1}
        {\frac{q_i/q_1}{\sum_j 3p_j^2/p_1^2}\cdot \frac{K}{2} h^2 \sum_j p_j^2 + p_i}
        , \text{ by } \cref{eq:qdqlb2} \text{ b/c } \frac{q_1}{q_j}\geq \frac{p_1}{p_j}
        \\&
        =
        \frac{p_1^2\cdot \frac{K}{2} h^2 + 3p_1}
        {p_1^2\cdot \frac{q_i}{q_1}\cdot  \frac{K}{2} h^2 + 3p_i}
        \\&
        \geq
        \frac{p_1^2\cdot \frac{K}{2} h^2 + 3p_1}
        {p_i\cdot \frac{p_1+p_i}{2}\cdot  \frac{K}{2} h^2 + 3p_i}
        , \text{ by } \cref{eq:qdqlb2}
        \\&
        =
        \frac{p_1}{p_i}\cdot
        \frac{p_1\cdot \frac{K}{2} h^2 + 3}
        {\frac{p_1+p_i}{2}\cdot \frac{K}{2} h^2 + 3}
        \\&
        =
        \frac{p_1}{p_i}\cdot
        \frac{p_1\cdot \frac{K}{2} h^2 + 3}
        {\frac{1}{2}\pa{1+\frac{p_i}{p_1}}p_1\cdot \frac{K}{2} h^2 + 3}
        \\&
        \geq
        \frac{p_1}{p_i}\cdot
        \frac{6}
        {\frac{1}{2}\pa{1+\frac{p_i}{p_1}}3 + 3}
        , \text{ b/c } \frac{K}{2} p_1 h^2 \geq 3
        \\&
        =
        \frac{p_1}{p_i}\cdot
        \frac{4}
        {3+ \frac{p_i}{p_1}}.
    \end{align*}
    If $\frac{p_1}{p_i}<2$, we have that $\frac{4}
        {3+ \frac{p_i}{p_1}} \geq 1 + \frac{1}{7}(\frac{p_1}{p_i}-1)$, and, therefore, we obtain that
    \[
        \frac{\bbE[C'_1 \mid D = d, C] }{\bbE[C'_i \mid D = d , C] } \geq \frac{p_1}{p_i} + \frac{1}{7}\pa{\frac{p_1}{p_i} -1}.
    \]
    If, instead, $\frac{p_1}{p_i}\geq 2$, we have that
    \begin{align*}
        \frac{\bbE[C'_1 \mid D = d, C] }{\bbE[C'_i \mid D = d , C] }
        &\geq
        \frac{p_1}{p_i}\cdot
        \frac{4}
        {3+ \frac{p_i}{p_1}}
        \\
        &\geq
        \frac{p_1}{p_i}\cdot
        \frac{8}
        {7}.
    \end{align*}
\end{proof}

\begin{lemma}
    Let \(C\) be an ordered configuration.
    Let \(K\coloneq 2^{-11}\).
    Let \(c\coloneq \frac{K}{2(K+4)}\).
    If $h^2\norm{p}_2^2 >1$, then for any integer $d\in [\frac{\bbE[D]}{2},\frac{3\bbE[D]}{2}]$
    \[
        \frac{\bbE[C'_1 \mid C] }{\bbE[C'_i \mid C] }, 
        \frac{\bbE[C'_1 \mid D = d, C] }{\bbE[C'_i \mid D = d , C] } \geq
        \begin{cases}
             \frac{C_1}{C_i} + (1+c) \pa{ \frac{C_1}{C_i} - 1 } \quad & \text{if }\frac{q_i}{Q}\geq \frac{1}{2} p_i
             \\
             (1+c) \frac{C_1}{C_i} \quad & \text{if }\frac{q_i}{Q}< \frac{1}{2} p_i
        \end{cases}
    \]
\end{lemma}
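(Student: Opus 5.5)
The plan is to work with the exact expressions for the conditional expectations, exactly as in the two preceding sublemmas. Set \(a\coloneq q_1/Q\), \(b\coloneq q_i/Q\), \(r\coloneq p_1/p_i=C_1/C_i\), and let \(x\) denote either \(Q\) (unconditional case) or \(d/n\) (conditional on \(D=d\)). Then both ratios have the form
\[
R=\frac{x a+(1-x)p_1}{x b+(1-x)p_i}.
\]
Since \(h^2\norm{p}_2^2>1\), \Cref{lemma:tight_bound_probability_collision} gives \(Q\ge K\). Hence \(d/n\ge K/2\) for every admissible \(d\), and in all cases \(x\ge K/2\). From \Cref{lemma:monotonicity_samples_dejavu} we have \(q_j/q_1\le p_j/p_1\) for all \(j\). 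Summing over \(j\) gives \(1/a=\sum_j q_j/q_1\le 1/p_1\), that is, \(a\ge p_1\). The other ingredient is \Cref{eq:qdqlb2}, which gives \(a/b\ge 2r^2/(1+r)\).

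\textbf{Second case} (\(b<p_i/2\)). This case is direct. Using \(a\ge p_1\), the numerator is at least \(p_1\). Using \(b<p_i/2\), the denominator is at most \(p_i(1-x/2)\). Therefore
\[
R\ge \frac{r}{1-x/2}\ge r\,(1+x/2)\ge r\,(1+K/4)\ge (1+c)\,r,
\]
since \(c=K/(2(K+4))\le K/4\).

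\textbf{First case} (\(b\ge p_i/2\)). Subtract \(r\) times the denominator from the numerator. The \((1-x)\) terms cancel, leaving the identity
\[
R-r=\frac{x b}{x b+(1-x)p_i}\,\Bigl(\frac ab-r\Bigr).
\]
The prefactor is increasing in \(b\) and in \(x\). With \(b\ge p_i/2\) and \(x\ge K/2\) it is at least \(\frac{K/4}{K/4+1}=2c\). For the second factor, \Cref{eq:qdqlb2} gives
\[
\frac ab-r\ge \frac{r(r-1)}{1+r}.
\]
Multiplying these two lower bounds gives \(R-r\ge 2c\,\frac{r(r-1)}{1+r}\ge c\,(r-1)\).

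\textbf{Main obstacle.} The stated bound asks for the larger coefficient \(1+c\) in front of \(r-1\) instead of \(c\). Reaching it is the step I expect to be hardest, for two reasons. First, the prefactor above is always below \(1\). Second, near \(r=1\) the best available lower bound on \(a/b\), namely \(r^2\frac{r+3}{3r+1}\) from \Cref{lemma:lower_upper_ratio_probability,lemma:monotonicity_samples_dejavu}, only gives \(a/b-r\approx \tfrac12(r-1)\). So the natural chain yields \(R-1\approx \tfrac32(r-1)\) even when \(x\to1\), which falls short of \((2+c)(r-1)\). What I would try first is to avoid the crude estimate \(a/b\ge 2r^2/(1+r)\) and instead use the exact representation from \Cref{claim:monotonicity_samples_dejavu_equality}, i.e. \(q_1/q_i\) as a mixture over \(H=\ell\) of \(r^2 e_{\ell-2}([k]\setminus\{1\})/e_{\ell-2}([k]\setminus\{i\})\). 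The hope is to extract a better linear coefficient in \(r-1\) from the \(H=2\) term, whose weight is large when \(h^2\norm p_2^2>1\). If that fails, the fallback is to carry the weaker additive gain \(c(r-1)\), which is all that \Cref{lemma:growth_expected_ratio} needs once the case results are combined with \(\eta\le c\).
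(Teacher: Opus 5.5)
Your argument is correct and is essentially the paper's proof. It uses the same representation $R=\frac{xa+(1-x)p_1}{xb+(1-x)p_i}$ with $x\ge K/2$, the same split according to whether $q_i/Q\ge p_i/2$, the same prefactor bound $\frac{K}{K+4}=2c$, and \Cref{eq:qdqlb2} to bound $a/b$. Your second case, giving $(1+K/4)r$, is slightly more direct than the paper's $\bigl(1+\frac{K}{K+4}\bigr)r$. In the first case you obtain $R\ge r+c(r-1)$. This is exactly where the paper's own proof ends: its last line is $\frac{p_1}{p_i}+\frac{K}{2(K+4)}\bigl(\frac{p_1}{p_i}-1\bigr)$. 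It is also exactly the form invoked in the proof of \Cref{lemma:growth_expected_ratio}.

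The coefficient $(1+c)$ in the first case of the displayed statement is a misprint, and you should not try to reach it, because it is false. Your own identity shows why. The prefactor $\frac{xb}{xb+(1-x)p_i}$ is at most $1$, and the upper bound in \Cref{lemma:monotonicity_samples_dejavu} gives $a/b=q_1/q_i\le r^2$. Hence $R-r\le a/b-r\le r(r-1)$, which is strictly smaller than $(1+c)(r-1)$ whenever $1<r<1+c$.

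For a concrete instance, take $k=2$, $h=2$, $p=(0.6,0.4)$. Then $h^2\norm{p}_2^2=2.08>1$ and $q_2/Q=4/13\ge p_2/2$, so the first case applies. But $\bbE[C_1'\mid C]/\bbE[C_2'\mid C]=0.648/0.352=81/44<2<r+(1+c)(r-1)$.

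So the planned refinement through the exact $e_{\ell-2}$ representation cannot succeed. Your fallback, $R\ge r+c(r-1)$, is the correct form of the first case.
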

\begin{proof}
    By \Cref{lemma:tight_bound_probability_collision} and our hypothesis, we know that $Q\geq K$, for some constant $K>0$. Therefore, conditioning on the event $D = d$, we have that $\frac{d}{n} \geq \frac{K}{2}$.
We have
\begin{align*}
        \frac{\bbE[C'_1 \mid C] }{\bbE[C'_i \mid C] }, \frac{\bbE[C'_1 \mid D = d, C] }{\bbE[C'_i \mid D = d , C] }
        &=
        \frac{\frac{q_1}{Q}\cdot Q  + p_1(1-Q)}{\frac{q_i}{Q}\cdot Q  + p_i(1-Q)},
        \frac{\frac{q_1}{Q}\cdot \frac{d}{n}  + p_1(1-\frac{d}{n})}{\frac{q_i}{Q}\cdot \frac{d}{n}  + p_i(1-\frac{d}{n})}
        \\&
        \geq
        \frac{\frac{q_1}{Q}\cdot \frac{K}{2}  + p_1}{\frac{q_i}{Q}\cdot \frac{K}{2}  + p_i}
         , \text{ b/c } \frac{q_1}{q_i}\geq \frac{p_1}{p_i}.
    \end{align*}
Therefore, if $\frac{q_i}{Q}\geq \frac{1}{2} p_i$, we obtain
\begin{align*}
        \frac{\bbE[C'_1 \mid C] }{\bbE[C'_i \mid C] }, \frac{\bbE[C'_1 \mid D = d, C] }{\bbE[C'_i \mid D = d , C] }&
        \geq
        \frac{\frac{q_1}{Q}\cdot \frac{K}{2}  + p_1}{\frac{q_i}{Q}\cdot \frac{K}{2}  + p_i}
        \\
        &=
        \frac{p_1}{p_i} + \frac{q_i \frac{K}{2Q}}{q_i\frac{K}{2Q} + p_i}\pa{\frac{q_1}{q_i} - \frac{p_1}{p_i}}
        \\
        &\geq
        \frac{p_1}{p_i} + \frac{K p_i/4}{K p_i/4 + p_i}\pa{\frac{q_1}{q_i} - \frac{p_1}{p_i}}
        \\
        &\geq
        \frac{p_1}{p_i} + \frac{K}{K+4}\pa{\frac{p_1^2}{p_i^2}\frac{(p_1+3 p_i)}{(3p_1 + p_i)} - \frac{p_1}{p_i}}
        , \text{ by \Cref{eq:qdqlb2}}
        \\
        &\geq
        \frac{p_1}{p_i} + \frac{K}{2(K+4)}\pa{\frac{p_1}{p_i} - 1}
        , \text{ b/c } \frac{x(x+3)}{3x+1} \ge \frac{3}{2} - \frac{1}{2x} \quad \text{if \(x \ge 1\)}.
    \end{align*}
If, instead, $\frac{q_i}{Q} < \frac{1}{2} p_i$, we have
\begin{align*}
        \frac{\bbE[C'_1 \mid C] }{\bbE[C'_i \mid C] }, \frac{\bbE[C'_1 \mid D = d, C] }{\bbE[C'_i \mid D = d , C] }&
        \geq
        \frac{\frac{q_1}{Q}\cdot \frac{K}{2}  + p_1}{\frac{q_i}{Q}\cdot \frac{K}{2}  + p_i}
        \\
        &=
        \frac{p_1\pa{\frac{q_1 K}{ 2 p_1 Q} +1 }}{ p_i \pa{ \frac{q_i K}{ 2 p_i Q} +1} }
        \\
        &\geq
        \frac{p_1\pa{\frac{q_1 K}{ 2 p_1 Q} +1 }}{p_i \pa{ \frac{K}{4} +1 }}
        \\
        &\geq
        \frac{p_1\pa{\frac{K}{ 2} +1 }}{p_i \pa{ \frac{K}{4} +1 }}
        , \text{ b/c } \frac{q_1}{Q}>p_1.
    \end{align*}
    Since
    \[
        \frac{\frac{K}{2}+1}{\frac{K}{4}+1}
        =
        \frac{2K+4}{K+4}
        =
        1+\frac{K}{K+4}
        \ge
        1+\frac{K}{2(K+4)}
        =
        1+c,
    \]
    this proves the second branch of the claim.
\end{proof}

\begin{proof}[Proof of \Cref{lemma:growth_expected_ratio}]
    If \(h^2\norm{p}_2^2<1\) and \(Kp_1h^2<6\), the claim follows from the first sublemma, since \(p_1=C_1/n\).

    If \(h^2\norm{p}_2^2<1\) and \(Kp_1h^2\ge 6\), then the second sublemma gives
    \[
        \frac{\bbE[C_1'\mid C]}{\bbE[C_i'\mid C]}
        \ge
        \frac{C_1}{C_i}+\frac{1}{7}\pa{\frac{C_1}{C_i}-1}
    \]
    or even
    \[
        \frac{\bbE[C_1'\mid C]}{\bbE[C_i'\mid C]}
        \ge
        \frac{8}{7}\frac{C_1}{C_i}.
    \]
    In either case,
    \[
        \frac{\bbE[C_1'\mid C]}{\bbE[C_i'\mid C]}
        \ge
        \frac{C_1}{C_i}
        +\eta \min\cpa{\frac{C_1}{n}h^2,1}\pa{\frac{C_1}{C_i}-1},
    \]
    because \(\eta\le 1/7\) and \(\min\cpa{\frac{C_1}{n}h^2,1}\le 1\). The same argument applies conditional on \(D=d\).

    Finally, if \(h^2\norm{p}_2^2>1\), then \(p_1\ge \norm{p}_2^2\), hence \(h^2p_1\ge h^2\norm{p}_2^2>1\), so
    \[
        \min\cpa{\frac{C_1}{n}h^2,1}=1.
    \]
    The third sublemma yields either
    \[
        \frac{\bbE[C_1'\mid C]}{\bbE[C_i'\mid C]}
        \ge
        \frac{C_1}{C_i}+\frac{K}{2(K+4)}\pa{\frac{C_1}{C_i}-1},
    \]
    or
    \[
        \frac{\bbE[C_1'\mid C]}{\bbE[C_i'\mid C]}
        \ge
        \pa{1+\frac{K}{K+4}}\frac{C_1}{C_i}
        \ge
        \frac{C_1}{C_i}+\frac{K}{K+4}\pa{\frac{C_1}{C_i}-1}.
    \]
    Since \(\eta\le \frac{K}{2(K+4)}\), this proves the claim. Again, the same argument works conditional on \(D=d\).
\end{proof}

\subsection{Expected amplification of the additive bias}\label{sec:expected-additive-amplification}
The next lemma translates the expected increase in multiplicative bias into an increase in additive bias, provided that $C_1$ is not too large. Moreover, it quantifies the growth of the majority opinion $C_1$ and of $\frac{\Delta'_j}{\sqrt{C'_1}}$. The growth of the latter quantity will be useful to ensure the bias at the next round continues to satisfy the initial condition.
\begin{lemma}
    \label{lemma:growth_expectation}
    Let \(\gamma\coloneq 2^{-19}\). Then the following holds.
    Consider a configuration $C$ such that $C_1\geq C_2 \geq \ldots \geq C_k>0$. Furthermore, suppose $C_1 < \frac{3}{4}n$. Then, for every \(j\in\cpa{2,\ldots,k}\), we have
    \begin{align}
        \label{eq:bias_growth_balanced_expectation}
        &\bbE\qty[ \Delta'_j \mid C]
        \geq
        \Delta_j \pa{1 + \gamma \min\cpa{\frac{C_1}{n}h^2, 1}},
        \\
        \label{eq:bias_condition_preserved_balanced}
        &\frac{\bbE\qty[ \Delta'_j \mid C]}{\sqrt{\bbE\qty[C'_1]}}
        \geq
         \frac{\Delta_2}{\sqrt{C_1}}\pa{1 + \gamma/3 \min\cpa{\frac{C_1}{n}h^2, 1}},
        \\
        \label{eq:bias_condition_preserved_unbalanced}
        &\frac{\bbE[\Delta_j'\mid C]}{\sqrt{\bbE[C_1'\mid C]}} \geq  \frac{\Delta_j}{\sqrt{C_1}}\sqrt{\frac{\bbE[C_1'\mid C]}{C_1}}
        \\
        \label{eq:growth_majority_opinion_expectation}
        &\bbE[C_1' \mid C] \geq C_1 + \frac{\gamma \alpha(C_1,h)}{7} \pa{ C_1 - C_2}
    \end{align}
    Moreover, the bound in \Cref{eq:bias_condition_preserved_balanced} is uniform in \(j\), since \(\Delta_j \ge \Delta_2\) for every \(j\ge 2\).
\end{lemma}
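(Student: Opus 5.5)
The plan is to derive all four inequalities from \cref{lemma:growth_expected_ratio}, writing \(m\coloneq\min\cpa{\frac{C_1}{n}h^2,1}\), \(r_j\coloneq C_1/C_j\ge 1\), and reading \(\alpha(C_1,h)\) as \(m\). I would prove them in the order \eqref{eq:bias_condition_preserved_unbalanced}, \eqref{eq:growth_majority_opinion_expectation}, \eqref{eq:bias_growth_balanced_expectation}, \eqref{eq:bias_condition_preserved_balanced}.

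\emph{Inequality \eqref{eq:bias_condition_preserved_unbalanced}.} It is equivalent to \(\bbE[\Delta_j'\mid C]/\bbE[C_1'\mid C]\ge \Delta_j/C_1\). Since \(\bbE[\Delta_j'\mid C]=\bbE[C_1'\mid C]\,(1-\bbE[C_j'\mid C]/\bbE[C_1'\mid C])\), it suffices to have \(\bbE[C_j'\mid C]/\bbE[C_1'\mid C]\le C_j/C_1\). This is the weak form of \cref{lemma:growth_expected_ratio}, obtained by dropping the nonnegative \(\eta\)-term. Multiplying \(\bbE[\Delta_j']\ge \bbE[C_1']\Delta_j/C_1\) by \(1/\sqrt{\bbE[C_1']}\) gives exactly \(\frac{\Delta_j}{\sqrt{C_1}}\sqrt{\bbE[C_1']/C_1}\).

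\emph{Inequality \eqref{eq:growth_majority_opinion_expectation}.} Write \(\bbE[C_1'\mid C]-C_1=nQ\,(q_1/Q-p_1)\). Using \(\sum_i p_i=1\), this equals \(nQ\sum_{i}p_i\pa{\frac{q_1}{Q}-\frac{q_i}{Q}\frac{p_1}{p_i}}\), and every summand is nonnegative because \(q_1/q_i\ge p_1/p_i\). I would keep only the term \(i=2\) (or all \(i\) with \(p_i\) comparable to \(p_1\)). I would lower bound it with \cref{eq:qdqlb2}, which gives \(q_1/q_2-p_1/p_2\gtrsim \frac{p_1}{p_2}\cdot\frac{p_1-p_2}{p_1+p_2}\). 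The factors \(Q\) and \(q_2/Q\) would be controlled by \cref{lemma:tight_bound_probability_collision} and \cref{claim:q_i_over_Q_ub_and_lb_norm2}. The aim is \(\gtrsim m\,(C_1-C_2)\), with the same three-case split (\(h^2\norm{p}_2^2\) versus \(1\), \(Kp_1h^2\) versus \(6\)) as the sublemmas of \cref{lemma:growth_expected_ratio}. A cleaner alternative is to apply \cref{lemma:growth_expected_ratio} to \(i=2\) and sum against the mass constraint \(\sum_i\bbE[C_i']=n\).

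\emph{Inequality \eqref{eq:bias_growth_balanced_expectation}.} Let \(x=\bbE[C_j']/\bbE[C_1']\). By \cref{lemma:growth_expected_ratio}, \(x\le 1/(r_j+\eta m(r_j-1))\), so
\[
\bbE[\Delta_j'\mid C]\ge \bbE[C_1'\mid C]\,\frac{(r_j-1)(1+\eta m)}{r_j+\eta m(r_j-1)}.
\]
If \(\Delta_j\le C_1/2\), then \(\eta m(r_j-1)\le \eta m\, r_j\). Using \(\bbE[C_1']\ge C_1\) from \eqref{eq:growth_majority_opinion_expectation}, this gives \(\Delta_j(1+\eta m)/(1+\eta m\Delta_j/C_1)\ge \Delta_j(1+\eta m/3)\). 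If \(\Delta_j>C_1/2\), the denominator can cancel the gain, so here I would rely on growth of \(C_1\) itself. Combining \eqref{eq:bias_condition_preserved_unbalanced} with \eqref{eq:growth_majority_opinion_expectation} gives \(\bbE[\Delta_j']\ge \Delta_j\,\bbE[C_1']/C_1\). We need \(\bbE[C_1']/C_1\ge 1+\Omega(m)\), but \eqref{eq:growth_majority_opinion_expectation} only yields \(\Omega(m)(C_1-C_2)/C_1\), which may be small even when \(C_j\) is small. Two cases follow.

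If \(C_2\) is also \(\le C_1/2\), the gain is \(\Omega(m)\) and we are done. Otherwise \(C_2\) is large, and since \(C_j\le C_1/2\) I would argue directly that \(\bbE[C_j']\le C_j\). The reason is that \(q_j/Q\le 3p_j^2/\norm{p}_2^2\le 3p_j^2/(p_1^2+p_2^2)\le \frac{6}{5}p_j\cdot\frac{p_j}{p_1}\cdot 2\), which is below \(p_j\) up to constants. Then \(\bbE[\Delta_j']\ge \bbE[C_1']-C_j\) combined with \eqref{eq:growth_majority_opinion_expectation} should suffice. This case analysis, together with getting the explicit constant \(\gamma=2^{-19}\) from \(\eta=2^{-16}\) and \(K\), is the step I expect to be the main obstacle. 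The hypothesis \(C_1<3n/4\) is what guarantees nontrivial growth of \(C_1\), via a positive mass \(1-p_1\) of competitors.

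\emph{Inequality \eqref{eq:bias_condition_preserved_balanced}.} Take a geometric mean of the two previous bounds:
\[
\frac{\bbE[\Delta_j']}{\sqrt{\bbE[C_1']}}=\sqrt{\bbE[\Delta_j']}\sqrt{\frac{\bbE[\Delta_j']}{\bbE[C_1']}}\ge \sqrt{\Delta_j(1+\gamma m)}\sqrt{\frac{\Delta_j}{C_1}}=\frac{\Delta_j}{\sqrt{C_1}}\sqrt{1+\gamma m}.
\]
Here the first factor uses \eqref{eq:bias_growth_balanced_expectation} and the second uses the ratio bound from the proof of \eqref{eq:bias_condition_preserved_unbalanced}. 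Since \(\sqrt{1+x}\ge 1+x/3\) for \(x\in[0,1]\) and \(\Delta_j\ge\Delta_2\), this gives \eqref{eq:bias_condition_preserved_balanced} uniformly in \(j\).
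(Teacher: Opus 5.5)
Your proof of \eqref{eq:bias_condition_preserved_unbalanced} is correct and is the same as the paper's. The real gap is your case (b) of \eqref{eq:bias_growth_balanced_expectation}, namely $C_2>C_1/2>C_j$, and it cannot be closed. Even if you grant $\bbE[C_j'\mid C]\le C_j$, you only get $\bbE[\Delta_j'\mid C]\ge \Delta_j+\frac{\gamma m}{7}\Delta_2$. That falls short of $\Delta_j(1+\gamma m)$, because $\Delta_2\le\Delta_j$ and possibly $\Delta_2\ll\Delta_j$. A decrease of $C_j$ cannot make up the difference, since it adds at most $C_j$, which may be $O(1)$.

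In fact the inequality is false for $j\ge 3$. Take $h=2$ and $k=3$. Then $\bbE[C_i'\mid C]=C_i(1+p_i-\norm{p}_2^2)$, so
\[
\bbE[\Delta_3'\mid C]=\Delta_3\pa{1+p_2(p_1-p_2)+p_3(1+p_1-p_3)}\le\Delta_3\pa{1+\frac{\Delta_2+2C_3}{n}} .
\]
Once $C_1\ge n/4$, the claimed lower bound is $\Delta_3(1+\gamma)$. Taking $C_1\approx C_2\approx n/2$, $C_3=O(1)$, and $\Delta_2=0$ (or even $\Delta_2=\Theta(\sqrt{n\log n})$) violates it for large $n$. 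When $C_1=C_2$ the failure occurs for every $h$: symmetry forces $\bbE[C_1'\mid C]\le n/2$.

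The paper's proof has the same defect. Its last step,
\[
\pa{C_1+\tfrac{\alpha}{7}\Delta_2}\frac{\Delta_j(1+\alpha/7)}{C_1+\frac{\alpha}{7}\Delta_j}\ge\Delta_j\pa{1+\tfrac{\alpha}{7}},
\]
needs $\Delta_2\ge\Delta_j$, which is the wrong direction for $j>2$. Both arguments really prove only the case $j=2$. There your case (b) is vacuous, since $\Delta_2>C_1/2$ forces $C_2<C_1/2$.

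For general $j$, the correct statement is
\[
\bbE[\Delta_j'\mid C]\ge\bbE[\Delta_2'\mid C]\ge\Delta_2(1+\gamma m).
\]
The first inequality uses $\bbE[C_j'\mid C]\le\bbE[C_2'\mid C]$. This follows from $q_2\ge q_j$, which \cref{lemma:lower_upper_ratio_probability,lemma:monotonicity_samples_dejavu} give for any pair with $C_2\ge C_j$. This $\Delta_2$-form is also all the later round-by-round iteration needs, since only $\min_{i\ge2}\Delta_i$ has to grow.

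The same problem reaches your proof of \eqref{eq:bias_condition_preserved_balanced}, because your geometric-mean step plugs in \eqref{eq:bias_growth_balanced_expectation} at index $j$. The fix is to run it at $j=2$ and then use $\bbE[\Delta_j'\mid C]\ge\bbE[\Delta_2'\mid C]$. That yields the stated bound, which involves only $\Delta_2$, and it is a tidier route than the paper's explicit computation.

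For \eqref{eq:growth_majority_opinion_expectation}, your first route fails in general. The $i=2$ term equals $n(q_1p_2-q_2p_1)\le q_1C_2\le mC_2$, because $q_1\le Q\le\min\cpa{h^2\norm{p}_2^2,1}\le m$. When $C_2\ll C_1$, this is negligible against the target $\Omega(m)\Delta_2\approx mC_1$. Restricting to opinions with $p_i$ comparable to $p_1$ gives nothing when there are no such opinions.

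Your ``cleaner alternative'' is the one that works, provided you apply \cref{lemma:growth_expected_ratio} to every $j\ge 2$, weakened to
\[
\frac{\bbE[C_1'\mid C]}{\bbE[C_j'\mid C]}\ge\frac{C_1+\eta m\Delta_2}{C_j}.
\]
Summing the reciprocals against $\sum_j\bbE[C_j'\mid C]=n$ gives $\bbE[C_1'\mid C]\ge n(C_1+\eta m\Delta_2)/(n+\eta m\Delta_2)$. The hypothesis $C_1<3n/4$ then turns this into $C_1+\frac{\eta m}{7}\Delta_2$. This is exactly the paper's argument.
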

\begin{proof}
Let \(\alpha\coloneq \alpha(C_1,h)= \eta \min\cpa{\frac{C_1}{n}h^2, 1}\), for \(\eta=2^{-19}\).
By \Cref{lemma:growth_expected_ratio}, we have that for all $j\in\cpa{2,\ldots,k}$
\[
    \frac{\bbE[C_1' \mid C]}{\bbE[C_j' \mid C]} \geq \frac{C_1}{C_j} + \alpha \frac{C_1-C_j}{C_j} \geq \frac{C_1}{C_j} + \alpha \frac{C_1-C_2}{C_j},
\]
and that $\bbE\qty[C_1' \mid C] \geq C_1$.
Take reciprocal and sum over $j\in[k]$ and obtain that
\[
    \frac{n}{\bbE[C_1' \mid C]} \leq \frac{n-C_1}{C_1+\alpha(C_1-C_2)} +1.
\]
This rewrites as
\[
    \bbE[C_1' \mid C] \geq \pa{C_1 + \alpha \pa{C_1 - C_2}}\frac{n}{n+ \alpha\pa{C_1-C_2}}.
\]
Since \(\alpha< 1\) and we are assuming that \(C_1< \frac{3}{4} n\), we have \(\frac{7 C_1 + \alpha(C_1 - C_2)}{6} < \frac{8C_1}{6}<  n\). It is easy to check that this implies
\begin{equation*}
    \bbE[C_1' \mid C] \geq C_1 + \frac{\alpha}{7} \pa{ C_1 - C_2},
\end{equation*}
proving \Cref{eq:growth_majority_opinion_expectation}.
By applying \Cref{lemma:growth_expected_ratio} again, we obtain for all $j\in\cpa{2,\ldots,k}$
\begin{align}
    \nonumber
    \bbE[C_1'-C'_j \mid C] 
    &= 
    \bbE[C_1' \mid C]\pa{1-\frac{\bbE[C_j' \mid C]}{\bbE[C_1' \mid C]}} 
    \\
    \label{eq:expectation_bias}
    &\geq 
    \bbE[C_1' \mid C]\pa{ 1- \frac{C_j}{C_1 + \frac{\alpha}{7}\pa{C_1-C_j}} }
\end{align}
This implies that
\begin{align*}
    \bbE[C_1'-C'_j \mid C] 
    &\geq 
    \pa{C_1 + \frac{\alpha}{7} \pa{ C_1 - C_2}}\pa{ \frac{(C_1-C_j)(1+\frac{\alpha}{7})}{C_1 + \frac{\alpha}{7}\pa{C_1-C_j}} }
    &\pa{ \text{by \Cref{eq:growth_majority_opinion_expectation}}}
    \\
    &\geq 
    (C_1-C_j)(1+\frac{\alpha}{7}),
\end{align*}
proving \Cref{eq:bias_growth_balanced_expectation}.
\Cref{eq:expectation_bias,eq:growth_majority_opinion_expectation} imply also that
\begin{align*}
    \bbE[C_1'-C'_j \mid C] 
    &\geq 
    \bbE[C_1'-C'_2 \mid C] 
    \\
    &\geq
    \sqrt{\bbE[C_1' \mid C] } \frac{\pa{C_1-C_2}(1+\frac{\alpha}{7})}{\sqrt{C_1 + \frac{\alpha}{7} \pa{C_1-C_2} }}
    \\
    &=
    \sqrt{\bbE[C_1' \mid C] } \frac{\Delta_2}{\sqrt{C_1}} \frac{1+\frac{\alpha}{7}}{\sqrt{1 + \frac{\alpha}{7} \frac{C_1-C_2}{C_1} }}
    \\
    &\geq
    \sqrt{\bbE[C_1' \mid C] } \frac{\Delta_2}{\sqrt{C_1}} \sqrt{1+\frac{\alpha}{7}}
    \\
    &\geq
    \sqrt{\bbE[C_1' \mid C] } \frac{\Delta_2}{\sqrt{C_1}} \pa{1+\frac{\alpha}{21}}
    &\pa{\sqrt{1+x}\geq 1+\frac{x}{3} \text{ for } x\in[0,3]}
\end{align*}
proving \Cref{eq:bias_condition_preserved_balanced}.
By \Cref{eq:expectation_bias} we have that
    \begin{align*}
        \bbE[\Delta_j'\mid C]
        &\geq
        \bbE[C_1'\mid C] \pa{ \frac{\Delta_j(1+\alpha/7)}{C_1+\frac{\alpha}{7}\Delta_j } }
        \\
        &\geq
        \bbE[C_1'\mid C] \pa{ \frac{\Delta_j}{C_1} }
        &\pa{ \Delta_j \leq C_1 }.
    \end{align*}
    Rearranging the last equation we proved \Cref{eq:bias_condition_preserved_unbalanced} and concluded the proof of \Cref{lemma:growth_expectation}.
\end{proof}

In the next lemma we show that whenever the configuration is unbalanced, the bias grows considerably more.

\begin{lemma}[Bias growth in the genuinely unbalanced regime]
    \label{lemma:unbalanced_expected_bias}
    Suppose the current ordered configuration is \(C=(C_1,\ldots,C_k)\), with \(C_1=\omega(\log n)\) and \(C_1\le \frac45 n\).
    Fix an opinion \(i\in\cpa{2,\ldots,k}\) such that \(C_i>0\).
    Assume
    \[
        \norm{p}_2^2<\frac{p_1+p_i}{24},
        \qquad
        \Delta_i \ge \lambda \sqrt{\max\cpa{\frac{n}{h^2}, C_1}\log n}
    \]
    for a sufficiently large absolute constant \(\lambda>0\). Then
    \[
        \bbE[\Delta_i'\mid C]
        \ge
        \Delta_i\pa{1 + \frac{(p_1+p_i)\min\cpa{ h^2 \norm{p}_2^2, 1}}{2^{12}\norm{p}_2^2}}.
    \]
\end{lemma}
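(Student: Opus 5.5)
Write \(Q=\Pr(H\le h\mid C)\) and compute the expected bias exactly. We have \(\bbE[C_\ell'\mid C]=q_\ell+C_\ell(1-Q)\), with \(q_\ell=n Q\,(q_\ell/Q)\). Hence
\[
\bbE[\Delta_i'\mid C]=\Delta_i(1-Q)+nQ\pa{\frac{q_1}{Q}-\frac{q_i}{Q}} .
\]
Since \(\Delta_i=n(p_1-p_i)\), it suffices to show
\[
\frac{q_1-q_i}{Q}-(p_1-p_i)\ \ge\ (p_1-p_i)\,\frac{p_1+p_i}{c\,\norm{p}_2^2}.
\]
The factor \(Q\) is then bounded below by \cref{lemma:tight_bound_probability_collision}, \(Q\ge 2^{-11}\min\cpa{h^2\norm{p}_2^2,1}\), which gives the stated multiplier once the constants are adjusted.

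To lower bound \((q_1-q_i)/Q\), I would use \cref{lemma:monotonicity_samples_dejavu} together with \cref{lemma:lower_upper_ratio_probability}, which give \(q_1/q_i\ge \frac{p_1^2}{p_i^2}\cdot\frac{p_1+3p_i}{3p_1+p_i}\ge \frac{2p_1^2}{p_i(p_1+p_i)}\) (\cref{eq:qdqlb2}). Then
\[
\frac{q_1-q_i}{Q}=\frac{q_1}{Q}\pa{1-\frac{q_i}{q_1}}\ge \frac{q_1}{Q}\cdot\frac{2p_1^2-p_i(p_1+p_i)}{2p_1^2}=\frac{q_1}{Q}\cdot\frac{(p_1-p_i)(2p_1+p_i)}{2p_1^2}.
\]
By \cref{claim:q_i_over_Q_ub_and_lb_norm2}, \(q_1/Q\ge p_1^2/(3\norm{p}_2^2)\). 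Combining the two bounds gives
\[
\frac{q_1-q_i}{Q}\ge (p_1-p_i)\,\frac{2p_1+p_i}{6\norm{p}_2^2}\ge (p_1-p_i)\frac{p_1+p_i}{6\norm{p}_2^2}.
\]

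The hypothesis \(\norm{p}_2^2<(p_1+p_i)/24\) means this quantity is at least \(4(p_1-p_i)\). Subtracting the term \(p_1-p_i\) coming from \(Q\Delta_i\) therefore costs at most a quarter of it, leaving
\[
\frac{q_1-q_i}{Q}-(p_1-p_i)\ge (p_1-p_i)\frac{p_1+p_i}{8\norm{p}_2^2}.
\]
Multiplying by \(nQ\ge n2^{-11}\min\cpa{h^2\norm{p}_2^2,1}\) gives
\[
\bbE[\Delta_i'\mid C]\ge \Delta_i+\Delta_i\frac{(p_1+p_i)\min\cpa{h^2\norm{p}_2^2,1}}{2^{14}\norm{p}_2^2}.
\]

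The constant I get this way is \(2^{14}\) rather than \(2^{12}\). The main obstacle is tightening the constant. I would try \(q_1/Q\ge p_1^2/\norm{p}_2^2\) in place of the factor \(1/3\); this should follow from \cref{lemma:monotonicity_samples_dejavu}, since \(q_\ell/q_1\le p_\ell^2/p_1^2\) for every \(\ell\). Summing over \(\ell\) gives \(Q/q_1\le\norm{p}_2^2/p_1^2\), which gains the factor 3. The lower bound on \(Q\) is the other place to improve, for instance by taking \(2^{-11}\) together with the bound \((2p_1+p_i)/(2\cdot 1)\) more carefully. The remaining hypotheses (\(C_1=\omega(\log n)\), \(C_1\le\frac45 n\), the bias condition) appear unnecessary for this expectation statement, and I would simply not use them.
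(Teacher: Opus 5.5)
\textbf{Your main argument.} It is correct and is essentially the paper's. You use the same identity $\bbE[\Delta_i'\mid C]-\Delta_i=nQ\bigl(\frac{q_1-q_i}{Q}-(p_1-p_i)\bigr)$ and the same inputs: \Cref{eq:qdqlb2}, \Cref{claim:q_i_over_Q_ub_and_lb_norm2}, the hypothesis on $\norm{p}_2^2$, and \Cref{lemma:tight_bound_probability_collision}. The paper writes $\frac{q_1-q_i}{Q}=\frac{q_1+q_i}{Q}\cdot\frac{q_1/q_i-1}{q_1/q_i+1}$ and lower-bounds $\frac{q_1+q_i}{Q}$, whereas you write $\frac{q_1}{Q}\bigl(1-\frac{q_i}{q_1}\bigr)$. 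Both reach $\frac{(p_1-p_i)(p_1+p_i)}{6\norm{p}_2^2}$, and yours even keeps $2p_1+p_i$.

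Your arithmetic from there is right, and you should trust it. With $x=\frac{p_1+p_i}{\norm{p}_2^2}>24$ one gets $\frac{x}{6}-1\ge\frac{x}{8}$, hence the constant $2^{14}$. At this step the paper's proof instead asserts the bound $\frac{Q(p_1-p_i)(p_1+p_i)}{2\norm{p}_2^2}$, also dropping the factor $n$. This amounts to $\frac{x}{6}-1\ge\frac{x}{2}$, which never holds, and that slip is the only source of $2^{12}$. So $2^{14}$ is what this argument actually proves. The value is immaterial in \Cref{lemma:bias_concentration_bernstein}, where it only changes how large $\lambda$ must be. You are also right that the remaining hypotheses are unused; the paper does not use them either.

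\textbf{Your proposed tightening.} This part rests on a reversed inequality. \Cref{lemma:monotonicity_samples_dejavu} with $p_1\ge p_\ell$ gives $q_1/q_\ell\le p_1^2/p_\ell^2$, i.e.\ $q_\ell/q_1\ge p_\ell^2/p_1^2$. Summing over $\ell$ therefore yields the \emph{upper} bound $q_1/Q\le p_1^2/\norm{p}_2^2$, not the lower bound you need.

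The lower bound is in fact false in general. Take $k=2$ and $h\ge 3$: every agent sees a repeat, so $q_1/Q=\pr{\MM_1\mid C}=p_1^2(1+2p_2)$. For $p=(2/3,1/3)$ this is $20/27<4/5=p_1^2/\norm{p}_2^2$.

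The only lower bound on $q_1/q_\ell$ available in the paper is the Poisson-race one, $\frac{p_1^2}{p_\ell^2}\cdot\frac{p_1+3p_\ell}{3p_1+p_\ell}\ge\frac{p_1^2}{3p_\ell^2}$. That is exactly where the factor $3$ comes from.

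Your other idea, exploiting $2p_1+p_i\ge\frac32(p_1+p_i)$, only improves the constant to $\frac{5}{3}\cdot 2^{-14}$, still short of $2^{-12}$. Reaching $2^{12}$ would need a genuinely sharper ingredient, e.g.\ a better constant than $2^{-11}$ in \Cref{lemma:tight_bound_probability_collision}. Without that, state the lemma with $2^{14}$, or with an unspecified absolute constant.
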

\begin{proof}
    Since the following three facts hold
    \[
        \frac{q_1-q_i}{Q}
        =
        \frac{q_1+q_i}{Q}\cdot \frac{\frac{q_1}{q_i}-1}{\frac{q_1}{q_i}+1},
    \]
    the function \(f(x)\coloneq (x-1)/(x+1)\) is increasing on \((0,\infty)\), and \(
        \frac{q_1}{q_i} \ge \frac{2p_1^2}{p_i(p_1+p_i)},
    \) 
    by \Cref{eq:qdqlb2},
    we obtain that
    
    \[
        \frac{\frac{q_1}{q_i}-1}{\frac{q_1}{q_i}+1}
        \ge
        f\!\left(\frac{2p_1^2}{p_i(p_1+p_i)}\right)
        =
        \frac{\frac{2p_1^2}{p_i(p_1+p_i)}-1}{\frac{2p_1^2}{p_i(p_1+p_i)}+1}.
    \]
    Therefore
    \[
        \frac{q_1-q_i}{Q}
        \ge
        \frac{q_1+q_i}{Q}\cdot
        \frac{\frac{2p_1^2}{p_i(p_1+p_i)}-1}{\frac{2p_1^2}{p_i(p_1+p_i)}+1}.
    \]
    By \Cref{claim:q_i_over_Q_ub_and_lb_norm2},
    \(
        \frac{q_1+q_i}{Q}\ge \frac{p_1^2+p_i^2}{3\norm{p}_2^2}
    \), and we obtain
    \[
        \frac{q_1-q_i}{Q}
        \ge
        \frac{p_1^2+p_i^2}{3\norm{p}_2^2}\cdot
        \frac{\frac{2p_1^2}{p_i(p_1+p_i)}-1}{\frac{2p_1^2}{p_i(p_1+p_i)}+1}.
    \]
    Moreover,
    \[
        \frac{\frac{2p_1^2}{p_i(p_1+p_i)}-1}{\frac{2p_1^2}{p_i(p_1+p_i)}+1}
        =
        \frac{(p_1-p_i)(2p_1+p_i)}{2p_1^2+p_1p_i+p_i^2}
        \ge
        \frac{p_1-p_i}{p_1+p_i},
    \]
    where the last inequality is equivalent to
    \((2p_1+p_i)(p_1+p_i)\ge 2p_1^2+p_1p_i+p_i^2\).
    Hence
    \[
        \frac{q_1-q_i}{Q}
        \ge
        \frac{p_1^2+p_i^2}{3\norm{p}_2^2}\cdot \frac{p_1-p_i}{p_1+p_i}
        \ge
        \frac{(p_1-p_i)(p_1+p_i)}{6\norm{p}_2^2},
    \]
    where the last step uses \(p_1^2+p_i^2\ge (p_1+p_i)^2/2\). Since we have
    \[
        \bbE[C_r'\mid C]
        =
        n Q \frac{q_r}{Q}+(1-Q) C_r,
    \]
    we obtain that
    \[
        \bbE[\Delta_i'\mid C]-\Delta_i
        =
        nQ\pa{\frac{q_1-q_i}{Q}-(p_1-p_i)}
        \ge
        nQ(p_1-p_i)\pa{\frac{p_1+p_i}{6\norm{p}_2^2}-1}.
    \]
    Since by hypothesis \(\norm{p}_2^2<(p_1+p_i)/24\), we get
    \[
        \bbE[\Delta_i'\mid C]-\Delta_i
        \ge
        \frac{Q(p_1-p_i)(p_1+p_i)}{2\norm{p}_2^2}.
    \]
    Since by \Cref{lemma:tight_bound_probability_collision} $Q \geq 2^{-11}\min\cpa{ h^2 \norm{p}_2^2, 1}$, we conclude the proof of \Cref{lemma:unbalanced_expected_bias}.
\end{proof}

\begin{lemma}
    \label{claim:expectation_growth_C_1_unbalanced}
    Let \(\alpha=\alpha(C_1,h)=\min\cpa{\frac{C_1}{n}h^2,1}\).
    Let $C$ be an ordered configuration such that $nC_1 \geq 24 \|C\|_2^2$. It holds 
    \[
        \bbE[C'_1\mid C] \ge C_1 \pa{ 1 + \gamma \alpha \frac{7}{24} \frac{nC_1}{\|C\|_2^2} }.   
    \]
\end{lemma}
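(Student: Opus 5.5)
The plan is to write the expected new count of the plurality opinion through the decomposition already used in \cref{lemma:unbalanced_expected_bias},
\[
  \bbE[C_1'\mid C] \;=\; nQ\,\frac{q_1}{Q} + (1-Q)\,C_1 \;=\; C_1 + nQ\pa{\frac{q_1}{Q}-p_1},
\]
so that everything reduces to lower-bounding the gain \(\frac{q_1}{Q}-p_1\) and the update probability \(Q=\pr{H\le h\mid C}\). Throughout I write \(x\coloneq p_1/\norm{p}_2^2 = nC_1/\norm{C}_2^2\), so the hypothesis reads \(x\ge 24\).

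For the gain I will use the lower bound of \cref{claim:q_i_over_Q_ub_and_lb_norm2}, namely \(q_1/Q\ge p_1^2/(3\norm{p}_2^2)=p_1x/3\). This gives \(\frac{q_1}{Q}-p_1\ge p_1\pa{\frac{x}{3}-1}\). Since \(x\ge 24\), we have \(1\le x/24\), hence \(\frac{x}{3}-1\ge \frac{x}{3}-\frac{x}{24}=\frac{7x}{24}\). This is exactly the origin of the constant \(7/24\) in the statement, and it yields
\[
  \bbE[C_1'\mid C]\;\ge\; C_1\pa{1+\tfrac{7}{24}\,Q\,x}.
\]

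It then remains to show \(Q\ge \gamma\alpha\) with \(\alpha=\min\cpa{h^2p_1,1}\), using \(Q\ge 2^{-11}\min\cpa{h^2\norm{p}_2^2,1}\) from \cref{lemma:tight_bound_probability_collision}. I split into two cases.
\begin{itemize}
  \item If \(h^2\norm{p}_2^2\ge 1\), then \(Q\ge 2^{-11}\ge\gamma\ge\gamma\alpha\), and we are done.
  \item If \(h^2\norm{p}_2^2<1\), then \(Q\ge 2^{-11}h^2\norm{p}_2^2=2^{-11}h^2p_1/x\), so the product \(Qx\) is at least \(2^{-11}h^2p_1\ge 2^{-11}\alpha\).
\end{itemize}

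\textbf{Main obstacle.} The second case is where I expect trouble. There the factor \(x\) in the gain is exactly cancelled by the loss in \(Q\): the bound I obtain is \(C_1\pa{1+\frac{7}{24}2^{-11}\alpha}\) rather than one carrying an extra factor \(x\). To recover the full statement, I would first try to sharpen the lower bound on \(Q\) in this regime, for instance by counting only collisions on opinion \(1\), which occur with probability of order \(h^2p_1^2\). However, \(h^2p_1^2=h^2\norm{p}_2^2\cdot x\cdot p_1\) does not obviously beat \(h^2\norm{p}_2^2\). If that fails, I would check whether the statement is intended with \(\min\cpa{h^2\norm{p}_2^2,1}\,x\) in place of \(\alpha x\), since that version follows directly from the argument above. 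That form is also what \cref{lemma:unbalanced_expected_bias} naturally produces.
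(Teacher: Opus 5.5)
Your reduction coincides with the paper's: $\bbE[C_1'\mid C]=(1-Q)C_1+nq_1\ge C_1\bigl(1+Q(\tfrac{x}{3}-1)\bigr)$ via \cref{claim:q_i_over_Q_ub_and_lb_norm2}, with $\tfrac{x}{3}-1\ge\tfrac{7x}{24}$ for $x=nC_1/\norm{C}_2^2\ge 24$. Your case $h^2\norm{p}_2^2\ge 1$ is correct. The obstacle you flag in the other case is genuine and cannot be overcome, because the statement is false as written, for any fixed constant $\gamma>0$.

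Indeed, $q_1$ is at most the probability that opinion $1$ appears twice among the $h$ i.i.d.\ samples. So $q_1\le\binom{h}{2}p_1^2$ and
\[
  \bbE[C_1'\mid C]-C_1=n(q_1-p_1Q)\le nq_1\le C_1\,\frac{h^2p_1}{2}.
\]
If $h^2p_1\le 1$, the right-hand side equals $C_1\alpha/2$, while the statement demands a gain of at least $C_1\gamma\alpha\frac{7}{24}x$. This is impossible as soon as $x>\frac{12}{7\gamma}$, and $x$ can be as large as roughly $1/p_1$.

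For a concrete instance, take $h=2$, $n$ a perfect square, $C_1=\sqrt n$, and all other agents holding pairwise distinct opinions. Then $\norm{C}_2^2<2n$, so $x>\sqrt n/2$, and the hypothesis $nC_1\ge 24\norm{C}_2^2$ holds for $n\ge 48^2$. Since $\alpha=4/\sqrt n$, the claimed bound forces $\bbE[C_1'\mid C]\ge C_1(1+\frac{7}{12}\gamma)$. The exact \twochoices computation instead gives $\bbE[C_1'\mid C]=C_1(1+p_1-\norm{p}_2^2)<C_1(1+n^{-1/2})$, which is smaller once $n>(\frac{12}{7\gamma})^2$. The same upper bound on the gain shows that no sharper lower bound on $Q$ can rescue the statement. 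In particular, collisions on opinion $1$ have probability at most $\binom{h}{2}p_1^2\le\binom{h}{2}\norm{p}_2^2$, so your proposed sharpening could not help.

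The paper's proof takes exactly the step you declined. From $C_1\bigl(1+Q(\frac{x}{3}-1)\bigr)$ it replaces $Q$ by $\gamma\alpha$, citing \cref{lemma:tight_bound_probability_collision}. That lemma only gives $Q\ge 2^{-11}\min\cpa{h^2\norm{p}_2^2,1}$, and in this regime $\norm{p}_2^2=p_1/x\le p_1/24$. Hence, with $\gamma=2^{-19}$, it does not give $Q\ge\gamma\min\cpa{h^2p_1,1}$ once $x>2^8$ and $h^2p_1\le 1$.

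What your argument does prove, using $2^{-11}\ge\gamma$, is the corrected statement
\[
  \bbE[C_1'\mid C]\ge C_1\Bigl(1+\gamma\,\frac{7}{24}\min\cpa{h^2\norm{p}_2^2,1}\,x\Bigr)=C_1\Bigl(1+\gamma\,\frac{7}{24}\min\cpa{h^2p_1,x}\Bigr).
\]
This agrees with the stated bound only when $h^2\norm{p}_2^2\ge 1$. When $h^2p_1\le 1$ it reduces to a gain of $\gamma\frac{7}{24}\alpha$, with no factor $x$. You should therefore write up this version rather than try to close the gap. Note also that \cref{lemma:majority_concentration} and \cref{lemma:condition_delta_preserved} invoke the lemma precisely in the regime $nC_1\ge 24\norm{C}_2^2$, so they need to be rechecked against the weaker bound.
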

\begin{proof}
    By \Cref{claim:q_i_over_Q_ub_and_lb_norm2} it holds that 
    \[
        \bbE[C'_1\mid C] = (1-Q)C_1 + nQ \frac{q_1}{Q} \ge C_1\pa{ 1+ Q\pa{ \frac{nC_1}{3\|C\|_2^2} -1 } }.
    \]
    If we now apply \Cref{lemma:tight_bound_probability_collision}, we obtain that
    \[
        \bbE[C'_1\mid C] \ge C_1\pa{ 1+ \gamma \alpha \pa{ \frac{7}{24}\frac{nC_1}{\|C\|_2^2} + \frac{1}{24}\frac{nC_1}{\|C\|_2^2} -1 } },
    \]
    and using the hypothesis $nC_1 \geq 24 \|C\|_2^2$ we conclude the proof of \Cref{claim:expectation_growth_C_1_unbalanced}.
\end{proof}

\subsection{Amplification in concentration}
\label{sec:amplification_in_concentration}
In this section, we will prove that the bias amplifies w.h.p. by demonstrating that it is concentrated around the expectation computed in \Cref{lemma:growth_expectation}. Furthermore, we demonstrate that, in the subsequent round, the bias and the counter of the largest opinion satisfy the minimum requirements necessary to enable the iteration of our amplification analysis.

We start quantifying the deviation from the mean of each opinion count. The main tool we use is Bernstein’s inequality.
\begin{lemma}
    \label{lemma:concentration_all_opinions}
    Let $\alpha = \alpha(C_1,h) = \min\cpa{\frac{C_1}{n}h^2, 1}$. Assume $\alpha C_1 \geq \log n$, and $C_1\geq C_j$ for all $j\in[k]$. Let $\mathcal{E}$ be the event
    \[
         \mathcal{E} = \bigcap_{j\in[k]} \cpa{ \qty|C_j' - \bbE[C'_j \mid C]| \leq \lambda_1 \sqrt{\frac{\alpha n C_1^2 \log n}{\|C\|_2^2}} },
    \]
    for some constant $\lambda_1$ large enough.
    We have that $\Pr{\mathcal{E}\mid C} \geq 1 - n^{-100}$.
\end{lemma}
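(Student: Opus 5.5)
Conditional on \(C\), each agent acts independently, so for each fixed \(j\) we write \(C_j' = \sum_{u\in[n]} Z_u^{(j)}\), where \(Z_u^{(j)}\) is the indicator that agent \(u\) holds opinion \(j\) after the round. These indicators are independent Bernoulli variables given \(C\), so Bernstein's inequality applies. The upper tail is \(\exp\bigl(-t^2/(2(\sigma_j^2 + t/3))\bigr)\), where \(\sigma_j^2 = \Var[C_j'\mid C]\) and the increments \(Z_u^{(j)}-\bbE Z_u^{(j)}\) are bounded by \(1\); the lower tail is symmetric. We take \(t = \lambda_1\sqrt{\alpha n C_1^2\log n/\|C\|_2^2}\), show the bound is at most \(n^{-102}\) for each \(j\), and finish with a union bound over the at most \(n\) opinions that are present (opinions with \(C_j=0\) stay at \(0\) deterministically).

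\textbf{Variance bound.} The main step is to show \(\sigma_j^2 = O\bigl(\alpha n C_1^2/\|C\|_2^2\bigr)\) uniformly in \(j\). From \cref{lemma:bound_variance},
\[
\sigma_j^2 \le C_j \min\{c h^2\|p\|_2^2,1\}\Bigl(1+\frac{3p_j}{\|p\|_2^2}\Bigr).
\]
Since \(\|p\|_2^2\le p_1\), we have \(\min\{ch^2\|p\|_2^2,1\}\le c'\alpha\). For the last factor, \(\|p\|_2^2 \le p_1\) gives \(1 \le p_1/\|p\|_2^2\), hence \(1+3p_j/\|p\|_2^2 \le 4p_1/\|p\|_2^2\). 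Combining this with \(C_j\le C_1\) yields
\[
\sigma_j^2 \le 4c'\alpha\, C_1\frac{p_1}{\|p\|_2^2} = 4c'\,\frac{\alpha n C_1^2}{\|C\|_2^2},
\]
using \(p_1/\|p\|_2^2 = nC_1/\|C\|_2^2\). Note that this is independent of \(j\).

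\textbf{Plugging into Bernstein.} Let \(V\coloneq \alpha n C_1^2/\|C\|_2^2\). Since \(\|C\|_2^2\le nC_1\), we have \(V\ge \alpha C_1\ge \log n\). With \(t=\lambda_1\sqrt{V\log n}\), this gives \(t\le \lambda_1 V\), so the term \(t/3\) is \(O(\lambda_1 V)\). The exponent is then at least \(\lambda_1^2 V\log n/\bigl(2(4c'V+\lambda_1V/3)\bigr) = \Omega(\lambda_1\log n)\), which exceeds \(102\log n\) once \(\lambda_1\) is a large enough constant. A union bound over at most \(n\) opinions and both tails then gives \(\Pr(\mathcal E\mid C)\ge 1-n^{-100}\).

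The main obstacle is getting a variance bound that is uniform in \(j\) and matches the target scale. This hinges on the two inequalities \(\|p\|_2^2\le p_1\) (which makes the \(1\) and \(3p_j/\|p\|_2^2\) terms comparable) and \(V\ge\log n\) (which keeps the linear Bernstein term from dominating). If the constant in \cref{lemma:bound_variance} turns out to be stated loosely, I would instead rederive the variance directly via the law of total variance as in its proof, together with \cref{lemma:tight_bound_probability_collision} and \cref{claim:q_i_over_Q_ub_and_lb_norm2}.
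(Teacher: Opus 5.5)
Your proof is correct and follows the paper's argument. You bound the conditional variance uniformly by $O(\alpha n C_1^2/\|C\|_2^2)$ via \cref{lemma:bound_variance}, apply Bernstein's inequality with $t=\lambda_1\sqrt{V\log n}$ using $V\ge \alpha C_1\ge \log n$ to absorb the linear term, and take a union bound over the opinions. Your direct uniform-in-$j$ estimate (via $C_j\le C_1$ and $1+3p_j/\|p\|_2^2\le 4p_1/\|p\|_2^2$) is a cleaner justification than the paper's stated comparison $\Var[C_j'\mid C]\le\Var[C_1'\mid C]$, which is really only valid for the upper bounds rather than the variances themselves.
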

\begin{proof}
    By \Cref{lemma:bound_variance}, it holds for all $j\in[k]$
    \begin{align*}
        \Var[C_j'\mid C] 
        &\leq 
        \Var[C_1'\mid C] 
        \\
        &= 
        O \qty( \frac{n C_1^2}{\|C\|_2^2} \min\cpa{ h^2 \|C/n\|^2_2, 1 }  ) 
        \\
        &= 
        O \qty( \frac{n \alpha C_1^2}{\|C\|_2^2} ),
    \end{align*}
    where in the last inequality we used the fact that $\|C/n\|^2_2 \leq C_1 / n$. By \Cref{lemma:bernstein_non_identical}, we obtain that
    \[
        \pr{ \qty|C_j' - \bbE[C_j'\mid C]| \geq t } \leq \frac{1}{n^{101}},
    \]
    for $t=\lambda_1 \max\cpa{ \log n, \sqrt{\frac{n \alpha C_1^2 \log n}{\|C\|_2^2}} }$, for a constant $\lambda_1$ large enough. Since $\alpha C_1 \geq  \log n$ and $\|C\|^2_2 \leq n C_1$, we obtain $t=\lambda_1 \sqrt{\frac{n \alpha C_1^2 \log n}{\|C\|_2^2}}$, as
    \[
        \sqrt{\frac{n \alpha C_1^2 \log n}{\|C\|_2^2}} \geq \log n.
    \]
    The proof of \Cref{lemma:concentration_all_opinions} now follows by a union bound.
\end{proof}

We continue with the additive version of the bias amplification which holds with high probability.
\begin{lemma}
    [Bias amplification in concentration]
    \label{lemma:bias_concentration_bernstein}
    Let \(\mathcal{E}\) be the event defined in \Cref{lemma:concentration_all_opinions}.
    Suppose the current ordered configuration is \(C = (C_1, \ldots, C_k)\).
    Also, assume $\alpha C_1 = \omega(\log n)$ and $C_1 \leq \frac{4}{5}n$.
    Furthermore, assume that for every opinion \(i\in\{2,\ldots, k\}\), we have
    \[
        \Delta_i \ge \lambda \sqrt{\max\cpa{ \frac{n}{h^2}, C_1 } \log n}
    \]
    for a large enough constant \(\lambda > 0\).
    Then there exists an absolute constant \(c_B>0\) such that
    \[
        \pr{
            \cap_{i\in \cpa{2,\ldots,k}} \cpa{\Delta'_i \geq \Delta_i \pa{1+ c_B\min\cpa{\frac{C_1}{n}h^2, 1}}}
            \,\middle|\,
            C, \mathcal{E}
        }
        =1.
    \]
\end{lemma}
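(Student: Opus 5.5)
Since we condition on \(\mathcal{E}\), the argument is deterministic. On \(\mathcal{E}\), every \(C_j'\) lies within \(\delta \coloneq \lambda_1\sqrt{\alpha n C_1^2\log n/\|C\|_2^2}\) of its mean, where \(\alpha=\min\{C_1h^2/n,1\}\). Hence \(\Delta_i' \ge \bbE[\Delta_i'\mid C] - 2\delta\) for every \(i\). It therefore suffices to show, for each \(i\ge 2\),
\[
\bbE[\Delta_i'\mid C] - \Delta_i \ \ge\ 2\delta + c_B\alpha\Delta_i .
\]
Two preliminary facts will be used throughout. First, \(\max\{n/h^2,C_1\} = C_1/\alpha\). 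Second, the hypothesis gives \(\Delta_i \ge \lambda\sqrt{C_1\log n/\alpha}\), so \(\alpha\Delta_i \ge \lambda\sqrt{\alpha C_1\log n}\).

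We split according to the sign of \(nC_1 - 24\|C\|_2^2\), as the overview announces.

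\emph{Balanced regime} (\(nC_1 \le 24\|C\|_2^2\)). Here \(\delta \le \lambda_1\sqrt{24\,\alpha C_1\log n}\). By \cref{eq:bias_growth_balanced_expectation} of \Cref{lemma:growth_expectation} (valid since \(C_1\le 4n/5\); the stated \(3n/4\) threshold should adapt with constants), we get \(\bbE[\Delta_i'\mid C]-\Delta_i \ge \gamma\alpha\Delta_i \ge \gamma\lambda\sqrt{\alpha C_1\log n}\). Choosing \(\lambda \gg \lambda_1/\gamma\) makes \(2\delta \le \tfrac{\gamma}{2}\alpha\Delta_i\), which yields \(c_B=\gamma/2\).

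\emph{Unbalanced regime} (\(nC_1 > 24\|C\|_2^2\)). Here \(\delta\) can exceed \(\alpha\Delta_i\) by a factor \(\sqrt{nC_1}/\|C\|_2\), so we need the stronger expectation bound of \Cref{lemma:unbalanced_expected_bias}. Its hypothesis \(\|p\|_2^2<(p_1+p_i)/24\) follows from \(\|p\|_2^2<p_1/24\). That lemma gives
\[
\bbE[\Delta_i'\mid C]-\Delta_i \ \ge\ \Delta_i\,\frac{(p_1+p_i)\min\{h^2\|p\|_2^2,1\}}{2^{12}\|p\|_2^2}
\ \ge\ \Delta_i\,\frac{p_1 \min\{h^2\|p\|_2^2,1\}}{2^{12}\|p\|_2^2}.
\]
Write \(r = p_1/\|p\|_2^2 = nC_1/\|C\|_2^2 \ge 24\). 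Then \(\min\{h^2\|p\|_2^2,1\} \ge \min\{h^2p_1,1\}/r = \alpha/r\) when \(h^2\|p\|_2^2<1\). This alone recovers only \(\alpha\Delta_i\), which loses the needed factor.

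\emph{Main obstacle.} The difficulty is comparing \(\delta\approx\lambda_1\sqrt{\alpha C_1\log n}\,\sqrt r\) with the gain. I would handle it as follows. When \(h^2\|p\|_2^2\ge 1\), the gain is \(r\Delta_i/2^{12} \ge r\lambda\sqrt{C_1\log n}/2^{12}\), which beats \(\sqrt r\) times the deviation since \(r\ge1\) and \(\alpha\le1\). When \(h^2\|p\|_2^2<1\), the gain is \(h^2p_1\Delta_i/2^{12}\). The variance bound \Cref{lemma:bound_variance} actually gives the sharper \(\Var \lesssim C_1 h^2\|p\|_2^2\cdot r = C_1 h^2 p_1\). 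So I would recompute \(\delta\) via this sharper form, \(\delta\lesssim\sqrt{C_1h^2p_1\log n}\) (equivalently \(\sqrt{\alpha C_1\log n}\) in that range, matching the event up to using the finer variance). Then the gain \(h^2p_1\Delta_i \ge h^2p_1\lambda\sqrt{\log n\, n/h^2}\) dominates \(\sqrt{C_1h^2p_1\log n}\) because \(h^2p_1\cdot n/h^2 = C_1\), once \(\lambda\) is large.

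If \(\mathcal{E}\)'s fixed width is too crude in the intermediate range, I would bound \(\sqrt r\) against \(r\) and use \(\min\{h^2\|p\|_2^2,1\}\ge \alpha/r\) together with \(\Delta_i\ge\lambda\sqrt{C_1\log n/\alpha}\) to absorb it. In all cases I would conclude \(\Delta_i'\ge\Delta_i(1+c_B\alpha)\) with \(c_B=\min\{\gamma/2,2^{-13}\}\), and intersect over \(i\).
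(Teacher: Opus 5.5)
Your balanced case is the paper's argument, and your unbalanced subcase $h^2\|p\|_2^2\ge 1$ is correct with the given event. There $\alpha=1$, the gain is at least $r\Delta_i/2^{12}$ with $r=nC_1/\|C\|_2^2$, and the width of $\mathcal{E}$ is $\lambda_1\sqrt{rC_1\log n}$. The gap is the unbalanced subcase $h^2\|p\|_2^2<1$.

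Conditional on $\mathcal{E}$ the claim is deterministic, and the only information available is the width $\delta=\lambda_1\sqrt{\alpha r C_1\log n}$. A sharper variance bound says nothing about realizations inside $\mathcal{E}$. So ``recomputing $\delta$'' means conditioning on a smaller event, i.e.\ proving a different lemma. Your parenthetical that the sharper width ``matches the event'' is off by the factor $\sqrt{\alpha r/(h^2p_1)}>1$. The fallback does not close the case either. The available gain is $\Delta_i h^2p_1/2^{12}$, and $\min\{h^2\|p\|_2^2,1\}\ge\alpha/r$ only reduces it to $\alpha\Delta_i/2^{12}$, which, as you yourself noted, cannot absorb $2\delta$ when $r$ is large.

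No argument can close this subcase, because the statement is false there. Take $C_1=\sqrt n$, give the other $n-\sqrt n$ agents pairwise distinct opinions, and set $h=2n^{1/4}$. Then $\|C\|_2^2\approx 2n$, $r\approx\sqrt n/2$, $h^2\|p\|_2^2\approx 8/\sqrt n$ and $\alpha=1$, and all hypotheses hold. Yet the width of $\mathcal{E}$ is about $\lambda_1\sqrt{n\log n/2}\gg\bbE[C_1'\mid C]=O(\sqrt n)$. So $\mathcal{E}$ contains the positive-probability outcome in which the only update is an opinion-$1$ agent that samples one singleton agent twice. This gives $\Delta_j'=C_1-3<\Delta_j$ for that singleton's opinion $j$.

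The paper's proof gets through this case only via its last step. That step bounds $\sqrt{n\alpha C_1^2\log n/\|C\|_2^2}$ by $\sqrt{\alpha C_1\log n}$, citing $\|C\|_2^2\le nC_1$, which is the reverse of what is needed; in this regime the former is at least $\sqrt{24}$ times the latter.

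Your instinct is the right repair, but it yields a modified lemma. Give $\mathcal{E}$ width $\lambda_1\sqrt{C_1 m\log n}$ with $m=\min\{h^2p_1,r\}$.
\begin{itemize}
\item \Cref{lemma:bound_variance} gives $\Var[C_j'\mid C]=O(C_1m)$ for every $j$, and $C_1m\ge\alpha C_1=\omega(\log n)$. So Bernstein still gives probability $1-n^{-\Theta(1)}$.
\item Since $m\le\alpha r$, the new event lies inside the old one.
\item In the balanced regime $m\le 24\alpha$, so your argument is unchanged.
\item In the unbalanced regime, \Cref{lemma:unbalanced_expected_bias} gives a gain of at least $\Delta_i m/2^{12}$. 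Using $m\ge\alpha$ and $\Delta_i\ge\lambda\sqrt{C_1\log n/\alpha}$, you get both $\Delta_i m/2^{13}\ge 2^{-13}\alpha\Delta_i$ and $\Delta_i m/2^{13}\ge(\lambda/2^{13})\sqrt{C_1 m\log n}$. This handles both unbalanced subcases at once with $c_B=\min\{\gamma/2,2^{-13}\}$.
\end{itemize}
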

\begin{proof}
    First, notice that the hypothesis on the bias implies that
    \begin{equation}
        \label{eq:condition_on_bias_function_of_alpha}
        \Delta_i \geq \lambda \pa{ \sqrt{\frac{C_1 \log n}{\alpha} } }.
    \end{equation}
    Let $t = \lambda_1 \sqrt{\frac{n \alpha C_1^2 \log n}{\|C\|_2^2}} $. 
    First consider the case $\|C\|_2^2 \leq n C_1 \leq 24 \|C\|_2^2$, and therefore $t \leq \lambda_2 \sqrt{ \alpha C_1 \log n}$, for $\lambda_2 = \lambda_1 \sqrt{24}$.
    Conditioning on $\mathcal{E}$, we have that by \Cref{lemma:growth_expectation} (\Cref{eq:bias_growth_balanced_expectation})
    \begin{align*}
        \Delta'_i 
        &\geq 
        \bbE[\Delta'_i \mid C] - 2t 
        \\
        &\geq \Delta_i\pa{ 1+ \gamma \alpha(C_1,h) } - 2t
        \\
        &\geq \Delta_i\pa{ 1+ \frac{\gamma}{2} \alpha(C_1,h) } + (\lambda \frac{\gamma}{2} - 2\lambda_2) \sqrt{\alpha(C_1,h)C_1 \log n}.
        &\pa{\text{by \Cref{eq:condition_on_bias_function_of_alpha}}}
    \end{align*}
    By taking $\lambda$ s.t. $\lambda \geq 4 \lambda_2 \gamma^{-1}$, we proved the claim whenever $\|C\|_2^2 \leq n C_1 \leq 24 \|C\|_2^2$.
    Now consider the remaining case, $n C_1 \geq 24\|C\|_2^2$. 
    Conditioning on $\mathcal{E}$, we have that by \Cref{lemma:unbalanced_expected_bias}
    \begin{align*}
        \Delta'_i
        & \geq 
        \Delta_i\pa{ 1+ \frac{n C_1\min\cpa{ h^2 \|C/n\|^2_2 , 1 }}{2^{12} \| C \|^2_2 } } - 2t
        \\
        & \geq
        \Delta_i\pa{ 1+ 2^{-12}\alpha } - 2t
        &\pa{ nC_1 \geq \|C\|^2_2 }
        \\
        &\geq
        \Delta_i\pa{ 1+ 2^{-13}\alpha } + \lambda 2^{-13}\sqrt{\alpha C_1 \log n} - 2\lambda_1 \sqrt{\frac{n \alpha C_1^2 \log n}{\|C\|_2^2}}
        \\
        &\geq
        \Delta_i\pa{ 1+ 2^{-13}\alpha } + \pa{\lambda 2^{-13} - 2\lambda_1} \sqrt{\alpha C_1 \log n},
        &\pa{\|C\|^2_2 \leq nC_1}
    \end{align*}
   which concludes the proof of \Cref{lemma:bias_concentration_bernstein} if we take $\lambda \geq 2^{14}\lambda_1$.
\end{proof}
In the next lemma we show that, w.h.p., $C_1' \ge C_1$, which enforces the condition $C_1' = \omega(\log n)$ at the next round.
\begin{lemma}
    [Majority opinion does not decrease in concentration]
    \label{lemma:majority_concentration}
    Let \(\mathcal{E}\) be the event defined in \Cref{lemma:concentration_all_opinions}.
    Suppose the current ordered configuration is \(C = (C_1, \ldots, C_k)\).
    Also, assume $\alpha C_1 = \omega(\log n)$ and $C_1 \leq \frac{4}{5}n$.
    Then $\pr{C_1'\geq C_1 \mid \mathcal{E},C} =1$.
\end{lemma}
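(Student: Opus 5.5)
We want to show that on $\mathcal{E}$ we have $C_1' \ge C_1$ deterministically. On $\mathcal{E}$ the estimate
\[
C_1' \ge \bbE[C_1'\mid C] - t,
\qquad
t=\lambda_1\sqrt{\frac{\alpha n C_1^2\log n}{\|C\|_2^2}},
\]
holds by \Cref{lemma:concentration_all_opinions}. It therefore suffices to prove $\bbE[C_1'\mid C]-C_1 \ge t$. The plan is to reuse the case split from the proof of \Cref{lemma:bias_concentration_bernstein}: first the balanced case $\|C\|_2^2\le nC_1\le 24\|C\|_2^2$, then the unbalanced case $nC_1\ge 24\|C\|_2^2$.

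\emph{Unbalanced case.} Apply \Cref{claim:expectation_growth_C_1_unbalanced}. It gives an expected increment of at least $\gamma\alpha\frac{7}{24}\frac{nC_1^2}{\|C\|_2^2}$. We need this to dominate $t$, i.e.
\[
\gamma\frac{7}{24}\sqrt{\frac{\alpha nC_1^2}{\|C\|_2^2}}\ \ge\ \lambda_1\sqrt{\log n}.
\]
Since $nC_1\ge\|C\|_2^2$, the left-hand side is at least a constant times $\sqrt{\alpha C_1}$. By hypothesis $\alpha C_1=\omega(\log n)$, so the inequality holds for $n$ large, independently of how small $\gamma$ is.

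\emph{Balanced case.} Here $t\le\lambda_2\sqrt{\alpha C_1\log n}$, while \Cref{eq:growth_majority_opinion_expectation} only yields an increment of order $\alpha(C_1-C_2)$. This is the main obstacle: $C_1'\ge C_1$ is stated without any bias assumption. One option is to borrow the bias hypothesis $\Delta_2\ge\lambda\sqrt{C_1\log n/\alpha}$ from \Cref{eq:condition_on_bias_function_of_alpha}; this gives an increment of order $\sqrt{\alpha C_1\log n}\cdot\lambda$, which beats $t$ for $\lambda$ large, and it is the setting in which the lemma is used. If no bias is assumed, I would instead derive a lower bound on $\bbE[C_1'\mid C]-C_1$ directly as
\[
nQ\Bigl(\frac{q_1}{Q}-p_1\Bigr)
\quad\text{with}\quad
\frac{q_1}{Q}\ge \frac{p_1^2}{\|p\|_2^2}\cdot\frac{1}{3}\ \text{(or a sharper version)}.
\]
I expect, though, that without any bias this bound can be zero (for example when $C_1=C_2$). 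In that situation I would state the needed bias hypothesis explicitly, inheriting it from the iteration in which the lemma is applied.

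\emph{Upper threshold.} The hypothesis $C_1\le\frac45 n$ is needed to invoke \Cref{lemma:growth_expectation}, which is stated for $C_1<\frac34 n$. I would check that its constants extend to $\frac45 n$: the step $\frac{7C_1+\alpha\Delta}{6}<n$ still holds for $C_1\le\frac45 n$ since $\alpha<1$. This verification is routine.
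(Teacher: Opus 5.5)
Your outline follows the paper's proof: the same balanced/unbalanced split, the same use of \(\mathcal{E}\), \Cref{eq:growth_majority_opinion_expectation} in the balanced case and \Cref{claim:expectation_growth_C_1_unbalanced} in the unbalanced one. Your diagnosis of the balanced case is correct, and it is exactly where the paper's own proof breaks. The paper writes \(\bbE[C_1'\mid C]\ge C_1(1+\frac{\gamma}{7}\alpha)\), i.e.\ it silently replaces \(C_1-C_2\) by the larger quantity \(C_1\) in \Cref{eq:growth_majority_opinion_expectation}.

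The statement without a bias hypothesis is indeed false. Take \(k=2\), \(C_1=C_2=n/2\), \(h\ge 2\). Then \(\alpha=1\) and \(\bbE[C_1'\mid C]=C_1\) by symmetry. The law of \(C_1'\) is symmetric about \(n/2\) with fluctuations of order \(\sqrt n\), while the window of \(\mathcal{E}\) is \(\Theta(\sqrt{n\log n})\), so \(\Pr(C_1'<C_1\mid\mathcal{E},C)\to 1/2\). Your repair works: importing \(\Delta_2\ge\lambda\sqrt{C_1\log n/\alpha}\) gives an increment of at least \(\frac{\gamma\lambda}{7}\sqrt{\alpha C_1\log n}\ge t\) once \(\lambda\ge 7\lambda_2/\gamma\). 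That hypothesis is present wherever the lemma is invoked (\Cref{lemma:condition_delta_preserved} and the proof of \Cref{thm:convergence_time_dejavu}).

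The genuine gap is in your unbalanced case, which you take over from the paper: \Cref{claim:expectation_growth_C_1_unbalanced} is overstated. Its proof replaces \(Q\ge 2^{-11}\min\{h^2\|p\|_2^2,1\}\) by \(Q\ge\gamma\alpha\). But \(\alpha=\min\{h^2p_1,1\}\) can exceed \(\min\{h^2\|p\|_2^2,1\}\) by the factor \(p_1/\|p\|_2^2=nC_1/\|C\|_2^2\), which is unbounded in this regime. The true increment is of order \(C_1\min\{h^2p_1,\,nC_1/\|C\|_2^2\}\), and this can be far below \(t\).

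For a concrete case, take \(C_1=n^{0.4}\), every other agent holding its own opinion, and \(h=n^{0.3}\). Then \(\alpha=1\), \(nC_1\ge 24\|C\|_2^2\), and all hypotheses hold, including the bias condition. Yet \(\bbE[C_1'\mid C]-C_1\le n\binom h2 p_1^2\le C_1/2\), while \(t\ge\lambda_1 C_1\sqrt{(\log n)/2}\). So \(\mathcal{E}\) gives no useful lower bound on \(C_1'\). For example, the outcome where a single opinion-\(1\) agent samples a singleton opinion twice and nobody else changes lies in \(\mathcal{E}\cap\{C_1'<C_1\}\). Hence ``probability one given \(\mathcal{E}\)'' cannot be proved here, with or without bias, because the width of \(\mathcal{E}\) was derived from the crude bound \(\min\{h^2\|p\|_2^2,1\}\le\alpha\).

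What is true, and suffices for the theorem via a union bound, is \(C_1'\ge C_1\) w.h.p. Set \(M\coloneq nQp_1^2/\|p\|_2^2\). Then \(C_1'\) is a sum of independent indicators with \(\Var[C_1'\mid C]\le nQ(p_1+q_1/Q)\le 4M\). The increment \(nQ(q_1/Q-p_1)\) is at least \(\frac{7}{24}M\), by \Cref{claim:q_i_over_Q_ub_and_lb_norm2} and \(nC_1\ge 24\|C\|_2^2\). Since \(M\ge 2^{-11}C_1\min\{h^2p_1,p_1/\|p\|_2^2\}\ge 2^{-11}\alpha C_1=\omega(\log n)\), Bernstein gives \(\Pr(C_1'<C_1\mid C)=n^{-\omega(1)}\).

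A minor slip as well: \(\alpha<1\) alone does not give \(\frac{7C_1+\alpha\Delta_2}{6}<n\) for \(C_1\le\frac45 n\). You need \(\alpha\le\frac12\), which does hold because the \(\alpha\) in the proof of \Cref{lemma:growth_expectation} carries the factor \(2^{-19}\).
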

\begin{proof}
    Let \(\alpha\coloneq \alpha(C_1,h)=\min\cpa{\frac{C_1}{n}h^2,1}\).
    Let $t = \lambda_1 \sqrt{\frac{n \alpha C_1^2 \log n}{\|C\|_2^2}} $. 
    First consider the case $\|C\|_2^2 \leq n C_1 \leq 24 \|C\|_2^2$, and therefore $t \leq \lambda_2 \sqrt{ \alpha C_1 \log n}$, for $\lambda_2 = \lambda_1 \sqrt{24}$.
    Conditioning on $\mathcal{E}$, we have that by \Cref{lemma:growth_expectation} (\Cref{eq:growth_majority_opinion_expectation})
    \begin{align*}
        C'_1
        &\geq 
        \bbE[C'_1 \mid C] - t 
        \\
        &\geq C_1\pa{ 1+ \frac{\gamma}{7} \alpha } - t
        \\
        &\geq C_1 + \alpha C_1 \pa{ \frac{\gamma}{7} - \lambda_2 \sqrt{\frac{\log n}{\alpha C_1}} }
        \geq C_1.
        &\pa{\alpha C_1 = \omega(\log n)}
    \end{align*}
    Now consider the remaining case, $n C_1 \geq 24\|C\|_2^2$. 
    Conditioning on $\mathcal{E}$, we have that by \Cref{claim:expectation_growth_C_1_unbalanced}
    \begin{align*}
        C'_1
        & \geq 
        C_1\pa{ 1+ \gamma \alpha \frac{7}{24} \frac{nC_1}{\|C\|_2^2}} - t
        \\
        & \geq
        C_1 + \alpha C_1 \sqrt{\frac{nC_1}{\|C\|_2^2}}\pa{ \frac{7 \gamma}{24}\sqrt{\frac{nC_1}{\|C\|_2^2} -  \frac{\lambda_1 \log n}{\alpha C_1} }}
        \\
        &\geq C_1,
        &\pa{ n C_1 \geq 24\|C\|_2^2, \alpha C_1 = \omega(\log n)}
    \end{align*}
   which concludes the proof of \Cref{lemma:majority_concentration}.
\end{proof}

In \Cref{lemma:bias_concentration_bernstein}, we showed that the bias at the next round increases by a multiplicative factor if the current bias satisfies $\Delta_i\ge \lambda \sqrt{\max\cpa{\frac{n}{h^2},C_1}\log n}$.  However, showing that $\Delta_i' \ge \Delta_i$ is not enough to ensure that this condition holds at the next round, as the right hand side depends on $C_1$, which we know grows. Therefore, in the next lemma we show that w.h.p. the bias at the next round satisfies the condition.

\begin{lemma}[Condition on bias is preserved]
    \label{lemma:condition_delta_preserved}
    Let \(\mathcal{E}\) be the event defined in \Cref{lemma:concentration_all_opinions}.
    Suppose the current ordered configuration is \(C=(C_1,\ldots,C_k)\), with
    \[
        \alpha C_1 \geq \log n,
        \qquad
        C_1\le \frac45 n.
    \]
    Assume that for every opinion \(i\in\cpa{2,\ldots,k}\) it holds
    \[
        \Delta_i\ge \lambda \sqrt{\max\cpa{\frac{n}{h^2},C_1}\log n},
    \]
    for a sufficiently large absolute constant \(\lambda>0\), it holds that
    \[
        \pr{
            \cap_{i\in \cpa{2,\ldots,k}}\cpa{\Delta_i' \geq \lambda \sqrt{\max\cpa{\frac{n}{h^2},C_1'}\log n}}
            \,\middle|\,
            C, \mathcal{E}
        }
        = 1.
    \]
\end{lemma}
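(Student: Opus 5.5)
On \(\mathcal{E}\), every count is within \(t\coloneq \lambda_1\sqrt{\alpha n C_1^2\log n/\|C\|_2^2}\) of its mean, where \(\alpha=\min\cpa{\frac{C_1}{n}h^2,1}\). The plan is to compare \(\Delta_i'\) with \(\sqrt{\max\{n/h^2, C_1'\}\log n}\) using the expectation bounds of \Cref{lemma:growth_expectation} and \Cref{lemma:unbalanced_expected_bias}. The target \(\max\{n/h^2,C_1'\}\) has two branches, and I will treat them separately. Note first that \(\max\{n/h^2,C_1\}=C_1/\alpha\), since \(\alpha=C_1h^2/n\) exactly when \(n/h^2\ge C_1\).

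\emph{Branch \(n/h^2\).} This threshold does not move. By \Cref{lemma:bias_concentration_bernstein}, on \(\mathcal{E}\) we have \(\Delta_i'\ge \Delta_i\ge \lambda\sqrt{(n/h^2)\log n}\) deterministically, so this branch is immediate.

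\emph{Branch \(C_1'\).} I need \(\Delta_i'\ge \lambda\sqrt{C_1'\log n}\). On \(\mathcal{E}\) we have \(C_1'\le \bbE[C_1'\mid C]+t\) and \(\Delta_i'\ge \bbE[\Delta_i'\mid C]-2t\). I split into the same two regimes as in \Cref{lemma:bias_concentration_bernstein}.

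\emph{Balanced case, \(nC_1\le 24\|C\|_2^2\).} Here \(t\le\lambda_2\sqrt{\alpha C_1\log n}\). By \eqref{eq:bias_condition_preserved_balanced}, \(\bbE[\Delta_i']\ge \sqrt{\bbE[C_1']}\,\frac{\Delta_i}{\sqrt{C_1}}(1+\gamma\alpha/3)\). I then need to absorb both the \(-2t\) loss and the replacement of \(\sqrt{\bbE C_1'}\) by \(\sqrt{C_1'}\le\sqrt{\bbE C_1'}\,(1+t/\bbE C_1')\). Using \(\Delta_i/\sqrt{C_1}\ge\lambda\sqrt{\log n/\alpha}\) and \(\bbE C_1'\ge C_1\), this reduces to
\[
\frac{\gamma\alpha}{3}\lambda\sqrt{\frac{\log n}{\alpha}}\sqrt{C_1}\;\gtrsim\; 2t+\lambda\sqrt{\log n}\,\frac{t}{\sqrt{C_1}} .
\]
Both terms on the right are \(O(\lambda_2\sqrt{\alpha C_1\log n})\), up to a factor \(\lambda\sqrt{\log n/(\alpha C_1)}\) on the second, which is \(o(1)\) since \(\alpha C_1=\omega(\log n)\). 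The left side is \(\Theta(\lambda\gamma\sqrt{\alpha C_1\log n})\), so the inequality holds once \(\lambda\) is large relative to \(\lambda_2/\gamma\).

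\emph{Unbalanced case, \(nC_1>24\|C\|_2^2\).} Here \(t\) is larger by the factor \(r\coloneq \sqrt{nC_1/\|C\|_2^2}\), and I compensate with the stronger growth. \Cref{lemma:unbalanced_expected_bias} gives \(\bbE\Delta_i'\ge\Delta_i(1+\Omega(\alpha r^2))\), using \((p_1+p_i)/\|p\|_2^2\ge nC_1/\|C\|_2^2\) and \(\min\{h^2\|p\|_2^2,1\}\,p_1/\|p\|_2^2\ge \alpha\). Meanwhile \(C_1'\le n\), while \eqref{eq:bias_condition_preserved_unbalanced} gives the \(\sqrt{\bbE C_1'/C_1}\) factor. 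The gain \(\Delta_i\alpha r^2\ge\lambda\sqrt{C_1\log n/\alpha}\,\alpha r^2=\lambda r^2\sqrt{\alpha C_1\log n}\) dominates \(2t=2\lambda_1 r\sqrt{\alpha C_1\log n}\), since \(r\ge\sqrt{24}\).

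It remains to bound the growth of \(\sqrt{C_1'}\), which is at most \(\sqrt{1+O(\alpha r^2)}\) in relative terms. I would upper bound \(\bbE C_1'\le C_1(1+Q\cdot 3r^2)\) via \Cref{claim:q_i_over_Q_ub_and_lb_norm2}. The step I expect to be the main obstacle is matching this against the lower bound \(1+\Omega(\alpha r^2)\) for \(\Delta_i\). The issue is that \(Q\) can exceed \(\alpha\) only up to constants, so I must check that \(\Delta\)'s multiplicative growth \((1+c\,x)\) beats \(\sqrt{1+3x}\) for \(x=Qr^2\). This fails when \(x\) is large, so for large \(x\) I would switch to the crude bound \(C_1'\le n\). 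There \(\Delta_i'\gtrsim \Delta_i Qr^2\) gives \(\Delta_i'\ge \lambda\sqrt{n\log n}\) because \(\Delta_i r^2\ge\lambda\sqrt{C_1\log n}\cdot nC_1/\|C\|_2^2\ge\lambda\sqrt{n\log n}\cdot\sqrt{nC_1}/\|C\|_2\ge\lambda\sqrt{n\log n}\). I would handle this by splitting at \(x\) of constant order.
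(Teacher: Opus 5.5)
The balanced case is fine. It is the paper's argument with different bookkeeping. You attribute the smallness of one term to \(\alpha C_1=\omega(\log n)\), which is not a hypothesis here, but it follows anyway: \(C_1\ge\Delta_i\ge\lambda\sqrt{C_1\log n/\alpha}\) gives \(\alpha C_1\ge\lambda^2\log n\), and that suffices for large \(\lambda\). Splitting \(\max\{n/h^2,C_1'\}\) into two branches, instead of proving \(\Delta_i'/\sqrt{C_1'}\ge\lambda\sqrt{\log n/\alpha}\) and then using \(C_1'\ge C_1\), is a harmless reorganization.

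The gap is in the unbalanced case. \Cref{lemma:unbalanced_expected_bias} gives the growth factor \(F=(p_1+p_i)\min\{h^2\norm{p}_2^2,1\}/(2^{12}\norm{p}_2^2)\). Your two inequalities cannot be multiplied: the second already contains the factor \(p_1/\norm{p}_2^2=r^2\) that the first supplies. Combined correctly they give only \(F\ge 2^{-12}\alpha\), not \(\Omega(\alpha r^2)\). In fact \(F=\Theta(m)\) with \(m=r^2\min\{h^2\norm{p}_2^2,1\}=\min\{h^2p_1,r^2\}\), which equals \(\alpha\) whenever \(h^2p_1\le1\). This is not slack in the lemma: for \(h=2\) one has exactly \(\bbE[\Delta_i'\mid C]-\Delta_i=\Delta_i(p_1+p_i-\norm{p}_2^2)\le 2p_1\Delta_i\). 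So at the threshold the gain is of order \(\lambda\sqrt{\alpha C_1\log n}\). That cannot absorb \(2t=2\lambda_1 r\sqrt{\alpha C_1\log n}\) once \(r\gg\lambda/\lambda_1\).

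Your last paragraph is also off. The inequality \(1+cx\ge\sqrt{1+3x}\) fails for small \(x\), not large. The crude branch drops the factor \(Q\) and uses \(\sqrt{C_1}\,nC_1/\norm{C}_2^2\ge\sqrt n\sqrt{nC_1}/\norm{C}_2\), which needs \(C_1\ge\norm{C}_2\). The detour is unnecessary anyway, since \eqref{eq:bias_condition_preserved_unbalanced} already gives \(\bbE[\Delta_i'\mid C]/\Delta_i\ge\bbE[C_1'\mid C]/C_1\).

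The paper's own proof makes the same overclaim, so this is not only a problem with your write-up:
\begin{itemize}
\item Its unbalanced case divides by \(\Delta_i(1+\gamma\alpha\,nC_1/\norm{C}_2^2)\).
\item \Cref{claim:expectation_growth_C_1_unbalanced} lower-bounds \(Q\) by a multiple of \(\alpha\) rather than of \(\min\{h^2\norm{p}_2^2,1\}\).
\item \Cref{lemma:bias_concentration_bernstein}, on which your \(n/h^2\) branch rests, bounds \(-2t\) via \(\norm{C}_2^2\le nC_1\) in the wrong direction.
\end{itemize}

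With \(\mathcal E\) as defined, the statement is actually false. Take \(h=2\), \(C_1=n^{2/3}\), \(C_2=C_1-\lceil\lambda\sqrt{(n/4)\log n}\rceil\), and put all other agents in distinct opinions. All hypotheses hold, \(r^2\approx n^{1/3}/2\) and \(t=\Theta(\lambda_1 n^{1/3}\sqrt{\log n})\). Meanwhile \(\bbE[\Delta_2'\mid C]-\Delta_2=O(\lambda n^{1/6}\sqrt{\log n})\). So \(\mathcal E\) contains positive-probability outcomes, e.g.\ \(C_1'\approx\bbE[C_1'\mid C]-t\) and \(C_2'\approx\bbE[C_2'\mid C]+t\), with \(\Delta_2'<\lambda\sqrt{(n/4)\log n}\).

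The repair is to shrink \(\mathcal E\):
\begin{itemize}
\item \Cref{lemma:bound_variance} actually gives \(\Var[C_j'\mid C]=O(C_1m)\) for all \(j\). The proof of \Cref{lemma:concentration_all_opinions} loses a factor \(r^2\) by replacing \(\norm{p}_2^2\) with \(p_1\) inside the minimum. So deviations \(t^*=\lambda_1\sqrt{C_1m\log n}\) hold w.h.p.
\item The guaranteed gain \(2^{-12}\Delta_i m\ge2^{-12}\lambda\sqrt{C_1m\log n}\), using \(m\ge\alpha\), dominates \(2t^*\).
\item \(\bbE[C_1'\mid C]-C_1=\Omega(C_1m)\) with \(C_1m\ge\lambda^2\log n\), which keeps \(C_1'\ge C_1\).
\item \(m\le24\alpha\) in the balanced regime, so nothing changes there.
\end{itemize}
With this event, the ratio argument closes both branches, using \eqref{eq:bias_condition_preserved_unbalanced} to control \(\sqrt{C_1'}\).
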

\begin{proof}
    Let \(\alpha\coloneq \alpha(C_1,h)=\min\cpa{\frac{C_1}{n}h^2,1}\).
    Fix an opinion $i\in\cpa{2,\ldots,k}$. Let $t = \lambda_1 \sqrt{\frac{n \alpha C_1^2 \log n}{\|C\|_2^2}} $. Conditioning on $\mathcal{E}$, we have that
    \begin{align}
        \nonumber
        \frac{\Delta_i'}{\sqrt{C_1'}} 
        &\geq 
        \frac{\bbE[\Delta_i'\mid C]-2t}{\sqrt{\bbE[C_1'\mid C] + t}}
        \\ \nonumber
        &\geq
        \frac{\bbE[\Delta_i'\mid C]}{\sqrt{\bbE[C_1'\mid C]}} - t \frac{\bbE[\Delta_i'\mid C] + 4 \bbE[C_1'\mid C]}{2 \bbE[C_1'\mid C] \sqrt{\bbE[C_1'\mid C]}}
        &\pa{\text{Taylor expansion and convexity}}
        \\ \nonumber
        &=
        \frac{\bbE[\Delta_i'\mid C]}{\sqrt{\bbE[C_1'\mid C]}} - t \frac{5\bbE[C_1'\mid C] - \bbE[C_i'\mid C]}{2 \bbE[C_1'\mid C] \sqrt{\bbE[C_1'\mid C]}}
        \\ \label{eq:first_opinion_divides_bias_taylor_expansion_2}
        &\geq
        \frac{\bbE[\Delta_i'\mid C]}{\sqrt{\bbE[C_1'\mid C]}} - \frac{5 t}{2 \sqrt{\bbE[C_1'\mid C]}}
        \\ 
        \label{eq:first_opinion_divides_bias_taylor_expansion}
        &=
        \frac{\bbE[\Delta_i'\mid C]}{\sqrt{\bbE[C_1'\mid C]}} \pa{ 1- \frac{5 t}{2 \bbE[C_1'-C_i'\mid C]} }.
    \end{align}
    First consider the case $\|C\|_2^2 \leq n C_1 \leq 24 \|C\|_2^2$, and therefore $t \leq \lambda_2 \sqrt{ \alpha C_1 \log n}$, for $\lambda_2 = \lambda_1 \sqrt{24}$. By \Cref{lemma:growth_expectation} (\Cref{eq:bias_condition_preserved_balanced}), \Cref{eq:first_opinion_divides_bias_taylor_expansion} implies
    \begin{align*}
        \frac{\Delta_i'}{\sqrt{C_1'}} 
        &\geq 
        \frac{\Delta_2}{\sqrt{C_1}}\pa{1 + \frac{\gamma \alpha}{3}}\pa{ 1- \frac{5 \lambda_2 \sqrt{ \alpha C_1 \log n}}{2 \Delta_i\pa{ 1 + \gamma \alpha }} }
        \\
        &\geq
        \frac{\Delta_2}{\sqrt{C_1}}\pa{1 + \frac{\gamma \alpha}{3}}\pa{ 1- \frac{5 \lambda_2 \sqrt{ \alpha C_1 \log n}}{2 \lambda \sqrt{\frac{C_1 \log n}{\alpha}}\pa{ 1 + \gamma \alpha }} }
        &\pa{\text{By \Cref{eq:condition_on_bias_function_of_alpha}}}
        \\
        &=
        \frac{\Delta_2}{\sqrt{C_1}}\pa{1 + \frac{\gamma \alpha}{3}}\pa{\frac{2 \lambda \sqrt{\frac{C_1 \log n}{\alpha}} +  \pa{ 2\lambda \gamma - 5 \lambda_2 }\sqrt{ \alpha C_1 \log n}}{2 \lambda \sqrt{\frac{C_1 \log n}{\alpha}}\pa{ 1 + \gamma \alpha }} }
        \\
        &\geq
        \frac{\Delta_2}{\sqrt{C_1}}\pa{1 + \frac{\gamma \alpha}{3}}\pa{\frac{1 + \frac{3}{4} \gamma \alpha }{ 1 + \gamma \alpha } }
        &\pa{\text{by taking $\lambda$ large enough}}
        \\
        &\geq
        \frac{\Delta_2}{\sqrt{C_1}}
        \\
        &\geq
        \lambda \sqrt{\frac{\log n}{\alpha}}.
    \end{align*}
    This proves the claim whenever \(\|C\|_2^2 \leq n C_1 \leq 24 \|C\|_2^2\).
    Now consider the remaining case, $n C_1 \geq 24\|C\|_2^2$. By \Cref{lemma:growth_expectation} (\Cref{eq:bias_condition_preserved_unbalanced}), \Cref{eq:first_opinion_divides_bias_taylor_expansion} implies
    \begin{align*}
        \frac{\Delta_i'}{\sqrt{C_1'}} 
        &\geq 
        \frac{\Delta_2}{\sqrt{C_1}} \sqrt{\frac{\bbE[C'_1\mid C]}{C_1}}\pa{ 1- \frac{5 \lambda_1 \sqrt{\frac{n \alpha C_1^2 \log n}{\|C\|_2^2}} }{\Delta_i\pa{ 1 + \gamma \alpha \frac{nC_1}{\|C\|_2^2} }} }
        \\
        &\geq 
        \frac{\Delta_2}{\sqrt{C_1}} \sqrt{\frac{\bbE[C'_1\mid C]}{C_1}}\pa{ 1- \frac{5 \lambda_1 \sqrt{\frac{n \alpha C_1^2 \log n}{\|C\|_2^2}} }{2 \lambda \sqrt{\frac{C_1 \log n}{\alpha}}\pa{ 1 + \gamma \alpha \frac{nC_1}{\|C\|_2^2} }} }
        &\pa{\text{By \Cref{eq:condition_on_bias_function_of_alpha}}}
        \\
        &=
        \frac{\Delta_2}{\sqrt{C_1}}\sqrt{\frac{\bbE[C'_1\mid C]}{C_1}}\pa{ \frac{2 \lambda \sqrt{\frac{C_1 \log n}{\alpha}} + \pa{ 2 \lambda \gamma \sqrt{ \frac{nC_1}{\|C\|^2_2} }- 5 \lambda_1 }\sqrt{\frac{n \alpha C_1^2 \log n}{\|C\|_2^2}} }{2 \lambda \sqrt{\frac{C_1 \log n}{\alpha}}\pa{ 1 + \gamma \alpha \frac{nC_1}{\|C\|_2^2} }} }
        \\
        &\geq
        \frac{\Delta_2}{\sqrt{C_1}}\sqrt{\frac{\bbE[C'_1\mid C]}{C_1}}\pa{ \frac{ 1+ \frac{3}{4} \gamma \alpha \frac{nC_1}{\|C\|_2^2} }{ 1 + \gamma \alpha \frac{nC_1}{\|C\|_2^2} } }
        &\pa{\text{for large $\lambda$}}
        \\
        &\geq
        \frac{\Delta_2}{\sqrt{C_1}}\sqrt{1+ \gamma \alpha \frac{7}{24} \frac{nC_1}{\|C\|^2_2}}\pa{ \frac{ 1+ \frac{3}{4} \gamma \alpha \frac{nC_1}{\|C\|_2^2} }{ 1 + \gamma \alpha \frac{nC_1}{\|C\|_2^2} } }
        &\pa{\text{by \Cref{claim:expectation_growth_C_1_unbalanced}} }
        \\
        &\geq
        \frac{\Delta_2}{\sqrt{C_1}},        
        \\
        &\geq
        \lambda \sqrt{\frac{\log n}{\alpha}}.
    \end{align*}
    where in the second-last inequality we used the fact that, if \(x\geq 24\), then \(\frac{\sqrt{1+\frac{7}{24} x} \pa{1+\frac{3}{4}x}}{1+x} \ge 1\).
    Moreover, by \Cref{lemma:majority_concentration}, on the event \(\mathcal{E}\) we have \(C_1'\ge C_1\). Therefore,
    \[
        \frac{1}{\alpha}
        =
        \max\cpa{\frac{n}{h^2 C_1},1}
        \ge
        \max\cpa{\frac{n}{h^2 C_1'},1},
    \]
    and hence
    \[
        \lambda \sqrt{\frac{\log n}{\alpha}}
        \ge
        \lambda \sqrt{\max\cpa{\frac{n}{h^2 C_1'},1}\log n}.
    \]
    Multiplying both sides by \(\sqrt{C_1'}\), we conclude that
    \[
        \Delta_i' \ge \lambda \sqrt{\max\cpa{\frac{n}{h^2},C_1'}\log n},
    \]
    as claimed.
\end{proof}

\subsection{Convergence time of \dejavu}

To prove \cref{thm:convergence_time_dejavu}, we need one last lemma to handle the case in which $C_1 \ge \frac{3n}{4}$. In this regime, we reduce the analysis to the better-understood binary case by merging all remaining opinions.
\begin{lemma}[Binary merging domination]
    \label{lemma:binary_merging_domination}
    Let \(C^{(0)}=(C_1^{(0)},\ldots,C_k^{(0)})\) be an initial configuration, and let \(\bar C^{(0)}=(C_1^{(0)},n-C_1^{(0)})\) be the binary configuration obtained by merging all opinions \(i\neq 1\) into a single competing opinion.
    Let \(\pa{C^{(t)}}_{t\geq 0}\) be the \dejavu process started from \(C^{(0)}\), and let \(\pa{\bar C^{(t)}}_{t\geq 0}\) be the binary \dejavu process started from \(\bar C^{(0)}\).
    Then there exists a coupling such that
    \[
        \bar C_1^{(t)} \leq C_1^{(t)}
    \]
    for all \(t\geq 0\). In particular, the consensus time on opinion \(1\) in the original process is stochastically dominated by the consensus time in the merged binary process.
\end{lemma}
\begin{proof}
    We realize the two processes on the same set of \(n\) agents. For each round \(t\), let
    \[
        A_t\coloneq \cpa{u\in[n]:\text{agent \(u\) has opinion \(1\) in the original process at time \(t\)}}
    \]
    and
    \[
        B_t\coloneq \cpa{u\in[n]:\text{agent \(u\) has opinion \(1\) in the binary process at time \(t\)}}.
    \]
    We construct the coupling so that
    \[
        B_t\subseteq A_t
    \]
    for every \(t\geq 0\). At time \(t=0\), choose \(B_0=A_0\), so that \(|B_0|=\bar C_1^{(0)}=C_1^{(0)}\).

    Assume inductively that \(B_t\subseteq A_t\). For every agent \(u\), reveal the same sampled agents \(v_1,\ldots,v_h\) in both processes during round \(t\). In the original process, the sampled opinions are the actual opinions of \(v_1,\ldots,v_h\). In the binary process, we declare that a sampled agent has opinion \(1\) if it belongs to \(B_t\), and opinion \(0\) otherwise. Since \(|B_t|=\bar C_1^{(t)}\), this gives exactly the correct law for one round of the binary process started from \(\bar C^{(t)}\).

    Fix an agent \(u\). We claim that, under this coupling,
    \[
        \mathbf{1}\cpa{u\in B_{t+1}}
        \leq
        \mathbf{1}\cpa{u\in A_{t+1}}.
    \]
    If \(u\in B_{t+1}\), then either:
    \begin{enumerate}
        \item the binary process observes a repeated \(1\) before any repeated \(0\), or
        \item no opinion is repeated within the first \(h\) samples and \(u\in B_t\), so \(u\) keeps opinion \(1\).
    \end{enumerate}
    In the first case, before the second sampled agent from \(B_t\) appears, there is at most one sampled agent outside \(B_t\). Since \(B_t\subseteq A_t\), this implies that before that same time there is at most one sampled agent outside \(A_t\), hence in the original process no opinion different from \(1\) can appear twice. On the other hand, the two samples from \(B_t\) are also two samples of opinion \(1\) in the original process. Therefore the first repeated opinion in the original process is also \(1\), and \(u\in A_{t+1}\).
    In the second case, \(u\in B_t\subseteq A_t\), so \(u\) starts the round with opinion \(1\) also in the original process. Moreover, if no opinion is repeated in the binary process within the first \(h\) samples, then necessarily \(h=2\), with one sampled agent in \(B_t\) and one sampled agent outside \(B_t\). Hence in the original process either the two samples have different opinions, in which case \(u\) keeps opinion \(1\), or the sampled agent outside \(B_t\) belongs to \(A_t\setminus B_t\), in which case the two original samples are both equal to \(1\) and \(u\) adopts opinion \(1\). Thus again \(u\in A_{t+1}\).

    Since this holds for every agent \(u\), we obtain \(B_{t+1}\subseteq A_{t+1}\). By induction, \(B_t\subseteq A_t\) for all \(t\geq 0\), and therefore
    \[
        \bar C_1^{(t)}=|B_t|\le |A_t|=C_1^{(t)}
    \]
    for all \(t\geq 0\).

    Finally, if the merged binary process reaches consensus on opinion \(1\) at time \(t\), then \(\bar C_1^{(t)}=n\). Since \(\bar C_1^{(t)}\le C_1^{(t)}\le n\), it follows that \(C_1^{(t)}=n\) as well.
\end{proof}

We are now ready to prove \cref{thm:convergence_time_dejavu}, which we now restate for convenience.

\begin{theorem*}
    Let \(h\geq 2\) and \(C=(C_1,\ldots,C_k)\) be an initial system configuration where each agent supports an opinion in \(\{1,\ldots,k\}\).
    Without loss of generality, let \(C_1\geq C_2 \geq \cdots \geq C_k\).
    Assume that \(C_1 = \omega\pa{\log n}\) and that, for a large enough constant \(\lambda > 0\),
    \[
        C_1 - C_2 \ge \lambda \sqrt{ \max\cpa{\frac{n}{h^2}, C_1} \log n }.
    \]
    \dejavu converges to consensus on the first opinion w.h.p.\ in \(O\pa{ (\frac{n}{h^2 C_1}+1) \log n}\) rounds.
\end{theorem*}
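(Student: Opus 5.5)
The proof splits into two phases: a growth phase, in which \(C_1^{(t)}\) rises from its initial value to at least \(3n/4\), and a binary endgame. For the growth phase I iterate \Cref{lemma:concentration_all_opinions}, \Cref{lemma:bias_concentration_bernstein}, \Cref{lemma:majority_concentration} and \Cref{lemma:condition_delta_preserved} round by round. The invariant I maintain is:
\[
\Delta_i^{(t)} \ge \lambda\sqrt{\max\{n/h^2, C_1^{(t)}\}\log n}\quad \text{for all } i\ge 2,
\qquad C_1^{(t)}\ge C_1^{(0)} .
\]
The invariant holds at \(t=0\) by hypothesis, since \(\Delta_i\ge\Delta_2\).

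For the round-to-round step, I first check the side condition \(\alpha C_1 = \omega(\log n)\), where \(\alpha=\min\{h^2C_1/n,1\}\). If \(\alpha=1\) this is just \(C_1=\omega(\log n)\). If \(\alpha<1\), then \(\alpha C_1 = h^2C_1^2/n\). Because \(\Delta_2\le C_1\), the bias hypothesis gives \(C_1\ge\lambda\sqrt{(n/h^2)\log n}\), so \(h^2C_1^2/n\ge\lambda^2\log n\). This is \(\Omega(\log n)\), and taking \(\lambda\) large enough covers the constants the lemmas need. (I would double check whether any step truly needs \(\omega\) rather than a large constant multiple; the Bernstein step only needs the latter.)

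Conditioned on \(\mathcal E\), which holds with probability \(1-n^{-100}\) by \Cref{lemma:concentration_all_opinions}, the invariant is preserved, \(C_1\) does not decrease, and each \(\Delta_i\) grows by a factor \(1+c_B\alpha^{(t)}\). A union bound over \(\mathrm{poly}(n)\) rounds keeps the failure probability polynomially small.

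To count the rounds, I bound the growth of \(C_1\) rather than of \(\Delta_2\), using \Cref{eq:growth_majority_opinion_expectation} together with concentration. On \(\mathcal E\), \(C_1'\ge C_1+\Omega(\alpha\Delta_2)-t\), and the error term \(t\) is absorbed as in \Cref{lemma:majority_concentration}. So \(C_1\) gains \(\Omega(\alpha\Delta_2)\) per round, while \(\Delta_2\) itself grows geometrically at rate \(1+c_B\alpha\). Since \(\alpha^{(t)}\ge\alpha^{(0)}\), after \(O(\log n/\alpha^{(0)})\) rounds \(\Delta_2\) would exceed \(n\) unless \(C_1\) had already passed \(3n/4\) (I use the \(4n/5\) slack in the lemmas for the overshoot). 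That count is \(O((n/(h^2C_1)+1)\log n)\).

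The step I expect to be the main obstacle is making this count sharp. The multiplicative rate \(\alpha^{(t)}\) changes with \(C_1^{(t)}\), and once \(C_1\) is large, \(\Delta_2\) is no longer the right quantity to track. Two natural fixes are to stop tracking \(\Delta_2\) once \(\Delta_2\ge C_1/2\), or to note that \(\Delta_2\le C_1\) forces \(C_1\) to grow by the same geometric factor in the balanced regime. If the doubling argument is messy, I would fall back on the crude bound from geometric growth of \(\Delta_2\), which already has the right order.

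For the endgame, once \(C_1\ge 3n/4\), I apply \Cref{lemma:binary_merging_domination}. The merged binary process starts with bias at least \(n/2\). In the binary setting \dejavu is \twochoices when \(h=2\) and \threemaj when \(h>2\). Their known \(O(\log n)\) w.h.p. consensus time from linear bias \cite{becchettiSimpleDynamicsPlurality2017,Shimizu2025} then bounds the remaining time. Adding the two phases gives the claimed bound.
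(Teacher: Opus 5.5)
Your proposal is correct and follows the paper's proof. Like you, the paper iterates \Cref{lemma:concentration_all_opinions}, \Cref{lemma:bias_concentration_bernstein}, \Cref{lemma:majority_concentration} and \Cref{lemma:condition_delta_preserved} while \(C_1^{(t)}\le 3n/4\), counts rounds via \(\Delta_2^{(t)}\ge \Delta_2^{(0)}(1+c_B\alpha^{(0)})^t\) with \(\Delta_2\le C_1\), and finishes with \Cref{lemma:binary_merging_domination} and the known \(O(\log n)\) bounds for \twochoices and \threemaj. Your ``fallback'' count is exactly the paper's argument, so the sharpening you flagged as the main obstacle is unnecessary, and your check that the bias hypothesis gives \(\alpha^{(t)} C_1^{(t)}\ge \lambda^2\log n\) in every round is a useful detail the paper leaves implicit.
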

\begin{proof}[Proof of \Cref{thm:convergence_time_dejavu}]
    If \(C_1^{(0)}> 3n/4\), let \(\pa{\bar C^{(t)}}_{t\geq 0}\) be the binary process obtained by merging all opinions \(i\neq 1\) into a single competing opinion, as in \Cref{lemma:binary_merging_domination}.
    In the binary setting, \dejavu coincides with \twochoices if \(h=2\), and with \threemaj if \(h>2\).
    Since \(\bar C_1^{(0)}>3n/4\), the known high-probability bounds for these two binary dynamics imply that \(\pa{\bar C^{(t)}}_{t\geq 0}\) reaches consensus on opinion \(1\) within \(O(\log n)\) rounds w.h.p.~\cite{Shimizu2025}.
    By \Cref{lemma:binary_merging_domination}, the original process reaches consensus on opinion \(1\) no later than the merged binary process.
    Therefore the theorem follows in this case.

    We now consider the remaining case \(C_1^{(0)}\le 3n/4\).
    For each round \(t\ge 0\), define
    \[
        \alpha_t\coloneq \min\cpa{\frac{C_1^{(t)}}{n}h^2,1}.
    \]
    Also define the event
    \[
        \mathcal{E}^{(t)}
        =
        \bigcap_{j\in[k]}
        \cpa{
            \qty|C_j^{(t+1)}-\bbE[C_j^{(t+1)}\mid C^{(t)}]|
            \le
            \lambda_1\sqrt{\frac{\alpha_t\, n (C_1^{(t)})^2\log n}{\|C^{(t)}\|_2^2}}
        }.
    \]
    For every round \(t\) such that \(C_1^{(t)}\le 3n/4\), let
    \begin{align*}
        \mathcal{F}^{(t)}
        &=
        \bigcap_{i=2}^k
        \cpa{
            \Delta_i^{(t+1)}
            \ge
            \Delta_i^{(t)}(1+c_B\alpha_t)
        },
        \\
        \mathcal{G}^{(t)}
        &=
        \bigcap_{i=2}^k
        \cpa{
            \Delta_i^{(t+1)}
            \ge
            \lambda\sqrt{\max\cpa{\frac{n}{h^2},C_1^{(t+1)}}\log n}
        },
        \\
        \mathcal{H}^{(t)}
        &=
        \cpa{C_1^{(t+1)}\ge C_1^{(t)}}.
    \end{align*}
    Assume that, at some round \(t\), the current configuration satisfies
    \[
        \bigcap_{i=2}^k
        \cpa{
            \Delta_i^{(t)}
            \ge
            \lambda\sqrt{\max\cpa{\frac{n}{h^2},C_1^{(t)}}\log n}
        }
        \qquad\text{and}\qquad
        C_1^{(t)}\le \frac34 n.
    \]
    Then, by \Cref{lemma:concentration_all_opinions},
    \[
        \Pr(\mathcal{E}^{(t)}\mid C^{(t)})\ge 1-n^{-100}.
    \]
    Moreover, by \Cref{lemma:bias_concentration_bernstein,lemma:condition_delta_preserved,lemma:majority_concentration},
    \[
        \Pr(\mathcal{F}^{(t)}\cap \mathcal{G}^{(t)}\cap \mathcal{H}^{(t)} \mid C^{(t)},\mathcal{E}^{(t)})=1.
    \]
    Therefore,
    \[
        \Pr(\mathcal{F}^{(t)}\cap \mathcal{G}^{(t)}\cap \mathcal{H}^{(t)} \mid C^{(t)})
        \ge
        1-n^{-100}.
    \]

    Let
    \[
        \tau\coloneq \min\cpa{t\ge 0: C_1^{(t)}>3n/4}.
    \]
    As long as \(t<\tau\), event \(\mathcal{G}^{(t)}\) implies that the bias condition needed to reapply the previous argument holds at round \(t+1\), while \(\mathcal{H}^{(t)}\) implies that \(C_1^{(t)}\) is nondecreasing.

    Set
    \[
        \beta\coloneq c_B\min\cpa{\frac{C_1^{(0)}}{n}h^2,1}.
    \]
    Since \(C_1^{(t)}\ge C_1^{(0)}\) for all \(t<\tau\), we have \(\alpha_t\ge \min\cpa{\frac{C_1^{(0)}}{n}h^2,1}\), and hence on every event \(\mathcal{F}^{(t)}\) with \(t<\tau\),
    \[
        \Delta_2^{(t+1)}
        \ge
        \Delta_2^{(t)}(1+\beta).
    \]
    Iterating, we obtain
    \[
        \Delta_2^{(t)}
        \ge
        \Delta_2^{(0)}(1+\beta)^t
        \qquad\text{for every }t<\tau.
    \]

    Let
    \[
        \tau_2\coloneq \min\cpa{t\ge 0:\Delta_2^{(t)}>3n/4}.
    \]
    Since \(\Delta_2^{(t)}\le C_1^{(t)}\) for every \(t\), the event \(\cpa{\Delta_2^{(t)}>3n/4}\) implies \(\cpa{C_1^{(t)}>3n/4}\), and hence \(\tau\le \tau_2\).
    Also define
    \[
        T\coloneq \min\cpa{t\in\bbN: \Delta_2^{(0)}(1+\beta)^t>3n/4}.
    \]
    By the previous growth estimate, on the event \(\bigcap_{s=0}^{T-1}(\mathcal{F}^{(s)}\cap\mathcal{G}^{(s)}\cap\mathcal{H}^{(s)})\) we have \(\tau_2\le T\), and therefore \(\tau\le T\).
    Moreover, by a union bound,
    \[
        \Pr\!\left(
            \bigcap_{s=0}^{T-1}(\mathcal{F}^{(s)}\cap\mathcal{G}^{(s)}\cap\mathcal{H}^{(s)})
        \right)
        \ge
        1-Tn^{-100}
        =
        1-n^{-\Theta(1)}.
    \]
    Since \(\Delta_2^{(0)}\ge \lambda\sqrt{\max\cpa{n/h^2,C_1^{(0)}}\log n}\ge 1\), we obtain
    \[
        T
        =
        O\pa{\frac{\log n}{\log(1+\beta)}}
        =
        O\pa{\qty(\frac{n}{h^2C_1^{(0)}}+1)\log n}.
    \]
    Therefore, with high probability, the process reaches a configuration with \(C_1^{(t)}>3n/4\) within
    \[
        O\pa{\qty(\frac{n}{h^2C_1^{(0)}}+1)\log n}
    \]
    rounds.
    We are now back to the case where the majority opinion has size larger than \(3n/4\), and therefore we can apply the previous argument to conclude that consensus on opinion \(1\) is reached within an additional \(O(\log n)\) rounds w.h.p.

    Hence the overall convergence time is
    \[
        O\pa{\qty(\frac{n}{h^2C_1^{(0)}}+1)\log n},
    \]
    concluding the proof of \Cref{thm:convergence_time_dejavu}.
\end{proof}

\section{\texorpdfstring{Lower bound for \(h\)-majority}{Lower bound for h-majority}}
\label{sec:lower_bound_h_majority}

In this section, 
we prove \Cref{thm:app:lb:h-majority}, which we restate here for convenience.

\begin{theorem*}
    Let \(\varepsilon > 0\) be any arbitrarily small constant.
    Let \(C = (C_1, \ldots, C_k)\) be the starting configuration of the system, with \(C_1 \ge \ldots \ge C_k\), and \(C_1 \le n/100\). For any \(h \ge n^{0.75 + \varepsilon} / C_1 \), 
    the \(h\)-majority process converges in time \(\Omega(n / (h^2 C_1) + 1)\) with high probability. 
\end{theorem*}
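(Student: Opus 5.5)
The plan is to show that no opinion can grow faster than by a factor roughly $1+O(h^2 C_1/n)$ per round (plus lower-order terms) w.h.p., and then iterate. Write $M^{(t)}=\max_i C_i^{(t)}$. The constant part of the bound is trivial: consensus needs at least one round because $C_1\le n/100<n$. So we may assume $n/(h^2C_1)$ is large, that is $h^2 C_1\le \delta n$ for a small constant $\delta$. Together with $h\ge n^{3/4+\varepsilon}/C_1$, this forces $C_1\ge n^{1/2+2\varepsilon}/\delta$ and $h\le \sqrt{\delta n/C_1}$. We will show that $M^{(t)}\le 2C_1$ for all $t\le c\,n/(h^2C_1)$ w.h.p. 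Since $2C_1\le n/50<n$, this gives the claim.

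\emph{One-round estimate.} Fix a configuration with $M\le 2C_1$ and densities $p_i$, and consider opinion $i$. An agent adopts $i$ only if $i$ is the plurality of its $h$-sample (ties broken arbitrarily). In particular, either $i$ appears at least twice in the sample, or every opinion appears at most once and $i$ is chosen by the tie-breaking rule. For the first event I would bound
\[
\Pr\bigl(X_i\ge2\bigr)\le \binom h2 p_i^2 .
\]
For the all-distinct event I would bound its contribution by $h\,p_i$ times the probability that the tie-break picks $i$. Under uniform tie-breaking among the sample, this is exactly $p_i$, since by symmetry the winner is a uniform sample. (If the tie-breaking rule is ``keep own opinion'', then $C_i$ agents contribute instead.) Either way,
\[
\bbE[C_i'\mid C]\le C_i+ n\binom h2 p_i^2 \le C_i\Bigl(1+\tfrac{h^2 M}{2n}\Bigr).
\]
I would double-check against the technique of~\cite{becchettiConsensusDynamicsOverview2020}: with $h^2p_1$ small, an agent rarely sees any repetition, so it effectively copies a uniform sample.

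\emph{Concentration and iteration.} $C_i'$ is a sum of $n$ independent indicators, so by Chernoff w.h.p. $C_i'\le \bbE[C_i']+O(\sqrt{\bbE[C_i']\log n}+\log n)$. A union bound over all $i$ and all rounds $t\le n$ then gives, as long as $M^{(t)}\le 2C_1$,
\[
M^{(t+1)}\le M^{(t)}\Bigl(1+\tfrac{h^2C_1}{n}\Bigr)+O\bigl(\sqrt{C_1\log n}\bigr).
\]
Iterating over $T=c\,n/(h^2C_1)$ rounds, the multiplicative part contributes at most a factor $e^{c}$. The additive part accumulates to $O(T\sqrt{C_1\log n})=O\bigl(n\sqrt{\log n}/(h^2\sqrt{C_1})\bigr)$. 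The main obstacle is checking that this drift is $o(C_1)$; this needs $h^2 C_1^{3/2}\gg n\sqrt{\log n}$. Using $h\ge n^{3/4+\varepsilon}/C_1$ gives
\[
h^2C_1^{3/2}\ge n^{3/2+2\varepsilon}/\sqrt{C_1}\ge n^{1+2\varepsilon},
\]
since $C_1\le n$. This is exactly where the threshold $n^{3/4+\varepsilon}$ enters. Hence for small $c$ we get $M^{(T)}\le e^c C_1+o(C_1)<2C_1$, which proves the theorem.
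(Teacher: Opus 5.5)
Your argument is correct. It shares the paper's skeleton: the same one-round bound $\mathbb{E}[C_j'\mid C]\le C_j\bigl(1+h^2C_j/(2n)\bigr)$, obtained by splitting into ``$j$ appears twice'' versus ``all samples distinct and the tie-break picks $j$'', followed by iterating a cap on the maximum for $\Theta(n/(h^2C_1))$ rounds. The paper lets the maximum grow to $100C_1$; you stop at $2C_1$. The concentration step, however, is genuinely different.

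\textbf{The paper's concentration step.} It uses Hoeffding, whose exponent $2t^2/n$ ignores the variance. It insists that the deviation be absorbed into the per-round multiplicative factor, i.e.\ $t=\Theta(h^2C_1^2/n)$. Requiring $t^2/n\ge n^{\Theta(\varepsilon)}$ is precisely what produces the hypothesis $hC_1\ge n^{3/4+\varepsilon}$. In return, each round fails only with probability $\exp(-n^{\Theta(\varepsilon)})$, and no error has to be carried from one round to the next.

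\textbf{Your concentration step.} You use a variance-sensitive Chernoff/Bernstein bound, pay an additive $O(\sqrt{C_1\log n})$ per round, and control its accumulation over the $T$ rounds. This only requires $h^2C_1^{3/2}\gg n\sqrt{\log n}$. As your own computation shows, this is implied by the stated hypothesis. It is also strictly weaker when $C_1\ll n$: for instance, $C_1=n^{0.6}$ and $h=n^{0.1}$ give a nontrivial $\Omega(n^{0.2})$ bound even though $hC_1=n^{0.7}$. So $n^{3/4+\varepsilon}$ is not where your argument is tight; it is merely a sufficient condition, contrary to your closing remark.

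\textbf{A further advantage.} Your bookkeeping measures every error against the fixed initial $C_1$, so you never need a lower bound on the current maximum. The paper's iteration re-applies its one-round lemma, whose hypothesis $h\ge n^{3/4+\varepsilon}/C_1$ refers to the current maximum. It justifies this only through the upper bound $X^{(t)}<100C_1$, which is the wrong direction.
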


To prove \cref{thm:app:lb:h-majority}, we revisit the proof given by \cite{becchettiSimpleDynamicsPlurality2017}. 
We will generally refer to the number of nodes supporting an opinion \(j\) at any given time \(t\) by \(C_j\), and we denote by \(C_j'\) the number of nodes supporting opinion \(j\) at time \(t+1\) conditional on the system configuration at time \(t\).
We rely on the following lemma, adapted from \cite[Lemma 9]{becchettiSimpleDynamicsPlurality2017}.

\begin{lemma}
\label{lem:app:max-growth:h-maj}
    Let \(C = (C_1, \ldots, C_k)\) be the starting configuration of the system, with \(C_1 \ge \ldots \ge C_k\).
    If \(h \ge n^{0.75 + \varepsilon} / C_1\), then after one round of the \(h\)-majority protocol, 
    \[
        \Pr\left(C_j' \ge \left(1 + \frac{h^2 C_1}{n}\right)C_1 \mid C\right) \le \exp[-n^{\Theta(\varepsilon)}].
    \]
\end{lemma}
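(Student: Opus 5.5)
Fix an opinion \(j\) and an agent \(u\). Under \hmaj, \(u\) adopts \(j\) only if \(j\) is a (weak) plurality in its sample of size \(h\). In particular this requires \(j\) to appear at least twice among the \(h\) samples, or all \(h\) samples to be distinct. The plan is to bound \(\Pr(u \text{ adopts } j\mid C)\) by a quantity \(\pi_j\) with \(n\pi_j \le C_j + h^2 C_1 C_j/n\) up to lower-order terms. Then \(C_j'\) is a sum of \(n\) independent Bernoulli variables, and a Chernoff bound gives the exponentially small tail. The statement's \(C_1\) on the right is natural, since \(C_j\le C_1\).

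First I estimate \(\pi_j\). Write \(X_j\sim\mathrm{Bin}(h,p_j)\) for the number of samples with opinion \(j\). The all-distinct event has probability at most \(\exp(-\binom h2\norm{p}_2^2)\), since it requires avoiding all collisions; with \(h\ge n^{3/4+\varepsilon}/C_1\) and \(\norm{p}_2^2\ge p_1^2\) I expect it to be negligible, namely \(\exp(-n^{\Theta(\varepsilon)})\). This needs checking: \(h^2p_1^2 \ge n^{3/2+2\varepsilon}/n^2\), which is \(<1\). So this bound alone fails, and I would instead handle ties as in the paper's setup of \cite{becchettiSimpleDynamicsPlurality2017}: when there is no majority, the agent keeps its opinion. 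The contribution of this case to \(C_j'\) is then at most the agents already holding \(j\), which is fine since it is covered by the \(C_j\) term.

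The main term is the case where \(j\) is the strict plurality with \(X_j\ge2\). I bound it by exchangeability. Given that the sample has some strict plurality value \(v\) with multiplicity \(m\ge 2\), the probability that \(v=j\) is proportional to \(p_j^m\), weighted by the other counts. A cleaner route is
\[
\Pr(j \text{ wins}) \le \Pr(X_j\ge 2)\le \binom h2 p_j^2\le \tfrac{h^2}{2}p_jp_1 .
\]
Hence the expected number of agents switching to \(j\) is at most \(h^2C_jC_1/(2n)\). Together with the at most \(C_j\) agents who keep \(j\), this gives \(\bbE[C_j'\mid C]\le C_j(1+h^2C_1/(2n))\). In the regime \(h^2C_1/n\ge1\) this is at most \(C_1(1+h^2C_1/n)\) with slack of order \(h^2C_1^2/(2n)\). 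In the regime \(h^2C_1/n<1\) the slack is \(\tfrac12 h^2C_1^2/n\ge \tfrac12 n^{1/2+2\varepsilon}\), using the hypothesis on \(h\).

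Finally, I apply Chernoff's bound to the \(n\) independent agents. The deviation needed is \(\delta\mu\) with \(\mu\le 2C_1(1+h^2C_1/n)\) and \(\delta\mu\gtrsim h^2C_1^2/n\). The exponent is then \(\Omega(\min(\delta\mu,(\delta\mu)^2/\mu))\). Using \(C_1\le n/100\) and \(h^2C_1^2/n\ge n^{1/2+2\varepsilon}\), this exponent is \(n^{\Theta(\varepsilon)}\). I expect the main obstacle to be this last computation across the two regimes of \(h^2C_1/n\), together with rigorously justifying the tie/no-majority convention so that non-\(j\) holders cannot be counted toward \(j\) without \(X_j\ge2\).
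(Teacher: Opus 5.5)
There is a genuine gap in how you dispose of the no-majority case. You replace it by ``the agent keeps its opinion'', but that is not the $h$-majority rule used here. As in Becchetti et al., ties are broken uniformly at random among the tied sampled opinions. The paper contrasts this with \textsc{2-Choices}, where ties are broken in favor of the current opinion, and its proof of this lemma relies on the random tie-break. Under this rule, an agent that does \emph{not} hold $j$ adopts $j$ with $N_j=1$ whenever its $h$ samples are pairwise distinct and the tie-break selects $j$. Conversely, an agent holding $j$ gets no preference for keeping it. As you noticed, the all-distinct event need not be rare ($h^2p_1^2$ may be below $1$). So this contribution is genuinely of order $C_j$, and it is not bounded by ``the agents already holding $j$''. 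Your bound on $\bbE[C_j'\mid C]$ is therefore not established for the protocol in the statement.

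The missing step is short. For $h\ge 2$, if $N_j=1$ then $j$ can be adopted only if all $h$ samples are distinct, since otherwise some other opinion appears at least twice and strictly beats $j$. In that case $j$ is chosen with probability exactly $1/h$. Hence, for \emph{every} agent $u$, $\Pr(u \text{ adopts } j,\ N_j=1\mid C)\le \frac1h\Pr(N_j=1\mid C)\le \frac1h\cdot hp_j=p_j$. Combined with your bound $\Pr(N_j\ge 2)\le\binom h2 p_j^2$, this gives $\bbE[C_j'\mid C]\le C_j+\frac{h^2C_j^2}{2n}\le C_1\bigl(1+\frac{h^2C_1}{2n}\bigr)$, exactly as in the paper.

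From there your concentration step works and is slightly cleaner than the paper's, which applies Hoeffding separately for $C_j\ge C_1/2$ and $C_j<C_1/2$. You use a multiplicative Chernoff bound against the uniform mean bound $\mu_+=C_1(1+h^2C_1/(2n))$ with deviation $t=h^2C_1^2/(2n)$. This gives exponent $\Omega(\min\{t,\,t^2/C_1\})\ge n^{\Omega(\varepsilon)}$, using $hC_1\ge n^{3/4+\varepsilon}$ and only $C_1\le n$. The assumption $C_1\le n/100$ that you invoke is not a hypothesis of this lemma and is not needed.
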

\begin{proof}
    Let \(u \in [n]\) be any specific node, and let \(N_j\) be the number of nodes with opinion \(j\) picked by \(u\) during the sampling stage of \(h\)-majority protocol.
    Let \(Y_u\) be the indicator random variable of the event that node \(u\) adopts opinion \(j\). 
    We give an upper bound on the probability of the event \(Y_u = 1\) by conditioning it on \(N_j = 1\) and \(N_j > 2\) (observe that if \(N_j = 0\) node \(u\) cannot choose \(j\) as its opinion).
    \begin{align}
        \Pr(Y_u = 1) \le \Pr(Y_u \mid N_j = 1)\Pr(N_j = 1) + \Pr(N_j \ge 2). \label{eq:app:est-1-growth}
    \end{align}
    The term \(\Pr(Y_u = 1  \mid N_j = 1)\) is at most \(1/h\) since the only event in which \(u\) picks opinion \(j\) is that all the \(h\) sampled opinions are different.
    The term \(\Pr(N_j = 1)\) can be upper bounded by the probability that at least \(1\) opinion among the sampled ones is \(1\), which is, by the union bound, at most \(h C_j / n\).
    Finally, the term \(\Pr(N_j \ge 2)\) is at most \(\binom{h}{2} C_j^2 / n^2\).
    Hence, \cref{eq:app:est-1-growth} can be rewritten as 
    \begin{align*}
        \Pr(Y_u = 1) & \le \frac{C_j}{n} + \binom{h}{2}\frac{C_j^2}{n^2} \\
        & \le \frac{C_j}{n} + \frac{h^2 C_j^2}{2 n^2}.
    \end{align*}
    Hence, it holds that 
    \begin{align*}
        \bbE[C_j' \mid C] & \le C_j + \frac{h^2 C_j^2}{2n} \\
        & = C_j \left(1 + \frac{h^2 C_j}{2n}\right) \\
        & \le C_j \left(1 + \frac{h^2 C_1}{2n}\right),
    \end{align*}
    where the latter inequality holds by the hypothesis on \(C_j\).
    
    We now consider two cases.
    First, assume \(C_j \ge C_1 / 2\).
    By the Hoeffding bound (\cref{lemma:app:hoeffding}), we obtain that
    \begin{align*}
        \Pr\left(C_j ' \ge C_j \left(1 + \frac{h^2 C_1}{n}\right) \mid C\right) & \le \exp\left[ - \frac{2 \left(C_j \cdot \frac{h^2 C_1}{2n}\right)^2}{n}\right] \\
        & \le \exp\left[ - \frac{h^4 C_1^2 C_j^2}{2n^3}\right] \\
        & \le \exp\left[ - \frac{h^4 C_1^4}{8 n^3}\right],
    \end{align*}
    where the latter inequality holds by the hypothesis on \(C_j\).
    To conclude, observe that the hypothesis on \(h\) implies that \(h C_1 \ge n^{0.75 + \varepsilon}\) and, hence, 
    \begin{align*}
        \exp\left[ - \frac{h^4 C_1^4}{8 n^3}\right] \le \exp\left[- \frac{n^{4\varepsilon}}{8}\right].
    \end{align*}

    In the second case, we assume \(C_j < C_1 / 2\).
    Then, 
    \begin{align*}
        & \Pr\left( C_j' \ge C_1 \left(1 + \frac{h^2C_1}{2n}\right) \mid C \right)\\
        = \ & \Pr\left( C_j' \ge C_j \left(1 + \frac{h^2C_1}{2n}\right) + C_1 \left(1 + \frac{h^2C_1}{2n}\right) - C_j \left(1 + \frac{h^2C_1}{2n}\right) \mid C \right) \\
        = \ &  \Pr\left( C_j' \ge C_j \left(1 + \frac{h^2C_1}{2n}\right) + (C_1 - C_j) \left(1 + \frac{h^2C_1}{2n}\right) \mid C \right) \\
        \le \ & \Pr\left( C_j' \ge C_j \left(1 + \frac{h^2C_1}{2n}\right) + \frac{C_1}{2} \left(1 + \frac{h^2C_1}{2n}\right) \mid C \right),
    \end{align*}
    where, in the latter inequality, we used that \(C_j < C_1 / 2\).
    We can continue by observing that 
    \begin{align*}
        \Pr\left( C_j' \ge C_1 \left(1 + \frac{h^2C_1}{2n}\right) \mid C \right) & \le \Pr\left( C_j' \ge C_j \left(1 + \frac{h^2C_1}{2n}\right) + \frac{C_1}{2} \left(1 + \frac{h^2C_1}{2n}\right) \mid C \right) \\
        & \le \Pr\left( C_j' \ge C_j \left(1 + \frac{h^2C_1}{2n}\right) +  \frac{h^2C_1^2 }{4n} \mid C \right).
    \end{align*}
    By the Hoeffding bound (\cref{lemma:app:hoeffding}), we obtain that
    \begin{align*}
        \Pr\left( C_j' \ge C_1 \left(1 + \frac{h^2C_1}{2n}\right) \mid C\right) & \le \exp\left[ - \frac{2 \left(\frac{h^2 C_1^2}{4n}\right)^2}{n}\right] \\
        & = \exp\left[ - \frac{ h^4 C_1^4}{8n^3}\right] \\
        & \le \exp\left[ - \frac{ n^{4\varepsilon}}{8}\right],
    \end{align*}
    where the latter inequality comes from the hypothesis on \(h\).
    
    Hence, for any \(j \in [n]\), it holds that 
    \begin{align*}
        \Pr\left(C_j' \ge C_1 \left(1 + \frac{h^2 C_1}{n}\right)\mid C \right) \le \exp[- n^{\Theta(\varepsilon)}].
    \end{align*}    
\end{proof}

We are now ready to prove \cref{thm:app:lb:h-majority}.

\begin{proof}[Proof of \cref{thm:app:lb:h-majority}]
    It is sufficient to iteratively apply \cref{lem:app:max-growth:h-maj} and the union bound.
    More formally,
    for any \(i \in [k]\), let \(c_i^{(t)}\) be the number of nodes supporting opinion \(i\) at the end of round \(t\).
    Note that \(c_i^{(0)} = C_i\) for all \(i \in [k]\), and that \(C_1 \le n / 100\).
    Let \(X^{(t)} = \max_{j \in [k]} \{c_j^{(t)}\}\).
    We lower bound the convergence time of the \(h\)-majority process by the time \(\tau\) any opinion reaches at least \(100C_1\) 
    supporting nodes.
    Note that, whenever \(X^{(t)}<100C_1\), by the union bound and \cref{lem:app:max-growth:h-maj}, it holds that
    \[
        X^{(t+1)} \le X^{(t)} \left(1 + \frac{h^2 (100C_1)}{n}\right)
    \]
    with probability \(1 - \exp[-n^{\Theta(\varepsilon)}]\).
    Let \(T = \lfloor n \ln 99 / (100 h^2 C_1) \rfloor\), and denote by \(E_t\) the event \(X^{(t)} \le C_1(1 + h^2 (100C_1)/n)^t\).
    Note that \(E_T\) implies that \(X^{(T)} < 100 C_1\), and, hence, \(\cap_{t = 0}^{T} E_t\) implies that \(\tau \ge T\).
    It holds that 
    \begin{align*}
        \Pr\left(\tau \ge T \right) & \ge \Pr(\cap_{t = 0}^T E_t) \\
        & = \prod_{t = 0}^{T-1} \Pr\left(E_{t+1} \mid \cap_{j=0}^t E_j\right) \\
        & \ge \prod_{t = 0}^{T-1}\Pr\left(X^{(t+1)} \le X^{(t)} \left(1 + \frac{h^2 (100C_1)}{n}\right) \mid \cap_{j=0}^{t} E_j \right) \\
        & \ge \prod_{t = 0}^{T-1}\left[1 - \exp[-n^{\Theta(\varepsilon)}]\right] \\
        & \ge 1 - T \exp[-n^{\Theta(\varepsilon)}] \\
        & \ge 1 - \exp[-n^{\Theta(\varepsilon)}],
    \end{align*}
    where the third inequality follows because on the event \(\cap_{j=0}^{t} E_j\) we have
    \[
        X^{(t)} \le C_1\left(1 + \frac{h^2 (100C_1)}{n}\right)^t < 100 C_1,
    \]
    so \cref{lem:app:max-growth:h-maj} applies at round \(t\), while the last two inequalities follow from Bernoulli's inequality, that is, \((1 - x)^m \ge 1 - mx\) for any \(x \in [0,1]\) and \(m \ge 1\), together with the definition of \(T\) and the hypothesis on \(h\) and \(C_1\).
    This concludes the proof.
\end{proof}

\section{\texorpdfstring{Number of samples required by \dejavu}{Number of samples of DéjàVu}}
\label{sec:samplesdejavu}

The goal of this section is to prove \Cref{thm:sample-efficiency}. To this end, we derive a lower bound on $S_m$, the number of samples required for convergence of \hmaj, and an upper bound on $S_d$, the number of samples required for convergence of \dejavu.

While $S_m$ is simply $h$ times the convergence time of \hmaj, the random variable $S_d$ is considerably more intricate, as it depends on the configuration at every round prior to consensus. Consequently, the main technical effort of this section is devoted to proving the following theorem.

\begin{theorem}
    \label{thm:dejavu_samples}
		Let \(h\geq 2\) and \(C=(C_1,\ldots,C_k)\) be an initial system configuration where each agent supports an opinion in \(\{1,\ldots,k\}\).
    Without loss of generality, let \(C_1\geq C_2 \geq \cdots \geq C_k\).
    Assume that \(C_1 = \omega\pa{\log^2 n}\) and that, for a large enough constant \(\lambda > 0\),
    \[
        C_1 - C_2 \ge \lambda \sqrt{ \max\cpa{\frac{n}{h^2}, C_1} \log n } .
    \]
    Let $S_d$ be the number of samples of \dejavu in \pullh. Then, w.h.p.,
    \[
        S_d = O\pa{ \min\cpa{h, \frac{n}{\norm{C}_2}} \pa{\frac{n}{h^2 C_1}+1 } \log n }
    \]
\end{theorem}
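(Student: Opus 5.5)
The bound will come from writing \(S_d\) as a sum over rounds of per-agent sample counts. We bound each round's count by a quantity that depends only on \(\norm{C^{(t)}}_2\), and then multiply by the number of rounds from \cref{thm:convergence_time_dejavu}. In round \(t\), an agent running \dejavu in \pullh draws \(\min\{H,h\}\) samples, where \(H\) is the first time some opinion repeats. Hence
\[
  S_d=\sum_{t<T_{\mathrm{cons}}}\sum_{u\in[n]}\min\{H_u^{(t)},h\},
\]
and conditional on \(C^{(t)}\) the summands are independent and bounded by \(h\).

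\textbf{Step 1: per-agent expectation.} By the birthday bound, \(\Pr(H>s\mid C)\) decays once \(s\norm{p}_2\) exceeds a constant. Fix \(s_0=\lceil c_1/\norm{p}_2\rceil\), with \(c_1\) as in the proof of \cref{lemma:tight_bound_probability_collision}. That proof shows \(\Pr(H\le s_0\mid C)\ge 2^{-11}\). Split the samples into consecutive blocks of length \(s_0\). Each block independently contains an internal collision with probability at least \(2^{-11}\), and any such collision forces \(H\) to stop by the end of that block. So \(H/s_0\) is dominated by a geometric random variable, which gives
\[
  \bbE[\min\{H,h\}\mid C]=O\bigl(\min\{h,1/\norm{p}_2\}\bigr)=O\bigl(\min\{h,n/\norm{C}_2\}\bigr).
\]

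\textbf{Step 2: concentration within a round.} Apply Bernstein or Hoeffding to the \(n\) independent bounded terms of a single round. Each term is at most \(h\) and has variance \(O(\min\{h,n/\norm{C}_2\}^2)\), and the mean is \(n\) times the quantity above. So, w.h.p., the round total is \(O(n\min\{h,n/\norm{C^{(t)}}_2\})\), up to an additive \(O(h\log n)\) that is absorbed. A union bound over the polynomially many rounds keeps this w.h.p. for all rounds simultaneously.

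\textbf{Step 3: controlling the norm over time.} This is the main obstacle. The per-round cost uses \(\norm{C^{(t)}}_2\), whereas the target bound uses the initial \(\norm{C}_2\), so we must show \(\norm{C^{(t)}}_2^2=\Omega(\norm{C^{(0)}}_2^2)\) up to consensus. I would proceed as follows.
\begin{itemize}
  \item Show that \dejavu is majority boosting. This follows from \cref{lemma:monotonicity_samples_dejavu} and \cref{lemma:lower_upper_ratio_probability}: \(q_i/q_j\ge p_i/p_j\), and therefore \(\bbE[C_i']/\bbE[C_j']\ge C_i/C_j\).
  \item Deduce that \(\bbE[\norm{C'}_2^2\mid C]\ge\norm{C}_2^2\). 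Majority boosting means mass moves in expectation toward larger opinions, which can only increase \(\sum_i \bbE[C_i']^2\); the variance terms only add to \(\norm{C'}_2^2\).
  \item Apply Freedman's inequality to the resulting submartingale. The one-step variance of \(\norm{C'}_2^2\) is controlled by \cref{lemma:bound_variance}. Together with \(C_1=\omega(\log^2n)\), this makes the cumulative downward fluctuation over \(O(\text{poly}\log\cdot)\) rounds only a constant fraction of \(\norm{C^{(0)}}_2^2\).
\end{itemize}
An alternative is the shortcut \(\norm{C^{(t)}}_2\ge C_1^{(t)}\ge C_1^{(0)}\), using \cref{lemma:majority_concentration}. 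This does not suffice in general, which is why the norm argument is needed.

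\textbf{Conclusion.} Summing the per-round bound from Step 2 over the \(O((n/(h^2C_1)+1)\log n)\) rounds given by \cref{thm:convergence_time_dejavu} yields the stated bound on \(S_d\). Note that this bound is the total across all \(n\) agents normalized by \(n\); equivalently, it counts samples per agent.
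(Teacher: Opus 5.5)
Your proposal follows essentially the same route as the paper. You block the samples into runs of length $\Theta(\min\{h,1/\norm{p}_2\})$ and use \cref{lemma:tight_bound_probability_collision} to get geometric domination of each round's sample count. You keep $\norm{C^{(t)}}_2=\Omega(\norm{C^{(0)}}_2)$ via majority boosting, the submartingale property of $\norm{C^{(t)}}_2^2$, and a Freedman-type one-sided Bernstein drift argument adapted from \cite{Shimizu2025}; this is where $C_1=\omega(\log^2 n)$ enters. You then multiply by the round bound of \cref{thm:convergence_time_dejavu}.

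The one real difference is how you aggregate. The paper follows a single agent across rounds and bounds its count by $m\sum_{t\le T_*}G_t$, using the Chernoff bound for sums of geometric variables (\cref{thm:chernoff_geometric}), rather than your per-round Bernstein step over agents. Your Step~2 needs this fix: the additive $O(h\log n)$ term there is not always absorbed. For example, $\norm{p}_2=\Theta(1)$, $k=\Theta(n)$ and $h\ge k$ costs an extra $\log n$ per round. So use the geometric tails from Step~1 there instead of the worst-case bound $h$.
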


The key ingredient in proving this theorem is to show that the $\ell_2$-norm of the configuration does not decrease by more than a constant multiplicative factor. More precisely, we prove that
\(
\|C^{(t)}\| \geq c^\downarrow \|C^{(0)}\|
\)
for some absolute constant $c^\downarrow > 0$ and for all $t \leq T$, where $T$ denotes the convergence time of \dejavu. 

Indeed, by \Cref{lemma:tight_bound_probability_collision}, once \(\|p^{(t)}\|\) is bounded from below, an agent sees a repeated opinion within \(O(\|p^{(t)}\|^{-1})\) samples with constant probability. Therefore, once we establish a uniform lower bound on \(\|p^{(t)}\|\), we can conclude \Cref{thm:dejavu_samples} by combining this per-round bound with the upper bound on the convergence time.

Let us first define the stopping time we will bound.
\begin{definition}[Stopping times for basic quantities] \label{def:stopping times}

  For constants $c^\uparrow,c^\downarrow>0$ define
    \begin{align*}
        &\tau^\uparrow = \inf\{t\geq 0: \|p^{(t)}\|\geq (1+c^\uparrow)\|p^{(0)}\|\}, \\
        &\tau^\downarrow = \inf\cpa{t\geq 0: \|p^{(t)}\|\leq (1-c^\downarrow)\|p^{(0)}\|}.
    \end{align*}
\end{definition}
Here is the aforementioned key lemma to prove \Cref{thm:dejavu_samples}.
\begin{lemma}[Bounded decrease of $\|C^{(t)}\|$]
\label{lem:taunormdown is large}
  Consider the stopping times $\tau^\downarrow$ defined in \cref{def:stopping times}.
  Then, for any $T>0$, we have
    \begin{align*}
        \Pr\qty[\tau^\downarrow\leq T]
        \le \begin{cases}
            T\exp\qty(-\Omega\qty(\frac{n}{h^2 T + n \|C\|^{-1}}))
            \quad & 
            \text{if } h \|C^{(0)}\| \leq n
            \\
            T\exp\qty(-\Omega\qty(\frac{\|C^{(0)}\|^2}{nT}))
            \quad & 
            \text{if } h \|C^{(0)}\| > n
        \end{cases}
    \end{align*}
\end{lemma}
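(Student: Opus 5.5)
We will follow the submartingale-plus-Freedman strategy outlined in the overview (Section on the norm submartingale and the structure borrowed from~\cite{Shimizu2025}). Write $N_t\coloneq\|C^{(t)}\|_2^2$.

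\textbf{Step 1 (drift).} First show that $N_t$ is a submartingale, i.e.\ $\bbE[N_{t+1}\mid C^{(t)}]\ge N_t$. Write $\bbE[N_{t+1}\mid C]=\sum_i \bbE[C_i'\mid C]^2+\sum_i\Var[C_i'\mid C]$, so it suffices to show $\sum_i \bbE[C_i'\mid C]^2\ge \sum_i C_i^2$. \dejavu is majority boosting by \cref{lemma:growth_expected_ratio}, since $\bbE[C_i'\mid C]/\bbE[C_j'\mid C]\ge C_i/C_j$ whenever $C_i\ge C_j$. Hence the vector $m_i=\bbE[C_i'\mid C]$ has the same total mass $n$ as $C$, but the ratios $m_i/C_i$ are nondecreasing in $C_i$. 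A Chebyshev-sum argument gives $\sum_i C_i m_i\ge \sum_i C_i^2$: indeed $\sum_i C_i(m_i-C_i)\ge0$, because $m_i-C_i=C_i(r_i-1)$ with $r_i$ comonotone with $C_i$ and $\sum_i C_i(r_i-1)=0$. Then Cauchy--Schwarz yields $\|m\|_2\ge\|C\|_2$.

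\textbf{Step 2 (increments).} Next control the one-step fluctuations of $N$ until $\tau^\downarrow\wedge\tau^\uparrow$; we may also stop at $\tau^\uparrow$, since above it the norm is large anyway. Use the martingale difference $N_{t+1}-\bbE[N_{t+1}\mid C^{(t)}]$. Each agent's update changes $N$ by at most $O(C_{\max})=O(\|C\|)$, since moving one agent changes $\sum C_i^2$ by at most $2(C_i+C_j)+2$. The conditional variance is bounded by $\sum_i 4C_i^2\Var[C_i'\mid C]$ plus covariance terms. A cleaner route is to note that only the $D$ updating agents move, so $\Var\le O(\bbE[D])\cdot\|C\|_\infty^2$. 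With \cref{lemma:expectation-D} and \cref{lemma:bound_variance}, this gives
\[
V_t=O\bigl(n\min\{h^2\|p\|_2^2,1\}\,\|C\|_2^2\bigr),
\]
using $\|C\|_\infty\le\|C\|_2$. Since the individual agent contributions are bounded but $N$ is not a sum of independent terms in one round, first apply the Bernstein bound of \cref{lemma:bernstein_non_identical} per round, conditional on $C^{(t)}$, to get a sub-exponential tail. Then feed this into Freedman's inequality over $T$ rounds.

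\textbf{Step 3 (Freedman).} On $\{t<\tau^\downarrow\wedge\tau^\uparrow\}$ we have $\|C^{(t)}\|=\Theta(\|C^{(0)}\|)$, so the summed variance is at most $T\cdot O(n\min\{h^2\|C\|^2/n^2,1\}\|C\|^2)$ and the increment range is $M=O(\|C\|)\cdot$(polylog, or a constant after truncation). Freedman with deviation $\Theta(\|C^{(0)}\|^2)$ gives
\[
\exp\!\Bigl(-\Omega\bigl(\|C\|^4/(TV+M\|C\|^2)\bigr)\Bigr).
\]
In the case $h\|C\|>n$ we have $V=n\|C\|^2$, which yields $\exp(-\Omega(\|C\|^2/(nT)))$. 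In the case $h\|C\|\le n$ we have $V=h^2\|C\|^4/n$, which yields $\exp(-\Omega(n/(h^2T+n\|C\|^{-1})))$, where the second denominator term comes from $M\approx\|C\|$. A union bound over the $T$ starting points, or over the first crossing time, produces the leading factor $T$.

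\textbf{Main obstacle.} The delicate part is bounding the martingale increments rigorously. $N_{t+1}$ is quadratic in the counts, and the $C_i'$ are dependent across opinions. We would handle this by writing $N_{t+1}-N_t=\sum_i 2C_i(C_i'-C_i)+\sum_i(C_i'-C_i)^2$. The linear term is a sum of $n$ independent agent contributions, each bounded by $2\|C\|_\infty+2$, so Bernstein applies directly. The quadratic term is nonnegative and only helps the lower tail. This is exactly why only a one-sided (downward) bound is needed.
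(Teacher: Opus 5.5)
Your Steps 1--2 match the paper. The submartingale property comes from majority boosting, a Chebyshev/summation-by-parts argument and Cauchy--Schwarz (\cref{lemma:expected_value_norm2_inequality}). Discarding the nonnegative term $\sum_i (C_i'-C_i)^2$ is exactly the paper's use of $x^2-y^2\le 2x(x-y)$ followed by \cref{lemma:norm_2_expected_value_inequality}. One genuine difference is an improvement: you write $\sum_i 2C_i\bigl(C_i'-\bbE[C_i'\mid C]\bigr)=\sum_{v}2\bigl(C_{o'(v)}-\bbE[C_{o'(v)}\mid C]\bigr)$, with $o'(v)$ the new opinion of agent $v$. This is a sum of independent terms bounded by $O(\|C\|)$, which is simpler than the paper's opinion-wise decomposition that needs negative association. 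It gives the same parameters after rescaling by $n^2$: $D=O(\|C\|)$ and $s=O(n\min\{h^2\|p\|_2^2,1\}\|C\|^2)$. Two small repairs are needed here. First, Freedman's inequality needs the conditional MGF bound, i.e.\ the one-sided Bernstein condition of \cref{lem:Bernstein condition}, not the tail bound of \cref{lemma:bernstein_non_identical}. Second, $O(\|C\|)$ is the Bernstein scale parameter of the sum, not a bound on the one-round increment of $N$, which can be much larger. So no truncation or polylog factor enters.

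The genuine gap is the passage from the stopped estimate to the statement. Your Freedman step bounds only $\Pr[\tau^\downarrow\le\min\{T,\tau^\uparrow\}]$, which is the paper's \cref{lem:drift analysis for basic}. That is because only before $\tau^\uparrow$ are the variance and scale parameters uniformly controlled by $\|C^{(0)}\|$. The lemma, however, concerns $\Pr[\tau^\downarrow\le T]$ with no stopping. Your justification ``above it the norm is large anyway'' does not work: the norm can exceed $(1+c^\uparrow)\|C^{(0)}\|$ and later fall below $(1-c^\downarrow)\|C^{(0)}\|$. After $\tau^\uparrow$ your variance bound is no longer valid, so the same Freedman estimate cannot simply be continued.

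The paper handles this with a restart argument, which is the actual content of this lemma's proof. For each $s\le T$, let $\sigma^\downarrow_s$ and $\sigma^\uparrow_s$ be the first times after $s$ at which $\|C^{(t)}\|$ drops to $(1-c^\downarrow)\|C^{(s)}\|$ or reaches $2\|C^{(s)}\|$. Let $\mathcal{E}^{(s)}=\{\|C^{(s)}\|\ge\|C^{(0)}\|,\ \sigma^\downarrow_s\le\min\{T,\sigma^\uparrow_s\}\}$.
\begin{itemize}
\item If $\tau^\downarrow\le T$, take $s$ to be a time in $[0,\tau^\downarrow]$ maximizing $\|C^{(t)}\|$.
\item Then the norm never exceeds $\|C^{(s)}\|$ on $[s,\tau^\downarrow]$.
\item Also $\|C^{(\tau^\downarrow)}\|\le(1-c^\downarrow)\|C^{(0)}\|\le(1-c^\downarrow)\|C^{(s)}\|$, so $\mathcal{E}^{(s)}$ occurs.
\item By the Markov property, the stopped bound applies from $C^{(s)}$.
\item That bound is monotone in the starting norm, including across the switch from $h\|C\|\le n$ to $h\|C\|>n$, so it is dominated by the bound at $\|C^{(0)}\|$.
\end{itemize}
A union bound over $s$ then gives the factor $T$. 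You mention a ``union bound over the $T$ starting points'', but only as the source of that factor. Without this choice of $s$ and the monotonicity check, it does not remove the $\tau^\uparrow$ stopping.
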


We follow the same drift analysis approach presented in \cite{Shimizu2025} used to analyze \textsc{3-majority} and \textsc{2-choices}.
In that paper, the authors use a drift analysis based on the Bernstein condition.
First, they prove that the euclidean norm of the configuration at round $t$, $\|C_t\|$, is a sub-martingale.
Then they prove that the difference $\|C_t\| - \|C_{t-1}\|$ conditioned on the $(t-1)$-th configuration satisfies the Bernstein condition. Under these conditions, they show that, it holds that $\|C_t\| = \Omega \pa{\|C_0\|} $ w.h.p. for all $0 \leq t\leq n$.

Our main contribution to adapt their analysis to \dejavu is to show that the sub-martingale condition for $\|C_t\|$ still holds for \dejavu. More generally, the same argument applies to any \emph{Majority Boosting Protocol}, i.e. any protocol such that for all opinions \(i<j\),
\[
    \frac{\bbE(C_i')}{\bbE(C_j')} \ge \frac{C_i}{C_j}.
\]
\begin{lemma}
    \label{lemma:expected_value_norm2_inequality}
    Suppose the protocol is Majority Boosting, namely that for all \(i<j\),
    \[
        \frac{\bbE\pa{C'_i}}{\bbE\pa{C'_j}} \geq \frac{C_i}{C_j},
    \]
    Then
    \[
        \bbE\pa{\| C' \|^2 \mid C} \geq \norm{C}^2.
    \]
\end{lemma}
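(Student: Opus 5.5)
The plan is to write \(\bbE(\|C'\|^2\mid C)=\sum_i \bbE(C_i'^2\mid C)\ge \sum_i \bbE(C_i'\mid C)^2\) by Jensen (or simply by \(\Var\ge 0\)). This reduces the claim to a deterministic inequality about the vector of means \(m_i\coloneq \bbE(C_i'\mid C)\): show \(\sum_i m_i^2\ge \sum_i C_i^2\). Both vectors have the same total, \(\sum_i m_i=\sum_i C_i=n\), since every agent holds exactly one opinion after the round.

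The core step is a rearrangement/majorization argument. The majority-boosting hypothesis says that \(m_i/C_i\) is nondecreasing as \(C_i\) increases: for \(C_i\ge C_j>0\) we have \(m_i/C_i\ge m_j/C_j\). Set \(r_i\coloneq m_i/C_i\) for opinions with \(C_i>0\). Opinions with \(C_i=0\) have \(m_i=0\) in \dejavu, since an absent opinion cannot be sampled, so they can be dropped. Then
\[
\sum_i m_i^2-\sum_i C_i^2=\sum_i C_i^2(r_i^2-1),
\qquad
\sum_i C_i(r_i-1)=0 .
\]
Because \(r_i\) and \(C_i\) are similarly ordered, Chebyshev's sum inequality with weights \(C_i\) gives \(\sum_i C_i\cdot C_i(r_i-1)\ge \frac{1}{n}\pa{\sum_i C_i^2}\pa{\sum_i C_i(r_i-1)}=0\). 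Hence \(\sum_i C_i^2 r_i\ge \sum_i C_i^2\). Finally, by Cauchy--Schwarz, \(\sum_i C_i^2 r_i^2\cdot\sum_i C_i^2\ge \pa{\sum_i C_i^2 r_i}^2\ge \pa{\sum_i C_i^2}^2\), so \(\sum_i m_i^2\ge\sum_i C_i^2\).

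The main obstacle is conceptual rather than computational. We must get the ordering right, i.e. confirm that the hypothesis stated for \(i<j\) in an ordered configuration means exactly that \(r\) is comonotone with \(C\). Ties \(C_i=C_j\) need care: there the hypothesis applied in both directions forces \(r_i=r_j\), so similar ordering is preserved. We also have to handle zero entries, which is the \(m_i=0\) remark above. For the paper's application we also need that \dejavu satisfies the hypothesis. That follows from \cref{lemma:growth_expected_ratio} for the top opinion versus any other, and for general pairs \(i<j\) from \(\bbE[C_i']=q_i+(1-Q)C_i\) together with \(q_i/q_j\ge C_i/C_j\) (\cref{lemma:monotonicity_samples_dejavu,lemma:lower_upper_ratio_probability}). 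That is a separate remark, since the lemma as stated takes majority boosting as its hypothesis.
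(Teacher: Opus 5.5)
Your proof is correct and follows the paper's route. Both arguments drop the variances to reduce to the means, establish $\sum_i C_i\,\bbE(C_i'\mid C)\ge\|C\|^2$, and then conclude with Cauchy--Schwarz. The only difference is in the inner-product step: you use the weighted Chebyshev sum inequality, whereas the paper observes that $\bbE(C_j')-C_j$ changes sign at most once, deduces that the partial sums $\sum_{j\le i}(\bbE(C_j')-C_j)$ are nonnegative, and then applies summation by parts. The two arguments are essentially equivalent.
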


\subsection{The configuration norm is a submartingale for Majority Boosting Protocols}\label{sec:norm-submartingale}

In this section we prove \Cref{lemma:expected_value_norm2_inequality}.
To do so we first prove in \cref{lemma:norm_2_expected_value_inequality} that $(C_t,\bbE(C_{t+1}\mid C_t) ) \geq (C_t,C_t)$, where $(\cdot,\cdot)$ is the scalar product. Using the Cauchy-Schwarz inequality, we obtain that $\|\bbE\pa{ C_{t+1}\mid C_t }\|^2 \geq \|C_t\|^2$. By the definition of variance, we finally obtain the sub-martingale condition in \cref{lemma:expected_value_norm2_inequality}, i.e. $\bbE\pa{ \|C_{t+1}\|^2 } \geq \|C_t\|^2$.

We need some technical lemmas.

\begin{lemma}
    \label{lemma:norm_2_expected_value_inequality}
    It holds
    \[
        \sum_{j\in[k]} C_j^2 \leq \sum_{j\in[k]} C_j \cdot \bbE\pa{C'_j}.
    \]
\end{lemma}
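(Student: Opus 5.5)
We must show \(\sum_j C_j^2 \le \sum_j C_j \bbE(C_j')\) under the Majority Boosting hypothesis, \(\bbE(C_i')/\bbE(C_j') \ge C_i/C_j\) for \(i<j\) in an ordered configuration. The plan is a rearrangement/Chebyshev-type argument built on two facts. The first is conservation of mass: \(\sum_j \bbE(C_j') = n = \sum_j C_j\), since every agent holds exactly one opinion after the round. The second is the hypothesis itself. Together they say that the vector \(r_j \coloneq \bbE(C_j')/C_j\), defined on opinions with \(C_j>0\), is nonincreasing in \(j\). Opinions with \(C_j=0\) contribute nothing to either side, so we discard them; the lemma's own phrasing already makes clear that the hypothesis is only meant for \(C_j>0\).

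We rewrite the target as
\[
\sum_j C_j \bbE(C_j') - \sum_j C_j^2 = \sum_j C_j^2 (r_j - 1) \ge 0 .
\]
Mass conservation gives \(\sum_j C_j (r_j-1) = 0\). Hence \(r_j - 1\) is a nonincreasing sequence that takes both signs, or is identically zero, and sums to zero against the weights \(C_j\). The multiplier \(C_j\) in \(C_j^2(r_j-1) = C_j\cdot C_j (r_j-1)\) is also nonincreasing in \(j\).

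We then apply Chebyshev's sum inequality with weights \(w_j = C_j\), for the two similarly ordered sequences \(a_j = C_j\) and \(b_j = r_j - 1\):
\[
\Big(\sum_j w_j\Big)\Big(\sum_j w_j a_j b_j\Big) \ge \Big(\sum_j w_j a_j\Big)\Big(\sum_j w_j b_j\Big) = 0 .
\]
This yields \(\sum_j C_j^2(r_j-1)\ge 0\), which is the claim.

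If we prefer to avoid quoting Chebyshev, we give a direct argument. Let \(m\) be the last index with \(r_m \ge 1\). Then
\[
\sum_j C_j^2(r_j-1) \ge \sum_j C_m C_j (r_j - 1) = C_m \cdot 0 = 0 ,
\]
because each term satisfies \(C_j^2(r_j-1) \ge C_m C_j(r_j-1)\). For \(j\le m\) we have \(C_j\ge C_m\) and \(r_j-1\ge0\). For \(j>m\) we have \(C_j\le C_m\) and \(r_j-1<0\).

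The only delicate point is justifying the monotonicity of \(r_j\) and the mass identity. The protocol's rule guarantees that each agent ends with exactly one opinion, so \(\sum_j C_j' = n\) holds deterministically. The hypothesis \(\bbE(C_i')/\bbE(C_j')\ge C_i/C_j\) rearranges directly to \(r_i \ge r_j\), and this is valid provided \(\bbE(C_j')>0\) whenever \(C_j>0\). For \dejavu this holds because an agent keeps its own opinion with positive probability, or samples it twice. No other obstacle is expected.
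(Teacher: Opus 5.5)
Your proof is correct and is essentially the paper's argument. Both proofs use that \(r_j=\bbE(C_j')/C_j\) is nonincreasing, so \(d_j=\bbE(C_j')-C_j\) is nonnegative up to some index and nonpositive afterwards (the paper's \cref{claim:auxiliary_monotonicity_expectations_cumulative_difference}), together with \(\sum_j d_j=0\) and \(C_1\ge\cdots\ge C_k\). The only difference is the last step: the paper concludes by Abel summation \(\sum_j C_j d_j=\sum_{i<k}(C_i-C_{i+1})S_i\ge 0\) with nonnegative partial sums \(S_i\), while you compare termwise against the threshold \(C_m\) (equivalently, weighted Chebyshev) to get \(\sum_j C_j d_j\ge C_m\sum_j d_j=0\).
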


Before proving \cref{lemma:norm_2_expected_value_inequality}, we need the following technical claims.

\begin{claim}
    \label{claim:auxiliary_monotonicity_expectations_cumulative_difference}
    For all opinions $1\leq i < j \leq k$, we have that $\bbE\pa{C'_i}-C_i \leq 0$ implies that $\bbE\pa{C'_j}-C_j \leq 0$.
\end{claim}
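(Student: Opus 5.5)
The plan is to derive the claim from the majority-boosting hypothesis together with the conservation law \(\sum_{\ell\in[k]} \bbE\pa{C'_\ell} = \sum_{\ell\in[k]} C_\ell = n\). The natural reformulation is to set \(r_\ell \coloneq \bbE\pa{C'_\ell}/C_\ell\) for every opinion with \(C_\ell>0\). Then \(\bbE\pa{C'_\ell}-C_\ell \le 0\) is equivalent to \(r_\ell \le 1\). Since the configuration is ordered, \(C_i \ge C_j\) for \(i<j\), and the majority-boosting assumption
\[
\frac{\bbE\pa{C'_i}}{\bbE\pa{C'_j}} \ge \frac{C_i}{C_j}
\]
rewrites, after cross-multiplying by the positive quantities \(C_j\,\bbE\pa{C'_j}\), exactly as \(r_i \ge r_j\). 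So the sequence \(r_1 \ge r_2 \ge \cdots\) is non-increasing along the ordered indices, and the claim follows at once: if \(r_i \le 1\) then \(r_j \le r_i \le 1\), that is, \(\bbE\pa{C'_j} \le C_j\).

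The only care needed is with degenerate cases.
\begin{itemize}
  \item If \(C_j = 0\), then no agent can adopt opinion \(j\) under \dejavu, because every sample of \(j\) is impossible. Hence \(\bbE\pa{C'_j} = 0 = C_j\) and the conclusion holds trivially.
  \item If \(C_i = 0\) with \(i<j\), then ordering forces \(C_j = 0\), and we are in the previous case.
  \item If \(C_j>0\), then \(\bbE\pa{C'_j}>0\): with positive probability an agent samples \(j\) twice, or, when \(h\) is small, keeps its own opinion \(j\). So the ratio in the hypothesis is well defined.
\end{itemize}

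If the intended hypothesis is only the weak form \(C_i \ge C_j > 0\), ties \(C_i = C_j\) need a separate remark. By symmetry of the dynamics under relabeling opinions, equal supports give equal expectations, so \(r_i = r_j\) and the monotonicity still holds.

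I expect no real obstacle here. The main point is to notice that majority boosting is precisely the monotonicity of the per-opinion growth factors \(r_\ell\). This claim will then feed into \cref{lemma:norm_2_expected_value_inequality}: the indices split into a prefix where \(\bbE\pa{C'_\ell} > C_\ell\) and a suffix where \(\bbE\pa{C'_\ell} \le C_\ell\). Combining this split with conservation of mass and a rearrangement (Chebyshev-sum) argument gives
\[
\sum_{\ell\in[k]} C_\ell\pa{\bbE\pa{C'_\ell}-C_\ell} \ge 0.
\]
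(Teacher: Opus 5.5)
Your proof is correct and is essentially the paper's argument: you rewrite majority boosting as $\bbE\pa{C'_j}/C_j \le \bbE\pa{C'_i}/C_i$, and the hypothesis $\bbE\pa{C'_i}\le C_i$ then gives $\bbE\pa{C'_j}/C_j\le 1$. Your treatment of the degenerate cases ($C_j=0$, and $\bbE\pa{C'_j}>0$ when $C_j>0$) makes explicit, for \dejavu specifically rather than for a general majority-boosting protocol, what the paper dismisses with ``w.l.o.g.\ $C_i,C_j>0$''.
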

\begin{proof}
    W.l.o.g. we can assume $C_i,C_j>0$.
    By the Majority Boosting assumption, we have that
    \[
        \frac{\bbE\pa{C'_i}}{\bbE\pa{C'_j}} \geq \frac{C_i}{C_j}.
    \]
    This implies that
    \[
        \frac{\bbE\pa{C'_j}}{C_j} \leq \frac{\bbE\pa{C'_i}}{C_i}  \leq 1,
    \]
    where the last inequality follows if we assume $\bbE\pa{C'_i}-C_i \leq 0$. This concludes the proof of \cref{claim:auxiliary_monotonicity_expectations_cumulative_difference}.
\end{proof}

\begin{claim}
    \label{claim:monotonicity_expectations_cumulative_difference}
    For all opinions $1\leq i \leq k$, it holds
    \[
        \sum_{j=1}^i \left(\bbE\pa{C'_j}-C_j\right) \geq 0.
    \]
\end{claim}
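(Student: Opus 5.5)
We prove the claim by a threshold argument, using \cref{claim:auxiliary_monotonicity_expectations_cumulative_difference} together with the fact that the expected total mass is conserved. Write \(\delta_j \coloneq \bbE\pa{C'_j}-C_j\) for every \(j\in[k]\).

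\emph{Conservation.} Every agent holds exactly one opinion after the round, so \(\sum_{j\in[k]} C'_j = n = \sum_{j\in[k]} C_j\) deterministically. Taking expectations gives
\[
    \sum_{j=1}^k \delta_j = 0.
\]

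\emph{Sign structure.} Let \(m\) be the smallest index with \(\delta_m \le 0\); if no such index exists, set \(m = k+1\). By minimality, \(\delta_j > 0\) for all \(j < m\). For any \(j \ge m\), we have \(m < j\) (or \(m=j\)) and \(\delta_m \le 0\), so \cref{claim:auxiliary_monotonicity_expectations_cumulative_difference} gives \(\delta_j \le 0\). Hence the sequence \((\delta_j)\) is positive up to index \(m-1\) and nonpositive from \(m\) on.

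\emph{Partial sums.} Fix \(i\in[k]\) and consider two cases.
\begin{itemize}
    \item If \(i < m\), every term of \(\sum_{j=1}^i \delta_j\) is positive, so the sum is nonnegative.
    \item If \(i \ge m\), conservation gives \(\sum_{j=1}^i \delta_j = -\sum_{j=i+1}^k \delta_j\). Each \(\delta_j\) with \(j > i \ge m\) is nonpositive, so the right-hand side is nonnegative. (For \(i = k\) the tail is empty and the sum is \(0\).)
\end{itemize}
In both cases \(\sum_{j=1}^i \delta_j \ge 0\).

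\emph{Zero-count opinions.} \cref{claim:auxiliary_monotonicity_expectations_cumulative_difference} assumed \(C_i, C_j > 0\), so opinions with \(C_j = 0\) need separate treatment. Under \dejavu, and any protocol that only adopts sampled opinions, such an opinion has \(\bbE\pa{C'_j} = 0\), hence \(\delta_j = 0\). Because the configuration is ordered, these opinions form a suffix of the indices. Their zero terms do not affect the sign pattern above, so the argument goes through unchanged. Verifying this edge case is the only real subtlety; the rest follows directly from conservation and the monotone sign pattern.
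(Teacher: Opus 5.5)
Your proof is correct and follows essentially the same route as the paper. Both take the first index at which the expected change is nonpositive, use \cref{claim:auxiliary_monotonicity_expectations_cumulative_difference} to make all later terms nonpositive, and close with conservation $\sum_{j=1}^k \delta_j = 0$; the paper phrases that last step as the partial sums being non-increasing past $i^*$ and ending at $0$, which is equivalent to your tail-sum argument, and your zero-count discussion just spells out the paper's ``w.l.o.g.\ $C_i,C_j>0$'' remark.
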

\begin{proof}
    Let $i^*=\min\cpa{i\in[k]: \bbE\pa{C'_i}-C_i \leq 0}$. The claim is trivially true for all $i<i^*$ or if the set is empty. By \cref{claim:auxiliary_monotonicity_expectations_cumulative_difference}, for all $i^*\leq i \leq k$, $\bbE\pa{C'_i}-C_i \leq 0$. This implies that for $i^*\leq i \leq k$, $\sum_{j=1}^i \left( \bbE\pa{C'_j}-C_j \right)$ is a non-increasing function in $i$. Since for $i=k$ we have $\sum_{j=1}^i \left( \bbE\pa{C'_j}-C_j \right)
    = n - n = 0$, we conclude the proof of \cref{claim:monotonicity_expectations_cumulative_difference}.
\end{proof}

\begin{proof}[Proof of \cref{lemma:norm_2_expected_value_inequality}]
    Let
    \[
        d_j\coloneq \bbE\pa{C'_j}-C_j,
        \qquad
        S_i\coloneq \sum_{j=1}^i d_j
        \qquad (i\in[k]).
    \]
    By \Cref{claim:monotonicity_expectations_cumulative_difference}, we have \(S_i\ge 0\) for every \(i\in[k]\), and since \(\sum_{j=1}^k C'_j=n=\sum_{j=1}^k C_j\), we also have \(S_k=0\).

    Summation by parts yields
    \begin{align*}
        \sum_{j=1}^k C_j d_j
        &= \sum_{i=1}^{k-1} (C_i-C_{i+1}) S_i + C_k S_k
        \\
        &= \sum_{i=1}^{k-1} (C_i-C_{i+1}) S_i
        \\
        &\ge 0,
    \end{align*}
    because \(C_1\ge \cdots \ge C_k\) and \(S_i\ge 0\) for all \(i\le k-1\).
    This concludes the proof of \cref{lemma:norm_2_expected_value_inequality}.
\end{proof}

The Cauchy-Schwarz inequality and \cref{lemma:norm_2_expected_value_inequality} imply this corollary. Note that this is not yet the sub-martingale condition, since the expectation is inside the norm operator.
\begin{corollary}
    \label{corollary:norm_2_expected_value_inequality}
    It holds
    \(\| \bbE\pa{C'} \| \geq \| C \|\).
\end{corollary}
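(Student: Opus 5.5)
The claim is \(\|\bbE(C')\|\ge\|C\|\). I will deduce it from \cref{lemma:norm_2_expected_value_inequality} together with the Cauchy--Schwarz inequality. I write \(\langle x,y\rangle=\sum_{j\in[k]}x_jy_j\) for the standard scalar product on \(\bbR^k\), and \(\bbE(C')\) for the vector \(\pa{\bbE(C'_1),\ldots,\bbE(C'_k)}\). All expectations are conditional on the current configuration \(C\), which is held fixed.

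The first step is to restate \cref{lemma:norm_2_expected_value_inequality} in vector form. It says exactly that
\[
\norm{C}^2=\langle C,C\rangle\le \langle C,\bbE(C')\rangle .
\]
This lemma is already proved under the Majority Boosting hypothesis, which is the standing assumption of this subsection.

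The second step bounds the right-hand side by Cauchy--Schwarz:
\[
\langle C,\bbE(C')\rangle\le \norm{C}\,\norm{\bbE(C')}.
\]
Chaining the two inequalities gives \(\norm{C}^2\le\norm{C}\,\norm{\bbE(C')}\).

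The last step divides by \(\norm{C}\). This is allowed because \(\norm{C}>0\): the entries of \(C\) are nonnegative and sum to \(n\ge1\), so \(C\neq 0\). Dividing yields \(\norm{\bbE(C')}\ge\norm{C}\), which is the claim.

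There is no real obstacle here. The only points to check are that the scalar product being used matches the sum appearing in \cref{lemma:norm_2_expected_value_inequality}, and that \(C\) is nonzero so the division in the last step is valid.
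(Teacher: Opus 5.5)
Your proposal is correct and matches the paper's proof exactly: the paper chains \cref{lemma:norm_2_expected_value_inequality} with Cauchy--Schwarz to get \(\|C\|^2=(C,C)\le (C,\bbE(C'))\le\|C\|\,\|\bbE(C')\|\), then divides by \(\|C\|\). Your check that \(\|C\|>0\), since the entries of \(C\) are nonnegative and sum to \(n\ge 1\), is a small detail the paper leaves implicit.
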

\begin{proof}
    By \cref{lemma:norm_2_expected_value_inequality} and the Cauchy-Schwarz inequality, we have
    \[
        \| C \|^2
        = (C , C)
        \leq ( C, \bbE\pa{C'} )
        \leq \| C \| \| \bbE\pa{C'} \|.
    \]
    By dividing both terms by $\| C \| $, we conclude the proof of \cref{corollary:norm_2_expected_value_inequality}.
\end{proof}

Now we have all the ingredients to show that $\|C_t\|$ is a sub-martingale.

\begin{proof}[Proof of \Cref{lemma:expected_value_norm2_inequality}]
    We have that
    \begin{align*}
        \bbE\pa{\| C' \|^2 \mid C} - \| C \|^2
        &= \sum_{j\in [k]} \bbE\pa{{C_j'}^2 \mid C} - \| C \|^2
        \\
        &= \sum_{j\in [k]} \pa{\bbE\pa{C_j' \mid C}^2 + \Var\pa{C_j' \mid C}} - \| C \|^2
        \\
        &\geq \sum_{j\in [k]} \pa{\bbE\pa{C_j' \mid C}^2 } - \| C \|^2
        \\
        &= \|\bbE\pa{C' \mid C} \|^2 - \| C \|^2
        \\
        &\geq  0.
        &\pa{\text{by \cref{corollary:norm_2_expected_value_inequality}}}
    \end{align*}
    By rearranging the last inequality, we conclude the proof of \cref{lemma:expected_value_norm2_inequality}.
\end{proof}

\subsection{\dejavu does not decrease the configuration norm}
\label{sec:structure_from_shimizu}

In this section we follow the steps in \cite{Shimizu2025} to show that, for the \dejavu protocol, $\|C_t\| =\Omega \pa{\|C_0\|}$. Some of the lemmas are identical to those in \cite{Shimizu2025}; for this reason we omitted their proofs. The remaining lemmas needed to be adapted to the \dejavu protocol.

The following definition and lemmas are the primary mathematical tools used for the drift analysis of $\|C_t\|$.
\begin{definition}[Bernstein condition and one-sided Bernstein condition]
    \label{def:Bernstein condition}
    Let $D,s\ge 0$ be parameters.
    A random variable $X$ satisfies \emph{$(D,s)$-Bernstein condition} if,
    for any $\lambda\in \mathbb{R}$ such that $\abs{\lambda} D < 3$,
    $\bbE\qty[ \exp^{\lambda X} ]  \le \exp\qty( \frac{\lambda^2 s / 2}{1- (\abs{\lambda}D)/3})$.
    We say that $X$ satisfies \emph{one-sided} $(D,s)$-Bernstein condition if,
    for any $\lambda\geq 0$ such that $\lambda D < 3$,
    $\bbE\qty[ \exp^{\lambda X} ]  \le \exp\qty( \frac{\lambda^2 s / 2}{1- (\lambda D)/3})$.
    \end{definition}

% ========== Lemma 3.4 (statement only) ==========
\begin{lemma}[Lemma 3.4 (Closure properties of one-sided Bernstein) from \cite{Shimizu2025} ]
\label{lem:Bernstein condition}
        Let $X,Y$ be random variables. We have the following:
        \begin{enumerate}
            \renewcommand{\labelenumi}{(\roman{enumi})}
            \item \label{item:Bernstein condition for bounded random variables}
            If $\bbE[X]=0$ and $\abs{X}\leq D$ for some $D$, then $X$ satisfies $\qty(D,\Var\qty[X])$-Bernstein condition.
            \item \label{item:Bernstein condition 1}
            If $X$ satisfies $\qty(D,s)$-Bernstein condition, then
            $X$ satisfies $\qty(D',s')$-Bernstein condition for any
            $D' \ge D$ and $s' \ge s$.
            Similarly, if $X$ satisfies one-sided $\qty(D,s)$-Bernstein condition, then
            $X$ satisfies one-sided $\qty(D',s')$-Bernstein condition for any
            $D' \ge D$ and $s' \ge s$.
            \item \label{item:Bernstein condition 2}
            If $X$ satisfies $\qty(D,s)$-Bernstein condition, then $aX$ satisfies $(\abs{a}D,a^2s)$-Bernstein condition for any $a \in \bbR$.
            If $X$ satisfies one-sided $\qty(D,s)$-Bernstein condition, then $aX$ satisfies one-sided $(a D,a^2s)$-Bernstein condition for any $a \geq 0$.
            \item \label{item:dominated Bernstein condition}
            If $X$ satisfies one-sided $\qty(D,s)$-Bernstein condition and $Y\preceq X$,
            then $Y$ satisfies one-sided $\qty(D, s)$-Bernstein condition.
            In particular,
              if $X$ satisfies one-sided $\qty(D,s)$-Bernstein condition and $Y\leq  X$,
              then $Y$ satisfies one-sided $\qty(D, s)$-Bernstein condition.
            \item \label{item:BC for independent rvs}
            If a sequence of $n$ random variables $X_1,\ldots,X_n$ are independent and $X_i$ satisfies $\qty(D,s_i)$-Bernstein condition for $i\in [n]$,
            then $\sum_{i\in [n]}X_i$ satisfies $\qty(D,\sum_{i\in [n]}s_i)$-Bernstein condition.
            \item \label{item:BC for NA rvs}
            If a sequence of $n$ random variables $X_1,\ldots,X_n$ are negatively associated and $X_i$ satisfies one-sided $\qty(D,s_i)$-Bernstein condition for $i\in [n]$,
            then $\sum_{i\in [n]}X_i$ satisfies one-sided $\qty(D,\sum_{i\in [n]}s_i)$-Bernstein condition.
        \end{enumerate}
    \end{lemma}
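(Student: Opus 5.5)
We prove the six items in order, reducing each to elementary manipulations of the moment generating function. Throughout, write \(\psi_{D,s}(\lambda)\coloneq \frac{\lambda^2 s/2}{1-\abs{\lambda}D/3}\). The definition is the statement \(\bbE[e^{\lambda X}]\le e^{\psi_{D,s}(\lambda)}\) for all admissible \(\lambda\).

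Item (ii) is immediate. Enlarging \(D\) shrinks the admissible range of \(\lambda\). On that smaller range, \(\psi\) is increasing in both \(D\) and \(s\), since its denominator \(1-\abs{\lambda}D/3\) stays positive.

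Item (iii) is a substitution. We have \(\bbE[e^{\lambda aX}]=\bbE[e^{(\lambda a)X}]\). The condition \(\abs{\lambda}\abs{a}D<3\) says exactly that \(\lambda a\) is admissible for \(X\). Moreover \(\psi_{D,s}(\lambda a)=\psi_{\abs{a}D,a^2s}(\lambda)\). For the one-sided version we need \(a\ge0\), which keeps \(\lambda a\ge0\).

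Item (iv) holds because \(x\mapsto e^{\lambda x}\) is nondecreasing for \(\lambda\ge 0\). Stochastic domination \(Y\preceq X\) therefore gives \(\bbE[e^{\lambda Y}]\le\bbE[e^{\lambda X}]\). The pointwise case \(Y\le X\) is a special case of domination.

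Item (v) uses independence to factor the MGF of the sum into the product of the individual MGFs. Since all summands share the same \(D\), their exponents \(\psi_{D,s_i}(\lambda)\) add up to \(\psi_{D,\sum_i s_i}(\lambda)\).

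Item (vi) follows the same route, with one extra ingredient. For negatively associated variables and \(\lambda\ge0\), the functions \(e^{\lambda X_i}\) are nondecreasing and nonnegative. The defining property of negative association, applied inductively to disjoint index sets, then gives \(\bbE[\prod_i e^{\lambda X_i}]\le\prod_i\bbE[e^{\lambda X_i}]\). This explains why only the one-sided statement is available: the argument requires \(\lambda\ge 0\).

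Item (i) is the classical Bernstein moment bound and is the only step with real content.
\begin{itemize}
  \item Expand \(\bbE[e^{\lambda X}]=1+\sum_{m\ge2}\lambda^m\bbE[X^m]/m!\), where the linear term vanishes because \(\bbE[X]=0\).
  \item Since \(\abs{X}\le D\), each higher moment satisfies \(\abs{\bbE[X^m]}\le D^{m-2}\Var[X]\).
  \item Use \(m!\ge 2\cdot 3^{m-2}\).
  \item Summing the resulting geometric series gives \(1+\frac{\lambda^2\Var[X]}{2}\sum_{m\ge 2}(\abs{\lambda}D/3)^{m-2}=1+\psi_{D,\Var[X]}(\lambda)\).
  \item Conclude with \(1+x\le e^x\).
\end{itemize}

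The main point needing care is the inequality \(m!\ge 2\cdot 3^{m-2}\), which follows by induction on \(m\ge 2\) and is what produces the constant \(3\) in the definition.
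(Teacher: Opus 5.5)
Your proof is correct. The paper gives no argument of its own here: it imports the lemma verbatim from \cite{Shimizu2025} and explicitly omits the proof, so your write-up is the standard MGF derivation filling that gap. Items (ii)--(v) are formal manipulations of $\mathbb{E}[e^{\lambda X}]$, item (i) is the classical moment-series bound via $m!\ge 2\cdot 3^{m-2}$ and $1+x\le e^x$, and your inductive use of negative association in (vi) is exactly the product inequality already recorded in the appendix as Lemma~\ref{lem:product of NA}, which you could cite directly.
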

% ========== Lemma 3.5 (statement only) ==========

\begin{lemma}[Lemma 3.5 (Additive drift under one-sided Bernstein) from \cite{Shimizu2025}]
\label{lem:Freedman stopping time additive}
    Let $(X_t)_{t\ge 0}$ be an adapted process and $\tau$ a stopping time. Suppose there exist $R\in\mathbb{R}$, $D,s>0$ such that for every $t$:
\begin{description}
\item[(C1)] $\mathbf{1}_{\{\tau>t-1\}}\big(\mathbf{E}_{t-1}[X_t]-X_{t-1}-R\big)\le 0$.
\item[(C2)] $\mathbf{1}_{\{\tau>t-1\}}\big(X_t-X_{t-1}-R\big)$ satisfies the $(D,s)$-one-sided Bernstein condition .
\end{description}
For a parameter $h>0$, define stopping times
\begin{align*}
    & \tau^+_X \coloneq  \inf\cpa{ t\geq 0 : X_t \geq X_0 +h }
    \\
    & \tau^-_X \coloneq  \inf\cpa{ t\geq 0 : X_t \leq X_0 -h }
\end{align*}
Then, we have the following:
\begin{enumerate}
\item Suppose $ R\geq 0$. Then, for any $h,T>0$ such that $z\coloneq  h - R \cdot T>0$, we have
\[\Pr\!\big[\tau^+_X \le \min\{T,\tau\}\big]\le \exp\!\pa{-\frac{z^2/2}{sT+(zD)/3}}.\]
\item Suppose $R<0$. Then, for any $h,T>0$ such that $z\coloneq  (-R) \cdot T - h >0$, we have
\[
\Pr\!\big[\min\{\tau^-_X,\tau\}>T\big]\le  \exp\!\pa{-\frac{z^2/2}{sT+(zD)/3}}..
\]
\end{enumerate}
\end{lemma}

% ========== Lemma 4.2 (with proof) ==========
In the following lemma, we show that some fundamental quantities satisfy the Bernstein condition. We will use our notation $p^{(t)}=C^{(t)}/n$.

\begin{lemma}[Adapted from Lemma 4.2 (One-sided Bernstein for the basic quantities) from \cite{Shimizu2025}]
    \label{lem:Bernstein_condition_for_sync_processes}
We have the following for any $t\geq 1$:
\begin{enumerate}
            \renewcommand{\labelenumi}{(\roman{enumi})}
    \item \label{item:BC for alpha}
    For any opinion $i\in [k]$, the random variable
    \[
      p_i^{(t)}-\bbE_{t-1}\pa{p^{(t)}(i)}
    \]
    conditioned on round $t-1$ satisfies the
    \[
      \qty(\frac{1}{n},\Var_{t-1}\qty[p_i^{(t)}])
    \]
    Bernstein condition.
    \item \label{item:BC for alphanorm}
    If $h \|p^{(t-1)}\|\leq 1$,
     $\|p^{(t-1)}\|^2-\|p^{(t)}\|^2$ conditioned on round $t-1$ satisfies one-sided \\
     $ \qty(\frac{2\|p^{(t-1)}\|}{n},\frac{c h^2\|p^{(t-1)}\|^4}{n})$-Bernstein condition.
     \item \label{item:BC for alphanorm 2}
     If $h \|p^{(t-1)}\|>1$,
     $\|p^{(t-1)}\|^2-\|p^{(t)}\|^2$ conditioned on round $t-1$ satisfies one-sided \\
     $ \qty(\frac{2\|p^{(t-1)}\|}{n},\frac{c \|p^{(t-1)}\|^2}{n})$-Bernstein condition.
\end{enumerate}
\end{lemma}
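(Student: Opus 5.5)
Conditioned on round \(t-1\), every agent updates independently, so each \(C_i^{(t)}\) is a sum of \(n\) independent indicators. Accordingly, item (i) follows from the closure properties in \Cref{lem:Bernstein condition}. Write \(p_i^{(t)}-\bbE_{t-1}[p_i^{(t)}]=\sum_{u\in[n]}\frac{1}{n}\pa{\mathbf 1\{u \text{ adopts } i\}-\Pr_{t-1}(u\text{ adopts } i)}\). Each summand is centered and bounded by \(1/n\) in absolute value, so item (i) of \Cref{lem:Bernstein condition} gives the \((1/n,\Var)\)-Bernstein condition for each summand. The summands are independent, so item (v) shows that the sum satisfies \((1/n,\Var_{t-1}[p_i^{(t)}])\).

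For items (ii) and (iii), write \(p=p^{(t-1)}\), \(p'=p^{(t)}\), and decompose agent by agent. Let \(\xi_u\in\{\mathbf e_1,\dots,\mathbf e_k\}\) be the opinion of agent \(u\) at round \(t\), so that \(p'=\frac1n\sum_u \xi_u\). Then
\[
\|p\|^2-\|p'\|^2 = -2\langle p,p'-p\rangle-\|p'-p\|^2 \le -2\langle p, p'-p\rangle .
\]
By \Cref{lemma:norm_2_expected_value_inequality}, together with the fact that \dejavu is majority boosting (this follows from \Cref{lemma:growth_expected_ratio}, or from \(q_i/q_j\ge p_i/p_j\)), we have \(\bbE_{t-1}\langle p,p'-p\rangle\ge 0\). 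Hence
\[
-2\langle p,p'-p\rangle\le -2\langle p,p'-\bbE_{t-1}p'\rangle=-\frac{2}{n}\sum_u\pa{p_{\xi_u}-\bbE_{t-1}p_{\xi_u}} .
\]
By item (iv) of \Cref{lem:Bernstein condition} (domination), it then suffices to verify the one-sided condition for this sum of independent centered terms. Each term \(Z_u\coloneq -\frac2n(p_{\xi_u}-\bbE p_{\xi_u})\) satisfies \(|Z_u|\le 2\max_i p_i/n\le 2\|p\|/n\), which gives \(D=2\|p\|/n\).

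The main step is the variance bound, and here \dejavu differs from 3-majority. An agent that sees no repetition keeps its opinion, which is deterministic given round \(t-1\). Only the agents that do see a repetition (probability \(Q\le\min\{h^2\|p\|^2,1\}\) by \Cref{lemma:tight_bound_probability_collision}) contribute randomness beyond their own opinion; such an agent adopts \(i\) with probability \(q_i/Q\le 3p_i^2/\|p\|^2\) by \Cref{claim:q_i_over_Q_ub_and_lb_norm2}. Therefore
\[
\Var(p_{\xi_u})\le Q\sum_i \frac{q_i}{Q}p_i^2\le Q\cdot\frac{3}{\|p\|^2}\sum_i p_i^4\le 3Q\|p\|^2 ,
\]
using \(\sum_i p_i^4\le\|p\|^4\). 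I would double-check this step, since the kept opinion is fixed but still enters the mixture; the cleanest treatment writes \(\xi_u\) as a mixture and bounds the second moment by \(\bbE[p_{\xi_u}^2]\) restricted to the update event. With this bound, summing over the \(n\) agents gives total variance \(\frac{4}{n^2}\cdot n\cdot 3Q\|p\|^2=O(Q\|p\|^2/n)\). If \(h\|p\|\le1\), this is \(O(h^2\|p\|^4/n)\), which is item (ii). Otherwise \(Q\le1\) and we get \(O(\|p\|^2/n)\), which is item (iii).

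The expected obstacle is the variance estimate above, because the non-updating agents keep their own (deterministic) opinion. This is harmless for the centered term, but it must be handled carefully so that the variance is not bounded by \(\|p\|^2/n\) in the regime \(h\|p\|\le 1\). If the direct bound fails, I would instead condition on which agents update, invoke \Cref{lemma:bound_variance} coordinatewise, and combine via \(\Var\langle p,p'\rangle\le \sum_i p_i^2\Var(p_i')\cdot(\text{const})\) using negative association of the multinomial-type counts.
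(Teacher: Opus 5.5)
\paragraph{Comparison.} Your proposal is correct once one step is repaired (see the next paragraph), and for items (ii)--(iii) it takes a genuinely different route from the paper. Item (i) is the paper's argument. For (ii)--(iii) you share the paper's first two reductions: first $x^2-y^2\le 2x(x-y)$, then \cref{lemma:norm_2_expected_value_inequality}. Majority boosting does follow from $q_i/q_j\ge p_i/p_j$, as you say.

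The routes then diverge:
\begin{itemize}
\item You write $-2\langle p,p'-\bbE_{t-1}p'\rangle$ as a sum over \emph{agents}, and these terms are independent given round $t-1$.
\item The paper writes it as $\sum_i Y_t(i)$ with $Y_t(i)=2p_i(\bbE_{t-1}p_i'-p_i')$. This is a sum over \emph{opinions*}, whose terms are only negatively associated.
\end{itemize}
So the paper needs the zero-one lemma, the concatenation proposition for negative association, the one-sided closure property for NA sums, and the coordinatewise variance bound of \cref{lemma:bound_variance}. Your route needs only the bounded-variable, enlargement, domination and independent-sum properties of \cref{lem:Bernstein condition}, plus $Q\le\min\{h^2\|p\|^2,1\}$ from \cref{lemma:tight_bound_probability_collision}. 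It is therefore more elementary and self-contained. The paper's route has the advantage of reusing item (i) and \cref{lemma:bound_variance} verbatim and staying parallel to Shimizu's 3-Majority analysis.

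\paragraph{The flagged variance step.} The step you flagged is not correct as written. Let $a=p_{o_u}$ be the value of the kept opinion and $W$ the value of the adopted opinion given an update. Then $\Var(p_{\xi_u})=Q\Var(W)+Q(1-Q)(\bbE W-a)^2$, and the cross term can exceed $Q(\bbE W)^2$.

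For example, take $h=2$, one opinion of density $3\varepsilon$ among many of density $\varepsilon$, and let $u$ hold the large one. Then $\bbE W\approx\varepsilon$ and $a=3\varepsilon$, so $\Var(p_{\xi_u})\approx 4Q\varepsilon^2$, while $Q\,\bbE W^2\approx Q\varepsilon^2$. Hence $\Var(p_{\xi_u})\le Q\sum_i \frac{q_i}{Q}p_i^2$ fails by a constant factor. Your suggested remedy, keeping only the update part of the second moment, is not legitimate for the same reason.

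The fix is to center at the kept value. Since the variance is the minimal mean-square deviation from a constant, and $W,a\in[0,p_1]$,
\[
\Var(p_{\xi_u})\le \bbE\bigl[(p_{\xi_u}-a)^2\bigr]=Q\,\bbE\bigl[(W-a)^2\bigr]\le Q\,p_1^2\le Q\|p\|^2 .
\]
Summing over agents gives $\sum_u\Var(Z_u)\le 4Q\|p\|^2/n\le 4\min\{h^2\|p\|^2,1\}\|p\|^2/n$. This yields (ii) and (iii) with $c=4$, and it does not even need \cref{claim:q_i_over_Q_ub_and_lb_norm2}. With this one-line repair your proof is complete.
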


\begin{proof}[Proof of \cref{item:BC for alpha}]
      By definition,
      \[
        np_i^{(t)}=\sum_{v\in V}\mathbb{1}_{\text{opinion}_t(v)=i}.
      \]
      Hence
      \[
        p_i^{(t)}-\bbE_{t-1}\qty[p_i^{(t)}]=\sum_{v\in V}X_t(v),
      \]
      where
      \[
        X_t(v)\coloneq \frac{\mathbb{1}_{\text{opinion}_t(v)=i}-\bbE_{t-1}\qty[\mathbb{1}_{\text{opinion}_t(v)=i}]}{n}.
      \]
      Since $\abs{X_t(v)}\leq 1/n$ for all $v\in V$, $X_t(v)$ conditioned on round $t-1$ satisfies $\qty(\frac{1}{n},\Var_{t-1}[X_t(v)])$-Bernstein condition (\cref{item:Bernstein condition for bounded random variables} of \cref{lem:Bernstein condition}).
      Furthermore, since $(X_t(v))_{v\in V}$ conditioned on round $t-1$ are $n$ mean-zero independent random variables,
      $p_i^{(t)}-\bbE_{t-1}\qty[p_i^{(t)}]=\sum_{v\in V}X_t(v)$ satisfies $\qty(\frac{1}{n},\sum_{v\in V}\Var_{t-1}[X_v])$-Bernstein condition from \cref{item:BC for independent rvs} of \cref{lem:Bernstein condition}.
      Since
      \[
      \sum_{v\in V}\Var_{t-1}[X_t(v)]=\Var_{t-1}\qty[\sum_{v\in V}X_t(v)]=\Var_{t-1}\qty[p_i^{(t)}-\bbE_{t-1}\qty[p_i^{(t)}]]=\Var_{t-1}\qty[p_i^{(t)}],
      \]
      we obtain the claim.
  \end{proof}

  \begin{proof}[Proof of \cref{item:BC for alphanorm,item:BC for alphanorm 2}]
    We have
    \begin{align}
      \|p^{(t-1)}\|^2-\|p_{t}\|^2
        &=\sum_{i\in [k]}\cpa{{p^{(t-1)}_i}^2-{p^{(t)}_i}^2} \nonumber\\
        &\leq \sum_{i\in [k]}2p^{(t-1)}_i\cpa{p^{(t-1)}_i-p^{(t)}_i} & & \pa{\text{for all $x,y\in\bbR$, $x^2-y^2\le 2x(x-y)$}}\nonumber\\
        &\leq \sum_{i\in [k]}2p^{(t-1)}_i\cpa{\bbE_{t-1}\pa{p^{(t)}_i}-p^{(t)}_i} & & \pa{\text{by \cref{lemma:norm_2_expected_value_inequality}}}\nonumber\\
        &=\sum_{i\in [k]}Y_t(i), \nonumber %\label{eq:Bernstein alphanorm}
    \end{align}
    where \[
    Y_t(i)\coloneq  2p^{(t-1)}_i\cpa{\bbE_{t-1}\pa{p^{(t)}_i}-p^{(t)}_i}=\sum_{v\in V}\frac{2p^{(t-1)}_i}{n}\qty(\bbE_{t-1}[\mathbb{1}_{o_t(v)=i}]-\mathbb{1}_{o_t(v)=i}).
    \]
    $Y_t(i)$ conditioned on round $t-1$ satisfies $\qty(\frac{2p^{(t-1)}_i}{n},4{p^{(t-1)}_i}^2\Var_{t-1}[p^{(t)}(i)])$-Bernstein condition from \cref{item:Bernstein condition 2} of \cref{lem:Bernstein condition} and \cref{item:BC for alpha} of \cref{lem:Bernstein_condition_for_sync_processes}.
    Furthermore, from \(p^{(t-1)}_i\leq \|p^{(t-1)}\|\), \cref{item:Bernstein condition 1} of \cref{lem:Bernstein condition} implies that
    $Y_t(i)$ conditioned on round $t-1$ satisfies $\qty(\frac{2\|p^{(t-1)}\|}{n},4{p^{(t-1)}_i}^2\Var_{t-1}[p^{(t)}(i)])$-Bernstein condition.

    From \cref{lem:zero one lemma}, the random variables $(\mathbb{1}_{\text{opinion}_t(v)=i})_{i\in [k]}$ are negatively associated for each $v\in V$.
    From \cref{lem:concatenation of NA}, $\qty((\mathbb{1}_{\text{opinion}_t(v)=i})_{i\in [k]})_{v\in V}$, a sequence of $kn$ random variables, are also negatively associated.
    Since $Y_t(i)=h_i\qty((\mathbb{1}_{\text{opinion}_t(v)=i})_{v\in V})$, i.e., non-increasing functions of disjoint subsets of negatively associated random variables $\qty((\mathbb{1}_{\text{opinion}_t(v)=i})_{i\in [k]})_{v\in V}$, $(Y_t(i))_{i\in [k]}$ are negatively associated (\cref{lem:concatenation of NA}).
    Thus, from \cref{item:BC for NA rvs} of \cref{lem:Bernstein condition}, $\sum_{i\in [k]}Y_t(i)$ conditioned on round $t-1$ satisfies
    one-sided $\qty(\frac{2\|p^{(t-1)}\|}{n}, 4\sum_{i\in [k]}{p^{(t-1)}_i}^2\Var_{t-1}[p^{(t)}_i])$-Bernstein condition.
    From \cref{item:dominated Bernstein condition} of \cref{lem:Bernstein condition}, $\|p^{(t-1)}\|^2-\|p_{t}\|^2\leq \sum_{i\in [k]}Y_t(i)$ conditioned round $t-1$ also satisfies
one-sided $\qty(\frac{2\|p^{(t-1)}\|}{n},4\sum_{i\in [k]}{p^{(t-1)}_i}^2\Var_{t-1}[p^{(t)}_i])$-Bernstein condition.

    Since by \Cref{lemma:bound_variance} we have that $\Var_{t-1}\pa{p^{(t)}_i} \leq \frac{p^{(t-1)}_i}{n} \min\cpa{c \,h^2 \|p^{(t-1)}\|_2^2,1} \pa{ \frac{p^{(t-1)}_i}{\|p^{(t-1)}\|^2} + 1 }$, applying \cref{item:Bernstein condition 1} of \cref{lem:Bernstein condition} and that $\|p^{(t-1)}\|_3^3 \leq \|p^{(t-1)}\|^3$, $\|p^{(t-1)}\|_4^4 \leq \|p^{(t-1)}\|^4$, we obtain the claim.
  \end{proof}

  The following lemmas bound the stopping time until the Euclidean norm of the configuration decreases by a constant multiplicative factor.

\begin{lemma}[Adapted from Lemma 4.5 from \cite{Shimizu2025}]
  \label{lem:drift analysis for basic}
  Consider stopping times defined in \cref{def:stopping times}.
    For any $T>0$, we have
    \begin{align*}
        \Pr\qty[\tau^\downarrow\leq \min\cpa{T,\tau^\uparrow}]\le
          \begin{cases}
            \exp\qty(-\Omega\qty(\frac{n}{h^2 T + \|p^{(0)}\|^{-1}}))
            \quad & 
            \text{if } h \|p^{(0)}\| \leq 1
            \\
            \exp\qty(-\Omega\qty(\frac{n\|p^{(0)}\|^2}{T}))
            \quad & 
            \text{if } h \|p^{(0)}\| > 1
        \end{cases}
    \end{align*}
\end{lemma}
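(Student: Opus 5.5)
We apply \cref{lem:Freedman stopping time additive} to the process \(X_t\coloneq \|p^{(0)}\|^2-\|p^{(t)}\|^2\), which measures the decrease of the squared norm. We take the stopping time \(\tau\coloneq\tau^\uparrow\) and drift parameter \(R=0\).

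\textbf{Checking (C1).} By \cref{lemma:expected_value_norm2_inequality} we have \(\bbE_{t-1}[X_t]-X_{t-1}=\|p^{(t-1)}\|^2-\bbE_{t-1}\|p^{(t)}\|^2\le 0\). This requires \dejavu to be majority boosting, i.e.\ \(\bbE[C_i'\mid C]/\bbE[C_j'\mid C]\ge C_i/C_j\) for \(C_i\ge C_j\). That follows from \(q_i/q_j\ge p_i/p_j\) (\cref{lemma:lower_upper_ratio_probability} together with \cref{lemma:monotonicity_samples_dejavu}) and the decomposition \(\bbE[C_i'\mid C]=q_i+(1-Q)C_i\).

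\textbf{Checking (C2).} The increment \(X_t-X_{t-1}=\|p^{(t-1)}\|^2-\|p^{(t)}\|^2\) satisfies the one-sided Bernstein condition by \cref{lem:Bernstein_condition_for_sync_processes}. On the event \(\{\tau^\uparrow>t-1\}\) we have \(\|p^{(t-1)}\|\le(1+c^\uparrow)\|p^{(0)}\|\), so \cref{item:Bernstein condition 1} of \cref{lem:Bernstein condition} lets us replace the current norm by \(\|p^{(0)}\|\) up to constants. This gives uniform parameters \(D=O(\|p^{(0)}\|/n)\) and
\begin{align*}
s=\begin{cases} O\bigl(h^2\|p^{(0)}\|^4/n\bigr) & \text{if } h\|p^{(t-1)}\|\le 1,\\ O\bigl(\|p^{(0)}\|^2/n\bigr) & \text{otherwise.}\end{cases}
\end{align*}

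\textbf{Main obstacle.} The regime may switch along the trajectory, because \(h\|p^{(t-1)}\|\) may cross \(1\) even when \(h\|p^{(0)}\|\) is on one side. I would handle this by bounding the variance proxy by its maximum over both regimes.
\begin{itemize}
\item If \(h\|p^{(0)}\|\le1\), then on \(\{\tau^\uparrow>t-1\}\) every step satisfies \(s\le c\min\{h^2\|p^{(0)}\|^4,\|p^{(0)}\|^2\}/n\le c\,h^2\|p^{(0)}\|^4/n\), up to the constant \((1+c^\uparrow)^4\).
\item If \(h\|p^{(0)}\|>1\), the bound \(s=O(\|p^{(0)}\|^2/n)\) is valid in both regimes, since \(h^2\|p\|^4\le\|p\|^2\) whenever \(h\|p\|\le1\).
\end{itemize}

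\textbf{Applying the drift lemma.} Note that \(\tau^\downarrow\) is exactly the time at which \(X_t\ge X_0+\eta\), with \(X_0=0\) and \(\eta\coloneq(1-(1-c^\downarrow)^2)\|p^{(0)}\|^2=\Theta(\|p^{(0)}\|^2)\). Item 1 of \cref{lem:Freedman stopping time additive}, with \(R=0\) and \(z=\eta\), gives
\[
\Pr[\tau^\downarrow\le\min\{T,\tau^\uparrow\}]\le\exp\pa{-\frac{\eta^2/2}{sT+\eta D/3}}.
\]
We then substitute in each case.
\begin{itemize}
\item In the first case, \(\eta^2/(sT+\eta D)=\Theta\bigl(\|p\|^4/(h^2\|p\|^4T/n+\|p\|^3/n)\bigr)=\Theta\bigl(n/(h^2T+\|p^{(0)}\|^{-1})\bigr)\).
\item In the second case, \(sT=\Theta(\|p\|^2T/n)\) and \(\eta D=\Theta(\|p\|^3/n)\le\Theta(\|p\|^2/n)\le sT\) for \(T\ge1\). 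The ratio becomes \(\Theta(n\|p^{(0)}\|^2/T)\).
\end{itemize}
These are exactly the two claimed bounds.
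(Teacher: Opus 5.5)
Your proposal is correct and follows the paper's proof essentially step for step. Like the paper, you apply the additive drift lemma to the squared norm with \(\tau=\tau^\uparrow\), \(R=0\) and threshold \(\Theta(\|p^{(0)}\|^2)\), freezing the Bernstein parameters at \(\|p^{(0)}\|\) via \(\|p^{(t-1)}\|\le(1+c^\uparrow)\|p^{(0)}\|\). Your explicit treatment of the regime switch and your check that \dejavu is majority boosting (for (C1)) fill in details the paper leaves implicit; the paper simply uses the variance proxy \(\min\{h^2\|p^{(0)}\|^2,1\}\|p^{(0)}\|^2/n\) and cites the submartingale lemma directly.
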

\begin{proof}[Proof of \cref{lem:drift analysis for basic}]
  Let $\tau = \tau^\uparrow$, $X_t=-\|p^{(t\wedge \tau)}\|^2$, and $R = 0$.
  From \cref{lemma:expected_value_norm2_inequality},
    \begin{align*}
        \mathbb{1}_{\tau>t-1}\qty(\bbE_{t-1}\qty[ X_t ]-X_{t-1}-R)
        &=\mathbb{1}_{\tau>t-1}\qty(\|p^{(t-1)}\|^2-\bbE_{t-1}\qty[\|p^{(t)}\|^2])
        \leq 0.
    \end{align*}
    Furthermore, from \cref{item:BC for alphanorm} of \cref{lem:Bernstein_condition_for_sync_processes}
    and \cref{item:Bernstein condition 1} of \cref{lem:Bernstein condition},
    the random variable
    \begin{align*}
        \mathbb{1}_{\tau>t-1}\qty( X_t - X_{t-1}-R)
        &=\mathbb{1}_{\tau>t-1}\qty(\|p^{(t-1)}\|^2-\|p^{(t)}\|^2)
    \end{align*}
    satisfies one-sided $\qty(O\qty(\frac{\|p^{(0)}\|}{n}),O\pa{\frac{ \min\cpa{h^2 \|p^{(0)}\|^{2},1} \|p^{(0)}\|^2}{n}})$-Bernstein condition.
    Here, we used $ \|p^{(t-1)}\| \le (1+c^\uparrow)\|p^{(0)}\| $ for $ t-1 < \tau $.

    Set
    \[
        \varepsilon\coloneq 
        \|p^{(0)}\|^2-\pa{(1-c^\downarrow)\|p^{(0)}\|}^2
        =
        \pa{2c^\downarrow-(c^\downarrow)^2}\|p^{(0)}\|^2.
    \]
    Applying \cref{lem:Freedman stopping time additive} with $D = O\qty(\frac{\|p^{(0)}\|}{n})$ and threshold \(\varepsilon\),
    \begin{align*}
        \Pr\qty[\tau_X^\uparrow\leq \min\{T,\tau\}]
        \leq 
        \begin{cases}
            \exp\qty(-\Omega\qty(\frac{\|p^{(0)}\|^4}{h^2\|p^{(0)}\|^4 T/n+\|p^{(0)}\|^{3}/n})) 
            \quad & 
            \text{if } h \|p^{(0)}\| \leq 1
            \\
            \exp\qty(-\Omega\qty(\frac{\|p^{(0)}\|^4}{\|p^{(0)}\|^2 T/n+\|p^{(0)}\|^{3}/n})) 
            \quad & 
            \text{if } h \|p^{(0)}\| > 1
        \end{cases}
    \end{align*}
    holds. Moreover, if $\tau^\downarrow\leq \min\{T,\tau\}$, then for some $t\leq \min\{T,\tau\}$ we have
    \[
        \|p^{(t)}\|\leq (1-c^\downarrow)\|p^{(0)}\|,
    \]
    and therefore
    \[
        X_t-X_0
        =
        \|p^{(0)}\|^2-\|p^{(t)}\|^2
        \geq
        \|p^{(0)}\|^2-\pa{(1-c^\downarrow)\|p^{(0)}\|}^2
        =
        \varepsilon.
    \]
    Hence \(\tau^\downarrow\leq \min\{T,\tau\}\) implies \(\tau_X^\uparrow\leq \min\{T,\tau\}\), and so
    \[
        \Pr\qty[\tau^\downarrow\leq \min\{T,\tau\}]
        \leq
        \Pr\qty[\tau_X^\uparrow\leq \min\{T,\tau\}],
    \]
    which yields the claim.
\end{proof}

\begin{proof}[Proof of \cref{lem:taunormdown is large}]
    For each $0\le s \le T$,
    let $\sigma^{\downarrow}_s = \inf\{ t\ge s \colon \|p^{(t)}\| \le (1-c^\downarrow) \|p_s\| \}$,
    $\sigma^{\uparrow}_s = \inf\{t \ge s \colon \|p^{(t)}\| \ge 2\|p_s\|\}$,
    and
    let $\mathcal{E}^{(s)}$ be the event that $\|p_s\| \ge \|p^{(0)}\|$ and $\sigma^{\downarrow}_s \le \min\{T, \sigma^{\uparrow}_s\}$.
    Note that $\tau^\downarrow = \sigma^{\downarrow}_0$ and $\tau^\uparrow = \sigma^{\uparrow}_0$ (for $c^\uparrow=1$).

    The key observation is that the partial process $(p^{(t)})_{t\ge s}$ is again a \dejavu process and $\sigma_t^{\uparrow},\sigma_t^{\downarrow}$ can be seen as the stopping times of \cref{def:stopping times} for the partial process.
    Moreover, the event $\mathcal{E}^{(s)}$ depends only on the partial process $(p^{(t)})_{t\ge s}$.
    Therefore, from \cref{lem:drift analysis for basic}, we have
    \begin{align*}
        \Pr_{(p^{(t)})_{t\ge s}}\qty[ \mathcal{E}^{(s)} ] &\le \Pr_{(p^{(t)})_{t\ge s}}\qty[ \sigma_s^{\downarrow} \le \min\{T, \sigma^{\uparrow}_s\} \middle| \|p^{(s)}\| \ge \|p^{(0)}\| ] \\
        &\le \begin{cases}
            \exp\qty(-\Omega\qty(\frac{n}{h^2 T + \|p^{(0)}\|^{-1}}))
            \quad & 
            \text{if } h \|p^{(0)}\| \leq 1
            \\
            \exp\qty(-\Omega\qty(\frac{n\|p^{(0)}\|^2}{T}))
            \quad & 
            \text{if } h \|p^{(0)}\| > 1
        \end{cases}
    \end{align*}

    If $\tau^\downarrow \le T$ occurs, then $\mathcal{E}^{(s)}$ occurs for some $0\le s \le T$.
    For example, if $s\le \tau^\downarrow$ is the round such that $\|p_s\| = \max_{0\le t \le \tau^\downarrow} \|p^{(t)}\|$, then $\mathcal{E}^{(s)}$ holds.
    Therefore, we have
    \begin{align*}
        \Pr\qty[ \tau^\downarrow \le T ] &\le \Pr\qty[ \bigvee_{0\le s \le T} \mathcal{E}^{(s)} ] \\
        &\le \sum_{0\le s \le T} \Pr\qty[ \mathcal{E}^{(s)}] \\
        &\le \begin{cases}
            T\exp\qty(-\Omega\qty(\frac{n}{h^2 T + \|p^{(0)}\|^{-1}}))
            \quad & 
            \text{if } h \|p^{(0)}\| \leq 1
            \\
            T\exp\qty(-\Omega\qty(\frac{n\|p^{(0)}\|^2}{T}))
            \quad & 
            \text{if } h \|p^{(0)}\| > 1
        \end{cases}
    \end{align*}
\end{proof}
In the following, we remark why \Cref{lem:taunormdown is large} implies that $\|p^{(t)}\| \geq c^\downarrow \|p^{(0)}\|$ w.h.p. 
In the regime $h \leq  \frac{1}{\sqrt{p_1} }$, by \Cref{thm:convergence_time_dejavu}, $T_d = O\pa{ \frac{\log n}{h^2 p_1}}$. Hence, whenever \(p_1=\omega\pa{\log^2 n/n}\), \Cref{lem:taunormdown is large} implies that w.h.p. \(\|p^{(t)}\| \geq c^\downarrow \|p^{(0)}\|\) for all \(0\le t\le T_d\).

In the regime $ \frac{1}{\sqrt{p_1}} \leq h \leq \frac{1}{\|p\|\log n}$ (which exists only for initial unbalanced configurations), by \Cref{thm:convergence_time_dejavu}, it holds $T_d = O\pa{\log n }$. Therefore, \Cref{lem:taunormdown is large} implies that w.h.p. $\|p^{(t)}\| \geq c^\downarrow \|p^{(0)}\|$.

In the regime $h >  \frac{1}{\|p\|\log n} , \frac{1}{\sqrt{p_1}}$, by \Cref{thm:convergence_time_dejavu}, it holds $T_d = O\pa{\log n }$, so we distinguish two cases. If $\|p^{(0)}\|^2 = \omega\pa{\frac{\log^2 n}{n}}$, for some sufficiently large constant $C>0$, \Cref{lem:taunormdown is large} implies that w.h.p. $\|p^{(t)}\|^2 = \Omega\pa{ \|p^{(0)}\|^2 }$ for all $0 \leq t \leq T_d$. Instead, if $\|p^{(0)}\|^2 \leq \frac{C \log^2 n}{n}$, we have that $\|p^{(t)}\|^2 = \Omega\pa{ \frac{\|p^{(0)}\|^2}{\log^2 n} }$ for all $t\geq 0$, as it always holds $\|p^{(t)}\|\geq \frac{1}{k} \geq \frac{1}{n}$.

\subsection{Comparing the number of samples required by \texorpdfstring{\dejavu and \hmaj}{DéjàVu and h-majority}}\label{sec:comparison-samples}
We have all the ingredients to conclude the proof of \Cref{thm:dejavu_samples,thm:sample-efficiency}, which we restate for convenience.
\begin{theorem*}
  Let \(C=(C_1,\ldots,C_k)\) be a system configuration such that \(C_1\geq \cdots \geq C_k\), \(C_1 = \omega\pa{\log^2 n}\), and that, for a large enough constant \(\lambda > 0\),
  \[
    C_1 - C_2 \ge \lambda \sqrt{ \max\cpa{\frac{n}{h^2}, C_1} \log n } .
  \]
  Let \(S_{d}\) and \(S_{m}\) be the numbers of samples until consensus of, respectively, \dejavu and \hmaj.
  Fix any arbitrarily small constant \(\varepsilon > 0\).
  For \(h = \Omega({\min\{n^{3/4+\varepsilon}/{C_1}, \sqrt{n/C_1}\}})\), w.h.p. it holds
  \begin{align*}
      \begin{cases}
          & S_{d} \cdot \frac{O\left(\max\{1,h\frac{\norm{C}_2}{n}\right)\}}{\log^3 n}\leq S_{m} \text{ if } \norm{C}_2 = O(\sqrt{n} \log n) \text{ and } h \norm{C}_2 \ge \frac{n}{\log n}, \\
          & S_{d} \cdot \frac{O\left(\max\{1,h\frac{\norm{C}_2}{n}\right)\}}{\log n}\leq S_{m} \text{ otherwise}.
      \end{cases}
  \end{align*}
\end{theorem*}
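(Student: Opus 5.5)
We prove \cref{thm:dejavu_samples} first and then compare with $S_m$. For the upper bound on $S_d$, fix a round $t$ before consensus. Each agent stops at $\min\{H,h\}$ samples, where $H$ is the first repetition time. We bound $\bbE[\min\{H,h\}\mid C^{(t)}]$ by $O(\min\{h,1/\norm{p^{(t)}}\})$ using \cref{lemma:tight_bound_probability_collision} as follows. Split the $h$ samples into blocks of length $m=\lceil 1/\norm{p^{(t)}}\rceil$. The probability of seeing no repetition inside a single block is at most $1-2^{-11}$. Because the absence of any repetition forces the absence of repetition inside every block, and the blocks are independent, we get $\Pr(H>jm)\le (1-2^{-11})^j$. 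This gives a geometric tail and hence the claimed expectation. The per-round total is a sum of $n$ independent variables, each bounded by $h$. A Chernoff/Bernstein bound for bounded variables then gives, w.h.p., a round total of $O(n\min\{h,1/\norm{p^{(t)}}\}+h\log n)$; the additive term is absorbed since $n\ge h$ in the relevant regimes, or is handled via the geometric tails.

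Next we control $\norm{p^{(t)}}$ along the whole run. By \cref{thm:convergence_time_dejavu}, consensus occurs within $T_d=O((n/(h^2C_1)+1)\log n)$ rounds w.h.p. We invoke \cref{lem:taunormdown is large} together with the regime discussion following it. Using $C_1=\omega(\log^2 n)$, this yields $\norm{p^{(t)}}=\Omega(\norm{p^{(0)}})$ for all $t\le T_d$ w.h.p. The exception is the case $\norm{C}_2=O(\sqrt n\log n)$, where we only get $\Omega(\norm{p^{(0)}}/\log n)$. Substituting into the per-round bound and summing over $T_d$ rounds gives
\[
S_d=O\Big(\min\{h,n/\norm{C}_2\}\,(n/(h^2C_1)+1)\log n\Big)
\]
per agent (times $n$ in total, matching the convention for $S_m$), with an extra $\log n$ factor in the exceptional case. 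A union bound over rounds keeps everything w.h.p. Note that $\min\{h,n/\norm{C}\}=h/\max\{1,h\norm{C}/n\}$.

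For $S_m$ we use the lower bound on consensus time from \cref{thm:app:lb:h-majority}, which applies when $h=\Omega(n^{3/4+\varepsilon}/C_1)$ and $C_1\le n/100$. It gives $S_m=\Omega(nh(n/(h^2C_1)+1))$. When the lower bound does not apply (the $\sqrt{n/C_1}$ branch, or $C_1>n/100$), we instead use the trivial bound that every agent samples $h$ opinions in the first round, so $S_m\ge nh$. In that regime $h\ge\sqrt{n/C_1}$ makes $n/(h^2C_1)+1=O(1)$, so the same expression holds up to constants. Dividing the two bounds gives $S_d\cdot\max\{1,h\norm{C}/n\}/\log n=O(S_m)$, and $\log^3 n$ replaces $\log n$ in the exceptional case. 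There the $\log$ loss in the norm lower bound, together with the condition $h\norm{C}\ge n/\log n$ (which affects which branch of the min is active), costs the additional $\log^2 n$.

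The main obstacle is the norm-control step. It requires checking carefully, in each parameter regime, that the exponent in \cref{lem:taunormdown is large} is $\omega(\log n)$ when $T=T_d$, and tracking exactly where the polylog losses in the small-norm case arise so that they produce the stated $\log^3 n$. We would handle this by case analysis on $h$ versus $1/\sqrt{p_1}$ and $1/(\norm{p}\log n)$, following the remark after that lemma.
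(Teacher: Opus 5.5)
Your proposal is correct and follows essentially the same route as the paper: you use geometric domination of the per-round sample count with blocks of length about $1/\norm{p}_2$ via \cref{lemma:tight_bound_probability_collision}, norm control from \cref{lem:taunormdown is large} over the $T_d$ rounds given by \cref{thm:convergence_time_dejavu}, and division by $S_m \ge h\,T_m$ using either \cref{thm:app:lb:h-majority} or the trivial one-round bound. Your observation that $n/(h^2C_1)+1=O(1)$ whenever that lower bound does not apply is a slightly cleaner version of the paper's case split, and it also covers $C_1>n/100$, which the paper glosses over.

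One bookkeeping remark on the exceptional case. The only loss there is a single $\log n$ factor, coming from $\norm{C^{(t)}}_2\ge\sqrt n$, and it gives $S_d/S_m=O(\log^2 n/\max\{1,h\norm{C}_2/n\})$. The condition $h\norm{C}_2\ge n/\log n$ merely marks where that loss can matter and costs nothing further. So the stated $\log^3 n$ bound follows with room to spare, and you do not need to find a second extra logarithm.
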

\begin{proof}[Proof of \Cref{thm:dejavu_samples}]
    Let $T_d$ be the convergence time of \dejavu. By \Cref{thm:convergence_time_dejavu}, there exists an absolute constant \(c_T>0\) such that
    \[
        \Pr\!\left(
            T_d \le c_T \max\cpa{\frac{n}{h^2 C_1},1}\log n
        \right)
        \ge 1-n^{-\Theta(1)}.
    \]
    Define
    \[
        T_*\coloneq c_T \max\cpa{\frac{n}{h^2 C_1},1}\log n.
    \]
    On the event
    \[
        \cpa{T_d\le T_*}
        \cap
        \cpa{\|p^{(t)}\|\ge c^\downarrow \|p^{(0)}\| \text{ for all }0\le t\le T_*},
    \]
    let
    \[
        m\coloneq \min\cpa{h,\left\lceil 1/\|p^{(0)}\| \right\rceil}.
    \]
    By \Cref{lemma:tight_bound_probability_collision}, for every \(t\le T_*\),
    \[
        \Pr(H\le m\mid C^{(t)})
        \ge 2^{-11}\min\cpa{m^2\|p^{(t)}\|^2,1}
        \ge 2^{-11}\min\cpa{(c^\downarrow)^2,1}
        =:\kappa>0.
    \]
    Therefore, the number of samples performed by a fixed agent in round \(t\) is stochastically dominated by \(mG_t\), where \(G_t\sim\mathrm{Geom}(\kappa)\): indeed, every fresh block of \(m\) samples contains an internal repeat with conditional probability at least \(\kappa\), and such an internal repeat already stops the round.

    Consequently, on the same event,
    \[
        S_d \preceq m\sum_{t=1}^{T_*} G_t.
    \]
    Since \(\kappa\) is an absolute constant, \Cref{thm:chernoff_geometric} yields
    \[
        \sum_{t=1}^{T_*} G_t = O(T_*)
    \]
    w.h.p. Combining this estimate with the high-probability event \(T_d\le T_*\) and with the norm lower bound from \Cref{lem:taunormdown is large} for \(T=T_*\), a union bound implies that, w.h.p.,
    \[
        S_d
        = O\!\left(
            T_* \cdot \min\cpa{h,\frac{1}{\|p^{(0)}\|}}
        \right)
        =
        O\!\left(
            \max\cpa{\frac{n}{h^2 C_1},1}\log n
            \cdot
            \min\cpa{h,\frac{n}{\|C\|}}
        \right)
    \]
    w.h.p., which is the claim.
\end{proof}

We restate \cref{thm:sample-efficiency} for convenience.

\begin{theorem*}
    Let \(C=(C_1,\ldots,C_k)\) be a system configuration such that \(C_1\geq \cdots \geq C_k\), \(C_1 = \omega\pa{\log^2 n}\), and bias
    \[
        C_1 - C_2 = \Omega\pa{ \sqrt{ \max\cpa{\frac{n}{h^2}, C_1} \log n } }.
    \]
    Let \(S_d\) and \(S_m\) be the numbers of samples until consensus of, respectively, \dejavu and \hmaj.
    Fix any arbitrarily small constant \(\varepsilon > 0\).
    For \(h = \Omega({\min\{n^{3/4+\varepsilon}/{C_1}, \sqrt{n/C_1}\}})\), w.h.p. it holds
    \[
        S_d \cdot \frac{O\left(\max\{1,h\frac{\norm{C}_2}{n}\right)\}}{\log n}\leq S_m~.
    \]
\end{theorem*}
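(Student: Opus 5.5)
The plan is to combine the upper bound on \(S_d\) from \Cref{thm:dejavu_samples} with a lower bound on \(S_m\) obtained from \Cref{thm:app:lb:h-majority}. The ratio then simplifies to the claimed factor \(\max\{1,h\norm{C}_2/n\}/\log n\).

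\textbf{Lower bound on \(S_m\).} Every agent running \hmaj draws exactly \(h\) samples per round, so \(S_m = n h T_m\), where \(T_m\) is the consensus time of \hmaj. I split according to which term achieves the minimum in the hypothesis on \(h\).
\begin{itemize}
\item If \(h=\Omega(n^{3/4+\varepsilon}/C_1)\), I invoke \Cref{thm:app:lb:h-majority}. This gives \(T_m=\Omega(n/(h^2C_1)+1)\) w.h.p., and hence \(S_m=\Omega\big(nh\,(n/(h^2C_1)+1)\big)\). That theorem assumes \(C_1\le n/100\). When \(C_1>n/100\), the claimed bound is \(\Omega(nh)\), which holds trivially since \(T_m\ge 1\); here \(n/(h^2C_1)=O(1)\).
\item If instead \(h=\Omega(\sqrt{n/C_1})\), then \(n/(h^2C_1)=O(1)\). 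The trivial bound \(T_m\ge 1\) again gives \(S_m\ge nh=\Omega\big(nh\,(n/(h^2C_1)+1)\big)\).
\end{itemize}
In both cases, w.h.p.,
\[
S_m=\Omega\!\left(nh\left(\tfrac{n}{h^2C_1}+1\right)\right).
\]

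\textbf{Upper bound on \(S_d\).} \Cref{thm:dejavu_samples} counts samples per agent and must be multiplied by \(n\) for the total. It gives, w.h.p.,
\[
S_d=O\!\left(n\min\{h,n/\norm{C}_2\}\left(\tfrac{n}{h^2C_1}+1\right)\log n\right).
\]

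\textbf{Taking the ratio.} Dividing, the common factor \(n(n/(h^2C_1)+1)\) cancels, leaving
\[
\frac{S_m}{S_d}=\Omega\!\left(\frac{h}{\min\{h,n/\norm{C}_2\}\,\log n}\right)=\Omega\!\left(\frac{\max\{1,h\norm{C}_2/n\}}{\log n}\right).
\]
A union bound over the two w.h.p.\ events completes the proof.

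\textbf{Main obstacle.} The delicate point is that \Cref{thm:dejavu_samples} relies on the norm lower bound from \Cref{lem:taunormdown is large} holding for all rounds up to \(T_*\). This is only established, via the case discussion after that lemma, up to a \(\log^2 n\) loss in the regime \(\norm{C}_2=O(\sqrt n\log n)\) with \(h\norm{C}_2\ge n/\log n\). That is exactly the excluded case of the original statement, so the restated version with a \(1/\log n\) factor should be read in the complementary regime, or else the extra \(\log^2 n\) factor must be carried into the ratio. I would check that the hypotheses \(C_1=\omega(\log^2 n)\) and the bias condition suffice to invoke \Cref{thm:convergence_time_dejavu} and \Cref{lem:taunormdown is large} with \(T=T_*\), as argued in the remark preceding this subsection.
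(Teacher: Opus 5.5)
Your proposal is correct and follows the paper's proof. You bound \(S_m\) below by \Cref{thm:app:lb:h-majority} together with the trivial \(T_m\ge 1\), bound \(S_d\) above by \Cref{thm:dejavu_samples}, and compare the two in the regimes \(h\le n/\norm{C}_2\) and \(h>n/\norm{C}_2\). Your case split on which term attains the minimum in the hypothesis on \(h\), with \(C_1>n/100\) handled separately, is tidier than the paper's split on \(C_1\le\sqrt{n}\).

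Your caveat is also well founded. The paper applies \Cref{thm:dejavu_samples} at face value, but the norm lower bound it relies on is only established up to a polylogarithmic loss when \(\norm{C}_2=O(\sqrt{n}\log n)\) and \(h\norm{C}_2\ge n/\log n\). That is exactly the case where the introduction's version of the theorem carries the weaker \(\log^3 n\) factor.
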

\begin{proof}[Proof of \Cref{thm:sample-efficiency}]
   Let $T_m$ be the convergence time of the \hmaj. By \Cref{thm:app:lb:h-majority} and by the trivial lower bound $T_m \ge 1$, we obtain
   \[
       S_m= T_m \cdot h =\Omega 
       \begin{cases}
           \max\cpa{ \frac{n}{h\,C_1}, h } &\text{if } h=\Omega\pa{ n^{3/4+\varepsilon}/C_1 }
           \\
           h &\text{otherwise }
       \end{cases}
   \]
   Let's first assume that $C_1\leq \sqrt{n}$. Consequently, we have that
   \[
       \min\cpa{\frac{n^{3/4+\varepsilon}}{C_1}, \sqrt{\frac{n}{C_1}}} =  \sqrt{\frac{n}{C_1}}.
   \]
   Then we assume that $h\geq \sqrt{n/C_1}$.
   To find the ratio $\frac{S_d}{S_m}$, we analyze the two main regimes of $h$.

   \textbf{Low sampling regime $\pa{h \le \frac{n}{\norm{C}_2}}$}.
    The ratio is:
\[
    \frac{S_d}{S_m} = O\left( \frac{\log n \cdot h}{h} \right) =O(\log n).
\]

    \textbf{Large sampling regime $\pa{h > \frac{n}{\norm{C}_2}}$}.
    The ratio is:
    \[
        \frac{S_d}{S_m} = O\pa{\frac{ \log n \frac{n}{\norm{C}_2} }{h}}.
    \]
    If instead, $C_1 > \sqrt{n}$, assuming $h\geq n^{3/4+\varepsilon}/C_1 $, we obtain the same results similarly.
\end{proof}

\section{Conclusion and Open Questions}
\label{sec:conclusions}

We conclude by discussing some limitations of our work and several directions for future research.

A complete analysis of the convergence time of \hmaj for arbitrary values of \(h\) and \(k\) remains a major open problem, since the regime \(h \le k\) is still missing. Our analysis of \dejavu provides several tools, in particular the Poisson race framework and the monotonicity of the probability ratio via Newton's inequalities, that may be useful for attacking this question.

As discussed in \cref{ssec:contribution}, our analysis requires an initial additive bias of
\[
  \Omega\!\left(\sqrt{\max\cpa{n/h^2, C_1} \log n}\right).
\]
For constant \(h\), this is comparable to the \(\omega(\sqrt{n\log n})\) bias required by the \threemaj dynamics.
We conjecture that this requirement can be significantly reduced for \dejavu. More precisely, we believe that the minimum bias needed for plurality consensus is always \(\omega(\sqrt{C_1 \log n})\), matching the optimal bias of the \twochoices dynamics (see \cref{sec:related}).
The intuition is that \dejavu can be viewed as a generalization of \twochoices in which more samples are allowed, and we see no fundamental reason why its bias requirement should deteriorate as \(h\) grows. By contrast, we expect that \hmaj genuinely requires a larger bias in some regimes.

Finally, we stress that \dejavuh is specifically designed for \emph{plurality consensus}: it amplifies and preserves the initial majority opinion. Without sufficient initial bias, the protocol can be slow to converge because of its conservative behavior: when no opinion is repeated within the first \(h\) samples of a round, the agent keeps its current opinion. This is analogous to the comparison between \twochoices and \threemaj in \cite{BerenbrinkCEKMN17}, where \threemaj is shown to converge much faster than \twochoices when the number of initial opinions approaches \(n\).

One can also consider a variant of \dejavuh in which, when no opinion is repeated within the first \(h\) samples, the agent updates to one of the sampled opinions chosen uniformly at random. This variant may be more effective for achieving general consensus when the number of initial opinions is extremely large and there is no initial bias, and it appears to be an interesting direction for future work.

\section*{Acknowledgments}

This work has been supported the French government National Research Agency (ANR) through the UCA JEDI (ANR-15-IDEX-01) and EUR DS4H (ANR-17-EURE-004), through the 3IA Cote d'Azur Investments in the project with the reference number ANR-23-IACL-0001, and by the AID INRIA-DGA project n°2023000872 “BioSwarm".
Support was received also by the MUR (Italy) Department of Excellence 2023 -
2027 for GSSI, and the Agence Nationale de la Recherche (ANR) under project ByBloS (ANR-20-CE25-0002). 
Francesco d’Amore is supported by the project Decreto MUR n. 47/2025, CUP: D13C25000750001.
{\small
\bibliographystyle{alpha}
\bibliography{biblio}
}

\appendix

\section*{APPENDIX}
\section{Tools} \label{sec:tools}
\begin{theorem}[Also in \cite{enwiki:1337666560}]
  \label{thm:poisson_approximation}
  Let \(\cpa{X_i}_{i\in[n]}\)  be \(n\) mutually independent \( \text{Poisson}(\lambda_i)\) random variables. Conditioning on the sum \(S = \sum_{i\in[n]} X_i = s\), we have

  \[
    \pa{\pa{X_i}_{i\in[n]} \mid S = s} \sim \text{Multinomial}( p, s)
  \]
  where \(p = \pa{\frac{\lambda_i}{\sum_{j\in[n]} \lambda_j}}_{i\in[n]}\).

\end{theorem}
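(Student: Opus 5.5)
The plan is a direct computation of the conditional probability mass function, using the standard fact that a sum of independent Poisson variables is Poisson. Write \(\Lambda \coloneq \sum_{i\in[n]} \lambda_i\). We assume \(\Lambda>0\), since otherwise the conditioning is trivial.

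First I would establish that \(S \sim \text{Poisson}(\Lambda)\). The cleanest route is through probability generating functions. By independence,
\[
\bbE\left[z^{S}\right] = \prod_{i\in[n]} \bbE\left[z^{X_i}\right] = \prod_{i\in[n]} e^{\lambda_i(z-1)} = e^{\Lambda(z-1)},
\]
which is the generating function of \(\text{Poisson}(\Lambda)\), and a generating function determines the law on \(\bbN\). Alternatively, one can induct on \(n\) using the two-variable convolution identity \(\sum_{a+b=m}\frac{\lambda^a\mu^b}{a!\,b!} = \frac{(\lambda+\mu)^m}{m!}\), which is just the binomial theorem. This is the only step with any real content, and it is standard.

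Next I would fix \(s\in\bbN\) with \(\Pr(S=s)>0\) and a vector \(x=(x_1,\ldots,x_n)\in\bbN^n\).
\begin{itemize}
\item If \(\sum_i x_i \neq s\), then the event \(\{X = x\} \cap \{S=s\}\) is empty. So the conditional probability is \(0\), which agrees with the multinomial law, whose support is exactly \(\{x : \sum_i x_i = s\}\).
\item If \(\sum_i x_i = s\), then \(\{X=x\}\subseteq\{S=s\}\), so
\[
\Pr(X=x \mid S=s) = \frac{\prod_{i} e^{-\lambda_i}\lambda_i^{x_i}/x_i!}{e^{-\Lambda}\Lambda^{s}/s!}.
\]
\end{itemize}

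In the second case, the factors \(e^{-\lambda_i}\) multiply to \(e^{-\Lambda}\) and cancel with the denominator. Since \(\Lambda^s = \prod_i \Lambda^{x_i}\), what remains is
\[
\frac{s!}{\prod_i x_i!}\prod_{i}\left(\frac{\lambda_i}{\Lambda}\right)^{x_i}.
\]
This is precisely the mass function of the multinomial distribution with \(s\) trials and cell probabilities \(p_i=\lambda_i/\Lambda\). This is the law written \(\text{Multinomial}(p,s)\) in the statement, with the arguments in the opposite order from the notation used in \Cref{sec:amplification_bias_PULL*}.

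Having matched the two probability mass functions on all of \(\bbN^n\), the claim follows. I do not expect any genuine obstacle beyond the additivity of independent Poisson variables. The only care needed is to handle the off-support vectors separately and to note that the conditioning event has positive probability whenever \(\Lambda>0\).
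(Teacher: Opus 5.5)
Your proof is correct and complete: the generating-function argument that \(S\sim\text{Poisson}(\Lambda)\), followed by the cancellation of the \(e^{-\lambda_i}\) factors and the identity \(\Lambda^s=\prod_i\Lambda^{x_i}\) on the support \(\sum_i x_i=s\), is the standard proof of this fact. The paper gives no proof of its own; it lists the statement as a tool in the appendix with a citation, and your direct computation of the conditional mass function is exactly the argument that reference relies on.
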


\begin{definition}[Negative association]
  \label{def:NA}
  Random variables \(X_1,\ldots,X_n\) are \emph{negatively associated} if
  for every two disjoint index sets \(I,J\subseteq [n]\),
  \[
    \bbE\qty[f(X_i,i\in I)g(X_j,j\in J)]\leq \bbE\qty[f(X_i,i\in I)]\bbE\qty[g(X_j,j\in J)]
  \]
  for all functions \(f:\mathbb{R}^I\to \mathbb{R}\) and \(g:\mathbb{R}^J\to \mathbb{R}\)
  that are both non-decreasing.
\end{definition}
\begin{lemma}[Lemma 2 of \cite{Dubhashi1998-cf}]
  \label{lem:product of NA}
  Let \(X_1,\ldots,X_n\) be a sequence of negatively associated random variables.
  Then for any non-decreasing functions \(f_i\), \(i\in [n]\),
  \[
    \bbE\qty[\prod_{i\in [n]}f_i(X_i)]\leq \prod_{i\in [n]}\bbE\qty[f_i(X_i)].
  \]
\end{lemma}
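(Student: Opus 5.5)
The plan is to argue by induction on \(n\), using the definition of negative association (\cref{def:NA}) once at each step with a carefully chosen split of the index set. One caveat comes first. For the product \(\prod_{i} f_i(X_i)\) to be a non-decreasing function of \((X_1,\ldots,X_n)\), the \(f_i\) must be non-negative, as in the original statement of \cite{Dubhashi1998-cf}. This is the case in every application in the paper, where the \(f_i\) are exponentials \(x\mapsto e^{\lambda x}\) with \(\lambda\ge 0\). I therefore assume \(f_i\ge 0\) throughout and will state this hypothesis explicitly.

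The base case \(n=1\) is an equality. For the inductive step, assume the inequality holds for any \(n-1\) negatively associated variables, and split the index set into
\[
I=\{1,\ldots,n-1\},\qquad J=\{n\}.
\]
Define the two test functions
\[
F(x_1,\ldots,x_{n-1})=\prod_{i=1}^{n-1}f_i(x_i),\qquad G(x_n)=f_n(x_n).
\]
Both are non-decreasing. For \(F\) this holds because each factor is non-negative and non-decreasing in its own coordinate: raising one coordinate multiplies a non-negative quantity by a factor that does not decrease.

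Applying \cref{def:NA} with this choice gives
\[
\bbE\Big[\prod_{i=1}^n f_i(X_i)\Big]\le \bbE\Big[\prod_{i=1}^{n-1}f_i(X_i)\Big]\,\bbE[f_n(X_n)].
\]
Any subfamily of a negatively associated family is again negatively associated, since the definition restricted to disjoint index sets inside \(\{1,\ldots,n-1\}\) is a special case of the original. So the induction hypothesis applies to \((X_1,\ldots,X_{n-1})\) and bounds the first factor by \(\prod_{i=1}^{n-1}\bbE[f_i(X_i)]\). Multiplying this bound by \(\bbE[f_n(X_n)]\ge 0\) preserves the inequality, and this closes the induction.

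The only delicate points are integrability and the sign issue. To handle integrability, I would first prove the claim for bounded \(f_i\) and then pass to general non-negative \(f_i\) via truncations \(f_i\wedge M\), which are still non-decreasing and non-negative, and the monotone convergence theorem. The sign issue is unavoidable: with \(f_1=f_2=-\mathrm{id}\) on a negatively correlated pair the inequality fails. This is why non-negativity must be part of the hypothesis rather than something to be worked around.
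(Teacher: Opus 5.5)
Your induction is correct and is the standard argument behind the cited result; the paper gives no proof of its own, only the reference to Dubhashi and Ranjan. You split $I=\{1,\ldots,n-1\}$, $J=\{n\}$ and use that a product of non-negative, coordinatewise non-decreasing factors is non-decreasing. You are also right that the $f_i$ must be assumed non-negative, since the statement as written is false for $n\ge 3$.

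Your counterexample for that last point is wrong, however. First, $-\mathrm{id}$ is non-increasing, so it is not an admissible choice. Second, for $n=2$ the inequality is exactly the definition of negative association, which holds for non-decreasing functions of any sign and also for two non-increasing ones by negating both, so it cannot fail there. A valid counterexample needs $n\ge 3$: take $X_1\sim\mathrm{Bernoulli}(1/2)$, $X_2=1-X_1$, $X_3\equiv 0$, with $f_1=f_2=\mathrm{id}$ and $f_3\equiv -1$. The inequality then reads $0\le -1/4$, which is false.
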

\begin{lemma}[Lemma 8 of \cite{Dubhashi1998-cf}]
  \label{lem:zero one lemma}
  Let \(X_1,\ldots,X_n\) be random variables taking values in \(\cpa{0,1}\) such that \(\sum_{i\in[n]}X_i=1\).
  Then \(X_1,\ldots,X_n\) are negatively associated.
\end{lemma}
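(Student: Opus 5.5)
The statement to prove is the zero-one lemma: indicators $X_1,\ldots,X_n\in\{0,1\}$ with $\sum_i X_i=1$ are negatively associated. The plan is to check the defining inequality of \cref{def:NA} directly. Fix disjoint index sets $I,J\subseteq[n]$ and non-decreasing functions $f,g$. Exactly one coordinate equals $1$, so the vector $X$ is one of the unit vectors $\mathbf{e}_\ell$, taken with probability $\pi_\ell\coloneq\Pr(X_\ell=1)$. First I reduce to a normalized situation. Write $f_0\coloneq f(0,\ldots,0)$ and $g_0\coloneq g(0,\ldots,0)$. Replacing $f$ by $f-f_0$ and $g$ by $g-g_0$ changes both sides of the inequality by the same amount, because $\bbE[(f-f_0)(g-g_0)]-\bbE[f-f_0]\bbE[g-g_0]$ equals the covariance of $f$ and $g$. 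So I may assume $f$ and $g$ vanish at the zero vector, and by monotonicity they are then nonnegative on $\{0,1\}$-vectors.

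Next I evaluate both sides. If the index $\ell$ of the nonzero coordinate lies in $I$, then $X_J\equiv 0$ and so $g(X_J)=0$. If $\ell\in J$, then $f(X_I)=0$. If $\ell$ lies in neither set, both vanish. Hence $f(X_I)g(X_J)=0$ almost surely, and the left-hand side equals $0$. The right-hand side is
\[
\Big(\sum_{\ell\in I}\pi_\ell f(\mathbf{e}_\ell|_I)\Big)\Big(\sum_{\ell\in J}\pi_\ell g(\mathbf{e}_\ell|_J)\Big),
\]
a product of two nonnegative numbers, so it is at least $0$. This gives the inequality, and since $I,J,f,g$ were arbitrary, the variables are negatively associated.

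The only delicate step is the normalization: one must note that the covariance is invariant under adding constants to $f$ and $g$, and that nonnegativity after the shift uses monotonicity on the $0/1$ cube, since every $0/1$ vector dominates the zero vector. The vanishing of the product uses the constraint $\sum_i X_i=1$ together with disjointness of $I$ and $J$.
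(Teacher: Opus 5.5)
Your argument is correct and is essentially the standard one. Shifting $f,g$ to vanish at the zero vector leaves the covariance unchanged and, by monotonicity, makes both functions nonnegative on $\{0,1\}$-vectors; then $f(X_I)g(X_J)=0$ almost surely, while $\mathbb{E}[f]\,\mathbb{E}[g]\ge 0$. The paper does not prove this lemma itself but cites Lemma~8 of Dubhashi and Ranjan, whose proof is essentially the same as yours.
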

\begin{proposition}[Proposition 7 of \cite{Dubhashi1998-cf}]
  \label{lem:concatenation of NA}
  We have the following:
  \begin{enumerate}
    \item Let \(X_1,\ldots, X_n\) and \(Y_1,\ldots, Y_n\) be two sequences of negatively associated random variables that are mutually independent.
          Then \(X_1,\ldots, X_n, Y_1,\ldots, Y_n\) are negatively associated.
    \item Let \(X_1,\ldots, X_n\) be a sequence of negatively associated random variables.
          Let \(I_1,\ldots, I_k\) be disjoint index sets for some \(k\).
          For \(j\in [k]\), let \(h_j:\mathbb{R}^{I_j}\to \mathbb{R}\) be functions that are all non-decreasing or all non-increasing, and define \(Y_j:= h_j(X_i,i\in I_j)\).
          Then, \(Y_1,\ldots,Y_k\) are negatively associated.
          That is, non-decreasing (or non-increasing) functions of disjoint subsets of negatively associated random variables are also negatively associated.
  \end{enumerate}
\end{proposition}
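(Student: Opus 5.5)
Both parts reduce to the definition of negative association (\cref{def:NA}). For the first part we condition on one of the two independent families; for the second we observe that compositions of monotone functions remain monotone. Throughout, we consider only functions for which the expectations involved are finite; this is the standing convention in the definition.

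\emph{Part 1.} Write the concatenated sequence as $(X_1,\ldots,X_n,Y_1,\ldots,Y_n)$. Take disjoint index sets of this combined sequence, and split each of them into an $X$-part and a $Y$-part: the first set becomes $I\subseteq[n]$ for the $X$'s together with $I'\subseteq[n]$ for the $Y$'s, and the second becomes $J$ and $J'$ in the same way. Then $I\cap J=\emptyset$ and $I'\cap J'=\emptyset$. Let $f$ and $g$ be non-decreasing. The plan is to first fix $Y=y$. Because the $X$'s and $Y$'s are independent, the conditional law of $X$ given $Y=y$ equals its unconditional law. For fixed $y$, the maps $x_I\mapsto f(x_I,y_{I'})$ and $x_J\mapsto g(x_J,y_{J'})$ are non-decreasing, so negative association of the $X$'s yields
\[
\bbE\qty[f(X_I,y_{I'})\,g(X_J,y_{J'})]\le F(y_{I'})\,G(y_{J'}),
\]
where $F(y_{I'})\coloneq \bbE[f(X_I,y_{I'})]$ and $G(y_{J'})\coloneq \bbE[g(X_J,y_{J'})]$. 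The step that needs checking is that $F$ and $G$ are themselves non-decreasing. This holds because $F$ is an average over the fixed law of $X_I$ of functions each of which is non-decreasing in $y_{I'}$, and the same applies to $G$. Integrating the displayed inequality over the law of $Y$ and then applying negative association of the $Y$'s to $F$ and $G$ gives
\[
\bbE[fg]\le \bbE[F(Y_{I'})G(Y_{J'})]\le \bbE[F(Y_{I'})]\,\bbE[G(Y_{J'})]=\bbE[f]\,\bbE[g],
\]
where the last equality is Fubini together with independence.

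\emph{Part 2.} Take disjoint sets $A,B\subseteq[k]$ and non-decreasing functions $f,g$. Define $\Phi(X)\coloneq f\big((h_a(X_{I_a}))_{a\in A}\big)$ and $\Psi(X)\coloneq g\big((h_b(X_{I_b}))_{b\in B}\big)$. The function $\Phi$ depends only on the coordinates in $U\coloneq\bigcup_{a\in A}I_a$, and $\Psi$ depends only on $V\coloneq\bigcup_{b\in B}I_b$. The sets $U$ and $V$ are disjoint because the $I_j$ are pairwise disjoint and $A\cap B=\emptyset$.

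If all $h_j$ are non-decreasing, then $\Phi$ and $\Psi$ are non-decreasing, being compositions of non-decreasing maps. Negative association of the $X$'s then gives $\bbE[\Phi\Psi]\le\bbE[\Phi]\,\bbE[\Psi]$, which is exactly the required inequality for the $Y$'s.

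If all $h_j$ are non-increasing, then $\Phi$ and $\Psi$ are non-increasing, so $-\Phi$ and $-\Psi$ are non-decreasing. Applying the definition to this pair gives $\bbE[(-\Phi)(-\Psi)]\le\bbE[-\Phi]\,\bbE[-\Psi]$, which is the same inequality $\bbE[\Phi\Psi]\le\bbE[\Phi]\,\bbE[\Psi]$.

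\emph{Main obstacle.} The only step that is not pure bookkeeping is the monotonicity of the averaged functions $F$ and $G$ in Part 1. This is where independence is genuinely used: it allows us to average over a law of $X_I$ that does not depend on $y$. The rest of the argument amounts to verifying that the index sets stay disjoint.
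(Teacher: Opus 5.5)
Your proof is correct. The paper gives no proof of this proposition; it is quoted from \cite{Dubhashi1998-cf} and goes back to Joag-Dev and Proschan. Your argument is the standard one from that line: for part~1 you condition on one independent family and note that averaging against a $y$-independent law preserves monotonicity, and for part~2 you use that compositions of monotone maps are monotone, with a sign flip in the non-increasing case. The only thing worth tightening is the integrability convention. It is cleanest to restrict to bounded $f,g$, so that $F$ and $G$ are bounded, measurable by Fubini, and every interchange of integration is immediate.
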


\begin{lemma}[Multiplicative forms of Chernoff bounds \cite{dubhashi2009}]\label{lemma:app:multiplicative-chernoff}
  Let \(X_1, \ldots, X_n\) be independent binary random variables.
  Let \(X = \sum_{i = 1}^n X_i\) and \(\mu = \bbE[X]\).
  Then:
  \begin{enumerate}
    \item For any \(\delta \in (0,1)\) and any \(\mu \le \mu_+ \le n\), it holds that
          \[
            \Pr(X \ge (1+\delta) \mu_+) \le \exp(- \delta^2 \mu_+ / 3).
          \]
    \item For any \(\delta \in (0,1)\) and any \(0 \le \mu_- \le \mu\), it holds that
          \[
            \Pr(X \le (1-\delta) \mu_-) \le \exp(- \delta^2 \mu_- / 2).
          \]
  \end{enumerate}
\end{lemma}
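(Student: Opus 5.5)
The proof goes through the moment generating function. The extra parameters \(\mu_+\) and \(\mu_-\) only enter when we replace \(\mu\) by them inside the exponent, and that replacement is monotone in the right direction. Write \(p_i=\Pr(X_i=1)\), so \(\mu=\sum_i p_i\). For every real \(t\), independence and \(1+x\le e^x\) give
\[
  \bbE[e^{tX}]=\prod_{i=1}^n\pa{1+p_i(e^t-1)}\le \exp\pa{\mu(e^t-1)}.
\]

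\emph{Upper tail.} For \(t>0\) we have \(e^t-1>0\), so \(\mu\le\mu_+\) gives \(\bbE[e^{tX}]\le\exp(\mu_+(e^t-1))\). Applying Markov's inequality to \(e^{tX}\) with threshold \(e^{t(1+\delta)\mu_+}\) yields
\[
  \Pr(X\ge(1+\delta)\mu_+)\le\exp\pa{\mu_+(e^t-1)-t(1+\delta)\mu_+}.
\]
Choosing \(t=\ln(1+\delta)\) turns the right-hand side into \(\pa{e^\delta/(1+\delta)^{1+\delta}}^{\mu_+}\). The condition \(\mu_+\le n\) is not needed in this route; it only matters if one argues by coupling \(X\) with a dominating binomial-type sum of mean \(\mu_+\).

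\emph{Lower tail.} For \(t>0\), apply the same MGF bound to \(-X\): \(\bbE[e^{-tX}]\le\exp(\mu(e^{-t}-1))\). Since \(e^{-t}-1<0\) and \(\mu\ge\mu_-\), this is at most \(\exp(\mu_-(e^{-t}-1))\). Markov's inequality with threshold \(e^{-t(1-\delta)\mu_-}\), followed by the choice \(t=-\ln(1-\delta)\), gives
\[
  \Pr(X\le(1-\delta)\mu_-)\le\pa{e^{-\delta}/(1-\delta)^{1-\delta}}^{\mu_-}.
\]
If \(\mu_-=0\), the claim is trivial.

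\emph{Calculus step.} It remains to show, for \(\delta\in(0,1)\),
\[
  (1+\delta)\ln(1+\delta)\ge\delta+\delta^2/3
  \qquad\text{and}\qquad
  (1-\delta)\ln(1-\delta)\ge-\delta+\delta^2/2.
\]
For the first, set \(f(\delta)=(1+\delta)\ln(1+\delta)-\delta-\delta^2/3\). Then \(f(0)=0\) and \(f'(\delta)=\ln(1+\delta)-2\delta/3\), with \(f'(0)=0\). Also \(f''(\delta)=1/(1+\delta)-2/3\), which is positive on \([0,1/2)\) and negative afterwards. Hence \(f'\) first increases and then decreases. Since \(f'(1)=\ln 2-2/3>0\), we get \(f'\ge0\) on \([0,1]\), and therefore \(f\ge0\).

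For the second, the Taylor expansion \((1-\delta)\ln(1-\delta)=-\delta+\sum_{k\ge2}\delta^k/(k(k-1))\) has all higher-order terms nonnegative, which gives the bound directly.

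Plugging these inequalities into the two tail bounds yields \(\exp(-\delta^2\mu_+/3)\) and \(\exp(-\delta^2\mu_-/2)\). The only non-mechanical step is the first calculus inequality, which needs the sign analysis of \(f''\) above rather than a plain Taylor argument.
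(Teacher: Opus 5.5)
Your proof is correct. It is the standard moment-generating-function/Markov argument, with \(\mu\) replaced by \(\mu_\pm\) using the sign of \(e^{\pm t}-1\). The paper gives no proof of its own and simply cites \cite{dubhashi2009}, where the bound is obtained in essentially this way, and your remark that \(\mu_+\le n\) matters only for a coupling-based route is accurate.

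One small overstatement: the first calculus inequality does not need the sign analysis of \(f''\). A plain Taylor argument works, because the alternating series gives \((1+\delta)\ln(1+\delta)\ge \delta+\delta^2/2-\delta^3/6\ge \delta+\delta^2/3\) for \(\delta\in(0,1)\).
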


\begin{lemma}[Hoeffding bounds \cite{mitzenmacher2005}]\label{lemma:app:hoeffding}
  Let \(a < b\) be two constants, and \(X_1, \ldots, X_n\) be independent random variables such that \(\Pr(a \le X_i \le b) = 1\) for all \(i \in [n]\).
  Let \(X = \sum_{i = 1}^n X_i\) and \(\mu = \bbE[X]\).
  Then:
  \begin{enumerate}
    \item For any \(t > 0\) and any \(\mu \le \mu_+\), it holds that
          \[
            \Pr(X \ge \mu_+ + t) \le \exp(-\frac{2t^2}{n(b-a)^2}).
          \]
    \item For any \(t > 0\) and any \(0 \le \mu_- \le \mu\), it holds that
          \[
            \Pr(X \le \mu_- - t) \le \exp(-\frac{2t^2}{n(b-a)^2}).
          \]
  \end{enumerate}
\end{lemma}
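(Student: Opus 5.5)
We prove the upper tail via the exponential-moment (Chernoff) method combined with Hoeffding's lemma on the moment generating function of a bounded variable. The lower tail follows by applying the same argument to \(-X_i\), which lie in \([-b,-a]\), an interval of the same length.

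\textbf{Step 1: reduction to the true mean.} Since \(\mu \le \mu_+\), we have \(\{X \ge \mu_+ + t\} \subseteq \{X - \mu \ge t\}\). Symmetrically, \(\mu_- \le \mu\) gives \(\{X \le \mu_- - t\} \subseteq \{X - \mu \le -t\}\). It therefore suffices to prove \(\Pr(X-\mu\ge t)\le \exp(-2t^2/(n(b-a)^2))\) and its mirror image.

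\textbf{Step 2: Markov on the exponential.} For any \(s>0\), Markov's inequality and independence give
\[
\Pr(X-\mu\ge t)\le e^{-st}\,\bbE\!\left[e^{s(X-\mu)}\right]=e^{-st}\prod_{i=1}^n \bbE\!\left[e^{s(X_i-\bbE X_i)}\right].
\]

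\textbf{Step 3: Hoeffding's lemma.} This is the main technical point. We show that a random variable \(Y\) with \(\bbE Y=0\) and \(Y\in[a',b']\) almost surely satisfies \(\bbE e^{sY}\le e^{s^2(b'-a')^2/8}\). The proof goes as follows.
\begin{enumerate}
\item By convexity, \(e^{sy}\le \frac{b'-y}{b'-a'}e^{sa'}+\frac{y-a'}{b'-a'}e^{sb'}\) for \(y\in[a',b']\). Taking expectations and using \(\bbE Y=0\) yields \(\bbE e^{sY}\le e^{L(u)}\), where \(u=s(b'-a')\), \(\theta=-a'/(b'-a')\in[0,1]\), and \(L(u)=-\theta u+\log(1-\theta+\theta e^{u})\).
\item A direct computation gives \(L(0)=L'(0)=0\).
\item Next, \(L''(u)=\pi(1-\pi)\le 1/4\), where \(\pi=\theta e^u/(1-\theta+\theta e^u)\in[0,1]\).
\item Taylor's theorem with remainder then gives \(L(u)\le u^2/8\).
\end{enumerate}
We apply this with \(Y=X_i-\bbE X_i\), which lies in an interval of length \(b-a\). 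This gives \(\bbE e^{s(X_i-\bbE X_i)}\le e^{s^2(b-a)^2/8}\).

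\textbf{Step 4: optimize.} Combining the steps,
\[
\Pr(X-\mu\ge t)\le \exp\!\left(-st+\frac{ns^2(b-a)^2}{8}\right).
\]
Choosing \(s=4t/(n(b-a)^2)\) gives the bound \(\exp(-2t^2/(n(b-a)^2))\). Together with Step 1, this proves both items.
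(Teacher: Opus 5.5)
Your proof is correct. Step 1 reduces both items to deviations from the true mean $\mu$, and the Chernoff method with Hoeffding's lemma then gives exactly $\exp(-2t^2/(n(b-a)^2))$; the lemma itself follows from the convexity bound, $L(0)=L'(0)=0$, and $L''=\pi(1-\pi)\le 1/4$, after optimizing at $s=4t/(n(b-a)^2)$. The lower tail follows by applying the same bound to $-X_i\in[-b,-a]$.

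The paper gives no proof of this lemma; it only cites it from Mitzenmacher--Upfal, whose proof is the same standard argument. The one difference from the statement is that the hypothesis $0\le\mu_-$ is never needed, and you rightly do not use it.
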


% \begin{lemma}[Bernstein Inequality \cite{dubhashi2009}]
%   \label{lemma:bernstein_inequality}
%   Let \( X_1, X_2, \dots, X_n \) be i.i.d.\ random variables such that
%   \(\abs{X_i - \mathbb{E}[X_i]} \leq C\) and \(\abs{\bbE[X_i]} \le C\) for some \(C > 0\).
%   Define the sum of centered variables
%   \(
%   S_n = \sum_{i=1}^n (X_i - \mathbb{E}[X_i]).
%   \)
%   Then, for all \( t > 0 \),
%   \[
%     \Pr(\abs{S_n} \geq t) \leq 2 \exp\left( -\frac{t^2}{2 n \Var(X_1) + \frac{2Ct}{3}} \right).
%   \]
% \end{lemma}

\begin{lemma}[Bernstein inequality for independent bounded variables \cite{boucheron2003concentration}]
  \label{lemma:bernstein_non_identical}
  Let \(X_1,\dots,X_n\) be independent random variables such that
  \(
  \mathbb{E}[X_i]=0
  \) and \(
  |X_i|\le b_i
  \)
  Denote
  \[
    S_n = \sum_{i=1}^n X_i,
    \qquad
    V = \sum_{i=1}^n \mathrm{Var}(X_i),
    \qquad
    B = \max_{1\le i\le n} b_i .
  \]
  Then for every \(t>0\),
  \[
    \mathbb{P}(S_n \ge t)
    \le
    \exp\!\left(
    -\frac{t^2}{2V + \frac{2}{3} B t}
    \right).
  \]
  The same bound holds for \(\mathbb{P}(S_n \le -t)\).
\end{lemma}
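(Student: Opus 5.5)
We will prove the upper tail by the Cramér--Chernoff exponential moment method; the lower tail then follows by applying the same argument to \(-X_1,\dots,-X_n\), which satisfy the same hypotheses with the same \(b_i\), variances, \(V\) and \(B\).

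\textbf{Step 1: one-variable moment generating function bound.} Fix \(i\) and \(\lambda\) with \(0<\lambda<3/B\). Expand
\[
e^{\lambda X_i}=1+\lambda X_i+\sum_{m\ge 2}\frac{\lambda^m X_i^m}{m!}.
\]
The linear term vanishes in expectation because \(\bbE[X_i]=0\). For \(m\ge 2\), boundedness gives \(|X_i|^m\le X_i^2 B^{m-2}\), so \(\bbE|X_i|^m\le \mathrm{Var}(X_i)B^{m-2}\). We also use the elementary inequality \(m!\ge 2\cdot 3^{m-2}\) for \(m\ge 2\), which follows by induction since each factor from \(3\) onwards is at least \(3\). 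Summing the geometric series, which converges because \(\lambda B/3<1\), gives
\[
\bbE\bigl[e^{\lambda X_i}\bigr]\le 1+\frac{\lambda^2\mathrm{Var}(X_i)}{2}\sum_{m\ge 2}\Bigl(\frac{\lambda B}{3}\Bigr)^{m-2}=1+\frac{\lambda^2\mathrm{Var}(X_i)/2}{1-\lambda B/3}.
\]
Finally \(1+x\le e^x\) turns this into \(\exp\bigl(\frac{\lambda^2\mathrm{Var}(X_i)/2}{1-\lambda B/3}\bigr)\). This is exactly the Bernstein-condition form used in \cref{def:Bernstein condition}.

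\textbf{Step 2: tensorize and apply Markov.} By independence, the moment generating function of \(S_n\) factorizes, so
\[
\bbE\bigl[e^{\lambda S_n}\bigr]\le \exp\Bigl(\frac{\lambda^2 V/2}{1-\lambda B/3}\Bigr).
\]
Markov's inequality applied to \(e^{\lambda S_n}\) then yields
\[
\mathbb{P}(S_n\ge t)\le \exp\Bigl(-\lambda t+\frac{\lambda^2 V/2}{1-\lambda B/3}\Bigr)
\]
for every \(\lambda\in(0,3/B)\).

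\textbf{Step 3: choice of \(\lambda\).} We take
\[
\lambda=\frac{t}{V+Bt/3}.
\]
This choice automatically satisfies \(\lambda B/3<1\), and it gives \(1-\lambda B/3=\frac{V}{V+Bt/3}\). Substituting, the exponent becomes
\[
-\frac{t^2}{V+Bt/3}+\frac{t^2}{2(V+Bt/3)}=-\frac{t^2}{2V+\tfrac23 Bt},
\]
which is exactly the claimed bound.

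The only step requiring care is Step 1: the key inputs are the moment comparison \(\bbE|X_i|^m\le B^{m-2}\mathrm{Var}(X_i)\) and the factorial bound \(m!\ge 2\cdot 3^{m-2}\). Everything after that is routine algebra.
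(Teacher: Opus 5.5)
Your proof is correct. It is the standard Cram\'er--Chernoff derivation; the paper gives no proof of this lemma and simply imports it from the cited notes, which use the same exponential-moment method but first obtain Bennett's bound and then apply \(h(u)\ge u^2/(2+2u/3)\), whereas you go straight to the sub-gamma moment generating function bound via \(m!\ge 2\cdot 3^{m-2}\) and an explicit \(\lambda\), reaching the same constant. The one detail to patch is the degenerate case \(V=0\): there your choice \(\lambda=t/(V+Bt/3)\) gives \(\lambda B/3=1\), outside the range of Step 1, but every \(X_i\) then vanishes almost surely, so \(\mathbb{P}(S_n\ge t)=0\) and the bound holds trivially.
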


\begin{theorem}[From \cite{JANSON20181}]
\label{thm:chernoff_geometric}
Let $X = \sum_{i=1}^n X_i$ be a sum of independent geometric random variables $X_i \sim \text{Geom}(p_i)$ with $0 < p_i \le 1$. Let $\mu := \mathbb{E}X = \sum_{i=1}^n 1/p_i$ and let $p_* := \min_i p_i$.
For any $\lambda \ge 1$,
\begin{equation*}
    \pr{X \ge \lambda \mu} \le \exp\pa{-p_* \mu (\lambda - 1 - \ln \lambda)}.
\end{equation*}
\end{theorem}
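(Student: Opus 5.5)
The natural route is the exponential-moment (Chernoff) method, applied to the moment generating function of each geometric summand. For $X_i\sim\mathrm{Geom}(p_i)$ supported on $\{1,2,\ldots\}$, I would start from the closed form
\[
  \bbE[e^{tX_i}]=\frac{p_ie^t}{1-(1-p_i)e^t},
\]
valid for $t<-\ln(1-p_i)$. By independence and Markov's inequality, for admissible $t\ge 0$,
\[
  \pr{X\ge\lambda\mu}\le e^{-t\lambda\mu}\prod_i \bbE[e^{tX_i}].
\]
The plan is to bound each factor by an expression depending on $p_i$ only through $1/p_i$. This makes the product collapse into a function of $\mu$ and $p_*$.

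The key step is a per-summand estimate. I would aim to show that for $0\le t<p_i$,
\[
  \ln\bbE[e^{tX_i}]\le -\frac{1}{p_i}\ln\!\Big(1-\frac{t}{p_i}\cdot p_i\Big)\cdot\frac{1}{p_i}\,p_i
  \quad\text{i.e.}\quad \bbE[e^{tX_i}]\le (1-t)^{-1/p_i}.
\]
To see why this is the right target, note that $X_i$ stochastically dominates, and in the convex order is dominated by, $1+\mathrm{Exp}$-type variables. A cleaner way to get there is to write $p_i\bbE[e^{tX_i}]^{-1}e^t = e^{-t}\cdot\big(e^{t}\big)\cdots$. Concretely, I would show $\ln\!\big(\tfrac{p e^{t}}{1-(1-p)e^{t}}\big)\le -\tfrac1p\ln(1-pt/p\cdot 1)$, i.e. that $\phi(p)=p\ln\frac{pe^t}{1-(1-p)e^t}$ is at most $-\ln(1-t)$ for $p\in(0,1]$ and $0\le t<p$. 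This is an elementary one-variable check: equality holds at $p\to0$ and at $t=0$, and differentiation in $t$ gives the comparison. This step is the main obstacle. If the exact inequality fails, I would restrict to $t\le p_*$ and use the tangent-line bound $\ln\frac{1}{1-(1-p)(e^t-1)/p}$ instead, accepting slightly worse constants. In that case I would aim for the paper's exponent via the substitution $t=p_*(1-1/\lambda)$.

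Assuming the per-summand bound, multiplying over $i$ gives
\[
  \pr{X\ge\lambda\mu}\le \exp\big(-t\lambda\mu-\mu\ln(1-t)\big).
\]
This is only valid for $t<p_*$, so I would reparametrize with $t=p_*s$. Using concavity, $-\ln(1-p_*s)\le -p_*\ln(1-s)$ for $s\in[0,1)$, since the map $x\mapsto -\ln(1-xs)$ is convex with value $0$ at $x=0$ and $x=p_*\le1$. The bound then becomes
\[
  \exp\big(-p_*\mu\,[\lambda s+\ln(1-s)]\big).
\]
Optimizing with $s=1-1/\lambda$ yields $\lambda s+\ln(1-s)=\lambda-1-\ln\lambda$, which is exactly the claimed bound $\exp(-p_*\mu(\lambda-1-\ln\lambda))$. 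The case $\lambda=1$ is trivial.
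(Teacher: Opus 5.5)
The paper gives no proof of this statement; it is quoted from \cite{JANSON20181}. Judged on its own, your argument fails at the step you flagged as the main obstacle. The per-summand bound \(\mathbb{E}[e^{tX_i}]\le (1-t)^{-1/p_i}\) for \(0\le t<p_i\) is false whenever \(p_i<1/2\). The right-hand side is the moment generating function of a Gamma variable with shape \(1/p_i\) and unit rate. That variable has the same mean \(1/p_i\) as \(X_i\), but variance only \(1/p_i\), whereas \(\mathrm{Var}(X_i)=(1-p_i)/p_i^2\). Expand both log-MGFs at \(t=0\): the linear terms agree, and the quadratic terms are \(\frac{1-p_i}{2p_i^2}t^2\) versus \(\frac{1}{2p_i}t^2\). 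So the inequality fails for every small \(t>0\) once \(p_i<1/2\). Your ``differentiation in \(t\)'' check breaks at exactly the second derivative. Concretely, for \(p_i=1/10\) and \(t=1/100\) one gets \(\mathbb{E}[e^{tX_i}]\approx 1.1105>(0.99)^{-10}\approx 1.1057\). The fallback of ``accepting slightly worse constants'' does not prove the statement, whose exponent is exact. As written, the proposal therefore has no valid key step.

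The missing idea is to compare \(X_i\) with an exponential variable of rate \(p_i\) rather than with a Gamma variable. Write
\[
  \mathbb{E}[e^{tX_i}]=\frac{p_i e^t}{1-(1-p_i)e^t}=\frac{p_i}{p_i-(1-e^{-t})}\le \frac{p_i}{p_i-t}=\Big(1-\frac{t}{p_i}\Big)^{-1},
\]
using \(1-e^{-t}\le t\), valid for \(0\le t<p_i\). Your convexity trick then works if you apply it per summand with \(x=p_*/p_i\in(0,1]\) instead of \(x=p_*\). For \(t=p_*s\) with \(s\in[0,1)\), convexity of \(x\mapsto-\ln(1-xs)\) gives \(-\ln(1-s\,p_*/p_i)\le -(p_*/p_i)\ln(1-s)\). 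Summing over \(i\) and using \(\sum_i 1/p_i=\mu\) yields \(\ln\mathbb{E}[e^{tX}]\le -p_*\mu\ln(1-s)\). Markov's inequality then gives \(\Pr(X\ge\lambda\mu)\le \exp(-p_*\mu[\lambda s+\ln(1-s)])\), and your choice \(s=1-1/\lambda\) finishes exactly as you planned. Your Chernoff skeleton and final optimization are fine; only the comparison distribution for each summand was wrong.
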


\begin{lemma}[Newton's inequality for elementary symmetric polynomials \cite{enwiki:1334087247}]
  \label{lem:newton_ineq}
  Let \(x_1, x_2, \dots, x_n\) be non-negative real numbers. For \(k \in \{1, \dots, n-1\}\), let \(e_k(\mathbf{x})\) denote the \(k\)-th elementary symmetric polynomial:
  \[
    e_k(\mathbf{x}) = \sum_{1 \le i_1 < i_2 < \dots < i_k \le n} x_{i_1} x_{i_2} \dots x_{i_k}.
  \]
  The normalized elementary symmetric means \(S_k = e_k / \binom{n}{k}\) satisfy
  \[
    S_k^2 \ge S_{k-1} S_{k+1}.
  \]
\end{lemma}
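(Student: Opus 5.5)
We prove the inequality with the classical real-rootedness argument: first encode the $S_j$ as coefficients of a polynomial with only real roots, then apply operations that preserve real-rootedness until only a quadratic remains, and finally read the inequality off its discriminant. Set
\[
P(t)\coloneq \prod_{i=1}^n (t+x_i)=\sum_{j=0}^n e_j(\mathbf{x})\,t^{n-j}=\sum_{j=0}^n \binom{n}{j} S_j\, t^{n-j},
\]
with $e_0=S_0=1$. Since the $x_i$ are real, $P$ has only real roots (namely $-x_i\le 0$).

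The first tool is Rolle's theorem, counted with multiplicity. If a real polynomial of degree $d$ has all $d$ roots real, then so does its derivative. Indeed, a root of multiplicity $m$ remains a root of multiplicity $m-1$ of the derivative. Between consecutive distinct roots there is at least one further root of the derivative. Together these account for $d-1$ real roots.

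\textbf{Step 1: isolate $S_j$ for $j\le k+1$.} Differentiate $P$ exactly $n-k-1$ times. This kills every term with $n-j<n-k-1$ and leaves, up to the positive factor $n!/(k+1)!$,
\[
Q(t)=\sum_{j=0}^{k+1}\binom{k+1}{j} S_j\, t^{k+1-j}.
\]
By Rolle, $Q$ has degree $k+1$ and only real roots.

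\textbf{Step 2: reverse the coefficients.} If $S_{k+1}=0$, the claim $S_k^2\ge 0$ is trivial. Otherwise $Q(0)=S_{k+1}\neq 0$, so $0$ is not a root of $Q$, and the reversed polynomial
\[
R(t)\coloneq t^{k+1}Q(1/t)=\sum_{j=0}^{k+1}\binom{k+1}{j}S_j\,t^{j}
\]
has degree $k+1$. Its roots are the reciprocals of the roots of $Q$, so they are again all real.

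\textbf{Step 3: reduce to a quadratic.} Differentiate $R$ exactly $k-1$ times; only the terms with $j\ge k-1$ survive. The coefficients of $t^0,t^1,t^2$ become
\[
\binom{k+1}{k-1}(k-1)!\,S_{k-1}=\tfrac{(k+1)!}{2}S_{k-1},\qquad
(k+1)\,k!\,S_k=(k+1)!\,S_k,\qquad
\tfrac{(k+1)!}{2}S_{k+1}.
\]
Hence, up to the positive factor $(k+1)!/2$, the result is
\[
S_{k+1}t^2+2S_k t+S_{k-1}.
\]
By Rolle this quadratic has only real roots, so its discriminant is nonnegative: $4S_k^2-4S_{k-1}S_{k+1}\ge 0$, which is exactly the claim.

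The only step needing care is real-rootedness in degenerate situations, namely repeated roots, roots at $0$, and the degree dropping. Repeated roots are handled by the multiplicity version of Rolle stated above. A root at $0$ can occur in Step 2 only when $S_{k+1}=0$, which is the trivial case. The degree cannot drop in any step, because $e_0=1$ keeps $P$ monic and $S_{k+1}>0$ keeps $R$ of degree $k+1$ in the nontrivial case. I expect bookkeeping the coefficients through the repeated differentiations to be the main, though routine, obstacle.
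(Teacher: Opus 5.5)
Your proof is correct and complete. The difference from the paper is only that the paper gives no argument for this lemma: it quotes Newton's inequality as a known tool with an external citation, and proves only the unnormalized corollary $e_k^2\ge e_{k-1}e_{k+1}$. That corollary follows by multiplying through by $\binom{n}{k}^2/\bigl(\binom{n}{k-1}\binom{n}{k+1}\bigr)=\frac{(k+1)(n-k+1)}{k(n-k)}>1$. Your proposal therefore supplies the self-contained justification the paper omits, and it does so by the classical route. That route uses three ingredients:
\begin{itemize}
\item real-rootedness of $\prod_i(t+x_i)$;
\item preservation of real-rootedness under differentiation (Rolle with multiplicities) and under coefficient reversal;
\item the nonnegative discriminant of the surviving quadratic.
\end{itemize}
The bookkeeping checks out. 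In Step~1, $\binom{n}{j}\frac{(n-j)!}{(k+1-j)!}=\frac{n!}{(k+1)!}\binom{k+1}{j}$. In Step~3, the three surviving coefficients are exactly $\frac{(k+1)!}{2}S_{k-1}$, $(k+1)!\,S_k$ and $\frac{(k+1)!}{2}S_{k+1}$. The degenerate cases are handled correctly: $S_{k+1}=0$, repeated roots, and $k=1$, where no differentiation of $R$ is needed. Your route also gives slightly more than stated, since nonnegativity of the $x_i$ is never used, so you prove the inequality for arbitrary real $x_i$. The paper needs only the nonnegative case, because \cref{claim:elementary_symmetric_sum} applies it to opinion densities.
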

\begin{corollary}
  The following inequality holds:
  \[
    e_k(\mathbf{x})^2 \ge e_{k-1}(\mathbf{x}) \cdot e_{k+1}(\mathbf{x}).
  \]
\end{corollary}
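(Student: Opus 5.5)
The plan is to derive the unnormalized inequality directly from Newton's inequality (\cref{lem:newton_ineq}) by clearing the binomial normalizations, and then to use log-concavity of the binomial coefficients to discard the extra factor this produces. Fix \(k\in\{1,\dots,n-1\}\) and write \(e_j=e_j(\mathbf{x})\), with the conventions \(e_0=1\) and \(\binom{n}{0}=\binom{n}{n}=1\), so that \(S_j=e_j/\binom{n}{j}\) is defined for \(j=k-1,k,k+1\).

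First, substituting \(S_j=e_j/\binom{n}{j}\) into \(S_k^2\ge S_{k-1}S_{k+1}\) and multiplying through by the positive quantity \(\binom{n}{k}^2\) gives
\[
  e_k^2 \;\ge\; \rho_k\, e_{k-1}e_{k+1},
  \qquad
  \rho_k\coloneq \frac{\binom{n}{k}^2}{\binom{n}{k-1}\binom{n}{k+1}} .
\]

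Second, compute \(\rho_k\) explicitly. Cancelling factorials and using \(\binom{n}{k}/\binom{n}{k-1}=\frac{n-k+1}{k}\) and \(\binom{n}{k}/\binom{n}{k+1}=\frac{k+1}{n-k}\), we get
\[
  \rho_k=\frac{(n-k+1)(k+1)}{k(n-k)} .
\]
Since \(n-k+1> n-k\ge 1\) and \(k+1>k\ge 1\), it follows that \(\rho_k> 1\). This is the standard log-concavity of binomial coefficients.

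Finally, the variables \(x_i\) are non-negative, so \(e_{k-1}e_{k+1}\ge 0\). Therefore
\[
  e_k^2\;\ge\;\rho_k\,e_{k-1}e_{k+1}\;\ge\;e_{k-1}e_{k+1},
\]
which is the claim. Note that non-negativity of the \(x_i\) is used only to guarantee \(e_{k-1}e_{k+1}\ge 0\) in this last step, in addition to being a hypothesis of Newton's inequality. The only computation that needs care is the evaluation of \(\rho_k\); there is no real obstacle beyond that.
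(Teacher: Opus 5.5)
Your proposal is correct and follows the paper's own argument: clear the binomial normalizations in Newton's inequality, compute the factor \(\frac{(k+1)(n-k+1)}{k(n-k)}>1\), and drop it. Your explicit remark that \(e_{k-1}e_{k+1}\ge 0\) (by non-negativity of the \(x_i\)) is needed to discard that factor is a step the paper uses without stating.
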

\begin{proof}
  By \Cref{lem:newton_ineq},
  \[
    \left( \frac{e_k}{\binom{n}{k}} \right)^2
    \ge
    \frac{e_{k-1}}{\binom{n}{k-1}} \cdot \frac{e_{k+1}}{\binom{n}{k+1}}.
  \]
  Rearranging gives
  \[
    e_k^2 \ge
    \frac{\binom{n}{k}^2}{\binom{n}{k-1}\binom{n}{k+1}} \, e_{k-1} e_{k+1}.
  \]
  For every \(1 \le k < n\),
  \[
    \frac{\binom{n}{k}^2}{\binom{n}{k-1}\binom{n}{k+1}}
    =
    \frac{(k+1)(n-k+1)}{k(n-k)}
    > 1.
  \]
  Therefore \(e_k^2 \ge e_{k-1} e_{k+1}\), as claimed.
\end{proof}

\end{document}